%% file: main.tex
\pdfoutput=1
\documentclass[conference]{IEEEtran}
\input{preamble}

\usepackage{paracol}
\usepackage{needspace}
\usepackage{capt-of}
\makeatletter\globalcounter{minted@pygmentizecounter}\makeatother
\hypersetup{pdftitle={FVQKD: A Formal Verification of Quantum Key Distribution Security in Rocq},pdfauthor={Ruixuan Deng, Zhengfeng Ji, Jiayu Zhang, Li Zhou}}

\begin{document}

\title{FVQKD: A Formal Verification of\\ Quantum Key Distribution Security in Rocq}

\author{
\IEEEauthorblockN{Ruixuan Deng\IEEEauthorrefmark{1}\IEEEauthorrefmark{2},
  Zhengfeng Ji\IEEEauthorrefmark{1}\IEEEauthorrefmark{2},
  Jiayu Zhang\IEEEauthorrefmark{2}, and
  Li Zhou\IEEEauthorrefmark{3}}
\IEEEauthorblockA{\IEEEauthorrefmark{1}Tsinghua University}
\IEEEauthorblockA{\IEEEauthorrefmark{2}Zhongguancun Laboratory}
\IEEEauthorblockA{\IEEEauthorrefmark{3}Institute of Software, Chinese Academy of Sciences}
}

\maketitle
\pagestyle{plain}

\begin{abstract}
Quantum key distribution (QKD), whose security is derived from quantum mechanics rather than computational assumptions, is moving into standardization and commercial deployment.
Yet its security guarantees rest on technically delicate proofs checked by hand.
Formal verification closes this gap, establishing a two-factor assurance: unconditional security from physical principles, and machine-checked proofs against reasoning errors.
We present \FVQKD{}, a \Rocq{} library for machine-checking QKD security proofs.
Its first result is a complete, end-to-end machine-checked proof of composable, finite-key security for BB84.
The mechanization reconstructs the complete proof infrastructure from first principles, including the smooth entropy framework, the entropic uncertainty relation, and the Quantum Leftover Hash Lemma.
It uncovers necessary side conditions that the published argument leaves unstated, one of which reveals a missing proof branch, and it grounds the information-theoretic bound in a formal semantics of protocol execution, which the pen-and-paper analysis lacks.
The proof is entirely admit-free, relying only on the \Rocq{} kernel and standard foundational axioms.
To our knowledge, this is the first end-to-end formal verification of a QKD security proof.
Built on \MathComp{}, \MathCompAnalysis{}, and \CoqQ{}, the development provides quantum-information infrastructure designed for reuse across broader QKD variants and quantum cryptography applications.
\end{abstract}

\input{sections/introduction}
\input{sections/qkd}
\input{sections/security}
\input{sections/protocol}
\input{sections/architecture}
\input{sections/engineering}
\input{sections/findings}
\input{sections/discussion}
\input{sections/conclusion}

\section*{Acknowledgment}
We thank Marco Tomamichel for pointing out to us that the side conditions discussed in \Cref{sec:findings:eur} are already addressed in the later analysis of Tomamichel and Leverrier~\cite{tomamichel2017largelyself}, which offers a more formal pen-and-paper treatment of~\cite{tomamichel2012tight}.

Generative AI tools were employed under human direction to assist with \Rocq{} proof engineering, manuscript improvement and grammar enhancement, and the preparation of figures, charts, and \Cref{app:semantics-alignment}.
All formal artifacts were mechanically verified by the \Rocq{} type checker, and the authors retain sole responsibility for the work.

\bibliographystyle{IEEEtran}
\bibliography{IEEEabrv,./main}

\clearpage
\appendices
\crefalias{section}{appendix}

\input{sections/app-qkd-variants}
\input{sections/app-bb84-sig}
\input{sections/app-semantics-alignment}

\end{document}

%% file: preamble.tex
\usepackage{cite}
\usepackage[T1]{fontenc}
\usepackage[utf8]{inputenc}
\usepackage{float}
\usepackage{url}
\usepackage{xcolor}
\usepackage[hidelinks]{hyperref}

\usepackage{amsmath}
\usepackage{amssymb}
\usepackage{stmaryrd}
\usepackage{alltt}
\usepackage{graphicx}
\DeclareUnicodeCharacter{2297}{\ensuremath{\otimes}}
\DeclareUnicodeCharacter{2291}{\ensuremath{\sqsubseteq}}
\usepackage{array}
\usepackage{tikz}
\usetikzlibrary{arrows.meta,calc,fit,backgrounds}
\usepackage[frozencache,cachedir=minted-cache]{minted}
\definecolor{coqbg}{RGB}{248,250,253}
\usepackage{cleveref}
\newmintinline[coqi]{coq}{fontsize=\small}
\newminted[coqcode]{coq}{
  fontsize=\footnotesize,
  breaklines=true,
  breakafter=_.,
  autogobble,
  bgcolor=coqbg,
  frame=single,
  framesep=2mm,
  rulecolor=\color{gray!30}
}
\newcommand{\coqkw}[1]{\texttt{\small\textcolor[rgb]{0.00,0.50,0.00}{\textbf{#1}}}}

\newtheorem{theorem}{Theorem}

\newtheorem{definition}[theorem]{Definition}

\newcommand{\Tr}{\operatorname{Tr}}
\newcommand{\Hmin}{H_{\min}}
\newcommand{\Hmax}{H_{\max}}
\newcommand{\Hmineps}[1][\eps]{H^{#1}_{\min}}
\newcommand{\Hmaxeps}[1][\eps]{H^{#1}_{\max}}
\newcommand{\eps}{\varepsilon}
\newcommand{\eb}{\bar{\eps}}

\newcommand{\R}{\mathbb{R}}

\newcommand{\id}{\mathbf{1}}
\newcommand{\ket}[1]{|#1\rangle}
\newcommand{\bra}[1]{\langle#1|}
\newcommand{\given}{\,|\,}
\newcommand{\norm}[1]{\lVert #1 \rVert}
\newcommand{\Qtol}{Q_{\mathrm{tol}}}
\newcommand{\leakec}{\mathrm{leak}_{\mathrm{EC}}}
\newcommand{\Rocq}{Rocq\xspace}
\newcommand{\MathComp}{\textsc{MathComp}\xspace}
\newcommand{\MathCompAnalysis}{\textsc{MathComp-Analysis}\xspace}
\newcommand{\CoqQ}{\textsc{CoqQ}\xspace}
\newcommand{\FVQKD}{\textsc{FVQKD}\xspace}
\newcommand{\code}[1]{\texttt{\small #1}}
\newcommand{\sem}[1]{\llbracket #1 \rrbracket}
\newcommand{\auxsem}[1]{\sem{#1}^{\mathrm{aux}}}
\usepackage{xspace}
\newcommand{\TotalLoCk}{151}
\newcommand{\InfrastructureLoC}{102\,302}
\newcommand{\TotalFiles}{67}

%% file: sections/introduction.tex
\section{Introduction}\label{sec:intro}

Quantum key distribution (QKD) offers a distinctive approach to secure key
establishment for the quantum era.
Unlike conventional public-key exchange protocols such as Diffie--Hellman, QKD
derives the secrecy of a shared key from physical measurements governed by
quantum mechanics rather than from computational hardness.
Under suitable device model and authentication assumptions, it provides
information-theoretic security against adversaries with unbounded computational
resources, including a full-scale quantum computer.
BB84~\cite{bennett1984proc}, proposed in 1984 by Bennett and Brassard, is the
founding QKD protocol. Its inventors received the 2025 ACM A.M.\ Turing Award
for their role in establishing the foundations of quantum information
science~\cite{2025turing}.
QKD is also technologically distinctive: it uses quantum mechanics but requires
only the ability to prepare, transmit, and measure single-qubit states, making
it one of the earliest commercially mature quantum-information technologies.
In the past four decades since its invention, QKD has advanced from a theoretical
proposal to satellite-to-ground demonstrations~\cite{satellite} and active
standardization~\cite{ETSIQKD005,isoiec2023qkd23837one}.

Although BB84 has a compelling security intuition, a rigorous proof that an
eavesdropper cannot learn the key without causing detectable disturbance took
more than a decade to establish.
Lo and Chau~\cite{lo1999unconditional} and Mayers~\cite{mayers2001unconditional}
established the unconditional security of QKD in the asymptotic limit of
infinitely many transmitted qubits.
Shor and Preskill~\cite{peter2000simple} simplified the argument by reducing
it to the theory of quantum error correction.
Subsequent work strengthened QKD security proofs in two complementary
directions.
Finite-key analysis makes the guarantee quantitative for a single protocol run
at realistic block sizes that can be as few as $10^{4}$ quantum signals.
Composable security specifies that the generated key remains safe when used as
a component of a larger protocol.
For the proof line formalized in this paper, these newer advances rely on the
smooth-entropy framework of Renner~\cite{renner2008security} and the tight
finite-key analysis of Tomamichel et al.~\cite{tomamichel2012tight}.

These security proofs combine deep results from quantum information theory
and cryptography, including an entropic uncertainty relation that bounds an
eavesdropper's knowledge, a sampling inequality that bounds the unobserved error
rate, and a hashing lemma that converts residual uncertainty into secret key.
The mathematical sophistication of these ingredients makes the resulting proofs
difficult to understand and verify.
This concern is not merely theoretical.
Important results in quantum information theory have been found to contain
serious gaps years after publication~\cite{Berta2023gapinproofof,erratum}.
The QKD community has also recognized such challenges.
A recent survey identifies gaps and limitations in published decoy-state
BB84 security proofs~\cite{tupkary2026securityproofspracticalqkd}, while
a new pen-and-paper proof explicitly aims for completeness and
certification-grade rigor~\cite{tupkary2026rigorouscompletesecurityproof}.
These developments raise a fundamental question:
\textit{can QKD security proofs be placed on a more trustworthy foundation, one
  that systematically consolidates and verifies the results drawn from quantum
  information theory, cryptography, and physics?}

Formal verification provides a systematic way to answer this question
as it has already proved effective in classical security.
Notable examples include verified cryptographic libraries such as
HACL*~\cite{zinzindohoue2017haclstar} and
EverCrypt~\cite{protzenko2020evercrypt}, verified OS kernels such as
seL4~\cite{klein2009sel4}, and certified compilers such as
CompCert~\cite{leroy2009fv}.
For cryptographic proofs, frameworks such as
SSProve~\cite{haselwarter2023ssprove},
CryptHOL~\cite{basin2020crypthol}, EasyCrypt~\cite{barthe2011computer}, and
CryptoVerif~\cite{blanchet2008cryptoverif} support machine-checked
reasoning about computational-security reductions~\cite{barbosa2021sok}.
Protocol-level frameworks such as Squirrel~\cite{baelde2021interactive} and
DY*~\cite{bhargavan2021dystar,wallez2025treekem} support machine-checked security analyses of
practical protocols.
These cryptographic proof and protocol frameworks work well for classical cryptography, but QKD falls outside their
scope.
Its security is information-theoretic, resting on smooth-entropy bounds against
an unbounded quantum adversary.
It also requires mathematical foundations such as density operators, quantum
channels, and smooth min- and max-entropy that these classical tools do not
provide.

Proof assistants have also been applied to quantum computing and quantum
information.
Several foundational libraries and program verification frameworks have been established: in \Rocq{}, \CoqQ{}~\cite{zhou2023coqq}
provides a finite-dimensional quantum mechanics library and program logics~\cite{ying2016fqp},
QWIRE~\cite{Paykin2017qwire,rand2019reqwire} and VOQC~\cite{hietala2021voqc}
target quantum circuit verification, with a certified end-to-end implementation
of Shor's factoring algorithm~\cite{wu2023shor} as one result,
and VyZX~\cite{lehmann2026vyzx} mechanizes reasoning about the ZX-calculus~\cite{coecke2011interacting,Kissinger2019PyZX}, while ViCAR~\cite{shah2025vicar} supports automated rewriting and visualization for monoidal categories.
Isabelle/HOL hosts a quantum computing library~\cite{bordg2021certified, bordg2020isabelle}
and a mechanized quantum Hoare logic~\cite{liu2019fvqhl};
Qbricks~\cite{chareton2021circuit} embeds a circuit-building quantum language
in Why3 to support high automation based on path-sum representations~\cite{amy2018towards}.
Recent Lean work develops kernel-checked infrastructure for finite-dimensional
quantum information theory and applies it to quantum coding
theorems~\cite{zhu2026leanqit}.
For (post-)quantum cryptographic protocols, qRHL~\cite{unruh2019qrhl} has
machine-checked the post-quantum security of the Fujisaki--Okamoto
transform~\cite{unruh2020post}.
EasyPQC~\cite{barbosa2021easypqc} and a computational post-quantum protocol
logic~\cite{cremers2022logic} verify classical cryptographic schemes and
protocols against quantum adversaries.

Closer to the focus of this paper are several works applying formal methods directly to QKD\@.
Kubota et al.~\cite{kubota2011formal,kubota2016semi} give a
semi-automated verification of the Shor--Preskill-style BB84 proof route,
different from the composable finite-key analysis formalized here.
Boender et al.~\cite{boender2015formalization} develop Coq infrastructure
for modeling quantum protocols, while Kamm\"{u}ller~\cite{kammuller2019qkdisabellebayesian}
formalizes a Bayesian/probabilistic BB84 argument in Isabelle.
The Lean \code{formal-qkd} project~\cite{FormalQKDLean} mechanizes
secret-key-length computation equations and partial post-processing code,
without the underlying composable security proof.
Model-checking tools such as PRISM, QCTL, and QMC~\cite{nagarajan2005automatedanalysissecurityquantum,baltazar2008qctl,gay2008qmc}
and symbolic analyses~\cite{hirschi2019symbolic,prevost2024etsi,malik2026beyond}
analyze protocol-level behavior, intruder models, or ETSI protocol-layer
properties rather than mechanizing the finite-key security proof itself.
Adjacent formalization efforts include the generalized quantum Stein's lemma
in Lean~\cite{meiburg2025formalizationgeneralizedquantumsteins}
and classical Shannon information theory in \Rocq{}~\cite{affeldt2014shannon}.

\begin{table}[t]
  \centering
  \caption{A comparison of \FVQKD{} with prior work.}
  \label{tab:related-comparison}
  \begingroup
  \scriptsize
  \renewcommand{\arraystretch}{1.15}
  \setlength{\tabcolsep}{1.5pt}
    \begin{tabular*}{\columnwidth}{@{}>{\raggedright\arraybackslash}p{2.80cm}
                    >{\raggedright\arraybackslash}p{1.55cm}
                    @{\extracolsep{\fill}}ccccc@{}}
      \hline
      \textbf{Development}
      & \textbf{Framework}
      & \textbf{UQA}
      & \textbf{Comp.}
      & \textbf{FK}
      & \textbf{Model}
      & \textbf{E2E} \\ \hline
      Kubota et al.~\cite{kubota2011formal,kubota2016semi}
      & qCCS\textsuperscript{*}
      & \checkmark & --- & --- & \checkmark & --- \\
      Boender et al./Kamm\"{u}ller%
      \cite{boender2015formalization,kammuller2019qkdisabellebayesian}
      & Coq/Isabelle
      & --- & --- & --- & \checkmark\textsuperscript{\dag} & --- \\
      Lean project~\cite{FormalQKDLean}
      & Lean
      & --- & --- & ---\textsuperscript{\ddag} & --- & --- \\
      CryptHOL constr.\ crypto~\cite{lochbihler2019crypthol,basin2020crypthol}
      & Isabelle/HOL
      & --- & \checkmark & --- & \checkmark\textsuperscript{\S} & --- \\
      CHSH rigidity/Tsirelson%
      \cite{zhao2026formalizingchshrigiditylean,echenim2023chsh}
      & Lean/Isabelle
      & --- & --- & --- & --- & --- \\ \hline
      \FVQKD{} (this work)
      & \Rocq{}
      & \checkmark & \checkmark & \checkmark & \checkmark & \checkmark \\\hline
    \end{tabular*}
  \par\vskip 0.6\baselineskip
  \vspace{0.35\baselineskip}
  \begin{minipage}{0.98\columnwidth}
  UQA: unbounded quantum adversary; Comp.: composable security; FK:
  finite-key bound; Model: explicit protocol model; E2E: end-to-end QKD
  security proof.
  A checkmark marks each criterion met; --- marks an unmet criterion.
  \textsuperscript{*}Custom semi-automated quantum-process-calculus verifier,
  not a proof-assistant kernel.
  \textsuperscript{\dag}Individual components only.
  \textsuperscript{\ddag}Finite-key key-length equations without security
  derivation.
  \textsuperscript{\S}Composable framework for classical, not quantum,
  protocols.
  \end{minipage}
  \endgroup
\end{table}

As \Cref{tab:related-comparison} summarizes, the closest existing formal
developments each satisfy only a subset of these five properties; none achieves all
of them.
The central challenge of placing QKD security proofs on a trusted foundation
therefore remains open.
In this work, we fill this gap by presenting \FVQKD{}, to our knowledge the first
complete end-to-end machine-checked proof of composable finite-key security for a
QKD protocol.
By complementing security from quantum mechanics with machine-checked
verification in \Rocq{}, \FVQKD{} provides an additional assurance layer for
QKD systems.
Built with \Rocq{}~\cite{rocq911}, it uses \MathComp{} for algebra,
\MathCompAnalysis{} for real analysis, and
\CoqQ{}~\cite{zhou2023coqq} for finite-dimensional quantum theory.
As a major case study, \FVQKD{} formalizes the tight composable
finite-key security analysis of BB84 by Tomamichel et al.~\cite{tomamichel2012tight}.
To this end, the mechanization reconciles the conventions of the six
theoretical works underlying the BB84 security proof and supplies
the side conditions that the published argument leaves implicit
(\Cref{sec:findings}).
\Cref{fig:deps} presents the overall structure of the development.

\noindent\textbf{Contributions.}
Our contributions are twofold.
First, \FVQKD{} provides
\emph{a reusable, modular formal-verification framework for QKD security proofs.}
It brings together machine-checked \Rocq{} formalizations of
Uhlmann's fidelity theorem, generalized (sub-normalized)
fidelity, purified distance as a metric with monotonicity under completely
positive trace-nonincreasing (CPTN) maps, the
smooth-entropy calculus (smooth conditional min- and max-entropy, their duality
and classical chain rules), the POVM-general smooth entropic uncertainty
relation, the Quantum Leftover Hash Lemma, Serfling's sampling
bound~\cite{serfling1974probability}, and semidefinite-programming (SDP)
strong duality.
The SDP result establishes the primal direction of Tomamichel's
Theorem~2.6~\cite{tomamichel2013frameworknonasymptoticquantuminformation},
proved from first principles.
The majority of the project is protocol-agnostic, with no dependence on BB84 or
any particular QKD protocol, and serves as a general framework that future QKD
variants can instantiate without re-deriving the underlying mathematical
machinery.

Second, \FVQKD{} establishes
\emph{an end-to-end machine-checked QKD security proof, instantiated for BB84}:
an unbroken, kernel-checked chain from a formal protocol program to a
composable finite-key $\eps$-security statement at the tight bound
of Tomamichel et al.~\cite{tomamichel2012tight}.
The proof contains no incomplete lemmas;
developed from first principles, it relies solely on the \Rocq{} kernel
and three standard axioms inherited from \MathCompAnalysis{}.

It is worth emphasizing that existing formal verification efforts in protocol
security usually focus on protocol-level modeling:
message flows, adversary capabilities, and reachability properties.
For QKD, the mathematical security bound, not the protocol model, carries the
main proof burden.
Finite-key QKD proofs in the smooth-entropy tradition
(BB84~\cite{tomamichel2012tight}, decoy-state
variants~\cite{lim2014concise}, and measurement-device-independent
protocols~\cite{curty2014finite})
use the same chain of entropic uncertainty relation, Quantum Leftover Hash
Lemma, and sampling bounds.
Formalizing this chain once provides a reusable foundation
for their mechanization in the future.

\noindent\textbf{Outline.}
The paper is organized as follows.
\Cref{sec:qkd} introduces the quantum-mechanical foundations, the common
structure of QKD protocols, and BB84.
\Cref{sec:security} fixes the composable security model
and derives the finite-key bound from the chain of entropy inequalities.
\Cref{sec:protocol} describes the protocol model
and connects the mathematical proof to a concrete protocol specification.
\Cref{sec:architecture} details the proof architecture:
the module decomposition and its design for reuse.
\Cref{sec:engineering} discusses proof engineering:
the backend libraries, the trusted computing base, and the representation
layers.
\Cref{sec:findings} presents the verification findings:
a missing proof branch and competing conventions across the literature.
\Cref{sec:discussion} discusses scope and future work.
\Cref{sec:conclusion} concludes.

%% file: sections/qkd.tex
\section{Quantum Key Distribution}\label{sec:qkd}

Quantum key distribution is a family of protocols that let two parties (conventionally Alice and Bob) establish a shared secret key whose security rests on quantum mechanics rather than on any computational hardness assumption.
A sufficiently powerful adversary could break classical key exchange by solving a hard mathematical problem; a QKD key remains secret even against an eavesdropper with unlimited computational resources.
This section introduces the quantum-mechanical foundations, the common structure of QKD protocols, and the BB84 protocol, our running example.
These set the stage for the security analysis (\Cref{sec:security}) and the protocol model (\Cref{sec:protocol}).

\subsection{Quantum Foundations}\label{sec:qkd:foundations}

\noindent\textit{Quantum states.}\
A \emph{qubit} is the quantum analog of a classical bit: a two-level quantum system that, unlike its classical counterpart, can exist in a \emph{superposition} of both values simultaneously.
In Dirac notation, a pure quantum state is written $\ket{\psi}$ (a column vector), with conjugate transpose $\bra{\psi}$; together they form the outer product $\ket{\psi}\!\bra{\psi}$, a rank-one matrix describing the state.
The standard \emph{computational basis} (also called the $Z$-basis) consists of the states $\ket{0}$ and $\ket{1}$, corresponding to the two classical bit values.
The \emph{$X$-basis} (Hadamard basis), whose states $\ket{+} = \tfrac{1}{\sqrt{2}}(\ket{0}+\ket{1})$ and $\ket{-} = \tfrac{1}{\sqrt{2}}(\ket{0}-\ket{1})$ are superpositions of the computational-basis states, is equally important for QKD.
When there is classical uncertainty about the prepared state, or when a qubit is part of a larger entangled system, the state is described by a \emph{density matrix}~$\rho$---a positive-semidefinite matrix with unit trace ($\Tr\rho = 1$, where $\Tr$ denotes the matrix trace), generalizing probability distributions to the quantum setting.
Composite systems of multiple parties are modeled by tensor products: a joint state $\rho_{ABE}$ describes the combined quantum system shared among Alice, Bob, and an eavesdropper~\cite{nc2010qcqi}.

\noindent\textit{Measurement and bases.}\
Extracting classical information from a qubit requires a \emph{measurement},
which produces a definite outcome but, in general, irreversibly disturbs the quantum state.
The $Z$-basis and $X$-basis are \emph{conjugate}: a state prepared in one basis yields a uniformly random outcome when measured in the other.
For example, a qubit prepared as $\ket{0}$ (a definite $Z$-basis state) produces the outcomes $+$ or~$-$ with equal probability when measured in the $X$-basis, and the original state is destroyed in the process.
This conjugacy is the physical mechanism that makes QKD possible.
An eavesdropper who intercepts a qubit and measures it in the wrong basis gains no useful information and causes detectable errors in the legitimate parties' data~\cite{wiesner1983conjugate}.

\noindent\textit{Fidelity and distance.}\
A security proof must give precise meaning to ``the key is approximately secret,'' which requires a notion of closeness between quantum states.
\emph{Fidelity} quantifies the overlap between two states: a fidelity of~$1$ means the states are identical, while~$0$ means they are perfectly distinguishable.
The \emph{purified distance} converts fidelity into a proper metric, and the $\eps$-ball $B^\eps(\rho)$ around a state~$\rho$ collects all states within purified distance~$\eps$.
The smooth entropy measures used in the security proof optimize over this neighborhood~\cite{tomamichel2013frameworknonasymptoticquantuminformation,tomamichel2016qip}.
The mathematical definitions of fidelity, purified distance, and smoothing
appear in Definitions~\ref{def:purified-distance} and~\ref{def:smooth-min} of \Cref{app:semantics-alignment},
alongside their formal definitions in \Rocq{}.

\subsection{Components and Variants}\label{sec:qkd:components}

\noindent\textit{What QKD achieves.}\
A QKD protocol lets Alice and Bob distill a shared secret key from quantum signals exchanged over a public channel, with security derived from quantum mechanics under explicit modeling assumptions (\Cref{sec:security:model}).
The guarantee is \emph{information-theoretic}: no computational assumption is needed, and the key remains secure even against a future quantum computer.
The \emph{no-cloning theorem}~\cite{Wootters1982single} (an unknown quantum state cannot be perfectly copied) prevents an eavesdropper from silently duplicating the quantum signals and forces any information-gaining interception to leave a detectable trace.

\noindent\textit{The common pipeline.}\
Despite their differences in quantum hardware, QKD protocols share a common high-level structure: (i)~source preparation, in which quantum states encoding random key material are generated; (ii)~quantum transmission through an optical channel; (iii)~measurement of the transmitted signals; (iv)~\emph{sifting}, which discards rounds where incompatible settings were used; (v)~\emph{parameter estimation}, which tests a random sample for evidence of eavesdropping; (vi)~\emph{error correction}, which reconciles the remaining raw keys; and (vii)~\emph{privacy amplification}, which compresses the corrected key to remove any information the eavesdropper may have gained.
The differences among protocols lie in what is prepared (single photons, entangled pairs, or coherent states) and how it is measured; the classical post-processing steps are largely similar.

\noindent\textit{Variants.}\
BB84~\cite{bennett1984proc}, the first QKD protocol, uses a prepare-and-measure scheme in which Alice sends individually prepared qubits to Bob.
E91~\cite{ekert1991bell} is an entanglement-based protocol whose security is certified by the violation of a Bell inequality.
BBM92~\cite{bbm92} is an entanglement-based variant of BB84 that shares its post-processing structure.
Subsequent extensions address practical deployment challenges: the six-state protocol~\cite{bruss1998sixstate} uses three mutually unbiased bases to tolerate a higher error rate; decoy-state methods~\cite{lo2005decoy} defeat photon-number-splitting attacks~\cite{brassard2000limitations} in implementations with weak coherent pulses; and measurement-device-independent (MDI) protocols~\cite{lo2012mdi,curty2014finite} and twin-field protocols~\cite{lucamarini2018overcoming} remove the need to trust the detectors.
QKD has moved beyond the laboratory to real-world deployment,
including satellite-to-ground key exchange demonstrated by the Micius satellite~\cite{satellite},
and is undergoing active standardization~\cite{ETSIQKD005,isoiec2023qkd23837one}.

\subsection{BB84}\label{sec:qkd:bb84}

\noindent\textit{Protocol steps.}\
BB84 proceeds in five stages, applied to a block of quantum signals transmitted from Alice to Bob:
\begin{enumerate}
\item \textbf{Preparation and measurement.}\
  Alice encodes each random key bit by preparing a qubit in one of two randomly chosen conjugate bases ($Z$ or~$X$) and sends it to Bob through the quantum channel.
  Bob measures each arriving qubit in a randomly chosen basis.
\item \textbf{Sifting.}\
  Alice and Bob publicly announce their basis choices (but not the bit values) and keep only the detected rounds where both chose the same basis, yielding $n{+}k$ matching positions ($n$ in the $X$~basis and $k$ in the $Z$~basis).
\item \textbf{Parameter estimation.}\
  The $k$ $Z$-basis positions form the test set: Alice and Bob compare outcomes and abort if the observed bit-error rate exceeds a threshold~$\Qtol$.
\item \textbf{Error correction.}\
  On the $n$ $X$-basis positions, Alice sends correction information (at most $\leakec$ bits) so that Bob can reconcile his key with hers; a hash-based tag confirms agreement.
\item \textbf{Privacy amplification.}\
  A two-universal hash function~\cite{carter1979universal}
  compresses the corrected $n$-bit string to a final key of length~$\ell$, removing essentially all residual information the eavesdropper may hold.
\end{enumerate}

\noindent \Cref{fig:bb84budget} traces the key material through these five stages and renders the key-length bound~\eqref{eq:keylen} of \Cref{sec:security} as a bit budget.

\begin{figure*}[t]
\centering
\includegraphics[width=0.8\textwidth]{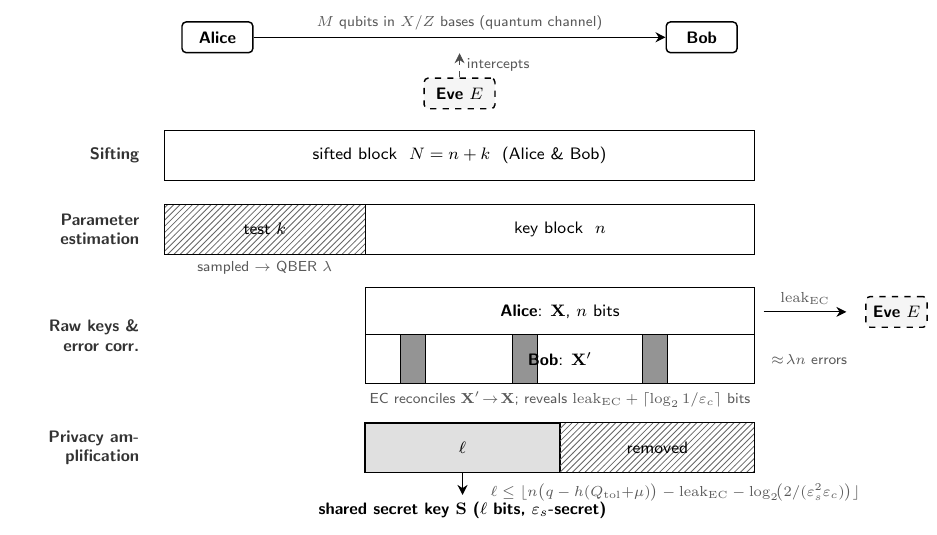}
\caption{BB84 finite-key budget (schematic; bar lengths are illustrative).
Sifting leaves $N=n{+}k$ matching-basis rounds: $k$ $Z$-basis rounds
estimate the QBER~$\lambda$, and $n$ $X$-basis rounds form the raw keys
$\mathbf{X},\mathbf{X}'$.
Error correction reveals $\leakec+\lceil\log_2 1/\eps_c\rceil$ bits;
privacy amplification produces the $\ell$-bit $\eps_s$-secret key~$\mathbf{S}$.
Hatching denotes terms deducted in~\eqref{eq:keylen}, not literal key substrings.}
\label{fig:bb84budget}
\end{figure*}

\noindent\textit{Notation.}\
We denote a protocol run by $\Phi(n, k, \ell, \Qtol, \eps_c, \leakec)$, where $n$ is the raw-key length, $k$ the number of test positions, $\ell$ the final key length, $\Qtol$ the error-rate threshold, $\eps_c$ the correctness parameter, and $\leakec$ the error-correction leakage.
A successful run outputs a pair of keys $(S, \hat{S})$ held by Alice and Bob; an aborted run outputs~$\bot$.
BB84 admits two source models that differ only in the preparation step: in the \emph{prepare-and-measure} (PM) variant, Alice prepares individual qubits in randomly chosen states; in the \emph{entanglement-based} (EB) variant, she prepares entangled pairs and measures her half locally.
The standard security analysis relates these source models, and their subsequent post-processing steps are identical.
Our verified protocol result treats the source-box EB interpretation as a BB84 source choice.
\Cref{sec:security} presents the security criterion and the proof that the final key is composably secure.

%% file: sections/security.tex
\section{Security Analysis}\label{sec:security}

This section develops the mathematical backbone of the formalization.
We trace the proof route from asymptotic unconditional security to the finite-key setting (\Cref{sec:security:routes}), fix a rigorous security model with composable definitions (\Cref{sec:security:model}), and instantiate the finite-key analysis for BB84 (\Cref{sec:security:bound}).

\subsection{Proof Routes}\label{sec:security:routes}

\noindent\textit{Historical development.}\
Unconditional security of QKD was first established in the asymptotic limit by Lo and Chau~\cite{lo1999unconditional} and Mayers~\cite{mayers2001unconditional}, showing that information-theoretic key distribution is possible in principle.
Shor and Preskill~\cite{peter2000simple} simplified the argument by reducing it to quantum error correction, making the asymptotic proof accessible but leaving practical finite-size effects unaddressed.
Renner~\cite{renner2008security} introduced smooth entropies to handle finite block sizes, and Tomamichel et~al.~\cite{tomamichel2012tight} obtained tight composable bounds without asymptotic or de~Finetti penalties.
More recently, entropy accumulation theorems~\cite{dupuis2020ea,metger2023security} have extended finite-key techniques to broader settings, including device-independent ones, where no assumptions are made about the internal workings of Alice's and Bob's devices.

\noindent\textit{Our route.}\
\FVQKD formally proves security along the smooth-entropy route of Tomamichel et~al.~\cite{tomamichel2012tight},
the most established route for prepare-and-measure protocols and one yielding finite-key bounds among the tightest known.
Its modular proof composes three independent, separately machine-checkable components---a smooth entropic uncertainty relation,
a Serfling sampling bound~\cite{serfling1974probability}, and the Quantum Leftover Hash Lemma---into the final key-length theorem (\Cref{sec:security:bound}).

\subsection{Security Model}\label{sec:security:model}

\noindent\textit{Parties.}\
Three parties participate.
Alice (sender) prepares and transmits quantum states.
Bob (receiver) measures them.
Eve (adversary) controls the quantum channel and observes the classical channel.
Alice and Bob are honest: they follow the protocol faithfully and their internal computations are private.

\noindent\textit{Adversary capabilities.}\
Eve is an \emph{unbounded quantum adversary}.
She exercises full control over the quantum channel: she may intercept, store, modify, or replace every quantum signal in transit, so the joint post-sifted state $\rho_{ABE}$ is entirely unrestricted.
She observes all messages on the authenticated classical channel (basis announcements, error-correction data, hash choices) but cannot forge or modify them.
She has unlimited computational power and unlimited quantum memory, and may perform coherent attacks across all $n{+}k$ rounds simultaneously.
No restriction beyond the stated model is placed on her attack strategy; the security guarantee holds for every such strategy.

\noindent\textit{Trust assumptions.}\
The paper's security claims involve six conditions, surfaced at different layers of the machine-checked development:
\begin{enumerate}
\item[\textup{(T1)}] \emph{Side-channel freedom.}\
  Alice's and Bob's devices are free of side channels (e.g., detector blinding~\cite{Lydersen2010hacking}, Trojan-horse attacks~\cite{gisin2006trojan}).
  The bound models the source through the single parameter~$q$; imperfections not captured by the declared source and measurement sets and~$q$ are outside the model.
\item[\textup{(T2)}] \emph{Authenticated classical channel.}\
  An authenticated classical channel is assumed; without it, QKD is vulnerable to man-in-the-middle attacks~\cite{paterson2004quantum}.
  In the information-theoretic setting, this channel can be realized from a pre-shared secret using Wegman--Carter authentication~\cite{wegman1981newhash}.
  The proof treats authentication as a primitive and does not model the key consumed by it;
  Portmann~\cite{portmann2014recycling} bounds the key consumed by authentication in the composable setting,
  and Cederl\"{o}f and Larsson~\cite{cederlof2008security} analyze the security loss when the authentication key is partially known to the adversary.
\item[\textup{(T3)}] \emph{Source characterization.}\
  The source emits states characterized by preparation quality~$q$, which measures basis separation.
  Weak-coherent-pulse sources require additional decoy-state techniques~\cite{lo2005decoy,lim2014concise} and are outside the current scope.
\item[\textup{(T4)}] \emph{Delayed measurement.}\
  The mathematical theorem assumes that Bob's $\mathbb X$-basis measurements can be postponed until after parameter estimation without changing their statistics, as in the memoryless-device model of~\cite{tomamichel2012tight}.
  For the calibrated class in T6, the artifact proves that the state to which the uncertainty relation is applied equals the state that the protocol program produces when parameter estimation passes.
  The min-entropy bound therefore holds for the executed protocol.
\item[\textup{(T5)}] \emph{Basis-independent detection.}\
  The no-click POVM element, and hence the detection probability, is independent of the basis choice~\cite{tomamichel2012tight}; this places the detection-efficiency loophole~\cite{makarov2006efficiency} outside the model.
\item[\textup{(T6)}] \emph{Calibrated projective $\mathbb X/\mathbb Z$ class.}\
  The executable endpoint is restricted to coherent pure $\mathbb X/\mathbb Z$ preparations, a standard source pointer and source $\mathbb X$-measurement, and projective $\mathbb Z$ readout.
\end{enumerate}
\noindent
The source and measurement characterization and~$q$ enter as theorem parameters: no security conclusion can be drawn without supplying them; T1 is the condition under which they faithfully summarize the devices.
T2 is fixed by the channel model, under which Eve reads but cannot modify classical messages.

\noindent\textit{Composable security.}\
A QKD protocol must satisfy two properties: \emph{correctness} and \emph{secrecy}.
``Composable'' means these guarantees survive when the extracted key is used as input to another protocol~\cite{canetti2001uc,portmann2022security}, for instance as a one-time pad for encrypted communication.
Both definitions quantify over all adversaries and attack strategies, regardless of computational resources.

\begin{definition}[$\eps_c$-Correctness]\label{def:correct}
The protocol is \emph{$\eps_c$-correct} if, for every adversary strategy,
\[
\Pr[\hat{\mathbf{S}} \neq \mathbf{S} \;\wedge\; \text{not abort}] \;\le\; \eps_c\,,
\]
where $\mathbf{S}$ and $\hat{\mathbf{S}}$ are Alice's and Bob's final keys.
\end{definition}

\begin{definition}[$\eps_s$-Secrecy]\label{def:secret}
The protocol is \emph{$\eps_s$-secret} if, for every adversary,
\[
(1 - p_\mathrm{abort})\cdot
\min_{\sigma_E}\,
\tfrac{1}{2}\norm{\rho_{\mathbf{S}E}
  - \tau_{\mathbf{S}} \otimes \sigma_E}_1
\;\le\; \eps_s\,,
\]
where $p_\mathrm{abort}$ is the abort probability, $\rho_{\mathbf{S}E}$ is the joint state of the final key and the adversary conditioned on non-abort, $\tau_{\mathbf{S}}$ is the maximally mixed state on the key space, and the minimum ranges over states~$\sigma_E$ of Eve's system.\footnote{If two quantum states have trace distance at most~$\Delta$, no physical experiment can distinguish them with probability better than~$\frac{1}{2}+\frac{\Delta}{2}$.}
\end{definition}

\noindent Here $\tfrac{1}{2}\norm{\cdot}_1$ denotes trace distance.
The protocol is \emph{$\eps$-secure} if it is $\eps_c$-correct and $\eps_s$-secret with $\eps_c + \eps_s \le \eps$~\cite{tomamichel2012tight,ferradini2025definingsecurityquantumkey};
this trace-distance criterion is universally composable~\cite{renner2005ucpa,portmann2022security}.

We write $p_\mathrm{abort} := \Pr[\mathbf{S}=\perp]$ for the abort probability, $p_\mathrm{mismatch} := \Pr[\hat{\mathbf{S}}\ne\mathbf{S}]$ for the mismatch probability, and $p_\mathrm{pass}$ for the probability that parameter estimation (PE) passes; all probabilities are over the protocol's randomness for a fixed adversary strategy.

\noindent\textit{Protocol inputs.}\
The formalization exposes the following as explicit theorem inputs in~\Cref{sec:protocol}:
\begin{itemize}
\item \emph{Certified device:} the source-state family $\mathsf{Src}(a,y)=\rho_{a,y}$, Bob's measurement-operator family~$\mathcal M$, and the preparation-quality bound~$q$.
\item \emph{Sizes:} $n$ (key-generation bits), $k$ (test bits), and $M$ (finite round budget).
\item \emph{Error tolerance:} $\Qtol$ (parameter-estimation threshold).
\item \emph{Error correction:} $\leakec$ (public leakage), $\eps_c$ (failure probability).
\item \emph{Privacy amplification:} $\ell$ (target key length).
\item \emph{Security budget:} $\eps_s$ (secrecy), $\eps_{\mathrm{tot}}$ (total); the smoothing parameter~$\eps$ is optimized inside the key-length bound.
\end{itemize}

\noindent\textit{Verification boundary.}\
\FVQKD{} makes the verification boundary across T1--T6 explicit.
T3 becomes an explicit device-characterization obligation; T1 is the condition under which that characterization describes the physical devices; and T2 is fixed by the modeled read-only classical channel.
At the executable endpoint, T5 and T6 are device obligations, while the entropy comparison corresponding to T4 is proved within that class.

\subsection{Finite-Key Bound}\label{sec:security:bound}

\noindent\textit{Proof setup.}\
The security proof works in the \emph{post-sifted} model: the $n{+}k$ sifted positions are modeled as a single joint quantum state $\rho_{ABE}$ across three registers ($A$: Alice's signal, $B$: Bob's quantum register, $E$: Eve's system).
A uniform random permutation designates $k$ positions as the test block and the remaining $n$ as the key block.
Measurements on the key block produce the $n$-bit raw keys $\mathbf{X},\mathbf{X}'$ for Alice and Bob.

\noindent\textit{Smooth entropies.}\
The conditional min-entropy $\Hmin(A\given E)$ measures randomness in~$A$ hidden from~$E$;
for classical~$A$, $2^{-\Hmin(A\given E)}$ is the adversary's optimal guessing probability~\cite{koenig2009operational}.
The conditional max-entropy $\Hmax(A\given B)$ measures the information needed to reconstruct~$A$ from~$B$~\cite{renesrenner2012oneshot}.
Their smoothed forms $\Hmineps$ and $\Hmaxeps$ optimize over states within purified distance~$\eps$ of~$\rho$ and require $0\le\eps<\sqrt{\Tr\rho}$;
this domain condition surfaces a mechanization finding in~\Cref{sec:findings}.

\noindent\textit{Entropic uncertainty.}\
The central tool linking Alice's raw key to Eve's information is the smooth entropic uncertainty relation.

\begin{theorem}[Entropic Uncertainty Relation]\label{thm:eur}
Let $\rho_{ABE}$ be a tripartite quantum state whose register~$A$ comprises $n$ signal rounds, and let $\{M_x\},\{N_z\}$ be two per-round measurements.
The strings $X$ and $Z$ obtained by measuring each round of~$A$ with $\{M_x\}$ or $\{N_z\}$, respectively, satisfy, for any $0 \le \eps < \sqrt{\Tr\rho}$,
\[
\Hmineps(X\given E)_\rho
+ \Hmaxeps(Z\given B)_\rho \;\ge\; nq\,.
\]
Here $q := -\log_2\max_{x,z}\norm{\sqrt{M_x}\sqrt{N_z}}_\infty^2$ is the preparation quality.
\end{theorem}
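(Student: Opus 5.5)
We follow the standard route of Tomamichel and Renner: reduce the smooth statement to a non-smooth one by duality, then prove the non-smooth uncertainty relation with an isometry argument.

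\emph{Step 1: purify.} We take a purification $\rho_{ABEC}$ of $\rho_{ABE}$. By monotonicity of the smooth max-entropy under discarding systems (data processing via partial trace), it suffices to prove the bound with $E$ replaced by $EC$. So we may assume $\rho_{ABE}$ is pure.

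\emph{Step 2: make measurements coherent.} We represent the measurement of $A$ in $\{M_x\}$ by the Stinespring isometry
\[
V_X = \sum_x \ket{x}_X\otimes\ket{x}_{X'}\otimes\sqrt{M_x},
\]
applied round-wise to all $n$ rounds. This yields a pure state on $XX'ABE$ whose marginal on $XE$ is the post-measurement classical-quantum state. We do the same for $\{N_z\}$.

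\emph{Step 3: duality.} We use duality of smooth entropies for pure states, $\Hmineps(X\given E) = -\Hmaxeps(X\given X'AB)$, taking care with sub-normalized states and the domain condition $0\le\eps<\sqrt{\Tr\rho}$. This converts the target inequality into
\[
\Hmaxeps(Z\given B) \ge \Hmaxeps(X\given X'AB) + nq .
\]

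\emph{Step 4: the key operator inequality.} We pick an optimizer $\tilde\rho$ in the $\eps$-ball for $\Hmaxeps(Z\given B)$. Its purified-distance proximity transfers, via monotonicity under CPTN maps, to a state in the $\eps$-ball for the $X$-side; this is where the single smoothing parameter is shared. We then show, for every $\sigma_B$, the operator bound
\[
\sum_x \sqrt{M_x}\, N_z\, \sqrt{M_x} \le c\,\id,
\qquad c=\max_{x,z}\norm{\sqrt{M_x}\sqrt{N_z}}_\infty^2 .
\]
This should give $F(\rho_{XX'AB},\id_X\otimes\sigma_{X'AB})^2 \le c^n\,F(\rho_{ZB},\id_Z\otimes\sigma_B)^2$ for a suitably constructed $\sigma_{X'AB}$, namely the pullback of $\id_Z\otimes\sigma_B$ through the $Z$-measurement isometry and then the adjoint of the $X$-isometry. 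Taking the maximum over $\sigma$ and $-\log_2$ yields the factor $nq$, since the round-wise product structure makes the bound tensorize to $c^n$.

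\emph{Step 5: smoothing.} We combine Step 4 with Step 3, using that the pulled-back state is CPTN-related to $\tilde\rho$, so purified distance does not increase.

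\textbf{Main obstacle.} The main obstacle is Step 4. We need the fidelity/max-entropy comparison through a CPTN (not trace-preserving) map, which uses Uhlmann's theorem on generalized fidelity for sub-normalized states. We also need the product-over-rounds norm estimate $\norm{\bigotimes_i \sqrt{M_{x_i}}\sqrt{N_{z_i}}}_\infty^2 \le c^n$. We would first prove the $n=1$ operator inequality by writing $\sqrt{N_z}\sqrt{M_x}\sqrt{M_x}\sqrt{N_z} \le c\,\id$, then lift it by multiplicativity of the operator norm under tensor products.
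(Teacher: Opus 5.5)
There is a genuine gap in Steps~4--5. Its root cause is that you dualize the min-entropy term instead of the max-entropy term, which creates three concrete problems.

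\emph{Quantifier direction.} $H_{\max}(X\given X'AB)$ is a maximum over $\sigma_{X'AB}$. To bound it from above you must start from an \emph{arbitrary} $\sigma_{X'AB}$ and produce a $\sigma_B$; the push-forward $\sigma_B=\Tr_{X'A}\sigma_{X'AB}$ through $\omega\mapsto\Tr_{Z'A}[V_ZV_X^\dagger\omega V_XV_Z^\dagger]$ works. Exhibiting one ``suitably constructed'' $\sigma_{X'AB}$ pulled back from $\sigma_B$ controls the fidelity at a single point and says nothing about the maximum.

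\emph{Smoothing direction.} Your optimizer $\tilde\rho_{ZB}$ lives on $ZB$, but the ball you need is $B^\eps(\rho_{XX'AB})$. Measuring $A$ and discarding $Z'A$ is irreversible, so there is no map from $ZB$ to $XX'AB$ that sends $\rho_{ZB}$ to $\rho_{XX'AB}$ and respects the entropies. Monotonicity under CPTN maps therefore cannot carry $\tilde\rho_{ZB}$ into the $X$-side ball. You would first need an Uhlmann-type extension of $\tilde\rho_{ZB}$ to $ZZ'AB$. The lifted state then need not lie in the range of $V_Z$, so mapping back no longer returns $\tilde\rho_{ZB}$, and the proposal does not handle this.

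\emph{The operator inequality.} Your displayed bound $\sum_x\sqrt{M_x}N_z\sqrt{M_x}\le c\,\id$ is false for non-projective POVMs. Take $M_1=M_2=\id/2$ and $N_0=\ket{0}\!\bra{0}$: then $c=\tfrac12$, but the sum equals $N_0\not\le\tfrac12\id$. Only the blockwise bound $\sqrt{N_z}M_x\sqrt{N_z}\le c\,\id$ holds, and it must be applied one outcome block at a time.

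The route the paper mechanizes (Tomamichel--Renner) avoids all three problems by dualizing the other term:
\begin{enumerate}
\item Purify to $\rho_{ABED}$ and append $D$ to the max-entropy side: $\Hmaxeps(Z\given BD)\le\Hmaxeps(Z\given B)$. This is the paper's purification data-processing bridge. Note also that your Step~1, which enlarges $E$, needs min-entropy monotonicity, not max-entropy monotonicity.
\item Duality gives $\Hmaxeps(Z\given BD)=-\Hmineps(Z\given Z'AE)$. The min-entropy optimizer then satisfies $\tilde\rho_{ZZ'AE}\le2^{-\lambda}\id_Z\otimes\sigma_{Z'AE}$, and it sits in a ball around the \emph{large} state.
\item The CPTN map $\mathcal F(\omega)=\Tr_{X'A}[V_XV_Z^\dagger\omega V_ZV_X^\dagger]$ sends $\rho_{ZZ'AE}$ to $\rho_{XE}$. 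So $\mathcal F(\tilde\rho)\in B^\eps(\rho_{XE})$ by plain monotonicity.
\item Positivity of $\mathcal F$ and $\Tr_A[(\sqrt{N_z}M_x\sqrt{N_z}\otimes\id_E)\sigma^z_{AE}]\le c\,\sigma^z_E$, with $\sigma^z_{AE}=\bra{z}_{Z'}\sigma_{Z'AE}\ket{z}_{Z'}$ and a sum over $z$, give $\mathcal F(\tilde\rho)\le 2^{-\lambda}c\,\id_X\otimes\sigma_E$.
\item Hence $\Hmineps(X\given E)\ge\Hmineps(Z\given Z'AE)+q$, with no fidelity comparison through a trace-decreasing map.
\end{enumerate}

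Your tensorization of the overlap via multiplicativity of $\norm{\cdot}_\infty$ is correct, but the paper does not prove it. Its generic theorem gives the one-block bound with $q(M,N)$ for the full-register POVMs. The $nq$ form comes from the explicit source-certificate hypothesis $nq\le q(M,N)$, which your argument would discharge for product measurements.
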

\noindent The EUR~\cite{tomamichel2011uncertainty} states that the total uncertainty across two complementary measurements is at least $n$ times the preparation quality.
In operational terms: the more Eve knows about the raw key~$X$ (small $\Hmineps$), the less predictable the conjugate string~$Z$ is given Bob's quantum register (large $\Hmaxeps$), and vice versa.
Parameter estimation bounds $\Hmaxeps(Z\given B)$ via the observed error rate; the EUR then yields a lower bound on $\Hmineps(X\given E)$---the input to privacy amplification.

\noindent\textit{Privacy amplification.}\
The classical Leftover Hash Lemma~\cite{impagliazzo1989recycle,wegman1981newhash} guarantees that hashing a source with sufficient min-entropy through a two-universal family~\cite{carter1979universal} produces a nearly uniform key.
The Quantum LHL extends this guarantee to an adversary holding quantum side information: the smooth min-entropy $\Hmineps(X\given E)$ determines the extractable key length.

\begin{definition}[Two-Universal Hash Family]
\label{def:two-universal}
A family~$\mathcal{F}$ of functions from~$\mathcal{X}$ to~$\{0,1\}^\ell$ is \emph{two-universal} if, for any two distinct inputs $x\ne x'$ and a uniformly chosen $f\in\mathcal{F}$,
\[
\Pr_{f\in\mathcal{F}}[f(x)=f(x')] \;\le\; 2^{-\ell}\,.
\]
\end{definition}

\begin{theorem}[Quantum Leftover Hash Lemma]\label{thm:lhl}
Let $\rho_{XE}$ be a classical--quantum state with $\Hmineps(X\given E) \ge m$ for some $\eps \ge 0$, and let~$\mathcal{F}$ be a finite two-universal family with range~$\{0,1\}^\ell$.
For a uniformly chosen $f\in\mathcal{F}$, the extracted key $f(X)$ satisfies
\[
\min_{\sigma_{FE}}\,
\tfrac{1}{2}\norm{\rho_{f(X)FE}
  - \tau_{f(X)} \otimes \sigma_{FE}}_1
\;\le\; \eps + \tfrac{1}{2}\sqrt{2^{\,\ell - m}}\,,
\]
where the register~$F$ records the hash choice and the minimum ranges over states~$\sigma_{FE}$.
\end{theorem}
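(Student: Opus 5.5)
The argument has two stages. First prove a non-smooth version: if $\Hmin(X\given E)_{\rho}\ge m$, then the distance to $\tau\otimes\sigma_{FE}$ is at most $\tfrac12\sqrt{2^{\ell-m}}$. Then lift it to the smooth case with the triangle inequality and monotonicity of trace distance under the hashing channel, which costs the additive~$\eps$.

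\textbf{Non-smooth core.} By definition of $\Hmin$, fix a state $\sigma_E$ with $\rho_{XE}\le 2^{-m}\,\id_X\otimes\sigma_E$. Write $\rho_{XE}=\sum_x \ket{x}\!\bra{x}\otimes\rho_E^x$ with subnormalized $\rho_E^x$. The hashed state is
\[
\rho_{SFE}=\tfrac{1}{|\mathcal{F}|}\sum_f \ket{f}\!\bra{f}\otimes\sum_s\ket{s}\!\bra{s}\otimes\rho^{f,s}_E,
\qquad \rho^{f,s}_E=\sum_{x:f(x)=s}\rho_E^x .
\]
Take the candidate $\sigma_{FE}:=\tau_F\otimes\rho_E$, so that $\tau_S\otimes\sigma_{FE}$ has blocks $2^{-\ell}\rho_E$. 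Apply the weighted Hölder/Cauchy--Schwarz bound
\[
\norm{Y}_1\le\sqrt{\Tr\omega}\,\norm{\omega^{-1/4}Y\omega^{-1/4}}_2 ,
\qquad \omega=\id_S\otimes\id_F\otimes\sigma_E
\]
(restricted to the support of $\sigma_E$). Then expand the weighted Hilbert--Schmidt norm blockwise. The cross terms collapse to $-2^{-\ell}\Tr[\rho_E\sigma_E^{-1/2}\rho_E\sigma_E^{-1/2}]$. The collision term $\sum_{x,x'}\Pr_f[f(x)=f(x')]\cdots$ is split into the diagonal $x=x'$ and off-diagonal parts. Two-universality (Definition~\ref{def:two-universal}) bounds the off-diagonal part by $2^{-\ell}$ times the full sum, which cancels the cross term. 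The diagonal part is $\sum_x\Tr[\rho_E^x\sigma_E^{-1/2}\rho_E^x\sigma_E^{-1/2}]\le 2^{-m}\Tr\rho$, using $\rho_E^x\le 2^{-m}\sigma_E$. Since $\Tr\omega=2^\ell|\mathcal{F}|$ and the $F$-normalization contributes $1/|\mathcal{F}|$, the result is $\norm{\cdot}_1\le\sqrt{2^{\ell-m}}$, which is the claim after the factor $\tfrac12$.

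\textbf{Smoothing.} Choose $\tilde\rho_{XE}\in B^\eps(\rho_{XE})$ attaining $\Hmin(X\given E)_{\tilde\rho}\ge m$. A finite-dimensional compactness argument gives an exact optimizer; otherwise take a near-optimizer and pass to the limit. Two points need checking.
\begin{itemize}
\item The optimizer can be taken classical on $X$: apply the pinching channel on $X$, which does not decrease $\Hmin$ and does not increase purified distance by CPTN monotonicity.
\item $\tilde\rho$ may be subnormalized. The core bound must therefore be stated for subnormalized cq-states, with $\Tr\tilde\rho\le1$ used in the diagonal term.
\end{itemize}
Apply the core to $\tilde\rho$ to get some $\tilde\sigma_{FE}$. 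Then bound
\[
\tfrac12\norm{\rho_{SFE}-\tau\otimes\tilde\sigma_{FE}}_1\le\tfrac12\norm{\rho_{SFE}-\tilde\rho_{SFE}}_1+\tfrac12\sqrt{2^{\ell-m}} .
\]
The first term is at most $\tfrac12\norm{\rho_{XE}-\tilde\rho_{XE}}_1$ by contractivity of the CPTN hashing map, and this is at most the purified distance, hence at most $\eps$ (generalized Fuchs--van de Graaf). Since $\tilde\sigma_{FE}$ is one admissible choice, the minimum over $\sigma_{FE}$ is bounded by the same quantity.

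\textbf{Main obstacle.} The hardest step is the weighted-norm inequality when $\sigma_E$ is not full rank, together with the associated support bookkeeping. I would first perturb to $\sigma_E+\delta\id$ and take $\delta\to0$, using continuity of the trace norm. A second delicate point is the subnormalized trace-distance-versus-purified-distance inequality, which needs the generalized-fidelity form rather than the normalized one.
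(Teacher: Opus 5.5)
Your route is the paper's in substance. Your single blockwise computation fuses three separate lemmas of the formalization:
\begin{itemize}
\item \texttt{uniform\_to\_collision}, the weighted H\"older step;
\item \texttt{privacy\_amp}, collision entropy through a two-universal hash;
\item \texttt{min\_coll\_bound} and \texttt{coll\_from\_dmax}, the bound from $\rho_{XE}\le 2^{-m}\id\otimes\sigma_E$.
\end{itemize}
Your smoothing stage mirrors \texttt{cq\_smoothing} (pinching to a cq witness), contraction through post-hashing, and a triangle step. You contract trace distance through the hash and convert to purified distance at the input, whereas the paper contracts purified distance and converts at the output. That difference is immaterial.

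One step fails as written: the closing claim that $\tilde\sigma_{FE}$ is an admissible choice. Applied to the smoothed state, your core produces $\tilde\sigma_{FE}=\tau_F\otimes\tilde\rho_E$, whose trace is $\Tr\tilde\rho$, and you note yourself that $\tilde\rho$ may be subnormalized. The minimum in the theorem ranges over states $\sigma_{FE}$; the formal \texttt{dist\_from\_uniform} requires $\sigma\succeq0$ with $\Tr\sigma=\Tr\rho_{FE}$. So $\tilde\sigma_{FE}$ is infeasible unless $\Tr\tilde\rho=\Tr\rho$.

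The repair is short but needs the trace-deficit term you only gestured at:
\begin{itemize}
\item Replace $\tilde\sigma_{FE}$ by $\tilde\sigma_{FE}+(\Tr\rho-\Tr\tilde\rho)\,\omega_{FE}$ for any state $\omega$, or rescale it down if $\Tr\tilde\rho>\Tr\rho$. This adds $\tfrac12\lvert\Tr\rho-\Tr\tilde\rho\rvert$ to the triangle inequality.
\item Absorb that term with the generalized trace-distance bound $\tfrac12\norm{\rho_{XE}-\tilde\rho_{XE}}_1+\tfrac12\lvert\Tr\rho-\Tr\tilde\rho\rvert\le P(\rho,\tilde\rho)\le\eps$, not the plain trace-norm bound.
\end{itemize}
This is exactly the paper's \texttt{dist\_uniform\_pd\_triangle}, built on \texttt{pd\_gtd\_bar\_lb}.

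Rely on your near-optimizer limit rather than compactness. For $\eps\ge\sqrt{\Tr\rho}$ the zero operator lies in the closure of the ball and $\Hmin$ is unbounded there, so no optimizer exists. The limit argument still works, since any finite $m$ is then exceeded inside the ball. With these adjustments the proof goes through.
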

\noindent The LHL~\cite{tomamichel2011leftover} converts min-entropy into secrecy: any key shorter than $\Hmineps(X\given E)$ (up to logarithmic corrections) is indistinguishable from uniform up to the bound above, even against a quantum adversary who learns which hash was chosen.

\noindent\textit{Preparation quality.}\
The preparation quality $q\in[0,1]$ measures the overlap between the two bases,
as defined for general POVMs in \Cref{thm:eur}; for pure states it reduces to
$q=-\log_2\max_{x,z}|\langle\psi_x|\psi_z\rangle|^2$.
The mathematical theorem remains parametric in~$q$, whereas the executable endpoint specializes to the ideal-overlap projective $\mathbb X/\mathbb Z$ class of T6.

\begin{table}[t]
\centering
\caption{Notation for the BB84 Finite-Key Security Bound}
\label{tab:notation}
\begingroup
\renewcommand{\arraystretch}{1.00}
\setlength{\tabcolsep}{1pt}
\begin{tabular}{@{}>{\centering\arraybackslash}p{1.85cm}
                    >{\raggedright\arraybackslash}p{6.25cm}@{}}
\hline
\textbf{Symbol} & \textbf{Description} \\
\hline
$\Phi$ & protocol $\Phi(n,k,\ell,\Qtol,\eps_c,\leakec)$ \\
$n,k,N$ & raw-key (key-round) and test (PE-round) sizes; total sifted size $N{:=}n{+}k$ \\
$\ell,\Qtol$ & final-key length; tolerated bit-error rate \\
$\eps,\eps_c,\eps_s$ & smoothing-ball radius; correctness and secrecy parameters \\
$\leakec,q$ & error-correction leakage (bits); preparation quality ($q{=}1$ for ideal BB84) \\
\hline
$\mathbf{X},\mathbf{X}'$ & Alice's/Bob's $n$-bit raw-key strings ($X$-basis key rounds); boldface denotes strings \\
$\mathbf{Z},\mathbf{Z}'$ & conjugate-basis ($Z$-basis) outcomes on key rounds \\
$E;\ \mathbf{S},\hat{\mathbf{S}}$ & adversary's quantum side information; final keys ($\perp$ on abort) \\
$\rho_{ABE}$ & post-sifted Alice--Bob--Eve joint state \\
\hline
$\mu,h$ & Serfling sampling correction; truncated binary entropy \\
\hline
\end{tabular}
\endgroup\par
\end{table}

\noindent\textit{Main theorem.}\
Combining the EUR, the Serfling sampling bound, and the LHL yields the composable finite-key security bound for BB84; \Cref{tab:notation} collects the notation.

\begin{theorem}[BB84 Finite-Key Bound]\label{thm:bb84}
Let $\Phi(n,k,\ell,\Qtol,\eps_c,\leakec)$ be the BB84 protocol of \Cref{sec:qkd:bb84} with preparation quality~$q$.
Assume $n,k>0$, $0<\eps_c\le1$, $0<\eps_s\le1$, and $\leakec\ge0$.
\begin{enumerate}
\item[\textup{(i)}] $\Phi$ is $\eps_c$-correct.
\item[\textup{(ii)}] $\Phi$ is $\eps_s$-secret if
\begin{equation}\label{eq:keylen}
\ell \;\le\; n\bigl(q - h(\Qtol + \mu)\bigr)
       - \leakec
       - \log_2\!\tfrac{2}{\eps_s^{\,2}\,\eps_c}\,,
\end{equation}
where $\mu := \sqrt{(n{+}k)(k{+}1)/(nk^2)\cdot\ln(2/\eps_s)}$ is the Serfling correction and $h(x) = -x\log_2 x - (1{-}x)\log_2(1{-}x)$ is the binary entropy, truncated to $h(x) = 1$ for $x>1/2$ (clamping ensures more errors never yield more key).
\end{enumerate}
\end{theorem}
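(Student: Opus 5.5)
Correctness~(i) and secrecy~(ii) are proved separately and then combined.

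\textbf{Correctness.} This part is purely classical. A mismatch $\hat{\mathbf{S}}\neq\mathbf{S}$ without abort forces the error-correction hash check to accept although $\mathbf{X}\neq\mathbf{X}'$. The tag has length $\lceil\log_2 1/\eps_c\rceil$ and is chosen from a two-universal family (\Cref{def:two-universal}). So for any fixed pair $\mathbf{X}\neq\mathbf{X}'$, the collision probability is at most $2^{-\lceil\log_2 1/\eps_c\rceil}\le\eps_c$. Averaging over the joint distribution of $(\mathbf{X},\mathbf{X}')$ produced by any adversary gives $\Pr[\hat{\mathbf{S}}\neq\mathbf{S}\wedge\text{not abort}]\le\eps_c$. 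No quantum reasoning is needed, since Eve cannot alter the authenticated tag (T2).

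\textbf{Secrecy.} The secrecy part follows Tomamichel et al. Let $\Omega$ be the event that parameter estimation passes. Let $\eps$ be the smoothing radius, which will be tied to $\eps_s$. I split into two cases.
\begin{itemize}
\item If $p_\mathrm{pass}<\eps_s$, the bound holds trivially. The factor $(1-p_\mathrm{abort})$ is at most $p_\mathrm{pass}$ and the trace distance is at most~$1$.
\item Otherwise, work with the state $\rho_{|\Omega}$ conditioned on passing. Apply \Cref{thm:eur} to the $n$ key rounds, with $E$ enlarged to $E'$ containing the classical transcript. This gives $\Hmineps(\mathbf{X}\given E')\ge nq-\Hmaxeps(\mathbf{Z}\given \mathbf{Z}')$.
\end{itemize}
To bound the max-entropy term, use the Serfling bound. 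Conditioned on $\Omega$, the probability that the error rate on the key block exceeds $\Qtol+\mu$ is at most $\eps^2/p_\mathrm{pass}$ with $\eps$ roughly $\eps_s$; the choice of $\mu$ makes $e^{-2\mu^2 nk^2/((n+k)(k+1))}$ equal to the needed square. A smoothing state that ignores this bad event then lies within purified distance $\eps/\sqrt{p_\mathrm{pass}}$ of the true state. For that state the max-entropy is bounded by counting strings: $\log_2$ of the number of strings within relative Hamming distance $\Qtol+\mu$ is at most $n\,h(\Qtol+\mu)$. The truncation of $h$ handles the case $\Qtol+\mu>1/2$.

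Next, strip the error-correction leakage with the classical chain rule. This gives $\Hmineps(\mathbf{X}\given E'C)\ge n(q-h(\Qtol+\mu))-\leakec-\log_2(2/\eps_c)$, which accounts for the syndrome and the tag. Then apply \Cref{thm:lhl} with $m$ equal to this bound. Inequality~\eqref{eq:keylen} makes $\tfrac12\sqrt{2^{\ell-m}}\le\eps_s/2$, up to how the constants are split.

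Multiplying by $p_\mathrm{pass}$ and using $\eps/\sqrt{p_\mathrm{pass}}$ smoothing yields total secrecy $\le\eps_s$. This uses the observation that $p_\mathrm{pass}\cdot(2\eps/\sqrt{p_\mathrm{pass}})\le 2\eps$ when $p_\mathrm{pass}\ge\eps_s$. I will choose $\eps$ so that the Serfling term and the LHL term sum to $\eps_s$.

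\textbf{Main obstacle.} I expect the hardest step to be the smoothing bookkeeping after conditioning on $\Omega$. The conditioned state is subnormalized, so \Cref{thm:eur} requires $\eps<\sqrt{\Tr\rho}$, that is, $\eps<\sqrt{p_\mathrm{pass}}$. This is exactly where the case split on $p_\mathrm{pass}$ versus $\eps_s$ becomes necessary; without it there is a missing branch.

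A second difficulty is justifying that Eve's max-entropy on the unmeasured $Z$ string is controlled by the sampling statistics. This needs T4, delayed measurement, so that the key-block $Z$ outcomes and the test-block outcomes come from one permutation-symmetric classical distribution to which Serfling applies. I would first prove the Serfling step at the level of classical distributions and lift it to a purified-distance bound via the fidelity of a cq-state with its restriction to the good event.
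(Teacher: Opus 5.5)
Your route is the paper's: the EUR at radius $\eps'=\eps/\sqrt{p_{\mathrm{pass}}}$, then smooth data processing from Bob's register to $\mathbf Z'$ (you fold this into the EUR line), Serfling plus Hamming-ball counting, the classical chain rule, \Cref{thm:lhl}, and rescaling. Correctness is fine, except that the tag compares $\mathbf X$ with $\widehat{\mathbf X}$, not with $\mathbf X'$. Your EUR-domain split at $p_{\mathrm{pass}}<\eps_s$ is a valid alternative to the paper's split at $p_{\mathrm{pass}}\le\eps^2$: with $\eps=\eps_s/2$ and $p_{\mathrm{pass}}\ge\eps_s$ you get $\eps'\le\sqrt{\eps_s}/2<1$.

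\textbf{The gap: you identify non-abort with passing parameter estimation.} In $\Phi$ a run also aborts when the verification tags differ. So \Cref{def:secret} conditions $\rho_{\mathbf S E}$ on $\Omega\wedge\mathrm{acc}$, with prefactor $1-p_{\mathrm{abort}}=p_{\mathrm{pass}}\Pr[\mathrm{acc}\mid\Omega]$. Applying \Cref{thm:lhl} to $\rho_{|\Omega}$ and multiplying by $p_{\mathrm{pass}}$ bounds the distance from uniform of the PE-conditioned output, not of the accepted output. The distance from uniform of a mixture does not control that of its accepted component. Acceptance depends on Bob's $\widehat{\mathbf X}$, which is not in Eve's system, so this conditioning is not a free processing step. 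Renormalizing the accepted block also rescales the smoothing radius again. The domain problem you found therefore reappears, now requiring $\eps'<\sqrt{\Pr[\mathrm{acc}\mid\Omega]}$. This is the second conditioning that the paper says the published argument leaves implicit, and that the formalization handles separately.

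\textbf{How to close it.} Keep the accepted part as a sub-normalized block $\rho_{\mathrm{acc}}\le\rho_{|\Omega}$. The operator inequality holds because acceptance is an event on classical registers.
\begin{enumerate}
\item[(a)] Transfer your step-5 bound $\Hmineps[\eps'](\mathbf X\given E'C)\ge m$ to $\rho_{\mathrm{acc}}$. Make a smoothing witness classical on the registers that decide acceptance, then project it onto the accepting values. Purified distance contracts under this CPTN map, and $\Hmin$ can only grow under $\le$. This step needs $\eps'<\sqrt{\Tr\rho_{\mathrm{acc}}}$.
\item[(b)] Apply \Cref{thm:lhl} to the sub-normalized block. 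Use homogeneity of the distance from uniform together with $1-p_{\mathrm{abort}}\le p_{\mathrm{pass}}\Tr\rho_{\mathrm{acc}}$. This again gives at most $\eps\sqrt{p_{\mathrm{pass}}}+\tfrac12\sqrt{2^{\ell-m}}$.
\item[(c)] Split on $1-p_{\mathrm{abort}}<\eps_s$ rather than on $p_{\mathrm{pass}}<\eps_s$. Then both domain conditions hold automatically, since $\eps^2=\eps_s^2/4<\eps_s$.
\end{enumerate}

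\textbf{Two constants to fix.} Both are needed to land exactly on \eqref{eq:keylen}.
\begin{enumerate}
\item[(i)] The max-entropy smoothing state must be the \emph{renormalized} good-event state. It lies at purified distance $\sqrt{\Pr[\mathrm{bad}\mid\Omega]}\le\eps'$. The unnormalized ``restriction'' you mention lies at distance $\sqrt{1-(1-\delta)^2}\approx\sqrt{2\delta}$, which costs a factor of $\sqrt2$.
\item[(ii)] \Cref{thm:lhl} charges $\eps'$, not $2\eps'$. The $\ln(2/\eps_s)$ in $\mu$ forces $\eps=\eps_s/2$, and \eqref{eq:keylen} gives $\ell-m\le 2\log_2\eps_s$. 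Each of the two terms is then at most $\eps_s/2$. A $2\eps'$ accounting would not close with the stated $\mu$.
\end{enumerate}
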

\noindent The bound~\cite{tomamichel2012tight} subtracts sampling-corrected noise $n\,h(\Qtol{+}\mu)$,
error-correction leakage $\leakec$, and privacy-amplification and verification costs from quantum uncertainty $nq$;
\eqref{eq:keylen} specializes the optimized bound in \Cref{thm:protocol-security} to $\eps=\eb=\eps_s/2$.

\noindent\textit{Proof sketch.}\
The EUR gives $\Hmineps(X\given E)+\Hmaxeps(Z\given B)\ge nq$.
Serfling's sampling bound yields $\Hmaxeps(Z\given Z')\le n\,h(\Qtol{+}\mu)$;
data processing gives $\Hmaxeps(Z\given B)\le\Hmaxeps(Z\given Z')$,
hence $\Hmineps(X\given E)\ge n(q-h(\Qtol{+}\mu))$.
The LHL turns this lower bound into secrecy, while the chain rule charges the $\leakec$ disclosed bits and $\lceil\log_2(1/\eps_c)\rceil$ verification bits.
Verification hashing proves correctness because differing corrected keys have matching tags with probability at most~$\eps_c$.
Apart from logarithmic security terms, the bound's only finite-size loss is $\mu\sim1/\sqrt{k}$,
which Tomamichel et~al. argue any finite-sample error-rate estimate must pay~\cite{tomamichel2012tight}.
Working directly with smooth entropies and sampling without replacement avoids asymptotic-equipartition and de~Finetti penalties;
\Cref{sec:protocol} gives the corresponding protocol-level obligations.

%% file: sections/protocol.tex
\section{Protocol Model}\label{sec:protocol}

\Cref{sec:qkd} described QKD as five stages (preparation and measurement, sifting, parameter estimation, error correction, and privacy amplification), and \Cref{sec:security} proved security bounds on the quantum state that passes through them.
But a security proof in isolation certifies only that a quantum state satisfies a secrecy bound; it does not specify which \emph{protocol} produces that state.
This section introduces a programming language that captures the behavior of each phase.
Precise semantics let us state and verify properties (correctness, secrecy, abort) as formal theorems.
The protocol layer connects what a QKD protocol \emph{does} (\Cref{sec:qkd}) to what a security proof \emph{certifies} (\Cref{sec:security}).
The resulting theorem quantifies over every initial state and every transcript-dependent adversary channel.

\subsection{Syntax}\label{sec:protocol:syntax}

We assume a finite quantum variable set $\mathit{qvar} = \{q, q_0, q_1, \ldots\}$ and a finite classical variable set $\mathit{cvar} = \{x, x_0, x_1, \ldots\}$.
Quantum registers $\overline{q}, \overline{q}_1,\ldots$ are lists of distinct quantum variables, so a command can act on a whole register block.
The core language is deliberately small: it separates classical computation, random sampling, quantum initialization, quantum measurement, and the adversarial quantum-channel action.
The syntax can also express the control flow of prepare-and-measure and
entanglement-based protocols.
\Cref{app:qkd-variants} gives paper-level pseudocode schemas inspired by
six-state, B92, BBM92, and E91.

\begin{definition}[Syntax]\label{def:protocol-cmd}
A finite quantum program is generated by the following syntax:
\begin{align}
c \;::=\;
  & \mathbf{skip} \mid x := e
    \mid x \xleftarrow{\$} \mu
    \mid \overline{q} \leftarrow \rho
    \mid x := M[\overline{q}]
    \label{eq:cmd-atomic} \\
    \mid & \mathsf{Attack}[\overline{q}]
    \mid c_1 \mathbin{;;} c_2
    \mid \mathbf{if}\ b\ \mathbf{then}\ c_1\ \mathbf{else}\ c_2.
    \label{eq:cmd-control}
\end{align}
\end{definition}
Atomic constructs include: $\mathbf{skip}$ is the empty command; $x:=e$ is the classical assignment, $x\xleftarrow{\$}\mu$ samples a classical value from the distribution~$\mu$, $\overline q\leftarrow\rho$ initializes the quantum register $\overline{q}$ to the state $\rho$, $x:=M[\overline q]$ performs quantum measurement $M$ on $\overline q$ and stores the outcome in~$x$, and $\mathsf{Attack}[\overline q]$ applies an arbitrary adversarial channel to $\overline q$ together with Eve's memory. Two structural constructs are $c_1 \mathbin{;;} c_2$ for sequential composition, and $\mathbf{if}\ b\ \mathbf{then}\ c_1\ \mathbf{else}\ c_2$ for conditionals. For convenience, we use the following syntactic sugar for fixed-round loops, which is sufficient for describing QKD protocols:
\begin{align*}
\textsc{for}\; i < n,\; c_i
  &\;\triangleq\; \mathbf{skip} \mathbin{;;}
  c_0 \mathbin{;;} c_1 \mathbin{;;} \cdots \mathbin{;;} c_{n-1}.
  \label{eq:for-loop}
\end{align*}
The leading $\mathbf{skip}$ makes the empty loop ($n=0$) a well-formed identity map rather than an empty sequence.
Protocol listings number rounds from one and write
$\textsc{for}\; i=1,\ldots,n,\; c_i$ for the relabeled unrolling
$c_1 \mathbin{;;} \cdots \mathbin{;;} c_n$.

\subsection{Semantics}\label{sec:protocol:semantics}

\noindent\textit{Classical-quantum state (cq-state).}
We follow \cite{feng2021qhlcv, barbosa2021easypqc}
and represent cq-states as functions from classical memories to quantum states.
Let $\mathrm{Val}$ be a finite set of values, and let $\sigma \in \Sigma\triangleq \mathit{cvar}\rightarrow \mathrm{Val}$ be a classical memory, and let $\rho\in\mathcal{D}(\mathcal{H})$ be a partial (i.e., sub-normalized) density operator on $\mathcal{H}\triangleq \bigotimes_{q\in\mathit{qvar}}\mathcal{H}_q$.
A normalized cq-state is a function $\xi:\Sigma\rightarrow\mathcal{D}(\mathcal{H})$ with $\sum_\sigma\Tr(\xi(\sigma))=1$; we write $\Xi$ for the set of these states.
We write $(\sigma,\rho)$ for the possibly sub-normalized singleton cq-state concentrated at~$\sigma$, i.e., $(\sigma,\rho)(\sigma) = \rho$ and $(\sigma,\rho)(\sigma') = 0$ if $\sigma\neq\sigma'$. Any cq-state can be decomposed as a sum of singleton states: $\xi = \sum_\sigma(\sigma,\xi(\sigma))$.

\noindent\textit{Semantics of the adversary.}
We assume a designated classical variable $P$ that collects the public transcript, and a sufficiently large quantum system $q_{A}$ that models both the environment and the adversary's local quantum memory.
Adversary action is modeled by a function $\mathcal{E}_A: \mathrm{Val} \rightarrow \mathcal{QC}$, where $\mathcal{QC}$ is the set of quantum channels, that is, completely positive trace-preserving maps, on $\mathcal{H}_{\overline{q}\cup q_A}$ for the attacked registers~$\overline{q}$.
For every $\mathsf{Attack}[\overline{q}]$ command, the adversary applies $\mathcal{E}_A(P)$ to $\overline{q}\cup q_A$: the adversary may choose its action depending on the public transcript~$P$, but the action cannot depend on any other classical variable or act on any quantum variable outside $\overline{q}$ and $q_A$.
The adversary reads but cannot rewrite~$P$, which is T2 in operational form.
The security of the protocol is then established over all adversary actions $\mathcal{E}_A$.

\begin{definition}[Denotational Semantics]\label{def:protocol-semantics}
Given an adversary action $\mathcal{E}_A : \mathrm{Val} \rightarrow \mathcal{QC}$,
the denotational semantics of a well-formed finite program is $\sem{c}:\Xi\to\Xi$.
The semantics extends linearly to sub-normalized cq-states.
Its auxiliary action on a singleton state is
$\auxsem{c}:\Sigma\times\mathcal{D}(\mathcal{H})\to(\Sigma\to\mathcal{D}(\mathcal{H}))$
and returns a sub-normalized cq-state of total mass $\Tr\rho$:
\begin{itemize}
    \item $\auxsem{\mathbf{skip}}(\sigma,\rho)=(\sigma,\rho)$;
    \item $\auxsem{x := e}(\sigma,\rho)= (\sigma[x\mapsto \sem{e}_\sigma], \rho)$;
    \item $\auxsem{x \xleftarrow{\$} \mu}(\sigma,\rho)=\sum_v(\sigma[x\mapsto v],\sem{\mu}_\sigma(v)\cdot\rho)$;
    \item $\auxsem{\overline{q} \leftarrow \rho_{\mathrm{init}}}(\sigma,\rho) = (\sigma, \Tr_{\overline{q}}(\rho)\otimes \sem{\rho_{\mathrm{init}}}_\sigma[\overline{q}])$;
    \item $\auxsem{x := M[\overline{q}]}(\sigma,\rho) = \sum_v(\sigma[x\mapsto v], M_v[\overline{q}]\cdot\rho\cdot M_v^\dagger[\overline{q}])$ where $\{M_v\}_v := \sem{M}_\sigma$ is a quantum measurement, i.e., a family of measurement operators with $\sum_v M_v^\dagger M_v = \id$;
    \item $\auxsem{\mathsf{Attack}[\overline{q}]}(\sigma,\rho) = (\sigma,\mathcal{E}_A(\sem{P}_\sigma)[\overline{q},q_A](\rho))$;
    \item $\auxsem{\mathbf{if}\ b\ \mathbf{then}\ c_1\ \mathbf{else}\ c_2}(\sigma,\rho) = \left\{\begin{array}{ll}
        \auxsem{c_1}(\sigma,\rho) & \text{if}\ \sem{b}_\sigma = 1 \\
        \auxsem{c_2}(\sigma,\rho) & \text{if}\ \sem{b}_\sigma = 0
    \end{array}\right.$;
    \item $\sem{c}(\xi) = \sum_\sigma \auxsem{c}(\sigma,\xi(\sigma))$;
    \item $\auxsem{c_1 \mathbin{;;} c_2}(\sigma,\rho) = \sem{c_2}(\auxsem{c_1}(\sigma,\rho))$, so that $\sem{c_1 \mathbin{;;} c_2} =\sem{c_2}\circ\sem{c_1}$.
\end{itemize}
Distribution, state, and measurement arguments are expressions, evaluated in the current classical memory~$\sigma$.
Classical value expressions are functions $\Sigma\to\mathrm{Val}$, and guards are functions $\Sigma\to\{0,1\}$.
A program is well-formed when, in every classical memory, each sampling distribution $\sem{\mu}_\sigma$ is a probability distribution on $\mathrm{Val}$, each initial state $\sem{\rho_{\mathrm{init}}}_\sigma$ is a density operator of trace one, each measurement satisfies $\sum_v M_v^\dagger M_v = \id$, and every initialized, measured, or attacked register set is disjoint from $q_A$; for well-formed programs, $\sem{c}$ maps $\Xi$ into $\Xi$.
For every well-formed command and positive cq-state, the semantics preserves
total mass, so $\sem{c}$ maps well-formed cq-states to well-formed cq-states.
\end{definition}

\subsection{QKD and BB84}\label{sec:protocol:qkd-bb84}

We now specialize the QKD phases of \Cref{sec:qkd} to the \cite{tomamichel2012tight} BB84
protocol and write each phase as syntactic sugar using the primitive language in Definition~\ref{def:protocol-cmd}.
Let
$p_{\mathbb X}:=(1+\sqrt{k/n})^{-1}$,
$p_{\mathbb Z}:=1-p_{\mathbb X}$, let $\mu_{\mathbb B}$ be the
distribution on $\{\mathbb X,\mathbb Z\}$ with these probabilities, and let
$\mu_2$ be uniform on $\{0,1\}$.
$\mathsf{Unif}(S)$ denotes the uniform distribution on a nonempty finite
set~$S$, so $\mu_2 = \mathsf{Unif}(\{0,1\})$.
Write $\mathcal K_n(\mathcal X):=\{K\subseteq\mathcal X:|K|=n\}$.
The protocol's symbol $M$ is the realized stopping counter, defined as the
least round after which $|\mathcal X|\ge n$ and
$|\mathcal Z|\ge k$.
The formal protocol language represents iteration by finite unrolling, so in the
program below $M$ denotes a fixed round budget (an explicit model parameter);
a run whose budget ends with $|\mathcal X|<n$ or $|\mathcal Z|<k$ aborts at
the parameter-estimation test.
Because the budget is fixed, the loop needs no per-round announcements
to decide termination, so the program publishes all sifting announcements in one
$\mathsf{Announce}$ step after the loop; announcing bases while quantum
rounds are in flight is also known to invalidate the random-sampling
step of finite-key proofs~\cite{pfister2016sifting}.

\begin{definition}[QKD component sugar]\label{def:qkd-component-sugar}
For round~$i$, write $d_i := (y'_i\ne\emptyset)$ for Bob's detection
flag.
We abbreviate parameter estimation, error correction, and privacy
amplification by $\mathsf{ParamEst}$, $\mathsf{ErrCorr}$, and
$\mathsf{PrivAmp}_\ell$.
The per-round components expand as follows:
\begin{align*}
\mathsf{Abort}
  \;\triangleq\;&
  \mathsf{abort}:=\mathsf{true}
  \mathbin{;;} S:=\perp
  \mathbin{;;} \hat S:=\perp,\\[2pt]
\mathsf{Prep}_i(\rho)
  \;\triangleq\;&
  a_i\xleftarrow{\$}\mu_{\mathbb B}
  \mathbin{;;} y_i\xleftarrow{\$}\mu_2
  \mathbin{;;} q_i\leftarrow\rho_{a_i,y_i},\\[2pt]
\mathsf{Dist}_i
  \;\triangleq\;&
  \mathsf{Attack}[q_i],\\[2pt]
\mathsf{Meas}_i(\mathcal M)
  \;\triangleq\;&
  b_i\xleftarrow{\$}\mu_{\mathbb B}
  \mathbin{;;} y'_i:=\mathcal M_{b_i}[q_i],\\[2pt]
\mathsf{Sift}_i
  \;\triangleq\;&
  d_i := (y'_i\ne\emptyset)\mathbin{;;}\\
  &
  \mathcal X:=\{j\le i \mid a_j=b_j=\mathbb X,\ d_j\}\mathbin{;;}\\
  &
  \mathcal Z:=\{j\le i \mid a_j=b_j=\mathbb Z,\ d_j\},\\[2pt]
\mathsf{Round}_i(\rho,\mathcal M)
  \;\triangleq\;&
  \mathsf{Prep}_i(\rho)
  \mathbin{;;} \mathsf{Dist}_i
  \mathbin{;;} \mathsf{Meas}_i(\mathcal M)
  \mathbin{;;} \mathsf{Sift}_i.
\end{align*}
After the loop terminates, the announcement and classical post-processing
components are
\begin{align*}
\mathsf{Announce}
  \;\triangleq\;&
  P:=P\mathbin{\|}\bigl((a_i,b_i,d_i)\bigr)_{i=1}^{M},\\[2pt]
\mathsf{ParamEst}
  \;\triangleq\;&
  K\xleftarrow{\$}
  \begin{cases}
    \mathsf{Unif}(\mathcal K_n(\mathcal X)) & \text{if } |\mathcal X|\ge n\\
    \delta_\perp & \text{otherwise}
  \end{cases}
  \mathbin{;;}\\
  &
  X_A:=(y_j)_{j\in K}
  \mathbin{;;} X_B:=(y'_j)_{j\in K}\mathbin{;;}\\
  &
  Z_A:=(y_j)_{j\in\mathcal Z}
  \mathbin{;;} Z_B:=(y'_j)_{j\in\mathcal Z}\mathbin{;;}\\
  &
  P:=P\mathbin{\|}(K,Z_A,Z_B)
  \mathbin{;;} \\
  & \lambda:=\bigl(\textstyle\sum_{j\in\mathcal Z}(y_j\oplus y'_j),\,
       |\mathcal Z|\bigr),\\[2pt]
\mathsf{ErrCorr}
  \;\triangleq\;&
  (p_{\rm EC},\widehat X):=\mathsf{IR}_{\leakec}(X_A,X_B)\mathbin{;;}\\
  &
  g\xleftarrow{\$}\mathsf{Unif}(\mathcal G_{\eps_c})
  \mathbin{;;} t_A:=g(X_A)
  \mathbin{;;} t_B:=g(\widehat X)\mathbin{;;}\\
  &
  P:=P\mathbin{\|}(p_{\rm EC},g,t_A),\\[2pt]
\mathsf{PrivAmp}_\ell
  \;\triangleq\;&
  f\xleftarrow{\$}\mathsf{Unif}(\mathcal F_\ell)\mathbin{;;}\\
  &S:=f(X_A)
  \mathbin{;;} \hat S:=f(\widehat X)\mathbin{;;}\\
  &P:=P\mathbin{\|}f .
\end{align*}
\end{definition}
Here $\rho_{a_i,y_i}$ is the BB84 source expression, and
$\mathcal M_{b_i}$ is Bob's measurement expression.
In $\mathsf{ParamEst}$, $\delta_\perp$ is the point distribution at
$\perp$: when the budget leaves $|\mathcal X|<n$, the raw-key sample
defaults to $K=\perp$, making the sampling command total on every branch.
The estimate $\lambda$ is carried as the numerator--denominator pair
shown, and the test $\lambda>\Qtol$ in \Cref{fig:bb84-program} denotes
the cross-multiplied comparison
$\sum_{j\in\mathcal Z}(y_j\oplus y'_j)>\Qtol\,|\mathcal Z|$, well
defined even when $\mathcal Z=\emptyset$.
The expression $\mathsf{IR}_{\leakec}$ denotes the fixed
information-reconciliation scheme, chosen before the protocol run.
Applied to the realized raw-key pair, it produces a public transcript
$p_{\rm EC}$ of length at most~$\leakec$ and Bob's estimate~$\widehat X$.
The family $\mathcal G_{\eps_c}$ is the two-universal verification
hash family with output length $\lceil\log_2(1/\eps_c)\rceil$, and
$\mathcal F_\ell$ is the two-universal privacy-amplification family.
The transcript records the authenticated public data used by this
abstraction: basis and detection announcements needed for sifting, the
raw-key subset and test bits used for parameter estimation,
error-correction data, the verification hash seed and Alice's tag, and
the privacy-amplification seed.
The bit values in the raw-key subset are never appended to~$P$.

\begin{figure}[H]
\centering
\begingroup
\setlength{\tabcolsep}{0pt}
\renewcommand{\arraystretch}{1.05}
\begin{tabular}{@{}r@{\hspace{0.45em}}l@{}}
\multicolumn{2}{@{}l@{}}{$\theta=(n,k,\ell,\Qtol,\eps_c,\leakec)$}\\
\multicolumn{2}{@{}l@{}}{$\mathsf{BB84}(\theta) \;:= $}\\
1 & $P:=\emptyset
     \mathbin{;;}\ \mathcal X:=\emptyset
     \mathbin{;;}\ \mathcal Z:=\emptyset
     \mathbin{;;}\ \mathsf{abort}:=\mathsf{false}$\\
2 & $\textsc{for}\ i=1,\ldots,M,$\\
3 & $\quad \mathsf{Round}_i(\rho,\mathcal M)$\\
4 & $\mathsf{Announce}$\\
5 & $\mathsf{ParamEst}$\\
6 & $\mathbf{if}\ |\mathcal X|<n \vee |\mathcal Z|<k \vee \lambda>\Qtol$\\
  & $\quad \mathbf{then}\ \mathsf{Abort}\ \mathbf{else}$\\
7 & $\quad \mathsf{ErrCorr}$\\
8 & $\quad \mathbf{if}\ t_A\ne t_B\ \mathbf{then}\ \mathsf{Abort}\ \mathbf{else}$\\
9 & $\quad\quad \mathsf{PrivAmp}_\ell$
\end{tabular}
\endgroup
\caption{BB84 written with the QKD component sugar of
Definition~\ref{def:qkd-component-sugar}.
The loop is the finite unrolling through the round budget~$M$.
The abort test on line~6 subsumes the sifting condition of~\cite{tomamichel2012tight}:
a run whose budget ends with $|\mathcal X|<n$ or $|\mathcal Z|<k$ aborts,
so post-processing continues only when that condition holds.
The raw key comes from a uniformly chosen size-$n$ subset of
$\mathcal X$, while the parameter-estimation error rate~$\lambda$ is
computed from all detected matching $\mathbb Z$-basis rounds in
$\mathcal Z$, as in the original protocol definition.}
\label{fig:bb84-program}
\end{figure}

\subsection{Security Theorem}\label{sec:protocol:security}

\noindent
Let $\mathsf{Src}(a,y)=\rho_{a,y}$ be the program source expression, and
let $\mathcal M$ be Bob's measurement expression.
We write $\mathsf{BB84}_{\mathsf{Src},\mathcal M}(n,k,\ell,\Qtol,\eps_c,\leakec)$
for the program of \Cref{fig:bb84-program} with these expressions, and
subscript the semantics of Definition~\ref{def:protocol-semantics} with the
adversary strategy it is taken under.
For any initial cq-state $\xi_{\mathrm{init}}\in\Xi$ and adversary
strategy~$\mathcal E_A$, let
\[
  \xi_{\mathrm{out}}
  :=
  \sem{\mathsf{BB84}_{\mathsf{Src},\mathcal M}
    (n,k,\ell,\Qtol,\eps_c,\leakec)}_{\mathcal E_A}
  (\xi_{\mathrm{init}})
\]
be the terminal state of \Cref{fig:bb84-program}, and
$p_{\mathrm{abort}}$ its abort probability.
When $p_{\mathrm{abort}}<1$, let
$\rho_{S F_{\mathrm{PA}} E^+}$ be the normalized
verification-accepted output state.
Here $F_{\mathrm{PA}}$ is the public privacy-amplification seed, shown
separately, and $E^+$ comprises Eve's final quantum memory and the remaining
public transcript.
The abort probability is $p_{\mathrm{abort}}=\sum_{\sigma:\,\sigma(\mathsf{abort})=\mathsf{true}}\Tr\,\xi_{\mathrm{out}}(\sigma)$.
The state $\rho_{S F_{\mathrm{PA}} E^+}$ retains the key, seed, Eve's quantum memory, and public transcript, conditioned on $\mathsf{abort}=\mathsf{false}$.

\begin{theorem}[Protocol Security]\label{thm:protocol-security}
Assume $\mathsf{Src}$ and~$\mathcal M$ are certified with preparation
quality~$q$.
The executable endpoint is restricted to the calibrated projective
$\mathbb X/\mathbb Z$ device class T6 and assumes basis-independent
detection T5.
Within this class, the artifact proves the entropy comparison corresponding
to T4.
Assume $n,k,M>0$, $0<\eps_c\le1$, $0<\eps_s\le1$, $0\le\leakec$, and
$\eps_c+\eps_s\le\eps_{\mathrm{tot}}$.
Assume error correction leaks at most~$\leakec$ bits, verification is
two-universal with tag length $\lceil\log_2(1/\eps_c)\rceil$, and privacy
amplification is two-universal.
For the constants of Tomamichel et al.~\cite{tomamichel2012tight}, if
\begin{multline*}
\ell \;\le\;
  \sup_{\substack{0<\eps<1,\ \eb>0 \\[1pt] \eps+\eb\,\le\,\eps_s}}
  \bigl\lfloor
    nq - n\,h(\Qtol{+}\mu(\eps)) \\
    {}- 2\log_2\tfrac{1}{2\eb}
    - \leakec - \log_2\tfrac{2}{\eps_c}
  \bigr\rfloor,
\end{multline*}
where
\[
  \mu(\eps)
  :=
  \sqrt{\frac{n+k}{nk}\cdot\frac{k+1}{k}\cdot\ln\frac{1}{\eps}},
\]
and $h$ is the truncated binary entropy,
then for every initial cq-state $\xi_{\mathrm{init}}\in\Xi$ and every adversary
strategy~$\mathcal E_A$:
\begin{enumerate}
\item the output is either on the abort branch
  ($\mathsf{abort}=\mathsf{true}$ and $S=\hat S=\perp$) or on the
  amplified-key branch ($\mathsf{abort}=\mathsf{false}$ and
  $S,\hat S\in\{0,1\}^{\ell}$);
\item the protocol is $\eps_c$-correct,
  \[
    \Pr_{\xi_{\mathrm{out}}}
      [S\ne\hat S \wedge \mathsf{abort}=\mathsf{false}]
    \le \eps_c;
  \]
\item if $p_{\mathrm{abort}}<1$, the normalized verification-accepted output is
  $\eps_s$-secret:
  \[
    (1-p_{\mathrm{abort}})
    \min_{\omega_{F_{\mathrm{PA}} E^+}}
    \tfrac12
    \norm{\rho_{S F_{\mathrm{PA}} E^+}
      - \tau_S\otimes\omega_{F_{\mathrm{PA}} E^+}}_1
    \le \eps_s,
  \]
  If $p_{\mathrm{abort}}=1$, this secrecy clause holds because its
  $(1-p_{\mathrm{abort}})$ prefactor vanishes.
\end{enumerate}
Consequently
$\mathsf{BB84}_{\mathsf{Src},\mathcal M}
(n,k,\ell,\Qtol,\eps_c,\leakec)$ is
$\eps_{\mathrm{tot}}$-secure in the sense of
Definitions~\ref{def:correct} and~\ref{def:secret}.
\end{theorem}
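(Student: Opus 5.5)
The plan is to reduce the program-level statement to the state-level bounds of \Cref{sec:security} by a semantic case analysis on the denotation of \Cref{fig:bb84-program}, and then combine the EUR (\Cref{thm:eur}), Serfling's bound, and the Quantum LHL (\Cref{thm:lhl}).

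\textbf{Clause (i): branch structure.} We unfold $\sem{\mathsf{BB84}}_{\mathcal E_A}(\xi_{\mathrm{init}})$ through the auxiliary semantics of Definition~\ref{def:protocol-semantics}. We show, by induction on the unrolled loop and then the two nested conditionals, that every classical memory $\sigma$ in the support of $\xi_{\mathrm{out}}$ falls into one of two cases. Either it passed through an $\mathsf{Abort}$ command, which sets $\mathsf{abort}=\mathsf{true}$ and $S=\hat S=\perp$. Or it reached $\mathsf{PrivAmp}_\ell$, which leaves $\mathsf{abort}=\mathsf{false}$ from line~1 and assigns $S,\hat S$ values in $\{0,1\}^\ell$. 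Mass preservation of well-formed programs makes the total trace one, so $p_{\mathrm{abort}}$ is well defined.

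\textbf{Clause (ii): correctness.} On the non-abort branch we must have $t_A=t_B$, that is, $g(X_A)=g(\widehat X)$. Since $S=f(X_A)$ and $\hat S=f(\widehat X)$, the event $S\ne\hat S$ implies $X_A\ne\widehat X$. We therefore bound
\[
\Pr[S\neq\hat S\wedge \neg\mathsf{abort}]\le\Pr[X_A\neq\widehat X\wedge g(X_A)=g(\widehat X)].
\]
To do so, we decompose $\xi$ just before the sampling of $g$ into singletons $\sigma$. On each singleton with $X_A\ne\widehat X$, two-universality (Definition~\ref{def:two-universal}) with tag length $\lceil\log_2(1/\eps_c)\rceil$ gives a collision probability of at most $\eps_c$. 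Summing over $\sigma$ and using linearity of $\sem{\cdot}$ yields the bound $\eps_c$, with no dependence on $\mathcal E_A$.

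\textbf{Clause (iii): secrecy, and the expected main obstacle.} The key step is to identify, on the PE-pass branch, a tripartite state $\rho_{ABE}$ with two properties. First, $X_A$ arises from $\mathbb X$-measurements of $A$ on the key block $K$. Second, $Z_A,Z_B$ arise from $\mathbb Z$-measurements on the test rounds, with Eve's register $E$ collecting $q_A$ and the transcript $P$.

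This is the T4 entropy comparison. Within the T6 class, I would prove it by commuting the source-box EB rewriting past $\mathsf{Attack}$. This works because $\mathcal E_A$ depends only on $P$, which is untouched until $\mathsf{Announce}$. I would also use T5 to show that detection flags are basis-independent, so that conditioning on $d_i$ commutes with the basis choice. I expect this program-to-state equation to be the hardest part: it requires re-associating tensor factors across $M$ rounds and handling the sub-normalized, pass-conditioned state.

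Given this state, the remaining steps are as follows.
\begin{enumerate}
\item \Cref{thm:eur} gives $\Hmineps(X\given E)\ge nq-\Hmaxeps(Z\given B)$.
\item Serfling's bound, applied with a pass-probability case split, gives $\Hmaxeps(Z\given B)\le\Hmaxeps(Z\given Z')\le n\,h(\Qtol+\mu(\eps))$. The split is needed because if $p_{\mathrm{pass}}<\eps^2$ the smoothing ball may violate $\eps<\sqrt{\Tr\rho}$, and secrecy must then hold trivially from the prefactor.
\item The classical chain rule charges $\leakec+\lceil\log_2(1/\eps_c)\rceil$ bits for $p_{\mathrm{EC}}$ and $t_A$ added to $E^+$.
\item Conditioning on verification acceptance costs nothing in min-entropy of the sub-normalized state.
\item \Cref{thm:lhl} with smoothing $\eps$ and $m$ the resulting bound gives distance at most $\eps+\tfrac12\sqrt{2^{\ell-m}}$. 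The hypothesis on $\ell$ makes the second term at most $\eb$, so the total is at most $\eps+\eb\le\eps_s$ after multiplying by $1-p_{\mathrm{abort}}$.
\end{enumerate}
Taking the supremum is harmless: for any $\ell$ below the sup we pick a witnessing pair $(\eps,\eb)$. The case $p_{\mathrm{abort}}=1$ is immediate. Finally, $\eps_c+\eps_s\le\eps_{\mathrm{tot}}$ gives $\eps_{\mathrm{tot}}$-security.
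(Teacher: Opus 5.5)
Your architecture matches the paper's. Clause (i) comes from unfolding the semantics, and correctness comes from the two-universal tag, as in \texttt{bb84\_correct}. Secrecy comes from the EUR, Bob's delayed $\mathbb Z$ measurement as data processing, the sampling bound, the classical chain rule and the LHL, with the branch $p_{\mathrm{pass}}\le\eps^2$ closed by the prefactor and T4 discharged inside the T6 class. One correction to your T4 step: the EUR's $\mathbf Z,\mathbf Z'$ are hypothetical conjugate-basis outcomes on the key rounds. They are linked to the observed test bits only through a full-string sampling model and Bob's delayed channel, not the test-round strings $Z_A,Z_B$ themselves.

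Where you genuinely diverge is the bookkeeping of the two conditionings. The paper normalizes the PE-pass state and runs the EUR, data processing and \texttt{stat} at $\eps'=\eps/\sqrt{p_{\mathrm{pass}}}$. It then treats verification acceptance as a second, explicit conditioning. Using $1-p_{\mathrm{abort}}\le p_{\mathrm{pass}}\Tr\rho_{\mathrm{acc}}$, it either transfers the entropy bound to the accepted block or, when acceptance is too unlikely, closes secrecy from the abort probability. Finally it rescales, $(1-p_{\mathrm{abort}})(\eps'+\eb)\le\eps+\eb$.

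You keep everything sub-normalized at the fixed radius $\eps$. The prefactor is then absorbed automatically: the distance from uniform of the sub-normalized accepted post-hash state is exactly $(1-p_{\mathrm{abort}})$ times that of $\rho_{S F_{\mathrm{PA}} E^+}$. The paper's second case split collapses into your step~4.

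This route is leaner and it works, but step~4 is exactly the conditioning the paper says the published proof leaves implicit (\Cref{sec:findings:eur}), so it needs an argument. One argument runs as follows.
\begin{itemize}
\item Since $\rho_{\mathrm{acc}}\le\rho$, take a purification $\psi$ of $\rho$ on a purifying system $R$. There is $0\le Q\le\id$ on $R$ with $\rho_{\mathrm{acc}}=\Tr_R[(\id\otimes\sqrt{Q})\psi(\id\otimes\sqrt{Q})]$.
\item Lift an optimal witness $\tilde\rho\in B^{\eps}(\rho)$ by Uhlmann to $\tilde\psi$ with $P(\psi,\tilde\psi)=P(\rho,\tilde\rho)$.
\item Apply the same trace-nonincreasing map to $\tilde\psi$. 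The result is within $\eps$ of $\rho_{\mathrm{acc}}$ and is $\le\tilde\rho$, hence $\Hmineps(X\given E')_{\rho_{\mathrm{acc}}}\ge\Hmineps(X\given E')_{\rho}$.
\end{itemize}
You must also treat $\Tr\rho_{\mathrm{acc}}\le\eps^2$, where that witness may vanish. There the distance is at most $\Tr\rho_{\mathrm{acc}}\le\eps^2\le\eps_s$; this is the paper's second branch in your language.

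Your choice has two further costs.
\begin{itemize}
\item Step~5 must apply the LHL to the sub-normalized accepted state. Normalizing first changes both the entropy and the admissible radius, so ``LHL at radius $\eps$, then multiply by $1-p_{\mathrm{abort}}$'' is not justified as written.
\item Step~2 needs a sub-normalized variant of \texttt{stat} whose smoothing witness has optimized trace. Merely scaling the normalized good-subspace witness by $p_{\mathrm{pass}}$ lands outside the $\eps$-ball.
\end{itemize}
The paper's normalized route avoids both by reusing the library statements as they are.
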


\noindent
Sifting, parameter estimation, and verification are program branches, not
external theorem premises: a run that ends under-sifted
($|\mathcal X|<n$ or $|\mathcal Z|<k$), fails the tolerance test, or
fails verification produces $S=\hat S=\perp$.

\begin{theorem}[BB84 Instance]\label{thm:bb84-instance}
For the concrete calibrated projective $\mathbb X/\mathbb Z$ instance with the
ideal Hadamard source ($q{=}1$), round budget $M=100(n{+}k)$, $n=k=10^4$,
$\Qtol=1\%$, $\leakec=1024$, $\eps=\eb=\eps_c=2^{-34}$, and
$\eps_s=2^{-33}$, this BB84 instance is $\eps_\mathrm{tot}$-secure with
$\eps_\mathrm{tot}=3\cdot 2^{-34}$ and key length $\ell=2000$.
\end{theorem}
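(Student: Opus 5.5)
The plan is to obtain this as a direct instantiation of \Cref{thm:protocol-security}. Almost all of the work is then discharging its hypotheses for the concrete parameters, plus one numerical key-length inequality.

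First I will fix the device side. I take $\mathsf{Src}$ to be the ideal Hadamard source, with $\rho_{\mathbb Z,y}=\ket{y}\!\bra{y}$ and $\rho_{\mathbb X,y}$ the $\ket{\pm}$ projectors. I take $\mathcal M$ to be the projective $\mathbb X/\mathbb Z$ readout with a basis-independent no-click element. This places the instance in the calibrated class T6 and satisfies T5.

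The preparation quality is computed from the definition in \Cref{thm:eur}. For rank-one projectors, $\norm{\sqrt{M_x}\sqrt{N_z}}_\infty^2=|\langle x|z\rangle|^2=1/2$ for every pair $(x,z)$, so $q=-\log_2(1/2)=1$.

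The scalar side conditions are then routine. We have $n=k=10^4>0$ and $M=100(n{+}k)>0$, and $\eps_c=2^{-34}$ and $\eps_s=2^{-33}$ both lie in $(0,1]$. The budget condition holds with equality: $\eps_c+\eps_s=2^{-34}+2\cdot2^{-34}=3\cdot2^{-34}=\eps_{\mathrm{tot}}$. Finally, the leakage, verification-tag, and two-universality assumptions come from the chosen $\mathsf{IR}_{1024}$, $\mathcal G_{\eps_c}$, and $\mathcal F_\ell$ families.

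For the key-length premise, I do not evaluate the supremum. It suffices to exhibit the feasible point $\eps=\eb=2^{-34}$, which satisfies $0<\eps<1$, $\eb>0$, and $\eps+\eb=2^{-33}\le\eps_s$. I then show that the floor at this point is at least $2000$. At this point the logarithmic terms are exact integers:
\[
2\log_2\tfrac{1}{2\eb}=2\cdot 33=66,
\qquad
\log_2\tfrac{2}{\eps_c}=35.
\]
It therefore remains to show $n\bigl(1-h(0.01+\mu)\bigr)\ge 2000+66+1024+35=3125$, that is, $h(0.01+\mu)\le 0.6875$.

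The Serfling term is
\[
\mu=\sqrt{2\cdot10^{-4}\cdot 1.0001\cdot 34\ln 2}\approx 0.0687 .
\]
Since $34\ln 2<24$, we get $\mu^2<4.81\cdot10^{-3}<0.01$, so $\mu<0.1$ and $\Qtol+\mu<0.11\le 1/2$.

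Monotonicity of the truncated binary entropy on $[0,1/2]$ then gives $h(\Qtol+\mu)\le h(0.11)$. A crude rational bound on $\log_2$ yields $h(0.11)<0.5$; the true value is about $0.4999$, so I would use a slightly looser point such as $h(0.1)\approx0.469$ if $\mu<0.09$ can be shown instead. Either way the bound gives $\ell=2000\le\lfloor\cdot\rfloor$, with the true value near $4900$.

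The main obstacle is this last numerical step inside \Rocq{}. It requires certified inequalities involving $\ln$, $\sqrt{\cdot}$, and $\log_2$ over \MathCompAnalysis{} reals, together with monotonicity of the clamped $h$. I would first try to replace every transcendental quantity by a rational enclosure: bound $\ln 2$ by a rational number, bound $\sqrt{\cdot}$ by squaring, and bound $\log_2$ at the evaluation point via the bound $\ln x\le x-1$ and explicit power comparisons. The large slack leaves a lot of room for loose bounds.

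Once the premise is established, \Cref{thm:protocol-security} yields clauses 1--3 and $\eps_{\mathrm{tot}}$-security for every initial state and every adversary. This is exactly the stated claim.
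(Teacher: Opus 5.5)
Your proposal follows the paper's route. \texttt{bb84\_secure\_example} instantiates \Cref{thm:protocol-security} with a concrete certified device: the ideal Hadamard source, zero-bit reconciliation, and $\mathrm{GF}(2^n)$ multiply-and-project hashes with $34$- and $2000$-bit outputs. It discharges the key-length premise at the same witness point $\eps=\eb=\eps_c=2^{-34}$, where the paper proves the expression is at least $2625$; your target $h(\Qtol+\mu)\le 0.6875$ suffices for $\ell=2000$, and even $h(0.11)\le 0.6875$ meets it, so the detour via $h(0.1)$ is unnecessary. Beyond the displayed statement, the paper's formal theorem also carries a non-vacuity conjunct: positive pass probability under the identity adversary, in both prepare-and-measure and entanglement-based modes. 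The stated claim does not require it, and your proposal does not supply it.
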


\noindent
In this instance, the key length is positive, all numeric side
conditions are concrete, and an identity-adversary run has positive success
probability in either source mode.
At these parameters, the artifact proves that the right-hand side of the
key-length condition in \Cref{thm:protocol-security} is at least $2625$,
which certifies the chosen key length $\ell=2000$.

%% file: sections/architecture.tex
\section{Proof Architecture}\label{sec:architecture}\label{sec:arch}

The security analysis of \Cref{sec:security} decomposes the finite-key bound into three independent proof ingredients (the smooth entropic uncertainty relation, the Serfling sampling bound, and the Quantum Leftover Hash Lemma), composed by a single algebraic chain.
The module structure of \FVQKD{} formalizes this decomposition: each proof ingredient becomes a self-contained, verified component, and the top-level module composes them into the security theorems.

\subsection{Overview}\label{sec:architecture:overview}

Five modules sit in dependency order (\Cref{fig:deps}; size counts in \Cref{tab:loc}):
\[
  \code{utils}
  \;<\;
  \code{entropy}
  \;<\;
  \{\code{lhl},\;\code{eur}\}
  \;<\;
  \code{qkd}.
\]
\code{utils} supplies quantum primitives missing from the foundation libraries: matrix and operator-norm extensions, the set-labeled \CoqQ{} layer, bit-string and probability infrastructure, purification and Stinespring dilation~\cite{steinspring1955positive}, fidelity, purified distance, and the Serfling sampling bound.
\code{entropy} formalizes the smooth-entropy calculus of~\cite{tomamichel2013frameworknonasymptoticquantuminformation} on top of these primitives, including the smooth conditional entropies $\Hmineps$ and $\Hmaxeps$ and their duality; \code{eur} formally proves an uncertainty relation linking $\Hmineps$ and $\Hmaxeps$, while \code{lhl} bounds the key's distance from uniform in terms of $\Hmineps(X\given E)$.
\code{entropy} also contains a self-contained proof of SDP strong duality (the primal direction of Tomamichel's Theorem~2.6: value equality plus primal-optimizer attainment), built from first principles and used in the LHL proof.
To our knowledge, it is the first explicit machine-checked strong-duality theorem specialized to semidefinite programs.
\code{eur} and \code{lhl} formalize the two main theorems of the proof sketch: the smooth entropic uncertainty relation~\cite{tomamichel2011uncertainty} and the Quantum Leftover Hash Lemma~\cite{tomamichel2011leftover}, respectively.
\code{qkd} composes them with the Serfling bound from \code{utils} to close the algebraic chain into the main theorems for correctness, secrecy, and combined $\eps$-security.

The first four modules are protocol-independent:
no definition or theorem in them assumes BB84 or any specific protocol.
The four lower modules are reusable by other QKD variants as well as other applications in quantum information and cryptography.

\begin{figure}[t]
\centering
\begin{tikzpicture}[
  every node/.style={font=\sffamily\normalsize},
  module/.style={draw, rounded corners=2pt,
                 minimum width=2.15cm, minimum height=0.58cm,
                 fill=white, draw=black,
                 inner xsep=5pt, inner ysep=2pt},
  ext/.style={draw, dashed, rounded corners=2pt,
              minimum width=2cm, minimum height=0.58cm,
              fill=white,
              font=\sffamily\normalsize\itshape,
              inner xsep=5pt, inner ysep=2pt},
  arr/.style={-{Stealth[length=1.7mm,width=1.7mm]}, semithick,
              draw=black},
]
\node[ext]                          (mc)    at (-2.4,0)   {MathComp};
\node[ext]                          (an)    at ( 0  ,0)   {Analysis};
\node[ext]                          (cq)    at ( 2.4,0)   {CoqQ};
\begin{scope}[on background layer]
\node[draw=black!55, dashed, rounded corners=4pt,
      fill=white, inner sep=5pt,
      fit=(mc)(an)(cq),
      label={[font=\sffamily\normalsize,text=black!70]below:Foundation libraries}] {};
\end{scope}
\node[module]                       (utils) at ( 0  ,0.95){\texttt{utils}};
\node[module]                       (ent)   at ( 0  ,1.90){\texttt{entropy}};
\node[module] (lhl) at (-1.6,2.85){\texttt{lhl}};
\node[module] (eur) at ( 1.6,2.85){\texttt{eur}};
\node[module] (qkd-protocol) at (-1.6,3.95) {\texttt{protocol}};
\node[module] (qkd-bb84) at ( 1.6,3.95) {\texttt{bb84}};
\node[module] (qkd-example) at (0,4.90) {\texttt{example}};
\begin{scope}[on background layer]
\node[draw=black!55, rounded corners=4pt,
      fill=white, inner sep=5pt,
      fit=(qkd-protocol)(qkd-bb84)(qkd-example),
      label={[font=\sffamily\normalsize,text=black]above:\texttt{qkd}}] {};
\end{scope}

\draw[arr] (mc.north)  -- ($(utils.south) + (-0.45,0)$);
\draw[arr] (an.north)  -- (utils.south);
\draw[arr] (cq.north)  -- ($(utils.south) + ( 0.45,0)$);
\draw[arr] (utils.north) -- (ent.south);
\draw[arr] ($(ent.north) + (-0.45,0)$) -- (lhl.south);
\draw[arr] ($(ent.north) + ( 0.45,0)$) -- (eur.south);
\coordinate (qkd-merge) at ($(qkd-bb84.south) + (0,-0.32)$);
\draw[semithick, draw=black] (lhl.north) |- (qkd-merge);
\draw[arr] (eur.north) -- (qkd-bb84.south);
\draw[arr] (qkd-protocol.north) -- ($(qkd-example.south) + (-0.45,0)$);
\draw[arr] (qkd-bb84.north) -- ($(qkd-example.south) + ( 0.45,0)$);
\end{tikzpicture}
\caption{Module dependency graph.
The structure mirrors the proof sketch of \Cref{sec:security:bound}: \code{eur} and \code{lhl} are mutually independent; inside \code{qkd}, \code{bb84} depends on the lower proof modules, and \code{example} depends on both the protocol language in \code{protocol} and the BB84 theorem layer in \code{bb84}.}
\label{fig:deps}
\end{figure}

\emph{The protocol-specific top layer.}\label{sec:arch:bb84-split}
The \code{qkd} module has three submodules, two independent and one that ties them together.
\code{bb84/} formalizes the main theorem of Tomamichel et~al.~\cite{tomamichel2012tight} by combining the results from the four lower modules into the security theorems (\Cref{fig:proof-structure} in \Cref{app:bb84-sig}).
\code{protocol/} defines the language for QKD protocols.
\code{example/} instantiates the protocol language for BB84 and formally proves its security using the theorems in \code{bb84/}, closing the chain from protocol definition to machine-checked security proof.
It further instantiates the protocol with concrete, physically meaningful parameters, demonstrating that the security theorem yields a positive key length at realistic finite block sizes.

\subsection{Design Features}\label{sec:architecture:features}

Two features matter for the rest of the development.
First, \emph{modular reusable theorem interfaces}\label{sec:library} keep the BB84 layer from absorbing the protocol-independent mathematics below it.
Each lower module exports a small set of statement-matched theorems for smooth min-/max-entropy and purified-distance reasoning, the smooth EUR, the Quantum LHL, and the Serfling finite-sampling bound, with side conditions stated in the formal signatures rather than left to context.
The \code{qkd} layer imports these results as sealed theorems; its composition proof uses them as established facts rather than unfolding their derivations.
Because the lower layers quantify over abstract states, systems, measurements, channels, hash families, and sampling distributions, they contain no BB84-specific theorem content.
They form verified infrastructure that QKD variants following the same smooth-entropy proof route can instantiate rather than re-derive.

Decoy-state BB84~\cite{lo2005decoy}, for instance, is widely used in deployed weak-coherent-pulse systems.
Wiesemann et al.~\cite{Wiesemann2026consolidated} give a finite-size one-/two-decoy proof in the smooth-EUR framework of Tomamichel and Renner, so \FVQKD{}'s smooth-entropy, purified-distance, EUR, and LHL interfaces should carry over.
What is not present is the decoy layer itself: weak-coherent-source and intensity semantics, photon-number and detection statistics, multi-intensity parameter estimation, and the resulting key-length calculation.
We expect closely related variants (six-state~\cite{bruss1998sixstate}, qudit BB84) to reuse the entropy chain; SARG04~\cite{scarani2004nsa} would additionally require modified sifting statistics, and MDI-QKD~\cite{lo2012mdi,curty2014finite} admits a smooth-EUR-based proof but brings additional structural requirements.

Second, \emph{certification-ready protocol extraction} turns the informal protocol specification in the published analysis into a checked formal specification.
Finite-key QKD analyses often specify the protocol at the level needed for the proof, not at the level needed to certify an implementation.
Notation can blur distinct roles: for instance, the symbol~$M$ in the BB84 paper~\cite{tomamichel2012tight} reads like a protocol parameter but is actually a loop counter determined at run time.
\FVQKD{} instead gives BB84 a command language with separate classical and quantum variables and the semantics of \Cref{sec:protocol:semantics}, specifying each stage's inputs, outputs, and continuation checks.
The protocol-security theorem takes a certified device and model assumptions that expose the source model, measurement family, preparation quality, leakage bound, and verification and privacy-amplification hash requirements.
Certifying that a device satisfies these hypotheses suffices to instantiate the composable security guarantee for that device (\Cref{thm:protocol-security}).

%% file: sections/engineering.tex
\section{Proof Engineering}\label{sec:engineering}

The previous section described what the mechanization proves; this section turns to how it is built.

\subsection{Backend and Trusted Computing Base}
\label{sec:engineering:backend}

\FVQKD{} is built on three \Rocq{} libraries that together supply the mathematical and quantum-mechanical substrate.
\MathComp{}~\cite{mahboubi2022mathcomp} provides the algebraic backbone: finite types, big operators, rings, fields, matrices, and the SSReflect proof style.
\MathCompAnalysis{}~\cite{affeldt2020competing} extends it with the analytic machinery the entropy arguments require: real and extended-real numbers, suprema and infima, topology, limits, continuity, exponentials, logarithms, and convexity.
\CoqQ{}~\cite{zhou2023coqq} contributes the finite-dimensional quantum layer: Hilbert spaces, tensor products, quantum operators, density and positivity predicates, channels, traces, norms, spectral decompositions, the partial trace, and labeled Dirac notation for multi-system management.
All three libraries use canonical structures for notation overloading and hierarchy navigation, a pattern \FVQKD{} inherits throughout.

\FVQKD{} develops the smooth-entropy, purified-distance, smooth-EUR, and
Quantum-LHL infrastructure needed by the finite-key BB84 proof.
\FVQKD{} builds this entire layer from scratch: \InfrastructureLoC{} lines of new, fully verified code.

\emph{Trusted computing base.}
Our TCB comprises solely the \Rocq{} kernel and three standard \MathCompAnalysis{} axioms (propositional extensionality, functional extensionality, and constructive indefinite description).
Continuous integration rebuilds the \TotalLoCk{}K-line codebase and mechanically verifies via \texttt{Print Assumptions} that no additional axioms or admitted obligations enter the development.

\subsection{Representation Choices}\label{sec:engineering:repr}

The mechanization operates at three representation levels: bare matrices (full computational access), typed linear maps (operator algebra with domain/codomain tracking), and label-indexed operators (subsystem selection by name via \CoqQ{}'s Dirac notation).
Each successive level is closer to textbook notation but further from direct computation.
Coercion lemmas, proved once in \code{utils} and reused across the development, translate between levels so that each proof step can work at the most convenient abstraction.

\subsection{Artifact Summary}\label{sec:engineering:scale}

The development builds on \Rocq{}~9.1.1, \MathComp{}~2.5.0, \MathCompAnalysis{}~1.16.0, and \CoqQ{}; \CoqQ{} is pinned to the in-tree vendored snapshot.
\Cref{tab:loc} gives per-module file, source-line, and result counts.

\begin{table}[t]
\centering
\caption{Development Size and Trusted Base}
\label{tab:loc}
\begingroup
\footnotesize
\setlength{\tabcolsep}{8pt}
\renewcommand{\arraystretch}{1.1}
\begin{tabular}{@{}l r r r@{}}
\hline
\textbf{Module} & \textbf{Files} & \textbf{LoC} & \textbf{Results}\\
\hline
\code{utils}   & 21 & 24\,719 &   542 \\
\code{entropy} & 17 & 50\,427 &   559 \\
\code{eur}     &  5 &  8\,461 &    93 \\
\code{lhl}     &  8 & 18\,695 &   214 \\
\hline
\code{qkd}     & 16 & 48\,448 & 1\,085 \\
\hline
\textbf{Total} & \textbf{\TotalFiles{}} & \textbf{150\,750} & \textbf{2\,493} \\
\hline
\end{tabular}
\endgroup\par

\vspace{0.35\baselineskip}
\raggedright
LoC counts physical source lines; Results counts non-local top-level
\coqkw{Theorem}, \coqkw{Lemma}, and \coqkw{Corollary} declarations.
The four upper rows (approximately 68\% of the development by LoC) are
protocol-independent; the \code{qkd} row is the protocol layer (language,
BB84 security theorems, and examples).
No \coqkw{Admitted} occurs anywhere in the tree; the trusted base is the
\Rocq{} kernel plus three \MathCompAnalysis{} axioms
(\Cref{sec:engineering:backend}).
\par
\end{table}

The artifact builds from source with \texttt{make};
\texttt{make full-check} additionally runs a \texttt{Print Assumptions} check
confirming that the security theorems depend on exactly the three axioms
listed above.
The artifact is available at \url{https://github.com/fvqkd/FVQKD}.

%% file: sections/findings.tex
\section{Verification Findings}\label{sec:findings}

Mechanization forces every side condition to be stated and discharged; published proofs sometimes leave conditions implicit when the intended reading is clear to domain experts.
This section catalogues the conditions the formalization exposed, organized into two categories: missing proof branches (\Cref{sec:findings:eur}) and competing conventions across the literature (\Cref{sec:findings:conventions}).

\subsection{Missing Proof Branches}\label{sec:findings:eur}

The smooth entropies $\Hmineps$ and $\Hmaxeps$ are defined by optimizing over quantum states within purified distance~$\eps$ of the input state~$\rho$.
This optimization is well-defined only when $\eps < \sqrt{\Tr\rho}$~\cite{tomamichel2013frameworknonasymptoticquantuminformation}; for a normalized state the condition reduces to $\eps < 1$.

The BB84 security proof applies the entropic uncertainty relation (EUR) after conditioning on the parameter-estimation test passing.
Conditioning on the pass event renormalizes the state and rescales the smoothing parameter to $\eps' := \eps / \sqrt{p_{\mathrm{pass}}}$.
The published analysis~\cite{tomamichel2012tight} states ``$\eps' \ge 0$'' without checking the upper bound $\eps' < 1$.
An active adversary that forces frequent aborts can push $p_{\mathrm{pass}}$ to $10^{-12}$ or below.
The rescaled parameter then exceeds unity: with $p_{\mathrm{pass}} = 10^{-12}$ and $\eps = 10^{-5}$, one obtains $\eps' = 10^{-5} / 10^{-6} = 10 \gg 1$, and the EUR is applied outside its domain.

This does not compromise the security guarantee: in the offending regime the protocol aborts with overwhelming probability, so the $(1-p_\mathrm{abort})$ prefactor of \Cref{def:secret} already bounds the secrecy term.
The issue is that the published argument invokes a theorem in a regime where its hypotheses are not satisfied.
The formalization resolves this by introducing an explicit domain predicate and splitting the secrecy proof into two branches: one where the EUR applies and one where the high abort probability makes the bound trivial.
This case split is absent from the pen-and-paper argument; formal verification catches it because every side condition must be discharged before the proof is accepted.

Separately, the published proof~\cite{tomamichel2012tight} derives its entropy
bound after parameter estimation passes, whereas the final key is conditioned
on error verification also accepting.
\FVQKD{} makes this second conditioning explicit: it transfers the entropy bound to the accepted branch, or, when acceptance is too unlikely, closes the secrecy argument directly from the protocol's abort probability.
Both cases retain the published key-length formula.

\subsection{Competing Conventions}\label{sec:findings:conventions}\label{app:definitions}

The BB84 security proof chains theorems from six theoretical treatments, each of which makes its own definitional choices for smooth entropies.
\Cref{tab:conventions} compares these choices along two axes: the smoothing-ball metric and the witness domain for conditional entropies.

\input{sections/app-definitions}

\noindent\textit{Smoothing-ball metric.}
Five of the six treatments define smoothing balls using purified distance; Renner~\cite{renner2008security} instead uses a relative trace-distance ball on sub-normalized operators.
Because the balls differ, using a result from one convention in the other requires a separate comparison argument.
The formalization follows Tomamichel's purified-distance convention, matching
the BB84 paper~\cite{tomamichel2012tight}.
The exact smooth min/max duality used here,
\[ \Hmaxeps(A\given B)_\rho = -\Hmineps(A\given C)_\rho \qquad(\rho_{ABC}\ \text{pure}),
\]
relies on the purification-lifting property of purified-distance balls, which
Renner's trace-distance convention does not satisfy.
This duality relates Eve's uncertainty to Bob's error rate.

\noindent\textit{Optimization domain.}
Conditional min-entropy is sensitive to the domain chosen for the conditioning witness~$\sigma_B$; \FVQKD{} uses the equivalent SDP witness form.
The EUR~\cite{tomamichel2011uncertainty} and chain rules~\cite{vitanov2013chain} allow sub-normalized~$\sigma_B$, and \FVQKD{} corresponds to this convention through its SDP form; the remaining four restrict~$\sigma_B$ to normalized states.
The entropy value is the same either way---the optimum over the larger set is attained at a normalized witness---but the distinction matters for formal composition: a lemma proved for normalized~$\sigma_B$ does not apply directly to a statement that quantifies over sub-normalized ones.
The mechanization therefore fixes one definition per concept and bridges convention differences with coercion lemmas at theorem boundaries.
The checked definitions and their mathematical readings appear in \Cref{tab:alignment-entropy,tab:alignment-conditional-max-entropy,tab:alignment-smooth-definitions} of \Cref{app:semantics-alignment}.
The literature has responded to this definitional scatter: a monograph consolidates the conventions~\cite{tomamichel2016qip}, and a largely self-contained proof assembles the QKD security argument in one place~\cite{tomamichel2017largelyself}; both remain pen-and-paper.

%% file: sections/app-definitions.tex
\begin{table}[t]
\centering
\caption{Smooth-Entropy Definitions Across the BB84 Proof Chain}
\label{tab:conventions}
\begingroup
\footnotesize
\renewcommand{\arraystretch}{1.15}
\setlength{\tabcolsep}{2.0pt}
\begin{tabular}{@{}>{\raggedright\arraybackslash}p{2.15cm}
                *{3}{>{\centering\arraybackslash}p{0.92cm}}
                *{2}{>{\centering\arraybackslash}p{1.18cm}}@{}}
\hline
\textbf{Paper} & $\sigma_{\min}$ & $\sigma_{\max}$
  & \textbf{Ball} & \textbf{Entropy}
  & \textbf{Smooth} \\
\hline
Finite-key~\cite{tomamichel2012tight}\textsuperscript{\dag} & $=$   & $=$   & P & eq/eq & eq/eq \\
EUR~\cite{tomamichel2011uncertainty} & $\leq$ & $\leq$ & P & id/eq & id/eq \\
LHL~\cite{tomamichel2011leftover} & $=$   & $=$   & P & eq/eq & eq/eq \\
Duality~\cite{tomamichel2010duality} & $=$   & $=$   & P & eq/eq & eq/eq \\
Renner~\cite{renner2008security} & $=$   & $=$   & D & eq/neq & eq/neq \\
Chain rules~\cite{vitanov2013chain} & $\leq$ & $\leq$ & P & id/eq & id/eq \\
\hline
\end{tabular}
\endgroup\par

\vspace{0.35\baselineskip}
\raggedright
$=$ normalized $\sigma$ only; $\leq$ sub-normalized $\sigma$ allowed; P purified-distance ball; D generalized trace-distance ball.
\textsuperscript{\dag}Finite-key imports definitions by citing Duality.
Entropy and smoothing entries give min-/max-entropy verdicts: ``id'' = identical to the \FVQKD{} definition; ``eq'' = equivalent but not identical; ``neq'' = not equivalent.
\par
\end{table}

%% file: sections/discussion.tex
\section{Discussion}
\label{sec:discussion}

\noindent\textbf{Scope.}
The proof certifies the information-theoretic security argument: the chain from smooth entropies through the EUR and the Quantum Leftover Hash Lemma to a composable finite-key bound.
Device-level vulnerabilities (detector blinding, photon-number splitting, Trojan-horse attacks) and implementation-level side channels are out of scope under the published model~\cite{tomamichel2012tight}; the theorem applies only once a device model justifies the supplied source and measurement parameters (\Cref{sec:security:model}).
The entropy infrastructure is protocol-independent, and BB84 is verified end to end; proving security for another protocol on the same smooth-entropy route requires a new top layer over the existing core.

\noindent\textbf{Future work.}
Three concrete extensions lie beyond the current scope, each requiring substantial new libraries.
Entropy accumulation~\cite{dupuis2020ea} (the standard framework for device-independent proofs, also underlying recent prepare-and-measure proofs~\cite{kamin2025finite}) would reuse \FVQKD{}'s LHL but needs a R\'{e}nyi chain-rule library.
Numerical security frameworks~\cite{Winick2018reliablenumerical,george2021numerical} need a verified convex-optimization layer,
of which \FVQKD{}'s SDP strong duality theorem provides a core ingredient.
Continuous-variable QKD~\cite{leverrier2015composable} operates in infinite-dimensional Hilbert spaces and requires an extension with no analog in the current development.
Beyond these, connecting \FVQKD{}'s explicit source and measurement obligations (\Cref{sec:security:model}) to a verified device model would narrow the remaining gap between the information-theoretic bound and a deployed system.

%% file: sections/conclusion.tex
\section{Conclusion}\label{sec:conclusion}

\FVQKD{} provides, to our knowledge, the first end-to-end, machine-checked QKD security proof: composable finite-key BB84 security at Tomamichel et al.'s tight bound~\cite{tomamichel2012tight}, with zero incomplete proofs and only three standard axioms inherited from \MathCompAnalysis{} for classical logic.
The result connects the quantum-information-theoretic bound to a formal model of protocol execution through a complete, kernel-checked proof chain.
Establishing it required reconstructing the proof chain from first principles and supplying side conditions that the published argument leaves implicit, one of which revealed a missing proof branch in an analysis on which later finite-key proofs build.
The resulting verified, reusable smooth-entropy framework, EUR, Quantum LHL, and Serfling bound expose every source and measurement assumption as an explicit theorem parameter and provide the foundation for decoy-state, measurement-device-independent, and other QKD protocols that follow the same proof route.
Physics-based security and machine-checked proofs together strengthen the basis for QKD deployment amid standardization.

%% file: sections/app-qkd-variants.tex
\begingroup
\raggedbottom
\section{QKD Variant Program Schemas}\label{app:qkd-variants}

To illustrate the protocol language beyond BB84, we give pseudocode schemas
inspired by
six-state~\cite{bruss1998sixstate}, B92~\cite{bennett1992b92},
BBM92~\cite{bbm92}, and E91~\cite{ekert1991bell}.
They show how prepare-and-measure and entanglement-based control patterns can
be written using the same command forms.
The protocol-specific states, measurements, and classical routines remain
parameters.

The first three schemas share the following control flow; only their round
bodies, announcements, sifting rules, and optional extra tests vary.
E91 replaces parameter estimation with the Bell-test tail shown below.
\par\medskip
\noindent\begin{minipage}{\columnwidth}
\begin{minted}[fontsize=\footnotesize,breaklines]{text}
Variant(round, announce, sift, extra_fail := false):
  P := {}; X := {}; Z := {}; abort := false
  for i = 1,...,M: round(i)
  announce; sift; ParamEst
  fail := |X| < n or |Z| < k or lambda > Q_tol
  if fail or extra_fail then Abort
  else:
    ErrCorr
    if t_A != t_B then Abort else PrivAmp_l
\end{minted}
\end{minipage}\par\medskip

\noindent\textbf{Six-state-style schema.}
The three-setting state and measurement families are parameters.
\begin{minted}[fontsize=\footnotesize,breaklines]{text}
round(i):
  a_i <-$ Unif({0,1,2}); y_i <-$ Unif({0,1})
  q_i <- source[a_i,y_i]; Attack[q_i]
  b_i <-$ Unif({0,1,2}); y'_i := meas[b_i][q_i]
  d_i := (y'_i != empty)
announce := Announce((a_i,b_i,d_i)_i)
sift:
  W := {j in {1,...,M} | d_j and a_j = b_j}
  Z <-$ Unif({T subset W | |T| = min(k,|W|)})
  X := W \ Z; Announce(Z)
\end{minted}

\noindent\textbf{B92-style schema.}
The two distinct nonorthogonal source states and a three-outcome
unambiguous-discrimination interface are parameters.
\begin{minted}[fontsize=\footnotesize,breaklines]{text}
round(i):
  r_i <-$ Unif({key,test}); y_i <-$ Unif({0,1})
  q_i <- source[y_i]; Attack[q_i]
  y'_i := meas[q_i]
  d_i := (y'_i != inconclusive)
announce := Announce((r_i,d_i)_i)
sift:
  X := {j in {1,...,M} | d_j and r_j = key}
  Z := {j in {1,...,M} | d_j and r_j = test}
\end{minted}

\noindent\textbf{BBM92-style schema.}
The pair state and the common two-setting measurement family are parameters.
\begin{minted}[fontsize=\footnotesize,breaklines]{text}
round(i):
  (q_i^A,q_i^B) <- pair[i]; Attack[q_i^A,q_i^B]
  a_i <-$ Unif({0,1}); b_i <-$ Unif({0,1})
  y_i := meas[a_i][q_i^A]; y'_i := meas[b_i][q_i^B]
  d_i := (y_i != empty and y'_i != empty)
announce := Announce((a_i,b_i,d_i)_i)
sift:
  W := {j in {1,...,M} | d_j and a_j = b_j}
  Z <-$ Unif({T subset W | |T| = min(k,|W|)})
  X := W \ Z; Announce(Z)
\end{minted}

\noindent\textbf{E91-style schema.}
The pair state, the two three-setting measurement families, and a threshold
$2<\beta_{\mathrm{CHSH}}\le 2\sqrt{2}$ are parameters.
Round indices are one-based, but setting labels remain zero-based:
$a=0,1,2$ denotes Alice's angles $0^\circ,45^\circ,90^\circ$, whereas
$b=0,1,2$ denotes Bob's angles $45^\circ,90^\circ,135^\circ$.
Detected outcomes are encoded as bits in $\{0,1\}$, with
\texttt{empty} denoting nondetection.
\begin{minted}[fontsize=\footnotesize,breaklines]{text}
E91:
  P := {}; X := {}; R := {}; abort := false
  for i = 1,...,M:
    (q_i^A,q_i^B) <- pair[i]; Attack[q_i^B]
    a_i <-$ Unif({0,1,2}); b_i <-$ Unif({0,1,2})
    y_i := meas_A[a_i][q_i^A]; y'_i := meas_B[b_i][q_i^B]
    d_i := (y_i != empty and y'_i != empty)
    X := {j <= i | d_j and (a_j,b_j) in {(1,0),(2,1)}}
    R := {j <= i | d_j and (a_j,b_j) not in {(1,0),(2,1)}}
  Announce((a_i,b_i,d_i)_i,
           (y_j,y'_j)_{j in R})
  derive A_uw,N_uw from the revealed R rounds for
    (u,w) in {(0,0),(0,2),(2,0),(2,2)}
  if |X| < n or some N_uw = 0 then Abort
  else:
    E_uw := (2*A_uw-N_uw)/N_uw for each CHSH class
    S_CHSH := E_00 - E_02 + E_20 + E_22
    if |S_CHSH| < beta_CHSH then Abort
    else:
      K <-$ Unif({T subset X | |T| = n}); Announce(K)
      X_A := (y_j)_{j in K}; X_B := (1-y'_j)_{j in K}
      ErrCorr
      if t_A != t_B then Abort else PrivAmp_l
\end{minted}
After the nonempty-class guard,
$E_{uw}=(2A_{uw}-N_{uw})/N_{uw}$ for the agreement count~$A_{uw}$
and detected total~$N_{uw}$.
The sketch omits a finite-statistics correction for this test.
\endgroup

%% file: sections/app-bb84-sig.tex
\section{Mechanization of the BB84 security criterion}\label{app:bb84-sig}

This appendix records the proof-structure map, its correspondence with the
supplementary notes of~\cite{tomamichel2012tight}, and the paper-to-\Rocq{}
correspondence table.
The smooth-entropy convention comparison appears in \Cref{sec:findings:conventions}.

\begin{figure*}[t]
\centering
\includegraphics[width=0.86\textwidth]{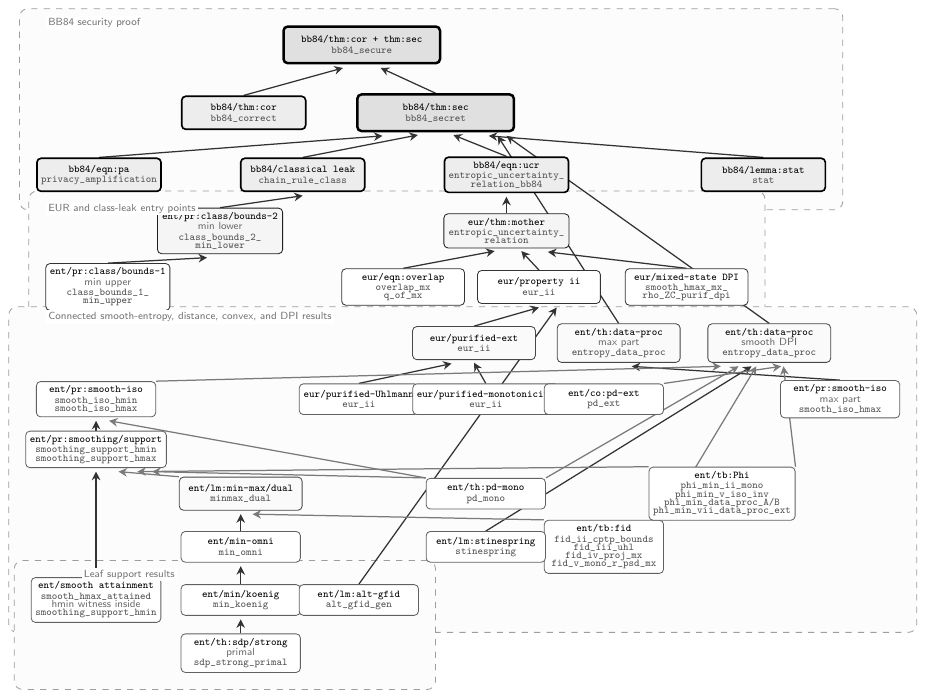}
\caption{Proof-dependency graph for the BB84 security proof.
Each box names a source-paper theorem, lemma, or proof topic together with the corresponding \Rocq{} declaration name(s).
Node prefixes identify the source literature: \texttt{bb84} is Tomamichel--Lim--Gisin--Renner's finite-key BB84 proof~\cite{tomamichel2012tight}, \texttt{eur} is Tomamichel--Renner's smooth entropic uncertainty relation~\cite{tomamichel2011uncertainty}, and \texttt{ent} is Tomamichel's entropy framework/thesis~\cite{tomamichel2013frameworknonasymptoticquantuminformation}; the source-paper prefix and TeX label are separated by a slash.
An arrow $A{\to}B$ means $A$ supports the proof of~$B$.
Every arrow is a dependency in the \Rocq{} proofs; lighter arrows cross between branches of the proof tree.
The top layer identifies \texttt{bb84\_secure} as the combined theorem of Tomamichel et al. and shows its reusable correctness and full secrecy boundaries; the lower layers expand the EUR, smooth-entropy, purified-distance/fidelity, data-processing-inequality (DPI), and convex-analysis support used by \texttt{bb84/thm:sec}.}
\label{fig:proof-structure}
\end{figure*}

\subsection{Proof structure and correspondence}\label{app:bb84-sig:proof}

\coqi{bb84_secret} establishes Supplementary Theorem~2:
it derives the smooth min-entropy bound from the source model's POVM and overlap fields, the EUR, delayed measurement, and the aggregate sampling witness, then proves secrecy of the normalized accepted output through leakage, postselection, and privacy amplification.
\coqi{bb84_secure} combines this result with $\eps_c$-correctness from \coqi{bb84_correct} and applies $\eps_c + \eps_s \le \eps_\mathrm{tot}$.
The secrecy proof follows Supplementary Note~1 of~\cite{tomamichel2012tight}, presented here in eight steps:

\begin{center}
\centering
\footnotesize
\renewcommand{\arraystretch}{1.15}
\begin{tabular}{@{}r@{\hspace{4pt}}l@{\hspace{5pt}}l@{}}
\hline
\textbf{\#} & \textbf{Claim} & \textbf{By} \\
\hline
1 & $\eps' := \eps/\sqrt{p_\mathrm{pass}}$
  & definition\\
2 & $\Hmineps[\eps'](\mathbf{X}{\mid}E) {+} \Hmaxeps[\eps'](\mathbf{Z}{\mid}B_q) \ge n q$
  & EUR\\
3 & $\Hmaxeps[\eps'](\mathbf{Z}{\mid}B_q) \le \Hmaxeps[\eps'](\mathbf{Z}{\mid}\mathbf{Z}')$
  & smooth DPI \\
4 & $\Hmaxeps[\eps'](\mathbf{Z}{\mid}\mathbf{Z}') \le n\,h(\Qtol{+}\mu(\eps))$
  & statistical\\
5 & $\Hmineps[\eps'](\mathbf{X}{\mid}E') \ge nq {-} nh {-} \leakec {-} \log_2\!\frac{2}{\eps_c}$
  & 2--4, chain rule\\
6 & $\Delta := \eps' {+} \tfrac{1}{2}\sqrt{2^{\,\ell - \Hmineps[\eps'](\mathbf{X}{\mid}E')}}$
  & PA\\
7 & $\Delta \le \eps' + \bar\eps$
  & 6, key length (Thm.~\ref{thm:protocol-security})\\
8 & $(1{-}p_\mathrm{abort})\,\Delta \le \eps {+} \bar\eps \le \eps_s$
  & rescale\\
\hline
\end{tabular}
\end{center}

\vspace{4pt}
\noindent\textbf{Secrecy derivation inside \texttt{bb84\_secret} (Supplementary Note~1).}
$\mathbf{X}$: Alice's raw key; $\mathbf{Z},\mathbf{Z}'$: hypothetical $\mathbf{Z}_{\mathrm{key}},\mathbf{Z}'_{\mathrm{key}}$ (EUR conjugate basis); $B_q$: Bob's quantum register on key rounds; $E$: adversary pre-EC; $E'$: adversary post-EC.
Row~5 abbreviates $h:=h(\Qtol{+}\mu(\eps))$.
In \Rocq{}, step~2 is \coqi{entropic_uncertainty_relation_bb84}, and step~6 is \coqi{privacy_amplification}.

\noindent The remaining components correspond as follows:
Supplementary Theorem~1 (correctness) is the reusable \coqi{bb84_correct}; Supplementary Lemma~3 (statistical) is \coqi{stat} in \code{qkd/bb84/prop.v}; and the classical-leakage chain rule is \coqi{chain_rule_class}.
Supplementary Theorem~2 corresponds to \coqi{bb84_secret}; \coqi{bb84_secure} combines it with \coqi{bb84_correct} under $\eps_c{+}\eps_s{\le}\eps_\mathrm{tot}$, and \coqi{bb84_secure_protocol} is the independent protocol-layer culmination.
The culminating theorem depends only on the standard \MathCompAnalysis{} foundations (\Cref{sec:engineering:backend}).

\subsection{Paper-to-Rocq correspondence}\label{app:correspondence}

\Cref{tab:correspondence} maps each numbered security and protocol statement in the body to the Rocq declaration that establishes it or to the model premise that represents it.

\begin{table}[t]
\centering
\caption{Paper-to-Rocq Correspondence}
\label{tab:correspondence}
\footnotesize
\renewcommand{\arraystretch}{1.0}
\newcommand{\cbreak}{\hspace{0pt}}%
\begin{tabular}{@{}l@{\;}>{\hfuzz=8pt}p{2.1cm}@{\;}>{\raggedright\arraybackslash\hfuzz=8pt}p{4.0cm}@{}}
\hline
\textbf{Paper} & \textbf{Statement} & \textbf{Rocq name} \\
\hline
Def.~\ref{def:correct}  & $\eps_c$-correctness & \code{eps\_correct} \\
Def.~\ref{def:secret}   & $\eps_s$-secrecy     & \code{eps\_secret} \\
Defs.~\ref{def:correct},~\ref{def:secret}
        & $\eps$-security & \code{eps\_secure} \\
Thm.~\ref{thm:eur}      & Smooth EUR
        & \code{entropic\_\cbreak{}uncertainty\_\cbreak{}relation} \\
Def.~\ref{def:two-universal} & Two-universal hash & \code{two\_univ} \\
Thm.~\ref{thm:lhl}      & Quantum LHL
        & \code{two\_universal\_hashing} \\
Thm.~\ref{thm:bb84}     & BB84 security        & \code{bb84\_secure} \\[2pt]
	Def.~\ref{def:protocol-cmd} & Protocol cmds & \code{qkd\_cmd} \\
	Fig.~\ref{fig:bb84-program} & BB84 program & \code{bb84\_program} \\
	Def.~\ref{def:protocol-semantics} & Protocol sem. & \code{bb84\_eval} \\
	Thm.~\ref{thm:protocol-security} & Protocol security
	        & \code{bb84\_secure\_protocol} \\
Thm.~\ref{thm:bb84-instance} & Non-vacuous instance
        & \code{bb84\_secure\_example} \\
\hline
\end{tabular}
\end{table}

%% file: sections/app-semantics-alignment.tex
\onecolumn
\raggedbottom
\section{Trusted Specification}
\label{app:semantics-alignment}

\definecolor{alignmentteal}{RGB}{8,126,139}
\definecolor{alignmentmathbg}{RGB}{232,244,245}
\definecolor{alignmentcodebg}{RGB}{247,249,251}
\definecolor{alignmentbandbg}{RGB}{219,236,237}
\definecolor{alignmentmuted}{RGB}{82,94,106}
\newcommand{\AlignmentBoundaryLabel}{%
  \par\smallskip\noindent
  \textcolor{alignmentteal}{\sffamily\bfseries Remark.}\enspace}
\newcommand{\AlignmentParagraph}[1]{%
  \par\smallskip\noindent\textbf{#1}\par
  \nopagebreak[4]}

\setlength{\columnsep}{8pt}
\columnratio{0.615}
\newlength{\AlignLeftW}\setlength{\AlignLeftW}{\dimexpr0.615\textwidth-0.615\columnsep\relax}
\newlength{\AlignRightW}\setlength{\AlignRightW}{\dimexpr0.385\textwidth-0.385\columnsep\relax}
\newcommand{\AlignmentBlockSetup}{\footnotesize\raggedright\setlength{\parindent}{0pt}\setlength{\parskip}{0pt}%
  \setlength{\topsep}{1pt}\setlength{\partopsep}{0pt}}
\newcommand{\AlignmentBlockHead}[1]{%
  \par\Needspace*{9\baselineskip}\addvspace{5pt}%
  {\setlength{\fboxsep}{2pt}\offinterlineskip
  \noindent\rule{\textwidth}{0.4pt}\par
  \noindent\colorbox{alignmentbandbg}{\parbox{\dimexpr\textwidth-2\fboxsep\relax}{%
    \footnotesize\sffamily\bfseries\strut\textcolor{alignmentteal}{#1}\strut}}\par
  \noindent\colorbox{alignmentcodebg}{\parbox{\dimexpr\AlignLeftW-2\fboxsep\relax}{%
    \footnotesize\sffamily\bfseries\strut Formal statement\strut}}%
  \hspace{\columnsep}%
  \colorbox{alignmentmathbg}{\parbox{\dimexpr\AlignRightW-2\fboxsep\relax}{%
    \footnotesize\sffamily\bfseries\strut Mathematical reading\strut}}\par
  \noindent\rule{\textwidth}{0.4pt}\par}\nobreak\vspace{2pt}}
\newcommand{\AlignmentBlockEnd}{\par\nobreak\vspace{1pt}\noindent\rule{\textwidth}{0.4pt}\par\addvspace{5pt}}
\newcounter{alignmentfragment}
\newcommand{\AlignmentContinuationLabel}[1]{%
  \stepcounter{alignmentfragment}%
  \addtocounter{table}{-1}%
  \begingroup
    \renewcommand*{\theHtable}{semantics-alignment.\arabic{alignmentfragment}}%
    \refstepcounter{table}%
    \label{#1}%
  \endgroup}
\setminted[coq]{%
  frame=none,
  framesep=1.2mm,
  tabsize=2,
  fontsize=\footnotesize,
  breaklines=true,
  breakafter={_.,}}

Because the \Rocq{} kernel checks every proof, a reader needs to trust only the definitions and theorem statements on which the security results depend.
This appendix lists them, each as a declaration from the artifact beside its mathematical reading.

\subsection{Main Theorems}
\label{app:semantics-alignment-main}

We begin with \coqi{bb84_secure} and \coqi{bb84_secure_protocol}, followed by
every project-defined object used transitively in their statements.

\subsubsection{Endpoints}

\AlignmentParagraph{Mathematical Security.}

\captionof{table}{Trusted specification of the security results.}
\label{tab:alignment-matrix}
\label{tab:alignment-bb84-secure-inputs}
\AlignmentBlockHead{Main theorem: combined finite-key composition interface.}
\begingroup\AlignmentBlockSetup
\begin{paracol}{2}
\code{src/qkd/bb84/bb84.v}: \coqi{bb84_secure}
\begin{minted}{coq}
Variable R : realType.
Variable (L : finType) (HL : L -> chsType).
Variable (S E Zkey Bq Zp Fl Zl : {set L}).
Hypothesis disSE : [disjoint S & E].
Hypothesis disSF : [disjoint S & Fl].
Hypothesis disSZl : [disjoint S & Zl].
Hypothesis disZBq : [disjoint Zkey & Bq].
Hypothesis disZZp : [disjoint Zkey & Zp].
Hypothesis disFZl : [disjoint Fl & Zl].
Variable Aux : @bb84_aux_regs R L HL Zkey Bq Zp.
Theorem bb84_secure
    (P : bb84_protocol) (q eps_s eps_tot : R)
    (Gsrc : bb84_zsource P)
    (src : bb84_source_model P q)
    (dim_S : bb84_src_nx src = dim ('H[HL]_S))
    (dim_E : bb84_src_nb src = dim ('H[HL]_E))
    (dim_Zk : bb84_src_nz src = dim ('H[HL]_Zkey))
    (dim_Bq : bb84_src_nc src = dim ('H[HL]_Bq))
    (dim_Zp_key : (2 ^ bb84_n P)%N = dim ('H[HL]_Zp))
    (BobZ : 'QC[HL]_(Bq, Zp))
    (p_abort mismatch_prob : R)
    (seed_count tag_count : nat)
    (verify_hash : 'I_seed_count -> 'I_(2 ^ bb84_n P) -> 'I_tag_count)
    (corr_prob : 'I_(2 ^ bb84_n P) -> 'I_(2 ^ bb84_n P) -> R)
    (p_mismatchE :
      mismatch_prob =
        @p_mismatch R (2 ^ bb84_n P) seed_count tag_count verify_hash
          corr_prob)
    (verify_tag_count_pos : (0 < tag_count)%N)
    (verify_two_univ : @two_univ R _ _ _ verify_hash)
    (corr_nonneg : forall x x', 0 <= corr_prob x x')
    (corr_mass : \sum_x \sum_(x' | x != x') corr_prob x x' <= 1)
    (verify_tag_inv_le_epsc :
      ((tag_count%:R : R)^-1 <= bb84_eps_c P)%R)
    (Fh : 'Idx[HL]_Fl -> 'Idx[HL]_S -> 'Idx[HL]_Zl)
    (Fh_two_univ : @two_univ R _ _ _ Fh)
    (Fh_key_dim : dim ('H[HL]_Zl) = (2 ^ bb84_l P)%N)
\end{minted}
The source, two hash-family predicates, mismatch probability, and register
dimensions are expanded in the statement-definition rows below.
The auxiliary-register record is expanded in \Cref{tab:alignment-bb84-aux-regs}.
\switchcolumn
\textbf{Main finite-key theorem (\Cref{thm:bb84}): interface.}
The statement fixes a full-string source $G_{\rm src}$ and a source model
$(\rho_{ABC},M,N)$ with
\[
\begin{gathered}
|X|=\dim S,\qquad \dim B=\dim E,\\
|Z|=\dim Z_{\rm key},\qquad \dim C=\dim B_q,\\
\dim Z'_{\rm key}=2^n,\\
S\perp E,\qquad Z_{\rm key}\perp B_q,\\
Z_{\rm key}\perp Z'_{\rm key},\qquad F_\ell\perp Z_\ell,\\
S\perp F_\ell,\qquad S\perp Z_\ell.
\end{gathered}
\]
It takes the auxiliary registers of \Cref{tab:alignment-bb84-aux-regs}, Bob's delayed $\mathbb Z$ channel, nonnegative correlation weights
with total off-diagonal mass at most one, a nonempty two-universal verification
family satisfying $|T|^{-1}\le\eps_c$, and a nonempty two-universal
privacy-amplification family
$F:F_\ell\times S\to Z_\ell$ with $|Z_\ell|=2^\ell$.
\end{paracol}
\endgroup
\AlignmentBlockEnd

\AlignmentContinuationLabel{tab:alignment-bb84-secure}
\AlignmentBlockHead{Main theorem: accepted state, premises, and conclusion.}
\begingroup\AlignmentBlockSetup
\begin{paracol}{2}
\code{src/qkd/bb84/bb84.v}: \coqi{bb84_secure} (continued)
\begin{minted}{coq}
    (Epub : {set L})
    (pub : bb84_public_leakage Epub
      (eur_rho_XB disSE (bb84_src_rho src) (@bb84_src_M R P q src)
        (bb84_register_pair_dim disSE dim_S dim_E))
      (bb84_leak_EC P + log2 (2%:R / bb84_eps_c P)))
    (disSEpub : [disjoint S & E :|: Epub])
    (disEpubF : [disjoint E :|: Epub & Fl])
    (disEpubZl : [disjoint E :|: Epub & Zl])
    (disZl_FEpub : [disjoint Zl & (Fl :|: (E :|: Epub))])
    (accept reject : 'F[HL]_(S :|: (E :|: Epub)))
    (rho_out : 'F[HL]_(Zl :|: (Fl :|: (E :|: Epub))))
    (Haccept_psd : accept \is psdlf)
    (Hreject_psd : reject \is psdlf)
    (Haccept_cq : @is_cq_state R L HL S (E :|: Epub) accept)
    (Hsplit :
      accept + reject =
        castlf (bb84_public_assoc S E Epub, bb84_public_assoc S E Epub)
          (bb84_public_state pub))
    (Hout_psd : rho_out \is psdlf)
    (Hout_tr : @complex.Re R (\Tr rho_out) <= 1)
    (Hout :
      0 < @complex.Re R (\Tr accept) ->
      rho_out =
        ((@complex.Re R (\Tr accept))^-1)%:C *:
          post_hash_state disSEpub Fh accept)
    (Hnum : bb84_numeric_side P) :
  (0 < bb84_p_pass_tot Gsrc ->
    bb84_delayed_measurement_condition BobZ
      (eur_rho_ZC disZBq (bb84_src_rho src) (@bb84_src_N R P q src)
        (bb84_register_pair_dim disZBq dim_Zk dim_Bq))
      (@bb84_rhoZZ_pass_mixH R L HL Zkey Zp disZZp
        P Gsrc
        (bb84_source_zkey_dim dim_Zk)
        dim_Zp_key)) ->
  bb84_pass_prob_le p_abort
    (bb84_p_pass_tot Gsrc * @complex.Re R (\Tr accept)) ->
  (0 < bb84_n P)%N -> (0 < bb84_k P)%N ->
  bb84_eps_c P + eps_s <= eps_tot ->
  (0 < eps_s) && (eps_s <= 1) ->
  (bb84_l P)%:R <=
    sup (fun m : R =>
      exists eps ebar : R,
        0 < eps /\ eps < 1 /\ 0 < ebar /\ eps + ebar <= eps_s /\
        m = ((floor (bb84_key_length_ub (bb84_n P) (bb84_k P) q
          (bb84_Qtol P) eps ebar (bb84_leak_EC P)
          (bb84_eps_c P)))%:~R : R)) ->
  eps_secure mismatch_prob (bb84_eps_c P) rho_out eps_s p_abort eps_tot.
\end{minted}
\AlignmentBoundaryLabel
The theorem is stated directly in terms of the verification distribution and
the secrecy state. The component theorems appear in the next subsection.
\switchcolumn
\textbf{Main finite-key theorem (\Cref{thm:bb84}): conclusion.}
Let $\rho_{PSE}$ be the certified public-leakage state and let
$\rho_{\rm acc},\rho_{\rm rej}$ be its accepted and rejected blocks.
The output registers satisfy
\[
\begin{gathered}
S\perp(E\cup E_{\rm pub}),\\
(E\cup E_{\rm pub})\perp F_\ell,\\
(E\cup E_{\rm pub})\perp Z_\ell,\\
Z_\ell\perp(F_\ell\cup E\cup E_{\rm pub}).
\end{gathered}
\]
The state premises are
\[
\begin{gathered}
\rho_{\rm acc},\rho_{\rm rej}\succeq0,
\quad\rho_{\rm acc}+\rho_{\rm rej}=\rho_{PSE},\\
\rho_{\rm acc}\text{ is cq on }S,
\quad\rho_{\rm out}\succeq0,\quad\Tr\rho_{\rm out}\le1,
\end{gathered}
\]
and, whenever $\Tr\rho_{\rm acc}>0$,
\[
\rho_{\rm out}=(\Tr\rho_{\rm acc})^{-1}
  \mathsf{PostHash}_{F}(\rho_{\rm acc}).
\]
If positive source-pass mass implies the delayed-measurement condition and
\[
0\le p_{\rm abort}\le1,\qquad
1-p_{\rm abort}\le
p_{\rm pass}^{\rm tot}\Tr\rho_{\rm acc},
\]
then the numeric, error-budget, and optimized key-length premises yield
$\mathsf{Secure}_{\eps_{\rm tot}}
(p_{\rm mismatch},\eps_c,\rho_{\rm out},\eps_s,p_{\rm abort})$.
\end{paracol}
\endgroup
\AlignmentBlockEnd

\AlignmentParagraph{Protocol Security.}

\AlignmentContinuationLabel{tab:alignment-end-to-end-context}
\AlignmentBlockHead{Protocol endpoint: ambient model and device context.}
\begingroup\AlignmentBlockSetup
\begin{paracol}{2}
\code{src/qkd/example/example.v}: \coqi{BB84SecureProtocol} section context
\begin{minted}{coq}
Variable R : realType.
Variable P : @bb84_protocol R.
Variable q : R.
Variable dev : bb84_certified_device P q.
Variable LE : finType.
Variable HE : LE -> chsType.
Notation M := (bb84_round_limit P).
Notation L := (bb84_label M LE bb84_register_index).
Notation HL := (@bb84_label_space R M LE
  bb84_register_index HE (@bb84_register_space R P q dev)).
Notation V := (bb84_val M (bb84_n P) (bb84_l P)
  (cd_ir_bits dev) (cd_vseed_count dev)
  (cd_tag_count dev) (cd_paseed_count dev)).
Notation Eve := (@bb84_Eve R P LE).
Notation Ereg := (@bb84_E R P LE).
Notation Zl := (@bb84_Zl R P LE).
Notation Fl := (@bb84_Fl R P LE).
Notation Ppub := (@bb84_Ppub R P LE).
Hypothesis Hclass : @bb84_projective_xz_device R P q dev.
Hypothesis Hn : (0 < bb84_n P)%N.
Hypothesis Hk : (0 < bb84_k P)%N.
Hypothesis Hround_pos : (0 < M)%N.
\end{minted}
Every notation above is expanded in the statement-definition rows below.
\switchcolumn
\textbf{Protocol Security: ambient context.}
The endpoint fixes a protocol $P$, overlap parameter $q$, certified device,
external register space, and round budget
$M=\mathsf{round\_limit}(P)$.
Its concrete classical values and labelled quantum registers are determined
from these objects, and the device satisfies the projective
$\mathbb X/\mathbb Z$ class with $n,k,M>0$.
\end{paracol}
\endgroup
\AlignmentBlockEnd

\AlignmentContinuationLabel{tab:alignment-end-to-end}
\AlignmentBlockHead{Protocol endpoint: conditional end-to-end security.}
\begingroup\AlignmentBlockSetup
\begin{paracol}{2}
\code{src/qkd/example/example.v}: \coqi{bb84_secure_protocol}
\begin{minted}{coq}
Corollary bb84_secure_protocol (entangled : bool) (eps_s eps_tot : R)
    (Hec_pos : 0 < bb84_eps_c P) (Hec_le1 : bb84_eps_c P <= 1) :
  bb84_eps_c P + eps_s <= eps_tot ->
  (0 < eps_s) && (eps_s <= 1) ->
  (bb84_l P)%:R <=
    sup (fun m : R =>
      exists eps ebar : R,
        0 < eps /\ eps < 1 /\ 0 < ebar /\ eps + ebar <= eps_s /\
        m = ((floor (bb84_key_length_ub (bb84_n P) (bb84_k P) q
          (bb84_Qtol P) eps ebar (bb84_leak_EC P)
          (bb84_eps_c P)))%:~R : R)) ->
  forall (EA : @adversary R L HL V Eve)
    (xi_init : @cq_state R L HL (bb84_cvar M) V),
  cq_wf xi_init ->
  let xi_out := bb84_eval_mode entangled EA xi_init in
  let p_abort :=
    @run_abort_prob R L HL (bb84_cvar M) V (@BVabort M)
      bb84_bool_val xi_out in
  let p_pass :=
    @run_pass_prob R L HL (bb84_cvar M) V (@BVabort M)
      bb84_bool_val xi_out in
  let p_mismatch :=
    @run_output_mismatch_prob R L HL (bb84_cvar M) V
      (@BVS M) (@BVSh M) (@BVabort M) bb84_bool_val xi_out in
  [/\ (forall (h : @run_history V M) (s : @store (bb84_cvar M) V),
         bb84_branch_shape h s),
      p_abort + p_pass = 1
    & @eps_secure R L HL Zl (Fl :|: (Ereg :|: Ppub))
        p_mismatch (bb84_eps_c P)
        (rhoZFEpub_eval_mode entangled EA xi_init)
        eps_s p_abort eps_tot].
\end{minted}
\AlignmentBoundaryLabel
The conclusion packages branch shape, abort/pass partition, and \coqi{eps_secure}.
The certified-device fields \coqi{cd_tag_le} and \coqi{cd_tag_bits}, expanded in \Cref{tab:alignment-device-certificate}, discharge both tag inequalities internally at the canonical target \coqi{bb84_eps_c P}.
Its device-class premise, evaluator, terminal-store predicate, run probabilities, and normalized output are expanded in
\Cref{tab:alignment-device-class,tab:alignment-program-modes,tab:alignment-run-objects,tab:alignment-output-reification};
the key-length functional is expanded in \Cref{tab:alignment-bb84}.
\switchcolumn
\textbf{Protocol Security (\Cref{thm:protocol-security}).}
The formal endpoint assumes
\[
\begin{gathered}
\mathsf{PXZ}(\mathsf{dev}),\qquad n,k,M>0,\\
0<\eps_c\le1,\qquad 0<\eps_s\le1,\qquad
\eps_c+\eps_s\le\eps_{\rm tot},\\
\ell\le
\sup_{\substack{0<\eps<1,\ \bar\eps>0\\
                  \eps+\bar\eps\le\eps_s}}
\lfloor L(\eps,\bar\eps)\rfloor,\qquad
\mathsf{wf}(\xi_{\rm init}).
\end{gathered}
\]
Under these premises, every adversary and well-formed run satisfies
\[
p_{\rm abort}+p_{\rm pass}=1,
\qquad
\Pr[S\ne\hat S\wedge\neg\mathsf{abort}]\le\eps_c,
\]
\[
(1-p_{\rm abort})d_{\rm u}
  (\rho_{S F_{\rm PA}E^+})\le\eps_s,
\qquad
\eps_c+\eps_s\le\eps_{\rm tot},
\]
and every terminal store has either the abort shape
$S=\hat S=\perp$ or the non-abort equalities recorded by
$\mathsf{branch\_shape}$.
\end{paracol}
\endgroup
\AlignmentBlockEnd

\subsubsection{Mathematical Closure}

\AlignmentParagraph{Auxiliary Registers.}

\AlignmentContinuationLabel{tab:alignment-bb84-aux-regs}
\AlignmentBlockHead{Auxiliary registers of the mathematical theorem.}
\begingroup\AlignmentBlockSetup
\begin{paracol}{2}
\code{src/qkd/bb84/def.v}: \coqi{bb84_aux_regs}
\begin{minted}{coq}
Record bb84_aux_regs := BB84AuxRegs {
  bb84_aux_app : {set L};
  bb84_aux_bpp : {set L};
  bb84_aux_ciso : {set L};
  bb84_aux_cp_src : {set L};
  bb84_aux_cpp_src : {set L};
  bb84_aux_cp_tgt : {set L};
  bb84_aux_cpp_tgt : {set L};
  bb84_aux_disZ_app : [disjoint Zkey & bb84_aux_app];
  bb84_aux_disZp_bpp : [disjoint Zp & bb84_aux_bpp];
  bb84_aux_disZ_bpp : [disjoint Zkey & bb84_aux_bpp];
  bb84_aux_disApp_Zp : [disjoint bb84_aux_app & Zp];
  bb84_aux_disApp_bpp : [disjoint bb84_aux_app & bb84_aux_bpp];
  bb84_aux_dim_app :
    (dim 'H[HL]_Zkey * dim 'H[HL]_Zkey <= dim 'H[HL]_bb84_aux_app)%N;
  bb84_aux_dim_bpp :
    (dim 'H[HL]_Bq * dim 'H[HL]_Zp <= dim 'H[HL]_bb84_aux_bpp)%N;
  bb84_aux_disZ_ciso : [disjoint Zkey & bb84_aux_ciso];
  bb84_aux_disBq_ciso : [disjoint Bq & bb84_aux_ciso];
  bb84_aux_disZApp_ciso :
    [disjoint Zkey :|: bb84_aux_app & bb84_aux_ciso];
  bb84_aux_disZpBpp_ciso :
    [disjoint Zp :|: bb84_aux_bpp & bb84_aux_ciso];
  bb84_aux_dim_ZB_ciso :
    (dim 'H[HL]_(Zkey :|: Bq) <= dim 'H[HL]_bb84_aux_ciso)%N;
  bb84_aux_dim_ZAppBBpp_ciso :
    (dim 'H[HL]_((Zkey :|: bb84_aux_app) :|:
                   (Zp :|: bb84_aux_bpp)) <=
     dim 'H[HL]_bb84_aux_ciso)%N;
\end{minted}
\switchcolumn
\textbf{Auxiliary registers.}
Seven registers
$A'',B'',C_{\rm iso},C'_{\rm s},C''_{\rm s},C'_{\rm t},C''_{\rm t}$
serve the Stinespring dilation, the purification, and the copies of the
data-processing steps.
They satisfy
\[
\begin{gathered}
Z_{\rm key}\perp A'',\qquad Z'_{\rm key}\perp B'',\\
Z_{\rm key}\perp B'',\qquad A''\perp Z'_{\rm key},\\
A''\perp B'',\\
(\dim Z_{\rm key})^2\le\dim A'',\\
\dim B_q\,\dim Z'_{\rm key}\le\dim B'',\\
Z_{\rm key}\perp C_{\rm iso},\qquad B_q\perp C_{\rm iso},\\
Z_{\rm key}\cup A''\perp C_{\rm iso},\\
Z'_{\rm key}\cup B''\perp C_{\rm iso},\\
\dim(Z_{\rm key}\cup B_q)\le\dim C_{\rm iso},\\
\dim(Z_{\rm key}\cup A''\cup Z'_{\rm key}\cup B'')\le\dim C_{\rm iso}.
\end{gathered}
\]
\end{paracol}
\endgroup
\AlignmentBlockEnd

\AlignmentContinuationLabel{tab:alignment-bb84-aux-copies}
\AlignmentBlockHead{Auxiliary registers: source and target copies.}
\begingroup\AlignmentBlockSetup
\begin{paracol}{2}
\code{src/qkd/bb84/def.v}: \coqi{bb84_aux_regs}
\begin{minted}{coq}
  bb84_aux_disZ_cp_src : [disjoint Zkey & bb84_aux_cp_src];
  bb84_aux_disZ_cpp_src : [disjoint Zkey & bb84_aux_cpp_src];
  bb84_aux_disBq_cp_src : [disjoint Bq & bb84_aux_cp_src];
  bb84_aux_disBq_cpp_src : [disjoint Bq & bb84_aux_cpp_src];
  bb84_aux_disCiso_cp_src :
    [disjoint bb84_aux_ciso & bb84_aux_cp_src];
  bb84_aux_disCiso_cpp_src :
    [disjoint bb84_aux_ciso & bb84_aux_cpp_src];
  bb84_aux_disCp_cpp_src :
    [disjoint bb84_aux_cp_src & bb84_aux_cpp_src];
  bb84_aux_dim_Z_cp_src :
    dim 'H[HL]_Zkey = dim 'H[HL]_bb84_aux_cp_src;
  bb84_aux_dim_CisoCp_cpp_src :
    (dim 'H[HL]_bb84_aux_ciso * dim 'H[HL]_bb84_aux_cp_src <=
     dim 'H[HL]_bb84_aux_cpp_src)%N;
  bb84_aux_dim_Bq_cpp_src :
    (dim 'H[HL]_Bq <= dim 'H[HL]_bb84_aux_cpp_src)%N;
  bb84_aux_disZApp_cp_tgt :
    [disjoint Zkey :|: bb84_aux_app & bb84_aux_cp_tgt];
  bb84_aux_disZApp_cpp_tgt :
    [disjoint Zkey :|: bb84_aux_app & bb84_aux_cpp_tgt];
  bb84_aux_disZpBpp_cp_tgt :
    [disjoint Zp :|: bb84_aux_bpp & bb84_aux_cp_tgt];
  bb84_aux_disZpBpp_cpp_tgt :
    [disjoint Zp :|: bb84_aux_bpp & bb84_aux_cpp_tgt];
  bb84_aux_disCiso_cp_tgt :
    [disjoint bb84_aux_ciso & bb84_aux_cp_tgt];
  bb84_aux_disCiso_cpp_tgt :
    [disjoint bb84_aux_ciso & bb84_aux_cpp_tgt];
  bb84_aux_disCp_cpp_tgt :
    [disjoint bb84_aux_cp_tgt & bb84_aux_cpp_tgt];
  bb84_aux_dim_ZApp_cp_tgt :
    dim 'H[HL]_(Zkey :|: bb84_aux_app) =
    dim 'H[HL]_bb84_aux_cp_tgt;
  bb84_aux_dim_CisoCp_cpp_tgt :
    (dim 'H[HL]_bb84_aux_ciso * dim 'H[HL]_bb84_aux_cp_tgt <=
     dim 'H[HL]_bb84_aux_cpp_tgt)%N;
  bb84_aux_dim_ZpBpp_cpp_tgt :
    (dim 'H[HL]_(Zp :|: bb84_aux_bpp) <=
     dim 'H[HL]_bb84_aux_cpp_tgt)%N
}.
\end{minted}
\switchcolumn
\textbf{Copy registers.}
The source copies $C'_{\rm s}$ and $C''_{\rm s}$ are disjoint from
$Z_{\rm key}$, $B_q$, $C_{\rm iso}$, and each other, with
\[
\begin{gathered}
\dim C'_{\rm s}=\dim Z_{\rm key},\\
\dim C_{\rm iso}\,\dim C'_{\rm s}\le\dim C''_{\rm s},\\
\dim B_q\le\dim C''_{\rm s}.
\end{gathered}
\]
The target copies $C'_{\rm t}$ and $C''_{\rm t}$ are disjoint from
$Z_{\rm key}\cup A''$, $Z'_{\rm key}\cup B''$, $C_{\rm iso}$, and each
other, with
\[
\begin{gathered}
\dim C'_{\rm t}=\dim(Z_{\rm key}\cup A''),\\
\dim C_{\rm iso}\,\dim C'_{\rm t}\le\dim C''_{\rm t},\\
\dim(Z'_{\rm key}\cup B'')\le\dim C''_{\rm t}.
\end{gathered}
\]
\end{paracol}
\endgroup
\AlignmentBlockEnd

\AlignmentParagraph{Security.}

\AlignmentContinuationLabel{tab:alignment-bb84}
\AlignmentBlockHead{Protocol and key-length function.}
\begingroup\AlignmentBlockSetup
\begin{paracol}{2}
\code{src/utils/arith.v}: \coqi{h_bin};
\code{src/qkd/bb84/def.v}: \coqi{bb84_protocol}, \coqi{bb84_mu}, \coqi{bb84_h_trunc}, \coqi{bb84_key_length_ub}
\begin{minted}{coq}
Record bb84_protocol := BB84Protocol {
  bb84_n : nat; bb84_k : nat; bb84_round_limit : nat;
  bb84_l : nat; bb84_Qtol : R;
  bb84_eps_c : R; bb84_leak_EC : R
}.
Definition h_bin (p : R) : R :=
  if p <= 0 then 0
  else if p >= 1 then 0
  else -(p * log2 p) - ((1 - p) * log2 (1 - p)).
Definition bb84_mu (n k : nat) (eps : R) : R :=
  Num.sqrt ((((n + k)%:R / (n * k)%:R) * ((k.+1)%:R / k%:R)) *
    ln (eps^-1)).
Definition bb84_h_trunc (x : R) : R :=
  if x <= 0 then 0
  else if 1 / 2 <= x then 1
  else h_bin x.
Definition bb84_key_length_ub (n k : nat)
    (q Qtol eps ebar leak_EC eps_c : R) : R :=
  n%:R * (q - bb84_h_trunc (Qtol + bb84_mu n k eps)) -
  2%:R * log2 ((2%:R * ebar)^-1) -
  leak_EC -
  log2 (2%:R / eps_c).
\end{minted}
\switchcolumn
\textbf{Statistical correction and key-length function (\Cref{eq:keylen}).}
\[
\mu(n,k,\eps)=
\sqrt{\frac{n+k}{nk}\frac{k+1}{k}\ln\frac1\eps}.
\]
The entropy function is
\[
h_{\rm trunc}(x)=
\begin{cases}
0,&x\le0,\\
h_{\rm bin}(x),&0<x<\tfrac12,\\
1,&x\ge\tfrac12.
\end{cases}
\]
Writing
\[
\begin{aligned}
L(\eps,\bar\eps)={}&
n\bigl(q-h_{\rm trunc}(\Qtol+\mu(n,k,\eps))\bigr)\\
&-2\log_2(2\bar\eps)^{-1}-\leakec-\log_2(2/\eps_c),
\end{aligned}
\]
the theorem assumes
\[
\ell\le
\sup_{\substack{0<\eps<1,\ \bar\eps>0\\
                  \eps+\bar\eps\le\eps_s}}
\lfloor L(\eps,\bar\eps)\rfloor.
\]
\end{paracol}
\endgroup
\AlignmentBlockEnd

\AlignmentContinuationLabel{tab:alignment-security}
\AlignmentBlockHead{Composable security predicates.}
\begingroup\AlignmentBlockSetup
\begin{paracol}{2}
\code{src/qkd/bb84/def.v}: \coqi{eps_correct}
\begin{minted}{coq}
Definition eps_correct (p_mismatch eps_c : R) : Prop :=
  p_mismatch <= eps_c.
\end{minted}
\switchcolumn
\textbf{Correctness (Definition~\ref{def:correct}).}
For every adversary,
$\Pr[\hat{\mathbf S}\ne\mathbf S\wedge\text{not abort}]\le\eps_c$.
\switchcolumn*
\code{src/utils/labelled/def.v}: \coqi{trnormH};
\code{src/lhl/def.v}: \coqi{dist_uniform_feasible}, \coqi{dist_from_uniform};
\code{src/qkd/bb84/def.v}: \coqi{delta_secret}, \coqi{eps_secret}
\begin{minted}{coq}
Definition trnormH (f : 'F[H]_S) : R :=
  @complex.Re R (trnorm (h2mx f)).
Definition dist_uniform_feasible (rhoAB : 'F[H]_(A :|: B)) (v : R) : Prop :=
  exists sigmaB : 'F[H]_B,
    sigmaB \is psdlf /\
    \Tr sigmaB = \Tr (\Tr_(A) rhoAB) /\
    v = (2%:R)^-1 * trnormH
      (rhoAB - (dim 'H[H]_A)%:R^-1 *: ((\1 : 'F_A) \O sigmaB)).
Definition dist_from_uniform (rhoAB : 'F[H]_(A :|: B)) : R :=
  inf [set v : R | dist_uniform_feasible rhoAB v].
Definition delta_secret (rhoAB : 'F[HL]_S_AB) (Delta : R) : Prop :=
  dist_from_uniform rhoAB <= Delta.
Definition eps_secret (rhoAB : 'F[HL]_S_AB) (eps_s p_abort : R) : Prop :=
  exists Delta : R,
    delta_secret rhoAB Delta /\ (1 - p_abort) * Delta <= eps_s.
\end{minted}
\switchcolumn
\textbf{Secrecy and trace distance (Definition~\ref{def:secret}).}
The optimized distance is
\[
\min_{\substack{\sigma_E\succeq0\\
                  \Tr\sigma_E=\Tr\rho_E}}
\frac12\lVert\rho_{\mathbf S E}-\tau_{\mathbf S}\otimes\sigma_E\rVert_1,
\]
and secrecy multiplies it by $1-p_{\rm abort}$.
\switchcolumn*
\code{src/qkd/bb84/def.v}: \coqi{eps_secure}
\begin{minted}{coq}
Definition eps_secure (p_mismatch eps_c : R) (rhoAB : 'F[HL]_S_AB)
    (eps_s p_abort eps_tot : R) : Prop :=
  eps_correct p_mismatch eps_c /\
  eps_secret rhoAB eps_s p_abort /\
  eps_c + eps_s <= eps_tot.
\end{minted}
These three conjuncts define the combined criterion;
\coqi{eps_correct} and \coqi{eps_secret} are expanded immediately above in this table.
\switchcolumn
\textbf{Combined security (following Definitions~\ref{def:correct} and~\ref{def:secret}).}
The protocol is $\eps_{\rm tot}$-secure when it is $\eps_c$-correct and $\eps_s$-secret with $\eps_c+\eps_s\le\eps_{\rm tot}$.
\end{paracol}
\endgroup
\AlignmentBlockEnd

\AlignmentContinuationLabel{tab:alignment-arithmetic}
\AlignmentBlockHead{Arithmetic closure.}
\begingroup\AlignmentBlockSetup
\begin{paracol}{2}
\code{src/utils/arith.v}: \coqi{log2}
\begin{minted}{coq}
Definition log2 (x : R) : R := ln x / ln 2%:R.
\end{minted}
Every use of the project-local \coqi{log2} refers to this expansion;
\coqi{ln} is the MathComp real logarithm.
\switchcolumn
\textbf{Base-two logarithm.}
Throughout the appendix,
\[
\log_2x=\frac{\ln x}{\ln2}.
\]
\end{paracol}
\endgroup
\AlignmentBlockEnd

\AlignmentContinuationLabel{tab:alignment-ingredients}
\AlignmentBlockHead{Two-universal hashing.}
\begingroup\AlignmentBlockSetup
\begin{paracol}{2}
\code{src/lhl/def.v}: \coqi{collision_probability}, \coqi{delta_almost_two_univ}, \coqi{two_univ}
\begin{minted}{coq}
Definition collision_probability (F : Fidx -> X -> Z) (x x' : X) : R :=
  (#|[set f : Fidx | F f x == F f x']|%:R / #|Fidx|%:R)%R.
Definition delta_almost_two_univ (F : Fidx -> X -> Z) (delta : R) : Prop :=
  (0 < #|Fidx|)%N /\
  forall x x' : X, x != x' -> collision_probability F x x' <= delta.
Definition two_univ (F : Fidx -> X -> Z) : Prop :=
  delta_almost_two_univ F (#|Z|%:R^-1)%R.
\end{minted}
For a finite range of cardinality $2^\ell$, $|Z|^{-1}=2^{-\ell}$.
\switchcolumn
\textbf{Two-universal hashing (Definition~\ref{def:two-universal}).}
For a nonempty seed family and $x\ne x'$, a uniform seed satisfies
$\Pr[f(x)=f(x')]\le2^{-\ell}$.
\end{paracol}
\endgroup
\AlignmentBlockEnd

\AlignmentContinuationLabel{tab:alignment-main-side-conditions}
\AlignmentBlockHead{Mismatch and numeric side conditions.}
\begingroup\AlignmentBlockSetup
\begin{paracol}{2}
\code{src/qkd/bb84/def.v}: \coqi{verification_collision_prob},
\coqi{p_mismatch}
\begin{minted}{coq}
Definition verification_collision_prob (x x' : 'I_key_count) : R :=
  #|[set f : 'I_seed_count | hash f x == hash f x']|%:R /
    seed_count%:R.
Definition p_mismatch : R :=
  \sum_x \sum_(x' | x != x') corr_prob x x' *
    verification_collision_prob x x'.
\end{minted}
The two-universal bound on $v(x,x')$ is the predicate expanded in
\Cref{tab:alignment-ingredients}.
\switchcolumn
\textbf{Verification-weighted mismatch probability.}
For a uniform verification seed,
\[
\begin{aligned}
v(x,x')&=\Pr_f[h_f(x)=h_f(x')],\\
p_{\rm mismatch}
  &=\sum_x\sum_{x'\ne x}p(x,x')\,v(x,x').
\end{aligned}
\]
\switchcolumn*
\code{src/qkd/bb84/model.v}: \coqi{bb84_pass_prob_le},
\coqi{bb84_numeric_side}
\begin{minted}{coq}
Definition bb84_pass_prob_le (p_abort p_pass : R) : Prop :=
  0 <= p_abort /\ p_abort <= 1 /\ 1 - p_abort <= p_pass.
Record bb84_numeric_side (P : bb84_protocol) := BB84NumericSide {
  bb84_epsc_pos : 0 < bb84_eps_c P;
  bb84_epsc_le1 : bb84_eps_c P <= 1;
  bb84_leak_nonneg : 0 <= bb84_leak_EC P
}.
\end{minted}
\switchcolumn
\textbf{Abort/pass relation and numeric domain.}
\[
0\le p_{\rm abort}\le1,\qquad
1-p_{\rm abort}\le p_{\rm pass},
\]
with $0<\eps_c\le1$ and $\leakec\ge0$.
\end{paracol}
\endgroup
\AlignmentBlockEnd

\AlignmentContinuationLabel{tab:alignment-post-hash-definitions}
\AlignmentBlockHead{CQ and post-hash states.}
\begingroup\AlignmentBlockSetup
\begin{paracol}{2}
\code{src/lhl/def.v}: \coqi{is_cq_state}, \coqi{cq_block}, \coqi{post_hash_state}
\begin{minted}{coq}
Definition is_cq_state (rhoXE : 'F[HL]_(X :|: E)) : bool :=
  [forall x, forall x', (x' != x) ==>
    ((([> deltav x ; deltav x <] \O (\1 : 'F_E)) \o rhoXE \o
      ([> deltav x' ; deltav x' <] \O (\1 : 'F_E))) == 0)].
Definition cq_block x (rhoXE : 'F[HL]_(X :|: E)) : 'F[HL]_E :=
  ptrace_castl disXE
    (([> deltav x ; deltav x <] \O (\1 : 'F_E)) \o rhoXE \o
     ([> deltav x ; deltav x <] \O (\1 : 'F_E))).
Definition post_hash_state Fh rhoXE :
    'F[HL]_(Zl :|: (Fl :|: E)) :=
  let eqZFE := esym (finset.setUA Zl Fl E) in
  castlf (eqZFE, eqZFE) ((dim 'H[HL]_Fl)%:R^-1 *:
    \sum_f \sum_z
      [> deltav z ; deltav z <] \O [> deltav f ; deltav f <] \O
      (\sum_(x | Fh f x == z) cq_block disXE x rhoXE)).
\end{minted}
These definitions supply the state constructors used by every LHL and
privacy-amplification row below.
\switchcolumn
\textbf{Cq input and extracted joint state.}
The cq condition and its blocks are
\[
\begin{aligned}
\rho_{XE}
  &=\sum_x|x\rangle\!\langle x|\otimes\rho_E^{[x]},\\
\rho_E^{[x]}
  &=\Tr_X\!\left[
    (|x\rangle\!\langle x|\otimes I_E)\rho_{XE}\right.\\
  &\hspace{4.3em}\left.
    (|x\rangle\!\langle x|\otimes I_E)\right].
\end{aligned}
\]
For a uniform seed $f$,
\[
\begin{aligned}
\rho_{ZFE}
  &=\frac1{|\mathcal F|}\sum_{f,z}
    |z\rangle\!\langle z|\otimes|f\rangle\!\langle f|\\
  &\hspace{4.3em}\otimes
    \sum_{x:f(x)=z}\rho_E^{[x]}.
\end{aligned}
\]
\end{paracol}
\endgroup
\AlignmentBlockEnd

\AlignmentContinuationLabel{tab:alignment-public-leakage}
\AlignmentBlockHead{Public leakage certificate.}
\begingroup\AlignmentBlockSetup
\begin{paracol}{2}
\code{src/qkd/bb84/def.v}: \coqi{bb84_public_assoc}, \coqi{bb84_public_leakage}, \coqi{bb84_public_state}, \coqi{bb84_public_leak_bound}
\begin{minted}{coq}
Definition bb84_public_assoc (Ppub : {set L}) :
  Ppub :|: (S :|: E) = S :|: (E :|: Ppub).
Record bb84_public_leakage
    (Ppub : {set L}) (rhoXE : 'F[HL]_(S :|: E)) (leak : R) :=
  BB84PublicLeakage {
  bb84_public_disSP : [disjoint S & Ppub];
  bb84_public_disEP : [disjoint E & Ppub];
  bb84_public_disP_SE : [disjoint Ppub & (S :|: E)];
  bb84_public_state : 'F[HL]_(Ppub :|: (S :|: E));
  bb84_public_state_psd : bb84_public_state \is psdlf;
  bb84_public_state_tr :
    @complex.Re R (\Tr bb84_public_state) <= 1;
  bb84_public_state_classical : forall x x', x' != x ->
    (([> deltav x ; deltav x <] \O (\1 : 'F_(S :|: E))) \o
      bb84_public_state \o
      ([> deltav x' ; deltav x' <] \O (\1 : 'F_(S :|: E)))) = 0;
  bb84_public_state_marginal :
    ptrace_castl bb84_public_disP_SE bb84_public_state = rhoXE;
  bb84_public_state_pa_cq :
    let eq_pub_assoc := bb84_public_assoc Ppub in
    @is_cq_state R L HL S (E :|: Ppub)
      (castlf (eq_pub_assoc, eq_pub_assoc) bb84_public_state);
  bb84_public_leak_bound :
    log2 (dim 'H[HL]_Ppub)%:R <= leak
}.
\end{minted}
The cq predicate used by this record is defined in \Cref{tab:alignment-post-hash-definitions}; this row closes the semantic dependency used by \coqi{bb84_secret} and the leakage-chain lemmas below.
\switchcolumn
\textbf{Public classical leakage.}
A leakage certificate packages a state $\rho_{PSE}$ such that
\[
\begin{aligned}
\rho_{PSE}&\succeq0, & \Tr\rho_{PSE}&\le1,\\
\Tr_P\rho_{PSE}&=\rho_{SE}, &
\rho_{PSE}&\text{ is classical on }P,
\end{aligned}
\]
\[
\begin{aligned}
\rho_{S(EP)}&\text{ is cq on }S,\\
\log_2 d_P&\le\mathsf{leak}.
\end{aligned}
\]
\end{paracol}
\endgroup
\AlignmentBlockEnd

\AlignmentParagraph{Source.}

\AlignmentContinuationLabel{tab:alignment-source-certificate}
\AlignmentBlockHead{Source certificate.}
\begingroup\AlignmentBlockSetup
\begin{paracol}{2}
\code{src/eur/def.v}: \coqi{is_povm_mx};
\code{src/qkd/bb84/model.v}: \coqi{bb84_source_model}
\begin{minted}{coq}
Definition is_povm_mx (M : 'I_nx -> 'M[C]_na) : bool :=
  [forall x, M x \is psdmx] && (\sum_(x < nx) M x == 1%:M).
Record bb84_source_model (P : bb84_protocol) (q : R) :=
    BB84SourceModel {
  bb84_src_na : nat;
  bb84_src_nb : nat;
  bb84_src_nc : nat;
  bb84_src_nx : nat;
  bb84_src_nz : nat;
  bb84_src_rho : 'M[C]_(bb84_src_na * bb84_src_nb * bb84_src_nc);
  bb84_src_M : 'I_bb84_src_nx -> 'M[C]_bb84_src_na;
  bb84_src_N : 'I_bb84_src_nz -> 'M[C]_bb84_src_na;
  bb84_src_den1 : bb84_src_rho \is den1mx;
  bb84_src_M_povm : is_povm_mx bb84_src_M;
  bb84_src_N_povm : is_povm_mx bb84_src_N;
  bb84_src_overlap_quality :
    (bb84_n P)%:R * q <= q_of_mx bb84_src_M bb84_src_N;
  bb84_src_dim_Z : bb84_src_nz = (2 ^ bb84_n P)%N;
  bb84_src_dim_X : bb84_src_nx = (2 ^ bb84_n P)%N
}.
\end{minted}
\coqi{q_of_mx} is expanded in \Cref{tab:alignment-main-statement-bridges}, and
\coqi{bb84_protocol} in \Cref{tab:alignment-bb84}.
\switchcolumn
\textbf{Finite-dimensional gedanken source.}
A source certificate supplies
\[
\begin{gathered}
\rho_{ABC}\succeq0,\quad\Tr\rho_{ABC}=1,\\
\{M_x\}_x,\{N_z\}_z\text{ POVMs},
\end{gathered}
\]
\[
\begin{gathered}
nq\le-\log_2
 \max_{x,z}\|\sqrt{M_x}\sqrt{N_z}\|_\infty^2,\\
|X|=|Z|=2^n.
\end{gathered}
\]
The side-information dimensions are arbitrary.
This source certificate supplies the EUR inputs.
\end{paracol}
\endgroup
\AlignmentBlockEnd

\AlignmentContinuationLabel{tab:alignment-sampling-objects}
\AlignmentBlockHead{Finite sampling objects.}
\begingroup\AlignmentBlockSetup
\begin{paracol}{2}
\code{src/utils/bitstr.v}: \coqi{bitstr}, \coqi{wt_ham}, \coqi{xor_str}, \coqi{d_ham}, \coqi{ord_to_bitstring}, \coqi{bitstring_to_ord};
\code{src/utils/prob.v}: \coqi{perm_pair_sample}, \coqi{perm_pair_holdout}, \coqi{perm_pair_sample_count}, \coqi{perm_pair_sample_err}, \coqi{perm_pair_sample_le_event}
\begin{minted}{coq}
Definition bitstr (n : nat) := 'I_n -> bool.
Definition wt_ham {n : nat} (x : bitstr n) : nat :=
  #|[set i : 'I_n | x i]|.
Definition xor_str {n : nat} (x y : bitstr n) : bitstr n :=
  fun i => x i (+) y i.
Definition d_ham {n : nat} (x y : bitstr n) : nat :=
  wt_ham (xor_str x y).
Local Definition bitstring_to_tuple {n} (s : bitstr n) : n.-tuple bool :=
  [tuple s i | i < n].
Local Definition tuple_to_bitstring {n} (t : n.-tuple bool) : bitstr n :=
  fun i => tnth t i.
Definition ord_to_bitstring {n} (x : 'I_(2 ^ n)) : bitstr n :=
  tuple_to_bitstring (ord2bseq x).
Definition bitstring_to_ord {n} (s : bitstr n) : 'I_(2 ^ n) :=
  bseq2ord (bitstring_to_tuple s).
Definition perm_pair_sample (z : bitstr N) (pi : 'S_N) : bitstr k :=
  fun i => z (pi (@lshift k n i)).
Definition perm_pair_holdout (z : bitstr N) (pi : 'S_N) : bitstr n :=
  fun i => z (pi (@rshift k n i)).
Definition perm_pair_sample_count (z z' : bitstr N) (pi : 'S_N) : nat :=
  d_ham (perm_pair_sample z pi) (perm_pair_sample z' pi).
Definition perm_pair_sample_err (z z' : bitstr N) (pi : 'S_N) : R :=
  (perm_pair_sample_count z z' pi)%:R / k%:R.
Definition perm_pair_sample_le_event (z z' : bitstr N) (Qtol : R) :=
  [set pi : 'S_N | perm_pair_sample_err z z' pi <= Qtol].
\end{minted}
The dependencies are MathComp finite ordinals, tuples, permutations, shifts,
and its binary-sequence codec.
\switchcolumn
\textbf{Hamming sampling under a random permutation.}
For $z,z'\in\{0,1\}^{k+n}$ and $\pi\in S_{k+n}$, define
\[
\begin{aligned}
z_{\rm smp}(i)&=z_{\pi(i)}, & i&<k,\\
z_{\rm hold}(j)&=z_{\pi(k+j)}, & j&<n,\\
\widehat e_\pi(z,z')
  &=k^{-1}d_H(z_{\rm smp},z'_{\rm smp}),\\
\mathcal A_{\Qtol}(z,z')
  &=\{\pi:\widehat e_\pi(z,z')\le\Qtol\}.
\end{aligned}
\]
The ordinal/bitstring bijection is the finite basis-index encoding used by the state construction.
\end{paracol}
\endgroup
\AlignmentBlockEnd

\AlignmentContinuationLabel{tab:alignment-bb84-pass-probability}
\AlignmentBlockHead{Full-string source and pass probability.}
\begingroup\AlignmentBlockSetup
\begin{paracol}{2}
\code{src/utils/prob.v}: \coqi{Pr_event}, \coqi{subset_uniform_mu}, \coqi{subset_rep_tuple}, \coqi{subset_sorted_rep}, \coqi{perm_pair_sample_le_event};
\code{src/qkd/bb84/model.v}: \coqi{bb84_W}, \coqi{bb84_p_pass_pair}, \coqi{bb84_p_pass_tot}
\begin{minted}{coq}
Definition bb84_total_rounds (P : bb84_protocol) : nat :=
  bb84_k P + bb84_n P.
Definition bb84_full_string_count (P : bb84_protocol) : nat :=
  2 ^ bb84_total_rounds P.
Record bb84_zsource (P : bb84_protocol) := BB84ZSource {
  bb84_zsrc_nb : nat;
  bb84_zsrc_rho :
    'M[C]_(bb84_full_string_count P * bb84_zsrc_nb *
           bb84_full_string_count P);
  bb84_zsrc_den1 : bb84_zsrc_rho \is den1mx
}.
Definition Pr_event (mu : T -> R) (A : {set T}) : R :=
  \sum_(t in A) mu t.
Definition subset_uniform_mu (S : {set 'I_N}) : R :=
  (#|S| == k)%:R * (('C(N, k))%:R)^-1.
Definition subset_rep_tuple (S : {set 'I_N}) : N.-tuple 'I_N :=
  Tuple (subset_rep_seq_size S).
Definition subset_sorted_rep (S : {set 'I_N}) : 'S_N :=
  perm (@subset_rep_tuple_tnth_inj S).
Definition perm_pair_sample_le_event (z z' : bitstr N) (Qtol : R) :=
  [set pi : 'S_N | perm_pair_sample_err z z' pi <= Qtol].
Definition bb84_W (P : bb84_protocol) (G : bb84_zsource P) z z' : R :=
  let Pzz := (delta_mx z z *t 1%:M *t delta_mx z' z') in
  @complex.Re R (\tr (Pzz *m bb84_zsrc_rho G)).
Definition bb84_p_pass_pair (P : bb84_protocol) z z' : R :=
  Pr_event (fun S0 => subset_uniform_mu S0)
    [set S0 | subset_sorted_rep S0 \in
      perm_pair_sample_le_event (ord_to_bitstring z)
        (ord_to_bitstring z') (bb84_Qtol P)].
Definition bb84_p_pass_tot (P : bb84_protocol) (G : bb84_zsource P) : R :=
  \sum_z \sum_z' bb84_W G z z' * bb84_p_pass_pair z z'.
\end{minted}
The complete definitions of \coqi{perm_pair_sample_err}, its Hamming-distance
chain, and the ordinal/bitstring encodings are expanded in
\Cref{tab:alignment-sampling-objects}.
The \coqi{bb84_protocol} record is expanded in
\Cref{tab:alignment-bb84}.
\switchcolumn
\textbf{Probability objects consumed by \coqi{stat}.}
For a full-string source $G$,
\[
\begin{aligned}
W_G(z,z')
  &=\Tr\!\bigl[
    (|z\rangle\!\langle z|\otimes I\\
  &\hspace{5.2em}\otimes|z'\rangle\!\langle z'|)\rho_G
    \bigr],
\end{aligned}
\]
\[
\begin{aligned}
p_{\rm pass}(z,z')
  &=\sum_{S:|S|=k}\binom{n+k}{k}^{-1}
    \mathbf 1[\widehat e_S(z,z')\le\Qtol],\\
p_{\rm pass}^{\rm tot}
  &=\sum_{z,z'}W_G(z,z')p_{\rm pass}(z,z').
\end{aligned}
\]
For a subset $S\subseteq\{0,\ldots,N-1\}$, the canonical permutation lists
$S$ in increasing order followed by its complement:
\[
t_S=\operatorname{enum}(S)\mathbin{+\!+}\operatorname{enum}(S^c),
\qquad \pi_S=\operatorname{perm}(t_S).
\]
\end{paracol}
\endgroup
\AlignmentBlockEnd

\AlignmentContinuationLabel{tab:alignment-bb84-pass-state}
\AlignmentBlockHead{Conditional full-string pass state.}
\begingroup\AlignmentBlockSetup
\begin{paracol}{2}
\code{src/qkd/bb84/model.v}: \coqi{bb84_pair_projector},
\coqi{bb84_holdout_pair_mx}, \coqi{bb84_rhoZZ_pass_mix},
\coqi{bb84_rhoZZ_pass_mixH}
\begin{minted}{coq}
Definition bb84_pair_projector (n : nat) (x x' : 'I_(2 ^ n)) :=
  (delta_mx x x : 'M[C]_(2 ^ n)) *t
  (delta_mx x' x' : 'M[C]_(2 ^ n)).
Definition bb84_holdout_pair_mx (n k : nat)
    (z z' : bitstr (k + n)) (pi : 'S_(k + n)) :=
  bb84_pair_projector
    (bitstring_to_ord (perm_pair_holdout z pi))
    (bitstring_to_ord (perm_pair_holdout z' pi)).
Definition bb84_rhoZZ_pass_mix (P : bb84_protocol) (G : bb84_zsource P) :=
  \sum_z \sum_z'
    \sum_(S0 | subset_sorted_rep S0 \in
      perm_pair_sample_le_event (ord_to_bitstring z)
        (ord_to_bitstring z') (bb84_Qtol P))
      (((bb84_W G z z' * subset_uniform_mu S0 /
        bb84_p_pass_tot G)%:C) *:
        bb84_holdout_pair_mx (ord_to_bitstring z)
          (ord_to_bitstring z') (subset_sorted_rep S0)).
Definition bb84_rhoZZ_pass_mixH P G dim_Zk dim_Zp :=
  biprod_of_mx disZZp
    (castmx (f_equal2 muln dim_Zk dim_Zp,
             f_equal2 muln dim_Zk dim_Zp)
      (bb84_rhoZZ_pass_mix G)).
\end{minted}
The weights, canonical subset sampler, pass event, and normalization denominator
are expanded in \Cref{tab:alignment-bb84-pass-probability};
\coqi{perm_pair_holdout} and the ordinal/bitstring encodings in
\Cref{tab:alignment-sampling-objects}.
\switchcolumn
\textbf{Normalized holdout-pair mixture.}
The conditional state is
\[
\begin{aligned}
\rho_{ZZ'|\rm pass}
  &=\frac1{p_{\rm pass}^{\rm tot}}
    \sum_{\substack{z,z',S:\\ \widehat e_S\le\Qtol}}
      W_G(z,z')\binom{n+k}{k}^{-1}\\
  &\hspace{2em}\cdot
    |z_{\bar S},z'_{\bar S}\rangle
    \!\langle z_{\bar S},z'_{\bar S}|.
\end{aligned}
\]
\end{paracol}
\endgroup
\AlignmentBlockEnd

\AlignmentContinuationLabel{tab:alignment-main-statement-bridges}
\AlignmentBlockHead{Overlap and measured states.}
\begingroup\AlignmentBlockSetup
\begin{paracol}{2}
\code{src/utils/mx/def.v}: \coqi{sqrtmx};
\code{src/eur/def.v}: \coqi{overlap_mx}, \coqi{q_of_mx},
\coqi{rho_XB}, \coqi{rho_ZC}, \coqi{eur_rho_XB}, \coqi{eur_rho_ZC}
\begin{minted}{coq}
Definition sqrtmx {n : nat} (A : 'M[C]_n) : 'M[C]_n :=
  let U := spectralmx A in
  let eigenvals := spectral_diag A in
  let sqrt_eigenvals :=
    \row_i (Num.sqrt (@complex.Re R (eigenvals 0 i)))%:C in
  U^*t *m diag_mx sqrt_eigenvals *m U.
Definition overlap_mx
    (M : 'I_nx -> 'M[C]_na)
    (N : 'I_nz -> 'M[C]_na) : R :=
  \big[maxr/0]_(x < nx)
    \big[maxr/0]_(z < nz)
      (@complex.Re R
        (i2norm (sqrtmx (M x) *m sqrtmx (N z)))) ^ 2.
Definition q_of_mx
    (M : 'I_nx -> 'M[C]_na)
    (N : 'I_nz -> 'M[C]_na) : R :=
  - log2 (overlap_mx M N).
Definition rho_XB
    (M : 'I_nx -> 'M[C]_na)
    (rhoABC : 'M[C]_(na * nb * nc)) : 'M[C]_(nx * nb) :=
  \sum_(x < nx) (delta_mx x x *t
    ptrace1 (ptrace2
      ((sqrtmx (M x) *t 1%:M *t 1%:M) *m rhoABC *m
       (sqrtmx (M x) *t 1%:M *t 1%:M)))).
Definition rho_ZC
    (N : 'I_nz -> 'M[C]_na)
    (rhoABC : 'M[C]_(na * nb * nc)) : 'M[C]_(nz * nc) :=
  \sum_(z < nz) (delta_mx z z *t
    ptrace1
      ((sqrtmx (N z) *t 1%:M *t 1%:M) *m rhoABC *m
       (sqrtmx (N z) *t 1%:M *t 1%:M))).
Definition eur_rho_XB
    (X B : {set L}) (disXB : [disjoint X & B])
    (na nb nc nx : nat)
    (rhoSrc : 'M[C]_(na * nb * nc))
    (M : 'I_nx -> 'M[C]_na)
    (dim_eq : nx * nb = dim ('H[HL]_(X :|: B)))
    : 'F[HL]_(X :|: B) :=
  let Q := castmx (tenv_dim disXB, erefl)
    (@tenv_mxU R L HL X B) in
  let eq_prod := etrans dim_eq (tenv_dim disXB) in
  mx2h (castmx
    (esym (tenv_dim disXB), esym (tenv_dim disXB))
    (Q *m castmx (eq_prod, eq_prod) (rho_XB M rhoSrc)
       *m Q^*t)).
Definition eur_rho_ZC
    (Z C0 : {set L}) (disZC : [disjoint Z & C0])
    (na nb nc nz : nat)
    (rhoSrc : 'M[C]_(na * nb * nc))
    (N : 'I_nz -> 'M[C]_na)
    (dim_eq : nz * nc = dim ('H[HL]_(Z :|: C0)))
    : 'F[HL]_(Z :|: C0) :=
  let Q := castmx (tenv_dim disZC, erefl)
    (@tenv_mxU R L HL Z C0) in
  let eq_prod := etrans dim_eq (tenv_dim disZC) in
  mx2h (castmx
    (esym (tenv_dim disZC), esym (tenv_dim disZC))
    (Q *m castmx (eq_prod, eq_prod) (rho_ZC N rhoSrc)
       *m Q^*t)).
\end{minted}
\switchcolumn
\textbf{Overlap and post-measurement states.}
For POVMs $M=\{M_x\}$ and $N=\{N_z\}$,
\[
\begin{aligned}
\sqrt A&=U^\dagger
  \operatorname{diag}\!\bigl(\sqrt{\operatorname{Re}\lambda_i}\bigr)U,
  \quad A=U^\dagger\operatorname{diag}(\lambda_i)U,\\
c(M,N)&=\max_{x,z}\lVert\sqrt{M_x}\sqrt{N_z}\rVert_\infty^2,\\
q(M,N)&=-\log_2 c(M,N).
\end{aligned}
\]
Measuring $A$ in $\rho_{ABC}$ gives
\[
\rho_{XB}=\sum_x |x\rangle\!\langle x|\otimes
 \Tr_{AC}(\sqrt{M_x}\rho_{ABC}\sqrt{M_x}),
\]
and analogously $\rho_{ZC}$ for $N$.
The labelled wrappers combine register casts with the matrix states expanded
immediately above.
\end{paracol}
\endgroup
\AlignmentBlockEnd

\AlignmentContinuationLabel{tab:alignment-main-dimension-transports}
\AlignmentBlockHead{Register-dimension transports.}
\begingroup\AlignmentBlockSetup
\begin{paracol}{2}
\code{src/qkd/bb84/bb84.v}: \coqi{bb84_register_pair_dim},
\coqi{bb84_source_zkey_dim}
\begin{minted}{coq}
Definition bb84_register_pair_dim (A B : {set L})
    (disAB : [disjoint A & B]) (na nb : nat)
    (dim_A : na = dim ('H[HL]_A))
    (dim_B : nb = dim ('H[HL]_B)) :
  na * nb = dim ('H[HL]_(A :|: B)).
Definition bb84_source_zkey_dim (P : bb84_protocol) (q : R)
    (src : bb84_source_model P q) (Z : {set L})
    (dim_Z : @bb84_src_nz R P q src = dim ('H[HL]_Z)) :
  (2 ^ bb84_n P)%N = dim ('H[HL]_Z).
\end{minted}
These equalities supply the two project-level arguments passed to
\coqi{eur_rho_XB}, \coqi{eur_rho_ZC}, and
\coqi{bb84_rhoZZ_pass_mixH}.
\switchcolumn
\textbf{Dimension transports used in measured-state arguments.}
For disjoint registers $A,B$,
\[
\begin{gathered}
n_A=\dim\mathcal H_A,\quad n_B=\dim\mathcal H_B\\
\Longrightarrow n_An_B=\dim\mathcal H_{A\cup B}.
\end{gathered}
\]
The source certificate also transports $n_Z=2^n$ and
$n_Z=\dim\mathcal H_Z$ to $2^n=\dim\mathcal H_Z$.
\end{paracol}
\endgroup
\AlignmentBlockEnd

\AlignmentContinuationLabel{tab:alignment-main-delayed-definition}
\AlignmentBlockHead{Delayed measurement.}
\begingroup\AlignmentBlockSetup
\begin{paracol}{2}
\code{src/qkd/bb84/def.v}: \coqi{bb84_delayed_measurement_condition}
\begin{minted}{coq}
Record bb84_delayed_measurement_condition
    (BobZ : 'QC[HL]_(Bq, Zp))
    (rhoZB : 'F[HL]_(Zkey :|: Bq))
    (rhoZZp : 'F[HL]_(Zkey :|: Zp)) := {
  bb84_dmc_full : 'F[HL]_(Zkey :|: Bq :|: finset.set0);
  bb84_dmc_full_den1 : bb84_dmc_full \is den1lf;
  bb84_dmc_source :
    ptrace_cast3C (disjointX0 Zkey) (disjointX0 Bq)
      bb84_dmc_full = rhoZB;
  bb84_dmc_target :
    ptrace_cast3C (disjointX0 Zkey) (disjointX0 Zp)
      ((((\:1 : 'SO[HL]_Zkey) :⊗ (BobZ : 'SO[HL]_(Bq, Zp)))
        :⊗ (\:1 : 'SO[HL]_finset.set0)) bb84_dmc_full) = rhoZZp
}.
\end{minted}
The corresponding $H_{\max}$ theorem appears in
\Cref{tab:alignment-bb84-bridges}.
\switchcolumn
\textbf{Delayed-measurement condition.}
A delayed channel $\mathcal B_Z:B\to Z'$ is certified by a normalized common
extension $\omega_{ZB}$ such that
\[
 \Tr_{\varnothing}\omega_{ZB}=\rho_{ZB},\qquad
 \Tr_{\varnothing}\bigl((\id_Z\otimes\mathcal B_Z)\omega_{ZB}\bigr)
   =\rho_{ZZ'}.
\]
\end{paracol}
\endgroup
\AlignmentBlockEnd

\AlignmentContinuationLabel{tab:alignment-statement-transports}
\AlignmentBlockHead{Labelled partial trace and matrix transport.}
\begingroup\AlignmentBlockSetup
\begin{paracol}{2}
\code{src/utils/labelled/def.v}: \coqi{ptrace_cast_gen},
\coqi{ptrace_castl}, \coqi{ptrace_cast3C}, \coqi{biprod_of_mx}
\begin{minted}{coq}
Definition ptrace_cast_gen (S T B : {set L})
    (eq : S :\: B = T) (f : 'F[H]_S) : 'F[H]_T :=
  castlf (eq, eq) (\Tr_(B) f).
Definition ptrace_castl (A B : {set L}) (disAB : [disjoint A & B])
    (f : 'F[H]_(A :|: B)) : 'F[H]_B :=
  ptrace_cast_gen (setDU_l disAB) f.
Definition ptrace_cast3C (A B C : {set L})
    (disAC : [disjoint A & C]) (disBC : [disjoint B & C])
    (f : 'F[H]_(A :|: B :|: C)) : 'F[H]_(A :|: B) :=
  ptrace_cast_gen (setDU3_C disAC disBC) f.
Definition biprod_of_mx
    (A B : {set L}) (disAB : [disjoint A & B])
    (rhoAB : 'M[C]_(dim ('H[H]_A) * dim ('H[H]_B)))
    : 'F[H]_(A :|: B) :=
  let Q : 'M[C]_(dim 'H[H]_A * dim 'H[H]_B) :=
    castmx (tenv_dim disAB, erefl) (@tenv_mxU R L H A B) in
  mx2h (castmx (esym (tenv_dim disAB), esym (tenv_dim disAB))
    (Q *m rhoAB *m Q^*t)).
\end{minted}
\switchcolumn
\textbf{Label-preserving transports.}
For $S\setminus B=T$,
\[
  \operatorname{ptrace\_cast}_{S\to T}^{B}(\rho)
  =\operatorname{cast}_{S\setminus B=T}(\Tr_B\rho).
\]
The two convenience forms retain respectively the right component of
$A\cup B$ and the first two components of $A\cup B\cup C$.
The inverse bipartite transport maps a matrix on
$\mathcal H_A\otimes\mathcal H_B$ to the labelled space
$\mathcal H_{A\cup B}$.
\end{paracol}
\endgroup
\AlignmentBlockEnd

\subsubsection{Protocol Closure}

\AlignmentParagraph{Program.}

\AlignmentContinuationLabel{tab:alignment-semantics-core}
\AlignmentBlockHead{Command syntax.}
\begingroup\AlignmentBlockSetup
\begin{paracol}{2}
\code{src/qkd/protocol/syntax.v}: \coqi{store}, \coqi{store_set},
\coqi{expr}, \coqi{bexpr}, \coqi{dexpr}, \coqi{qexpr}, \coqi{mexpr},
\coqi{qkd_cmd}, \coqi{seq_cmd}, \coqi{for_cmd}
\begin{minted}{coq}
Definition store := {ffun Cv -> V}.
Definition store_set (s : store) (x : Cv) (v : V) : store :=
  [ffun y => if y == x then v else s y].
Definition expr := store -> V.
Definition bexpr := store -> bool.
Definition dexpr := store -> V -> R.
Definition qexpr (Q : {set L}) := store -> 'F[HL]_Q.
Definition mexpr (Q : {set L}) := store -> V -> 'F[HL]_Q.
Inductive qkd_cmd : Type :=
  | Cskip
  | Cassign (x : Cv) (e : expr)
  | Csample (x : Cv) (mu : dexpr)
  | Cinit (Q : {set L}) (rho : qexpr Q)
  | Cmeasure (x : Cv) (Q : {set L}) (M : mexpr Q)
  | Cattack (Q : {set L})
  | Cif (b : bexpr) (c1 c2 : qkd_cmd)
  | Cseq (c1 c2 : qkd_cmd).
Fixpoint seq_cmd (c1 c2 : qkd_cmd) : qkd_cmd := match c1 with
  | Cskip => c2
  | Cseq c11 c12 => Cseq c11 (seq_cmd c12 c2)
  | _ => if c2 is Cskip then c1 else Cseq c1 c2
  end.
Definition for_cmd {limit : nat} (body : 'I_limit -> qkd_cmd) : qkd_cmd :=
  \big[seq_cmd/Cskip]_(i <- enum 'I_limit) body i.
\end{minted}
\switchcolumn
\textbf{Command syntax (Definition~\ref{def:protocol-cmd}).}
The grammar and fixed-loop expansion are
\[
\begin{aligned}
c::={}&\mathsf{skip}\mid x:=e\mid x\gets_{\$}\mu\\
&\mid Q\leftarrow\rho\mid x\leftarrow M[Q]\\
&\mid\mathsf{attack}[Q]\\
&\mid\mathsf{if}\ b\ \mathsf{then}\ c_1
  \ \mathsf{else}\ c_2\\
&\mid c_1;c_2,\\
\mathsf{for}_{i<m}c_i
  & =\mathop{;}_{i\in\{0,\ldots,m-1\}}c_i.
\end{aligned}
\]
The classical store is a finite map $\Sigma=\mathsf{Cv}\to\mathsf V$;
expressions, distributions, states, and measurements are store-indexed functions of the types shown at right.
\end{paracol}
\endgroup
\AlignmentBlockEnd

\AlignmentParagraph{Registers.}

\AlignmentContinuationLabel{tab:alignment-program-schema}
\AlignmentBlockHead{Concrete program variables.}
\begingroup\AlignmentBlockSetup
\begin{paracol}{2}
\code{src/qkd/example/program.v}: \coqi{bb84_round_cvar},
\coqi{bb84_cvar}, and all \coqi{BV*} locations
\begin{minted}{coq}
Definition bb84_round_cvar (M : nat) :=
  ('I_M + ('I_M + ('I_M + ('I_M + 'I_M))))%type.
Definition bb84_cvar (M : nat) := (bb84_round_cvar M + 'I_18)%type.
Definition BVa {M} (i : 'I_M) : bb84_cvar M := inl (inl i).
Definition BVy {M} (i : 'I_M) : bb84_cvar M := inl (inr (inl i)).
Definition BVb {M} (i : 'I_M) : bb84_cvar M :=
  inl (inr (inr (inl i))).
Definition BVy' {M} (i : 'I_M) : bb84_cvar M :=
  inl (inr (inr (inr (inl i)))).
Definition BVd {M} (i : 'I_M) : bb84_cvar M :=
  inl (inr (inr (inr (inr i)))).
Definition BVP {M} : bb84_cvar M := inr (inord 0).
Definition BVX {M} : bb84_cvar M := inr (inord 1).
Definition BVZ {M} : bb84_cvar M := inr (inord 2).
Definition BVK {M} : bb84_cvar M := inr (inord 3).
Definition BVXA {M} : bb84_cvar M := inr (inord 4).
Definition BVXB {M} : bb84_cvar M := inr (inord 5).
Definition BVZA {M} : bb84_cvar M := inr (inord 6).
Definition BVZB {M} : bb84_cvar M := inr (inord 7).
Definition BVlam {M} : bb84_cvar M := inr (inord 8).
Definition BVpec {M} : bb84_cvar M := inr (inord 9).
Definition BVXhat {M} : bb84_cvar M := inr (inord 10).
Definition BVg {M} : bb84_cvar M := inr (inord 11).
Definition BVtA {M} : bb84_cvar M := inr (inord 12).
Definition BVtB {M} : bb84_cvar M := inr (inord 13).
Definition BVf {M} : bb84_cvar M := inr (inord 14).
Definition BVS {M} : bb84_cvar M := inr (inord 15).
Definition BVSh {M} : bb84_cvar M := inr (inord 16).
Definition BVabort {M} : bb84_cvar M := inr (inord 17).
\end{minted}
The five-way and eighteen-way sums make all locations distinct by
construction.
\switchcolumn
\textbf{Finite classical-variable layout.}
Each round has locations for Alice's basis and bit, Bob's basis and outcome,
and the detection flag.  Eighteen global locations hold
\[
\begin{gathered}
P,X,Z,K,X_A,X_B,Z_A,Z_B,\lambda,\\
p_{\rm EC},\widehat X,
g,t_A,t_B,f,S,\widehat S,\mathsf{abort}.
\end{gathered}
\]
\end{paracol}
\endgroup
\AlignmentBlockEnd

\AlignmentContinuationLabel{tab:alignment-value-universe}
\AlignmentBlockHead{Finite value universe and constructors.}
\begingroup\AlignmentBlockSetup
\begin{paracol}{2}
\code{src/qkd/example/program.v}: \coqi{bb84_val} and its value constructors
\begin{minted}{coq}
Definition bb84_partial_bits (M : nat) := {ffun 'I_M -> option bool}.
Definition bb84_round_transcript (M : nat) :=
  {ffun 'I_M -> option (bool * bool * bool)}.
Definition bb84_transcript
    (M ec_bits vseed_count tag_count paseed_count : nat) :=
  (bb84_round_transcript M *
   option ({set 'I_M} * bb84_partial_bits M * bb84_partial_bits M) *
   option (option 'I_(2 ^ ec_bits) * option 'I_vseed_count *
           option 'I_tag_count) * option (option 'I_paseed_count))%type.
Definition bb84_eur_public (M vseed_count : nat) :=
  (bb84_round_transcript M * {set 'I_M} * bb84_partial_bits M *
   bb84_partial_bits M * option 'I_vseed_count)%type.
Definition bb84_error_value (M : nat) := ('I_M.+1 * 'I_M.+1)%type.
Definition bb84_val (M n l ec_bits vseed_count tag_count paseed_count : nat) :=
  (bool + (bool + (bool + (option bool + ({set 'I_M} +
  (bb84_partial_bits M + (bb84_error_value M +
  (bb84_transcript M ec_bits vseed_count tag_count paseed_count +
  (option 'I_(2 ^ ec_bits) + (option 'I_vseed_count +
  (option 'I_tag_count + (option 'I_paseed_count +
  (option 'I_(2 ^ n) + (option 'I_(2 ^ l) + unit))))))))))))))%type.
Context {M n l ec_bits vseed_count tag_count paseed_count : nat}.
Notation V := (bb84_val M n l ec_bits vseed_count tag_count paseed_count).
Definition bb84_bool_val (b : bool) : V := inl b.
Definition bb84_basis_bool_val (b : bool) : V := inr (inl b).
Definition bb84_bit_val (b : bool) : V := inr (inr (inl b)).
Definition bb84_outcome_val (o : option bool) : V :=
  inr (inr (inr (inl o))).
Definition bb84_set_val (X : {set 'I_M}) : V :=
  inr (inr (inr (inr (inl X)))).
Definition bb84_partial_bits_val (bits : bb84_partial_bits M) : V :=
  inr (inr (inr (inr (inr (inl bits))))).
Definition bb84_error_val (err total : nat) : V :=
  inr (inr (inr (inr (inr (inr (inl (inord err, inord total))))))).
Definition bb84_transcript_val
    (tr : bb84_transcript M ec_bits vseed_count tag_count paseed_count) : V :=
  inr (inr (inr (inr (inr (inr (inr (inl tr))))))).
Definition bb84_ec_val (x : option 'I_(2 ^ ec_bits)) : V :=
  inr (inr (inr (inr (inr (inr (inr (inr (inl x)))))))).
Definition bb84_vseed_val (x : option 'I_vseed_count) : V :=
  inr (inr (inr (inr (inr (inr (inr (inr (inr (inl x))))))))).
Definition bb84_tag_val (x : option 'I_tag_count) : V :=
  inr (inr (inr (inr (inr (inr (inr (inr (inr (inr (inl x)))))))))).
Definition bb84_paseed_val (x : option 'I_paseed_count) : V :=
  inr (inr (inr (inr (inr (inr (inr (inr (inr (inr (inr (inl x))))))))))).
Definition bb84_raw_key_val (x : option 'I_(2 ^ n)) : V :=
  inr (inr (inr (inr (inr (inr (inr (inr (inr (inr (inr (inr
    (inl x)))))))))))).
Definition bb84_final_key_val (x : option 'I_(2 ^ l)) : V :=
  inr (inr (inr (inr (inr (inr (inr (inr (inr (inr (inr (inr
    (inr (inl x))))))))))))).
Definition bb84_bot_val : V :=
  inr (inr (inr (inr (inr (inr (inr (inr (inr (inr (inr (inr
    (inr (inr tt)))))))))))).
\end{minted}
\switchcolumn
\textbf{Tagged finite values.}
The store value type is the disjoint sum of abort booleans, bases, bits,
detector outcomes, round sets, partial bit strings, error fractions,
transcripts, reconciliation data, verification and PA seeds, tags, raw and
final keys, and $\perp$.
The injections at right distinguish values even when their payload types
coincide.
\end{paracol}
\endgroup
\AlignmentBlockEnd

\AlignmentContinuationLabel{tab:alignment-value-decoders}
\AlignmentBlockHead{Value and transcript decoders.}
\begingroup\AlignmentBlockSetup
\begin{paracol}{2}
\code{src/qkd/example/program.v}: basic \coqi{bb84_*_of} decoders and
\coqi{bb84_transcript_or_empty}
\begin{minted}{coq}
Context {M n l ec_bits vseed_count tag_count paseed_count : nat}.
Notation V := (bb84_val M n l ec_bits vseed_count tag_count paseed_count).
Definition bb84_bit_of (v : V) : option bool :=
  match v with inr (inr (inl b)) => Some b | _ => None end.
Definition bb84_basis_bool_of (v : V) : option bool :=
  match v with inr (inl b) => Some b | _ => None end.
Definition bb84_outcome_of (v : V) : option (option bool) :=
  match v with inr (inr (inr (inl o))) => Some o | _ => None end.
Definition bb84_read_bit (v : V) : option bool :=
  match bb84_bit_of v with
  | Some b => Some b
  | None =>
      match bb84_outcome_of v with
      | Some (Some b) => Some b | _ => None
      end
  end.
Definition bb84_set_of (v : V) : {set 'I_M} :=
  match v with
  | inr (inr (inr (inr (inl X)))) => X | _ => set0
  end.
Definition bb84_partial_bits_of (v : V) : option (bb84_partial_bits M) :=
  match v with
  | inr (inr (inr (inr (inr (inl bits))))) => Some bits
  | _ => None
  end.
Definition bb84_error_of (v : V) : nat :=
  match v with
  | inr (inr (inr (inr (inr (inr (inl e)))))) => nat_of_ord e.1
  | _ => 0%N
  end.
Definition bb84_transcript_of (v : V) :
    option (bb84_transcript M ec_bits vseed_count tag_count paseed_count) :=
  match v with
  | inr (inr (inr (inr (inr (inr (inr (inl tr))))))) => Some tr
  | _ => None
  end.
Definition bb84_empty_transcript :
    bb84_transcript M ec_bits vseed_count tag_count paseed_count :=
  ([ffun _ => None], None, None, None).
Definition bb84_empty_transcript_val : V :=
  bb84_transcript_val bb84_empty_transcript.
Definition bb84_transcript_or_empty (v : V) :
    bb84_transcript M ec_bits vseed_count tag_count paseed_count :=
  if bb84_transcript_of v is Some tr then tr else bb84_empty_transcript.
\end{minted}
\switchcolumn
\textbf{Tagged-value decoding.}
The decoders implement the displayed tagged projections.
\end{paracol}
\endgroup
\AlignmentBlockEnd

\AlignmentContinuationLabel{tab:alignment-key-decoders}
\AlignmentBlockHead{Seed, transcript, and key decoders.}
\begingroup\AlignmentBlockSetup
\begin{paracol}{2}
\code{src/qkd/example/program.v}: typed \coqi{bb84_*_of} decoders
\begin{minted}{coq}
Context {M n l ec_bits vseed_count tag_count paseed_count : nat}.
Notation V := (bb84_val M n l ec_bits vseed_count tag_count paseed_count).
Definition bb84_vseed_of (v : V) : option 'I_vseed_count :=
  match v with
  | inr (inr (inr (inr (inr (inr (inr (inr (inr
      (inl x))))))))) => x
  | _ => None
  end.
Definition bb84_ec_of (v : V) : option 'I_(2 ^ ec_bits) :=
  match v with
  | inr (inr (inr (inr (inr (inr (inr (inr
      (inl x)))))))) => x
  | _ => None
  end.
Definition bb84_tag_of (v : V) : option 'I_tag_count :=
  match v with
  | inr (inr (inr (inr (inr (inr (inr (inr (inr (inr
      (inl x)))))))))) => x
  | _ => None
  end.
Definition bb84_paseed_of (v : V) : option 'I_paseed_count :=
  match v with
  | inr (inr (inr (inr (inr (inr (inr (inr (inr (inr (inr
      (inl x))))))))))) => x
  | _ => None
  end.
Definition bb84_raw_key_of (v : V) : option 'I_(2 ^ n) :=
  match v with
  | inr (inr (inr (inr (inr (inr (inr (inr (inr (inr (inr (inr
      (inl x)))))))))))) => x
  | _ => None
  end.
Definition bb84_final_key_of (v : V) : option 'I_(2 ^ l) :=
  match v with
  | inr (inr (inr (inr (inr (inr (inr (inr (inr (inr (inr (inr
      (inr (inl x))))))))))))) => x
  | _ => None
  end.
\end{minted}
The displayed sum tags are the exact inverse positions of the constructors in
\Cref{tab:alignment-value-universe}.
\switchcolumn
\textbf{Typed projections.}
Typed value decoders map verification seeds, reconciliation messages,
verification tags, PA seeds, raw keys, and final keys to their payload types:
\[
\mathsf{decode}_T:V\longrightarrow\operatorname{option}(T).
\]
These decoders are used by the transcript adapters in
\Cref{tab:alignment-program-transcript-adapters}.
\end{paracol}
\endgroup
\AlignmentBlockEnd

\AlignmentContinuationLabel{tab:alignment-label-layout}
\AlignmentBlockHead{Labelled register layout.}
\begingroup\AlignmentBlockSetup
\begin{paracol}{2}
\code{src/qkd/example/program.v}: \coqi{bb84_label}, label/register families,
\coqi{bb84_label_space}, and matrix transports
\begin{minted}{coq}
Definition bb84_label (M : nat) (LE LB : finType) :=
  ('I_M + ('I_M + (LE + LB)))%type.
Definition bb84_signal_label {M LE LB} (i : 'I_M) : bb84_label M LE LB :=
  inl i.
Definition bb84_source_label {M LE LB} (i : 'I_M) : bb84_label M LE LB :=
  inr (inl i).
Definition bb84_bridge_label {M LE LB} (b : LB) : bb84_label M LE LB :=
  inr (inr (inr b)).
Definition bb84_signal_reg {M LE LB} (i : 'I_M) :=
  [set bb84_signal_label i].
Definition bb84_signal_regs {M LE LB} : 'I_M -> {set bb84_label M LE LB} :=
  bb84_signal_reg.
Definition bb84_source_reg {M LE LB} (i : 'I_M) :=
  [set bb84_source_label i].
Definition bb84_source_regs {M LE LB} : 'I_M -> {set bb84_label M LE LB} :=
  bb84_source_reg.
Definition bb84_eve_reg (M : nat) (LE LB : finType) :=
  [set x : bb84_label M LE LB |
    if x is inr (inr (inl _)) then true else false].
Definition bb84_bridge_reg {M LE LB} (b : LB) :=
  [set bb84_bridge_label b].
Definition bb84_label_space (x : bb84_label M LE LB) : chsType :=
  match x with
  | inl _ => @ihb_chsType R 'I_2
  | inr (inl _) => @ihb_chsType R 'I_4
  | inr (inr (inl e)) => HE e
  | inr (inr (inr b)) => HB b
  end.
Definition bb84_signal_lfun (i : 'I_M) (rho : 'M[C]_2) :=
  mx2h (castmx
    (esym (bb84_signal_reg_dim i), esym (bb84_signal_reg_dim i)) rho).
Definition bb84_source_lfun (i : 'I_M) (rho : 'M[C]_4) :=
  mx2h (castmx
    (esym (bb84_source_reg_dim i), esym (bb84_source_reg_dim i)) rho).
Definition bb84_source_signal_lfun (i : 'I_M) (rho : 'M[C]_(4 * 2)) :=
  let d := f_equal2 muln (bb84_source_reg_dim i)
    (bb84_signal_reg_dim i) in
  biprod_of_mx (bb84_source_signal_disjoint i)
    (castmx (esym d, esym d) rho).
\end{minted}
The underlying matrix transport \coqi{biprod_of_mx} is expanded in
\Cref{tab:alignment-statement-transports}.
\switchcolumn
\textbf{Quantum labels and their Hilbert spaces.}
\[
L=\mathbb I_M^{\rm sig}+\mathbb I_M^{\rm src}+L_E+L_B,
\]
\[
\mathcal H_x=
\begin{cases}
\mathbb C^2,&x\in L^{\rm sig},\\
\mathbb C^4,&x\in L^{\rm src},\\
\mathcal H_E(x),&x\in L_E,\\
\mathcal H_B(x),&x\in L_B.
\end{cases}
\]
Singleton signal and source labels form the round register families; the
matrix transports embed $2$-, $4$-, and $8$-dimensional operands into those
labelled spaces.
\end{paracol}
\endgroup
\AlignmentBlockEnd

\AlignmentContinuationLabel{tab:alignment-bridge-registers}
\AlignmentBlockHead{Bridge registers and dimensions.}
\begingroup\AlignmentBlockSetup
\begin{paracol}{2}
\code{src/qkd/example/readout.v}: \coqi{bb84_register_index},
\coqi{bb84_register_dim}, \coqi{bb84_register_space}, and bridge sets
\begin{minted}{coq}
Definition fin_hilbert_of_dim (d : nat) (Hd : (0 < d)%N) : ihbFinType :=
  FinInhabited.Pack (FinInhabited.Class (Choice.on 'I_d)
    (Countable.on 'I_d) (Finite.on 'I_d)
    (isPointed.Build 'I_d (Ordinal Hd))).
Definition bb84_register_index := 'I_8.
Definition bb84_reg_S : bb84_register_index :=
  Ordinal (isT : (0 < 8)%N).
Definition bb84_reg_E : bb84_register_index :=
  Ordinal (isT : (1 < 8)%N).
Definition bb84_reg_Zkey : bb84_register_index :=
  Ordinal (isT : (2 < 8)%N).
Definition bb84_reg_Bq : bb84_register_index :=
  Ordinal (isT : (3 < 8)%N).
Definition bb84_reg_Zp : bb84_register_index :=
  Ordinal (isT : (4 < 8)%N).
Definition bb84_reg_Fl : bb84_register_index :=
  Ordinal (isT : (5 < 8)%N).
Definition bb84_reg_Zl : bb84_register_index :=
  Ordinal (isT : (6 < 8)%N).
Definition bb84_reg_Ppub : bb84_register_index :=
  Ordinal (isT : (7 < 8)%N).
Definition bb84_register_dim (r : bb84_register_index) : nat :=
  match val r with
  | 0 => 2 ^ bb84_n P
  | 1 => #|{: bb84_eur_public (bb84_round_limit P) (cd_vseed_count dev)}|
  | 2 => 2 ^ bb84_n P | 3 => 2 ^ bb84_n P | 4 => 2 ^ bb84_n P
  | 5 => cd_paseed_count dev | 6 => 2 ^ bb84_l P
  | 7 => (2 ^ cd_ir_bits dev) * cd_tag_count dev | _ => 1
  end.
Definition bb84_register_space (r : bb84_register_index) : chsType :=
  'Hs (fin_hilbert_of_dim (bb84_register_dim_pos r)).
Definition bb84_reg (r : bb84_register_index) : {set L} :=
  bb84_bridge_reg r.
Definition bb84_S : {set L} := bb84_reg bb84_reg_S.
Definition bb84_Eve : {set L} := bb84_eve_reg M LE bb84_register_index.
Definition bb84_Epub : {set L} := bb84_reg bb84_reg_E.
Definition bb84_E : {set L} := bb84_Eve :|: bb84_Epub.
Definition bb84_Zkey : {set L} := bb84_reg bb84_reg_Zkey.
Definition bb84_Bq : {set L} := bb84_reg bb84_reg_Bq.
Definition bb84_Zp : {set L} := bb84_reg bb84_reg_Zp.
Definition bb84_Fl : {set L} := bb84_reg bb84_reg_Fl.
Definition bb84_Zl : {set L} := bb84_reg bb84_reg_Zl.
Definition bb84_Ppub : {set L} := bb84_reg bb84_reg_Ppub.
\end{minted}
\switchcolumn
\textbf{Eight finite bridge registers.}
The bridge coordinates are
\[
(S,E_{\rm pub},Z_{\rm key},B_q,Z',F_{\rm PA},Z_\ell,P_{\rm pub}),
\]
with dimensions respectively
\[
2^n, |E_{\rm pub}|, 2^n, 2^n, 2^n,
|\mathcal F|, 2^\ell, 2^{d_{\rm EC}}|T|.
\]
Eve's register is the external-label region and $E=E_{\rm Eve}\cup E_{\rm pub}$.
\end{paracol}
\endgroup
\AlignmentBlockEnd

\AlignmentContinuationLabel{tab:alignment-device-roots}
\AlignmentBlockHead{Basis and source-pointer roots.}
\begingroup\AlignmentBlockSetup
\begin{paracol}{2}
\code{src/qkd/bb84/model.v}: \coqi{bb84_basis};
\code{src/qkd/example/stat.v}: \coqi{bb84_basis_value},
\coqi{bb84_basis_of_value}, \coqi{bb84_prep_index};
\code{src/qkd/example/deferred.v}: \coqi{bb84_src_ptr_slot}
\begin{minted}{coq}
Inductive bb84_basis := BasisX | BasisZ.
Definition bb84_basis_value (basis : bb84_basis) : V :=
  match basis with
  | BasisX => bb84_basis_bool_val false
  | BasisZ => bb84_basis_bool_val true
  end.
Definition bb84_basis_of_value (v : V) : option bb84_basis :=
  match bb84_basis_bool_of v with
  | Some false => Some BasisX
  | Some true => Some BasisZ
  | None => None
  end.
Definition bb84_prep_index (basis : bb84_basis) (bit : bool) : 'I_4 :=
  inord (match basis, bit with
         | BasisX, false => 0 | BasisX, true => 1
         | BasisZ, false => 2 | BasisZ, true => 3
         end).
Definition bb84_src_ptr_slot : 'I_2 := Ordinal (isT : (1 < 2)%N).
\end{minted}
These are executable finite encodings of the mathematical basis labels.
\switchcolumn
\textbf{Basis and pointer encodings.}
$\mathbb X$ and $\mathbb Z$ are distinct basis choices.  Their store values
are distinct Boolean tags; a source preparation $(b,x)$ occupies one of four
basis states, and the low source-qubit slot carries its coherent pointer.
\end{paracol}
\endgroup
\AlignmentBlockEnd

\AlignmentParagraph{Execution.}

\AlignmentContinuationLabel{tab:alignment-denotation}
\AlignmentBlockHead{CQ-state denotational semantics.}
\begingroup\AlignmentBlockSetup
\begin{paracol}{2}
\code{src/qkd/protocol/semantics.v}: \coqi{cq_state}, \coqi{cq_mass},
\coqi{cq_wf}, \coqi{appendso_fun}, \coqi{appendso}, \coqi{replaceso},
\coqi{adversary}, \coqi{denote}
\begin{minted}{coq}
Definition cq_state := {ffun store -> 'F[HL]_setT}.
Definition cq_mass (xi : cq_state) : R :=
  \sum_s @complex.Re R (\Tr (xi s)).
Definition cq_wf (xi : cq_state) : Prop :=
  (forall s, xi s \is psdlf) /\ cq_mass xi = 1.
Definition appendso_fun (Q : {set L}) (rho : 'F[HL]_Q)
    (x : 'F[HL]_(setT :\: Q)) : 'F[HL]_setT :=
  castlf (setUD_sub (subsetT Q), setUD_sub (subsetT Q)) (rho \⊗ x).
Definition appendso (Q : {set L}) (rho : 'F[HL]_Q) :=
  Superop (linfun (appendso_fun rho)).
Definition replaceso (Q : {set L}) (rho : 'F[HL]_Q) : 'SO[HL]_setT :=
  appendso rho :o ptraceso Q.
Definition adversary :=
  forall Q : {set L}, V -> 'QC('H[HL]_(Q :|: qA)).
Fixpoint denote (c : qkd_cmd) (xi : cq_state) : cq_state := match c with
  | Cskip => xi
  | Cassign x e => [ffun t => \sum_(s | store_set s x (e s) == t) xi s]
  | Csample x mu => [ffun t => \sum_s
      \sum_(v | store_set s x v == t) (mu s v)%:C *: xi s]
  | Cinit Q rho => [ffun s => replaceso (rho s) (xi s)]
  | Cmeasure x Q M => [ffun t => \sum_s
      \sum_(v | store_set s x v == t)
        formso (liftf_lf (M s v)) (xi s)]
  | Cattack Q => [ffun s => liftfso (EA Q (s Pvar)) (xi s)]
  | Cif b c1 c2 =>
      denote c1 [ffun s => if b s then xi s else 0] +
      denote c2 [ffun s => if b s then 0 else xi s]
  | Cseq c1 c2 => denote c2 (denote c1 xi)
  end.
\end{minted}
\AlignmentBoundaryLabel
\coqi{cq_wf} packages positivity and unit total mass, matching the paper's
state invariant.
The adversary family is indexed by the attacked register set $Q$, and its
channel acts on $Q\cup q_A$.
The initialization dependency \coqi{replaceso} supplies the displayed
append/partial-trace construction.
The dependencies \coqi{store}, \coqi{store_set}, and \coqi{qkd_cmd} are expanded in \Cref{tab:alignment-semantics-core}.
\switchcolumn
\textbf{Cq-states and denotational semantics (Definition~\ref{def:protocol-semantics}).}
\[
\begin{aligned}
\Xi&=\Sigma\to\mathcal L(\mathcal H),\\
\mathsf{wf}(\xi)&\iff
  (\forall s,\xi(s)\succeq0)\\
  &\quad\land\sum_s\Tr\xi(s)=1.
\end{aligned}
\]
For a singleton $(s,\rho)$, the paper clauses are
\[
\begin{aligned}
\auxsem{\mathsf{skip}}(s,\rho)&=(s,\rho),\\
\auxsem{x:=e}(s,\rho)&=(s[x\mapsto e(s)],\rho),\\
\auxsem{x\gets_\$\mu}(s,\rho)&=\sum_v(s[x\mapsto v],\mu_s(v)\rho),\\
\auxsem{Q\leftarrow\rho_0}(s,\rho)&=(s,\Tr_Q\rho\otimes\rho_0(s)),\\
\auxsem{x\leftarrow M[Q]}(s,\rho)&=
 \sum_v(s[x\mapsto v],M_v\rho M_v^\dagger),\\
\auxsem{\mathsf{attack}[Q]}(s,\rho)&=(s,\mathcal E_A(s(P))(\rho)).
\end{aligned}
\]
The conditional selects its branch, denotation sums over singleton inputs, and
$\sem{c_1;c_2}=\sem{c_2}\circ\sem{c_1}$.
\end{paracol}
\endgroup
\AlignmentBlockEnd

\AlignmentContinuationLabel{tab:alignment-language}
\AlignmentBlockHead{QKD round components.}
\begingroup\AlignmentBlockSetup
\begin{paracol}{2}
\code{src/qkd/protocol/qkd\_cmd.v}: \coqi{pX}, \coqi{mu_basis}, \coqi{mu_bit}, \coqi{detected}, \coqi{sift_X}, \coqi{sift_Z}, \coqi{rounds_through}, \coqi{Abort_cmd}, \coqi{PrepPM}, \coqi{PrepEB}, \coqi{Prep}, \coqi{Dist}, \coqi{Meas}, \coqi{Sift}, \coqi{Round}, \coqi{Announce}
\begin{minted}{coq}
Definition pX : R := (1 + Num.sqrt (k%:R / n%:R))^-1.
Definition mu_basis : dexpr := fun _ v =>
  if v == basisX then pX else if v == basisZ then 1 - pX else 0.
Definition mu_bit : dexpr := fun _ v =>
  if v == vbit false then (2%:R)^-1
  else if v == vbit true then (2%:R)^-1 else 0.
Definition detected (s : store) (i : 'I_M) : bool :=
  s (y'_ i) != outEmpty.
Definition sift_X (s : store) (i : 'I_M) : bool :=
  (s (a_ i) == basisX) && (s (b_ i) == basisX) &&
  (s (d_ i) == vbool true).
Definition sift_Z (s : store) (i : 'I_M) : bool :=
  (s (a_ i) == basisZ) && (s (b_ i) == basisZ) &&
  (s (d_ i) == vbool true).
Definition rounds_through (i : 'I_M) (p : 'I_M -> bool) :=
  [set j : 'I_M | (nat_of_ord j <= nat_of_ord i)%N && p j].
Definition Abort_cmd : @qkd_cmd R L HL Cv V :=
  (abortv <-c (fun _ => vbool true)) ;;
  (Sv <-c (fun _ => bot)) ;; (Shv <-c (fun _ => bot)).
Definition PrepPM (i : 'I_M) : @qkd_cmd R L HL Cv V :=
  (a_ i <-^$ mu_basis) ;; (y_ i <-^$ mu_bit) ;;
  (Q i <-q (fun s => dev_rho i (s (a_ i)) (s (y_ i)))).
Definition PrepEB (i : 'I_M) : @qkd_cmd R L HL Cv V :=
  ((Qsrc i :|: Q i) <-q (fun _ => dev_pair i)) ;;
  (a_ i <-^$ mu_basis) ;;
  (y_ i :=M[Qsrc i] (fun s v => dev_ptr i (s (a_ i)) v)).
Definition Prep (i : 'I_M) : @qkd_cmd R L HL Cv V :=
  if entangled then PrepEB i else PrepPM i.
Definition Dist (i : 'I_M) : @qkd_cmd R L HL Cv V := Attack[Q i].
Definition Meas (i : 'I_M) : @qkd_cmd R L HL Cv V :=
  (b_ i <-^$ mu_basis) ;;
  (y'_ i :=M[Q i] (fun s v => dev_meas i (s (b_ i)) v)).
Definition Sift (i : 'I_M) : @qkd_cmd R L HL Cv V :=
  (d_ i <-c (fun s => vbool (detected s i))) ;;
  (Xv <-c (fun s => vset (rounds_through i (sift_X s)))) ;;
  (Zv <-c (fun s => vset (rounds_through i (sift_Z s)))).
Definition Round (i : 'I_M) : @qkd_cmd R L HL Cv V :=
  Prep i ;; Dist i ;; Meas i ;; Sift i.
Definition Announce : @qkd_cmd R L HL Cv V :=
  (Pvar <-c (fun s =>
    tr_app_rounds (fun i => (s (a_ i), s (b_ i), s (d_ i))) (s Pvar))).
\end{minted}
The paper's $\mathsf{Abort}$ is named \coqi{Abort_cmd}, and \coqi{Prep}
selects the prepare-and-measure or entanglement-based source mode.
The semantic domains \coqi{store}, \coqi{dexpr}, and \coqi{qkd_cmd} are expanded in \Cref{tab:alignment-semantics-core}.
\switchcolumn
\textbf{QKD component sugar (Definition~\ref{def:qkd-component-sugar}).}
With $p_{\mathbb X}=(1+\sqrt{k/n})^{-1}$,
$\mu_{\mathbb B}(\mathbb X)=p_{\mathbb X}$, and uniform $\mu_2$,
the preparation component has both Box~1 modes:
\[
\begin{aligned}
\mathsf{PrepPM}_i&:=a_i\gets_{\$}\mu_{\mathbb B};\\
  &\quad y_i\gets_{\$}\mu_2;Q_i\leftarrow\rho_{a_i,y_i},\\
\mathsf{PrepEB}_i&:=Q_i^{\rm src}Q_i\leftarrow\rho_{\rm pair};\\
  &\quad a_i\gets_{\$}\mu_{\mathbb B};
  y_i\leftarrow M^{\rm ptr}_{a_i}[Q_i^{\rm src}],\\
\mathsf{Prep}_i(b)&:=
  \begin{cases}\mathsf{PrepEB}_i,&b=\mathsf{true},\\
                \mathsf{PrepPM}_i,&b=\mathsf{false},\end{cases}\\
\mathsf{Round}_i&:=\mathsf{Prep}_i;\mathsf{Dist}_i;\\
  &\quad\mathsf{Meas}_i;\mathsf{Sift}_i.
\end{aligned}
\]
This row also expands $\mathsf{Abort}$, $\mathsf{Meas}$,
$\mathsf{Sift}$, and $\mathsf{Announce}$; the three classical post-processing components follow in the next row.
\end{paracol}
\endgroup
\AlignmentBlockEnd

\AlignmentContinuationLabel{tab:alignment-qkd-classical-components}
\AlignmentBlockHead{Classical QKD components.}
\begingroup\AlignmentBlockSetup
\begin{paracol}{2}
\code{src/qkd/protocol/qkd\_cmd.v}: \coqi{has_key_sample}, \coqi{has_test_sample}, \coqi{sift_ok}, \coqi{paramest_K_mu}, \coqi{alice_bits}, \coqi{bob_bits}, \coqi{ParamEst}, \coqi{pe_error_high}, \coqi{pe_fail}, \coqi{ErrCorr}, \coqi{PrivAmp}
\begin{minted}{coq}
Definition has_key_sample (s : store) : bool :=
  (n <= #|vset_of (s Xv)|)%N.
Definition has_test_sample (s : store) : bool :=
  (k <= #|vset_of (s Zv)|)%N.
Definition sift_ok : bexpr := fun s =>
  has_key_sample s && has_test_sample s.
Definition paramest_K_mu : dexpr := fun s v =>
  if has_key_sample s then unif_nsubset_mass n (vset_of (s Xv)) v
  else (v == bot)%:R.
Definition alice_bits (s : store) : 'I_M -> V := fun i => s (y_ i).
Definition bob_bits (s : store) : 'I_M -> V := fun i => s (y'_ i).
Definition ParamEst : @qkd_cmd R L HL Cv V :=
  (Kv <-^$ paramest_K_mu) ;;
  (XA <-c (fun s => key_bits_on (vset_of (s Kv)) (alice_bits s))) ;;
  (XB <-c (fun s => key_bits_on (vset_of (s Kv)) (bob_bits s))) ;;
  (ZA <-c (fun s => test_bits_on (vset_of (s Zv)) (alice_bits s))) ;;
  (ZB <-c (fun s => test_bits_on (vset_of (s Zv)) (bob_bits s))) ;;
  (Pvar <-c (fun s => tr_app_pe (s Kv) (s ZA) (s ZB) (s Pvar))) ;;
  (lamv <-c (fun s => vrat (err_count (s ZA) (s ZB))
    #|vset_of (s Zv)|)).
Definition pe_error_high : bexpr := fun s =>
  Qtol * (#|vset_of (s Zv)|)%:R < (err_of (s lamv))%:R.
Definition pe_fail : bexpr := fun s => ~~ sift_ok s || pe_error_high s.
Definition ErrCorr : @qkd_cmd R L HL Cv V :=
  (pECv <-c (fun s => ir_pub leakEC (s XA) (s XB))) ;;
  (Xhatv <-c (fun s => ir_dec (s pECv) (s XB))) ;;
  (gv <-^$ unif_vseed) ;;
  (tAv <-c (fun s => verify_hash (s gv) (s XA))) ;;
  (tBv <-c (fun s => verify_hash (s gv) (s Xhatv))) ;;
  (Pvar <-c (fun s => tr_app_ec (s pECv) (s gv) (s tAv) (s Pvar))).
Definition PrivAmp : @qkd_cmd R L HL Cv V :=
  (fv <-^$ unif_pseed) ;;
  (Sv <-c (fun s => privacy_hash l (s fv) (s XA))) ;;
  (Shv <-c (fun s => privacy_hash l (s fv) (s Xhatv))) ;;
  (Pvar <-c (fun s => tr_app_pa (s fv) (s Pvar))).
\end{minted}
These are the immediate dependencies of the whole-program branch displayed next; protocol-specific encoders and transcript appenders are parameters of the generic component layer.
The semantic domains \coqi{store}, \coqi{dexpr}, \coqi{bexpr}, and \coqi{qkd_cmd} are expanded in \Cref{tab:alignment-semantics-core}.
\switchcolumn
\textbf{Parameter estimation, reconciliation, and privacy amplification.}
\[
\begin{aligned}
\mathsf{ParamEst}:&\ K\gets_\$\mathsf{Unif}(\mathcal K_n(\mathcal X));\\
 &\quad X_A,X_B,Z_A,Z_B;P;\lambda,\\
\mathsf{ErrCorr}:&\ p_{\rm EC},\widehat X;g\gets_\$\mathcal G;\\
 &\quad t_A=g(X_A);t_B=g(\widehat X);P,\\
\mathsf{PrivAmp}:&\ f\gets_\$\mathcal F_\ell;\\
 &\quad S=f(X_A);\widehat S=f(\widehat X);P.
\end{aligned}
\]
The program aborts if either sample is too small or the observed error exceeds
$\Qtol$.
\end{paracol}
\endgroup
\AlignmentBlockEnd

\AlignmentContinuationLabel{tab:alignment-program-physical-adapters}
\AlignmentBlockHead{Source and measurement adapters.}
\begingroup\AlignmentBlockSetup
\begin{paracol}{2}
\code{src/qkd/example/program.v}: \coqi{bb84_program_basis_value}, \coqi{bb84_program_basis_of_value}, \coqi{bb84_program_device_rho}, \coqi{bb84_program_device_meas}, \coqi{bb84_program_device_pair}, \coqi{bb84_program_device_ptr}
\begin{minted}{coq}
Definition bb84_program_basis_value (basis : bb84_basis) : V :=
  match basis with
  | BasisX => bb84_basis_bool_val false
  | BasisZ => bb84_basis_bool_val true
  end.
Definition bb84_program_basis_of_value (v : V) : option bb84_basis :=
  match bb84_basis_bool_of v with
  | Some false => Some BasisX
  | Some true => Some BasisZ
  | None => None
  end.
Definition bb84_program_device_rho (i : 'I_M) (basis bit : V)
    : 'F[HL]_(bb84_signal_reg i) :=
  match bb84_program_basis_of_value basis, bb84_bit_of bit with
  | Some BasisX, Some b => bb84_signal_lfun HE HB i (cd_rho dev BasisX b)
  | Some BasisZ, Some b => bb84_signal_lfun HE HB i (cd_rho dev BasisZ b)
  | _, _ => bb84_signal_lfun HE HB i (cd_rho dev BasisX false)
  end.
Definition bb84_program_device_meas (i : 'I_M) (basis outcome : V)
    : 'F[HL]_(bb84_signal_reg i) :=
  match bb84_program_basis_of_value basis, bb84_outcome_of outcome with
  | Some BasisX, Some o => bb84_signal_lfun HE HB i (cd_meas dev BasisX o)
  | Some BasisZ, Some o => bb84_signal_lfun HE HB i (cd_meas dev BasisZ o)
  | None, Some o => bb84_signal_lfun HE HB i (cd_meas dev BasisX o)
  | _, None => 0
  end.
Definition bb84_program_device_pair (i : 'I_M) :=
  bb84_source_signal_lfun HE HB i (cd_pair dev).
Definition bb84_program_device_ptr (i : 'I_M) (basis bit : V) :=
  match bb84_program_basis_of_value basis, bb84_bit_of bit with
  | Some BasisX, Some b => bb84_source_lfun HE HB i (cd_ptr dev BasisX b)
  | Some BasisZ, Some b => bb84_source_lfun HE HB i (cd_ptr dev BasisZ b)
  | None, Some b => bb84_source_lfun HE HB i (cd_ptr dev BasisX b)
  | _, None => 0
  end.
\end{minted}
The fields \coqi{cd_rho}, \coqi{cd_meas}, \coqi{cd_pair}, and \coqi{cd_ptr} are expanded in \Cref{tab:alignment-device-certificate}.
The labelled-register transports and value decoders are expanded in
\Cref{tab:alignment-label-layout,tab:alignment-value-decoders,%
tab:alignment-key-decoders}.
\switchcolumn
\textbf{Embedding the certified device in each protocol round.}
The concrete adapters decode the store's basis/bit/outcome values and transport
\[
\rho_{b,y},\quad K_{b,o},\quad \rho_{\rm pair},
\quad M^{\rm ptr}_{b,y}
\]
into the labelled signal and source registers for round $i$.
\end{paracol}
\endgroup
\AlignmentBlockEnd

\AlignmentContinuationLabel{tab:alignment-program-classical-adapters}
\AlignmentBlockHead{Sampling and bit extraction.}
\begingroup\AlignmentBlockSetup
\begin{paracol}{2}
\code{src/qkd/example/program.v}: \coqi{bb84_selected_index}, \coqi{bb84_bits_on_val}, \coqi{bb84_partial_bits_on_val}, \coqi{bb84_partial_error_count}, \coqi{bb84_err_count_val}, \coqi{bb84_nsubset_count}, \coqi{bb84_unif_nsubset_mass}, \coqi{bb84_unif_vseed_mass}, \coqi{bb84_unif_pseed_mass}
\begin{minted}{coq}
Definition bb84_selected_index {M n : nat}
    (K : {set 'I_M}) (j : 'I_n) : option 'I_M :=
  nth None [seq Some i | i <- enum K] j.
Definition bb84_bits_on_val (K : {set 'I_M}) (bits : 'I_M -> V) : V :=
  if #|K| == n then
    bb84_raw_key_val
      (Some (bitstring_to_ord (fun j =>
        match bb84_selected_index K j with
        | Some i => if bb84_read_bit (bits i) is Some b then b else false
        | None => false
        end)))
  else bb84_raw_key_val None.
Definition bb84_partial_bits_on_val
    (K : {set 'I_M}) (bits : 'I_M -> V) : V :=
  bb84_partial_bits_val
    [ffun i => if i \in K then bb84_read_bit (bits i) else None].
Definition bb84_partial_error_count (M : nat)
    (a b : bb84_partial_bits M) : nat :=
  #|[set i : 'I_M |
    if a i is Some x then
      if b i is Some y then x != y else false
    else false]|.
Definition bb84_err_count_val (a b : V) : nat :=
  match bb84_partial_bits_of a, bb84_partial_bits_of b with
  | Some aa, Some bb => bb84_partial_error_count aa bb
  | _, _ =>
      match bb84_raw_key_of a, bb84_raw_key_of b with
      | Some x, Some y => d_ham (ord_to_bitstring x) (ord_to_bitstring y)
      | _, _ => 0%N
      end
  end.
Definition bb84_nsubset_count (M m : nat) (X : {set 'I_M}) : nat :=
  #|[set K : {set 'I_M} | (#|K| == m) && (K \subset X)]|.
Definition bb84_unif_nsubset_mass (m : nat) (X : {set 'I_M}) (v : V) : R :=
  \sum_(K : {set 'I_M} | (#|K| == m) && (K \subset X))
    if v == bb84_set_val K
    then (bb84_nsubset_count m X)%:R^-1 else 0.
Definition bb84_unif_vseed_mass : dexpr := fun _ v =>
  \sum_g if v == bb84_vseed_val (Some g)
    then (vseed_count%:R)^-1 else 0.
Definition bb84_unif_pseed_mass : dexpr := fun _ v =>
  \sum_f if v == bb84_paseed_val (Some f)
    then (paseed_count%:R)^-1 else 0.
\end{minted}
The Hamming-distance and ordinal/bitstring dependencies are expanded in
\Cref{tab:alignment-sampling-objects}; all value constructors and decoders are
expanded in \Cref{tab:alignment-value-universe,tab:alignment-value-decoders,%
tab:alignment-key-decoders}.
\switchcolumn
\textbf{Finite sample, key/test strings, and seed distributions.}
For $K\subseteq\{0,\ldots,M-1\}$, the adapter enumerates $K$ in canonical order,
forms Alice/Bob's $n$-bit key strings when $|K|=n$, retains partial test strings,
and counts disagreements.
The sample law is uniform over $n$-subsets of $\mathcal X$; verification and PA seeds are uniform on their finite seed spaces.
\end{paracol}
\endgroup
\AlignmentBlockEnd

\AlignmentContinuationLabel{tab:alignment-program-transcript-adapters}
\AlignmentBlockHead{Transcript and hash adapters.}
\begingroup\AlignmentBlockSetup
\begin{paracol}{2}
\code{src/qkd/example/program.v}: \coqi{bb84_verify_hash_val}, \coqi{bb84_ir_pub_val}, \coqi{bb84_ir_dec_val}, \coqi{bb84_privacy_hash_val}, \coqi{bb84_tr_app_rounds}, \coqi{bb84_tr_app_pe}, \coqi{bb84_tr_app_ec}, \coqi{bb84_tr_app_pa}, \coqi{bb84_eur_public_of_transcript}, \coqi{bb84_eur_public_of_value}, \coqi{bb84_eur_public_of_store}
\begin{minted}{coq}
Definition bb84_verify_hash_val (vh : 'I_vseed_count -> 'I_(2 ^ n) ->
    'I_tag_count) (seed raw : V) : V :=
  match bb84_vseed_of seed, bb84_raw_key_of raw with
  | Some g, Some x => bb84_tag_val (Some (vh g x))
  | _, _ => bb84_tag_val None
  end.
Definition bb84_ir_pub_val
    (ir_pub : 'I_(2 ^ n) -> 'I_(2 ^ n) -> 'I_(2 ^ ec_bits))
    (_leakEC : nat) (alice bob : V) : V :=
  match bb84_raw_key_of alice, bb84_raw_key_of bob with
  | Some x, Some y => bb84_ec_val (Some (ir_pub x y))
  | _, _ => bb84_ec_val None
  end.
Definition bb84_ir_dec_val
    (ir_dec : 'I_(2 ^ ec_bits) -> 'I_(2 ^ n) -> 'I_(2 ^ n))
    (transcript bob : V) : V :=
  match bb84_ec_of transcript, bb84_raw_key_of bob with
  | Some p, Some y => bb84_raw_key_val (Some (ir_dec p y))
  | _, _ => bb84_raw_key_val None
  end.
Definition bb84_privacy_hash_val
    (pa : 'I_paseed_count -> 'I_(2 ^ n) -> 'I_(2 ^ l))
    (_l : nat) (seed raw : V) : V :=
  match bb84_paseed_of seed, bb84_raw_key_of raw with
  | Some f, Some x => bb84_final_key_val (Some (pa f x))
  | _, _ => bb84_final_key_val None
  end.
Definition bb84_tr_app_rounds (rounds : 'I_M -> V * V * V) (trv : V) : V :=
  let '(round_old, pe, ec, pa) := bb84_transcript_or_empty trv in
  let round_new := [ffun i =>
    let '(a, b, d) := rounds i in
    match bb84_basis_bool_of a, bb84_basis_bool_of b with
    | Some aa, Some bb =>
        match d with inl detected => Some (aa, bb, detected) | _ => None end
    | _, _ => None
    end] in
  bb84_transcript_val (round_new, pe, ec, pa).
Definition bb84_tr_app_pe (K ZA ZB trv : V) : V :=
  let '(rounds, _pe, ec, pa) := bb84_transcript_or_empty trv in
  let empty_bits : bb84_partial_bits M := [ffun _ => None] in
  let ZA_bits := if bb84_partial_bits_of ZA is Some bits then bits
                 else empty_bits in
  let ZB_bits := if bb84_partial_bits_of ZB is Some bits then bits
                 else empty_bits in
  bb84_transcript_val
    (rounds, Some (bb84_set_of K, ZA_bits, ZB_bits), ec, pa).
Definition bb84_tr_app_ec (pEC g tA trv : V) : V :=
  let '(rounds, pe, _ec, pa) := bb84_transcript_or_empty trv in
  bb84_transcript_val
    (rounds, pe, Some (bb84_ec_of pEC, bb84_vseed_of g,
      bb84_tag_of tA), pa).
\end{minted}
\switchcolumn
\textbf{Reconciliation, verification, PA, and public-transcript projection.}
The value adapters realize
\[
p_{\rm EC}=\mathsf{IR}_{\rm pub}(X_A,X_B),\quad
\widehat X=\mathsf{IR}_{\rm dec}(p_{\rm EC},X_B),
\]
\[
t=g(X),\qquad S=f(X),
\]
and append round, parameter-estimation, reconciliation, and PA data to the transcript.
The EUR-public projection keeps exactly $(\mathsf{rounds},K,Z_A,Z_B,g)$.
\end{paracol}
\endgroup
\AlignmentBlockEnd

\AlignmentContinuationLabel{tab:alignment-program-transcript-adapters-cont}
\AlignmentBlockHead{Transcript and hash adapters (continued).}
\begingroup\AlignmentBlockSetup
\begin{paracol}{2}
\code{src/qkd/example/program.v}: \coqi{bb84_tr_app_pa},
\coqi{bb84_eur_public_of_transcript}, \coqi{bb84_eur_public_of_value},
\coqi{bb84_eur_public_of_store}
\begin{minted}{coq}
Definition bb84_tr_app_pa (f trv : V) : V :=
  let '(rounds, pe, ec, _pa) := bb84_transcript_or_empty trv in
  bb84_transcript_val (rounds, pe, ec, Some (bb84_paseed_of f)).
Definition bb84_eur_public_of_transcript (tr : bb84_transcript M
    ec_bits vseed_count tag_count paseed_count) :=
  let empty_bits : bb84_partial_bits M := [ffun _ => None] in
  let empty_set : {set 'I_M} := [set i : 'I_M | false] in
  let '(rounds, pe, ec, _pa) := tr in
  let '(K, ZA, ZB) := if pe is Some d then d
                      else (empty_set, empty_bits, empty_bits) in
  let g := if ec is Some (_, g, _) then g else None in
  (rounds, K, ZA, ZB, g).
Definition bb84_eur_public_of_value (v : V) :=
  bb84_eur_public_of_transcript (bb84_transcript_or_empty v).
Definition bb84_eur_public_of_store (s : store) :=
  bb84_eur_public_of_value (s BVP).
\end{minted}
The underlying transcript type, tagged-value constructors, decoders, and
transcript initialization are expanded in
\Cref{tab:alignment-value-universe,tab:alignment-value-decoders,%
tab:alignment-key-decoders}.
\switchcolumn
\textit{Continued.}
\end{paracol}
\endgroup
\AlignmentBlockEnd

\AlignmentContinuationLabel{tab:alignment-program-modes}
\AlignmentBlockHead{Source-mode evaluator.}
\begingroup\AlignmentBlockSetup
\begin{paracol}{2}
\code{src/qkd/example/example.v}: \coqi{bb84_program_mode}
\begin{minted}{coq}
Definition bb84_program_mode (entangled : bool)
    : @qkd_cmd R L HL (bb84_cvar M) V :=
  (BVP <-c (fun _ : @store (bb84_cvar M) V => bb84_empty_transcript_val)) ;;
  (BVX <-c (fun _ : @store (bb84_cvar M) V => bb84_set_val set0)) ;;
  (BVZ <-c (fun _ : @store (bb84_cvar M) V => bb84_set_val set0)) ;;
  (BVabort <-c (fun _ : @store (bb84_cvar M) V => bb84_bool_val false)) ;;
  (FOR i < M,
    @Round R L HL (bb84_cvar M) V (bb84_n P) (bb84_k P) M
      (@BVa M) (@BVy M) (@BVb M) (@BVy' M) (@BVd M)
      (@bb84_signal_regs M LE LB) (@BVX M) (@BVZ M)
      (@bb84_program_basis_value R P q dev BasisX)
      (@bb84_program_basis_value R P q dev BasisZ)
      (bb84_outcome_val None)
      (@bb84_bool_val M (bb84_n P) (bb84_l P) (cd_ir_bits dev)
        (cd_vseed_count dev) (cd_tag_count dev) (cd_paseed_count dev))
      (@bb84_bit_val M (bb84_n P) (bb84_l P) (cd_ir_bits dev)
        (cd_vseed_count dev) (cd_tag_count dev) (cd_paseed_count dev))
      (@bb84_set_val M (bb84_n P) (bb84_l P) (cd_ir_bits dev)
        (cd_vseed_count dev) (cd_tag_count dev) (cd_paseed_count dev))
      (@bb84_program_device_rho R P q LE HE bb84_register_index HB dev)
      (@bb84_program_device_meas R P q LE HE bb84_register_index HB dev)
      entangled (@bb84_source_regs M LE LB)
      (@bb84_program_device_pair R P q LE HE bb84_register_index HB dev)
      (@bb84_program_device_ptr R P q LE HE bb84_register_index HB dev) i) ;;
\end{minted}
\switchcolumn
\textbf{Initialization and rounds (\Cref{fig:bb84-program,sec:protocol:security}).}
For $b=\mathsf{false}$ (PM) or $b=\mathsf{true}$ (EB),
\[
\begin{gathered}
P\gets\varnothing_{\rm tr},\qquad X\gets\varnothing,\\
Z\gets\varnothing,\qquad \mathsf{abort}\gets\mathsf{false},\\
\textbf{for }i=0,\ldots,M-1:\ \mathsf{Round}_{b}(i).
\end{gathered}
\]
\end{paracol}
\endgroup
\AlignmentBlockEnd

\AlignmentContinuationLabel{tab:alignment-program-modes-announce-pe}
\AlignmentBlockHead{Source-mode evaluator (continued).}
\begingroup\AlignmentBlockSetup
\begin{paracol}{2}
\code{src/qkd/example/example.v}: \coqi{bb84_program_mode} (continued)
\begin{minted}{coq}
  @Announce R L HL (bb84_cvar M) V M (@BVa M) (@BVb M) (@BVd M) (@BVP M)
    (@bb84_tr_app_rounds M (bb84_n P) (bb84_l P) (cd_ir_bits dev)
      (cd_vseed_count dev) (cd_tag_count dev) (cd_paseed_count dev)) ;;
  @ParamEst R L HL (bb84_cvar M) V (bb84_n P) M (@BVy M) (@BVy' M)
    (@BVX M) (@BVZ M) (@BVK M) (@BVXA M) (@BVXB M) (@BVZA M)
    (@BVZB M) (@BVlam M) (@BVP M)
    (@bb84_bot_val M (bb84_n P) (bb84_l P) (cd_ir_bits dev)
      (cd_vseed_count dev) (cd_tag_count dev) (cd_paseed_count dev))
    (@bb84_set_of M (bb84_n P) (bb84_l P) (cd_ir_bits dev)
      (cd_vseed_count dev) (cd_tag_count dev) (cd_paseed_count dev))
    (@bb84_error_val M (bb84_n P) (bb84_l P) (cd_ir_bits dev)
      (cd_vseed_count dev) (cd_tag_count dev) (cd_paseed_count dev))
    (@bb84_tr_app_pe M (bb84_n P) (bb84_l P) (cd_ir_bits dev)
      (cd_vseed_count dev) (cd_tag_count dev) (cd_paseed_count dev))
    (@bb84_bits_on_val M (bb84_n P) (bb84_l P) (cd_ir_bits dev)
      (cd_vseed_count dev) (cd_tag_count dev) (cd_paseed_count dev))
    (@bb84_partial_bits_on_val M (bb84_n P) (bb84_l P) (cd_ir_bits dev)
      (cd_vseed_count dev) (cd_tag_count dev) (cd_paseed_count dev))
    (@bb84_err_count_val M (bb84_n P) (bb84_l P) (cd_ir_bits dev)
      (cd_vseed_count dev) (cd_tag_count dev) (cd_paseed_count dev))
    (@bb84_unif_nsubset_mass R M (bb84_n P) (bb84_l P) (cd_ir_bits dev)
      (cd_vseed_count dev) (cd_tag_count dev) (cd_paseed_count dev)) ;;
\end{minted}
\switchcolumn
\textbf{Announcement and parameter estimation.}
\[
\begin{gathered}
P\gets\mathsf{Announce}(a,b,d;P),\\
(K,X_A,X_B,Z_A,Z_B,\lambda,P)\\
\quad\gets\mathsf{ParamEst}(y,y',X,Z;P).
\end{gathered}
\]
\end{paracol}
\endgroup
\AlignmentBlockEnd

\AlignmentContinuationLabel{tab:alignment-program-modes-pe-branch}
\AlignmentBlockHead{Source-mode evaluator (continued).}
\begingroup\AlignmentBlockSetup
\begin{paracol}{2}
\code{src/qkd/example/example.v}: \coqi{bb84_program_mode} (continued)
\begin{minted}{coq}
  IF @pe_fail R (bb84_cvar M) V (bb84_n P) (bb84_k P) M (bb84_Qtol P)
      (@BVX M) (@BVZ M) (@BVlam M)
      (@bb84_set_of M (bb84_n P) (bb84_l P) (cd_ir_bits dev)
        (cd_vseed_count dev) (cd_tag_count dev) (cd_paseed_count dev))
      (@bb84_error_of M (bb84_n P) (bb84_l P) (cd_ir_bits dev)
        (cd_vseed_count dev) (cd_tag_count dev) (cd_paseed_count dev)) THEN
    @Abort_cmd R L HL (bb84_cvar M) V (@BVS M) (@BVSh M) (@BVabort M)
      (@bb84_bot_val M (bb84_n P) (bb84_l P) (cd_ir_bits dev)
        (cd_vseed_count dev) (cd_tag_count dev) (cd_paseed_count dev))
      (@bb84_bool_val M (bb84_n P) (bb84_l P) (cd_ir_bits dev)
        (cd_vseed_count dev) (cd_tag_count dev) (cd_paseed_count dev))
  ELSE
\end{minted}
\switchcolumn
\textbf{Parameter-estimation branch.}
\[
\mathsf{PEFail}(X,Z,\lambda)\Longrightarrow
(S,\hat S,\mathsf{abort})\gets(\perp,\perp,\mathsf{true});
\]
otherwise execution enters reconciliation.
\end{paracol}
\endgroup
\AlignmentBlockEnd

\AlignmentContinuationLabel{tab:alignment-program-modes-ec}
\AlignmentBlockHead{Source-mode evaluator (continued).}
\begingroup\AlignmentBlockSetup
\begin{paracol}{2}
\code{src/qkd/example/example.v}: \coqi{bb84_program_mode} (continued)
\begin{minted}{coq}
    (@ErrCorr R L HL (bb84_cvar M) V (cd_ir_bits dev) (@BVXA M) (@BVXB M)
       (@BVpec M) (@BVXhat M) (@BVg M) (@BVtA M) (@BVtB M)
       (@BVP M)
       (@bb84_tr_app_ec M (bb84_n P) (bb84_l P) (cd_ir_bits dev)
        (cd_vseed_count dev) (cd_tag_count dev) (cd_paseed_count dev))
       (@bb84_ir_pub_val M (bb84_n P) (bb84_l P) (cd_ir_bits dev)
        (cd_vseed_count dev) (cd_tag_count dev) (cd_paseed_count dev)
        (@cd_ir_pub R P q dev))
       (@bb84_ir_dec_val M (bb84_n P) (bb84_l P) (cd_ir_bits dev)
        (cd_vseed_count dev) (cd_tag_count dev) (cd_paseed_count dev)
        (@cd_ir_dec R P q dev))
       (@bb84_unif_vseed_mass R M (bb84_n P) (bb84_l P)
        (cd_ir_bits dev) (cd_vseed_count dev) (cd_tag_count dev)
        (cd_paseed_count dev))
       (@bb84_verify_hash_val M (bb84_n P) (bb84_l P) (cd_ir_bits dev)
        (cd_vseed_count dev) (cd_tag_count dev) (cd_paseed_count dev)
        (@cd_vh R P q dev)) ;;
\end{minted}
\switchcolumn
\textbf{Error correction and verification tags.}
\[
(p_{\rm EC},\hat X,g,t_A,t_B,P)
\gets\mathsf{ErrCorr}(X_A,X_B;P).
\]
The public reconciliation map, decoder, uniform verification seed, and
two-universal verification hash are the certified-device operations.
\end{paracol}
\endgroup
\AlignmentBlockEnd

\AlignmentContinuationLabel{tab:alignment-program-modes-tag-pa-eval}
\AlignmentBlockHead{Source-mode evaluator (continued).}
\begingroup\AlignmentBlockSetup
\begin{paracol}{2}
\code{src/qkd/example/example.v}: \coqi{bb84_program_mode},
\coqi{bb84_eval_mode} (continued)
\begin{minted}{coq}
     IF (fun s : @store (bb84_cvar M) V =>
           s (@BVtA M) != s (@BVtB M)) THEN
       @Abort_cmd R L HL (bb84_cvar M) V (@BVS M) (@BVSh M) (@BVabort M)
        (@bb84_bot_val M (bb84_n P) (bb84_l P) (cd_ir_bits dev)
          (cd_vseed_count dev) (cd_tag_count dev) (cd_paseed_count dev))
        (@bb84_bool_val M (bb84_n P) (bb84_l P) (cd_ir_bits dev)
          (cd_vseed_count dev) (cd_tag_count dev) (cd_paseed_count dev))
     ELSE
       @PrivAmp R L HL (bb84_cvar M) V (bb84_l P) (@BVXA M) (@BVXhat M)
         (@BVf M) (@BVS M) (@BVSh M) (@BVP M)
         (@bb84_tr_app_pa M (bb84_n P) (bb84_l P) (cd_ir_bits dev)
          (cd_vseed_count dev) (cd_tag_count dev) (cd_paseed_count dev))
         (@bb84_unif_pseed_mass R M (bb84_n P) (bb84_l P)
          (cd_ir_bits dev) (cd_vseed_count dev) (cd_tag_count dev)
          (cd_paseed_count dev))
         (@bb84_privacy_hash_val M (bb84_n P) (bb84_l P)
          (cd_ir_bits dev) (cd_vseed_count dev) (cd_tag_count dev)
          (cd_paseed_count dev) (@cd_pa R P q dev))).

Definition bb84_eval_mode (entangled : bool)
    (EA : @adversary R L HL V Eve)
    (xi : @cq_state R L HL (bb84_cvar M) V)
    : @cq_state R L HL (bb84_cvar M) V :=
  @denote R L HL (bb84_cvar M) V Eve (@BVP M) EA
    (bb84_program_mode entangled) xi.
\end{minted}
\AlignmentBoundaryLabel
The definitions of every command named in \coqi{bb84_program_mode} are expanded in
\Cref{tab:alignment-language,tab:alignment-qkd-classical-components}.
The concrete operands are expanded in
\Cref{tab:alignment-program-physical-adapters,tab:alignment-program-classical-adapters,tab:alignment-program-transcript-adapters}.
Its denotational dependencies \coqi{adversary}, \coqi{cq_state}, and \coqi{denote} are expanded in \Cref{tab:alignment-denotation}.
The concrete register, value, and labelled-space arguments are expanded in
\Cref{tab:alignment-program-schema,tab:alignment-label-layout}.
\switchcolumn
\textbf{Tag check, privacy amplification, and evaluation.}
\[
t_A\ne t_B\Longrightarrow
(S,\hat S,\mathsf{abort})\gets(\perp,\perp,\mathsf{true});
\]
otherwise
\[
\begin{gathered}
(f,S,\hat S,P)\gets\mathsf{PrivAmp}(X_A,\hat X;P),\\
\xi_{\rm out}^{b}=\sem{\mathsf{BB84}^{b}}_{\mathcal E_A}(\xi_{\rm init}).
\end{gathered}
\]
\end{paracol}
\endgroup
\AlignmentBlockEnd

\AlignmentParagraph{Output.}

\AlignmentContinuationLabel{tab:alignment-output-indices}
\AlignmentBlockHead{Output-register indices.}
\begingroup\AlignmentBlockSetup
\begin{paracol}{2}
\code{src/qkd/example/readout.v}: public/key index decoders
\begin{minted}{coq}
Definition bb84_public_idx
    (p : bb84_eur_public (bb84_round_limit P) (cd_vseed_count dev))
    : 'Idx[HL]_bb84_Epub :=
  n2i (cast_ord (esym bb84_dim_Epub) (enum_rank p)).
Definition bb84_Fl_index (f : 'I_(cd_paseed_count dev))
    : 'Idx[HL]_bb84_Fl := n2i (cast_ord bb84_dim_Fl f).
Definition bb84_pa_output_idx (z : 'I_(2 ^ bb84_l P))
    : 'Idx[HL]_bb84_Zl :=
  n2i (cast_ord (esym bb84_dim_Zl) z).
Definition bb84_Zl_index (z : 'I_(2 ^ bb84_l P))
    : 'Idx[HL]_bb84_Zl := bb84_pa_output_idx z.
Definition bb84_Ppub_index
    (p : ('I_(2 ^ cd_ir_bits dev) * 'I_(cd_tag_count dev))%type)
    : 'Idx[HL]_bb84_Ppub :=
  n2i (cast_ord bb84_dim_Ppub_pair (enum_rank p)).
Definition bb84_ec_default : 'I_(2 ^ cd_ir_bits dev).
Proof. exact: Ordinal (expn_gt0 2 (cd_ir_bits dev)). Defined.
Definition bb84_ec_index (v : V) : 'I_(2 ^ cd_ir_bits dev) :=
  odflt bb84_ec_default (bb84_ec_of v).
Definition bb84_tag_default : 'I_(cd_tag_count dev).
Proof. exact: Ordinal (cd_tag_pos dev). Defined.
Definition bb84_tag_index (v : V) : 'I_(cd_tag_count dev) :=
  odflt bb84_tag_default (bb84_tag_of v).
Definition bb84_paseed_default : 'I_(cd_paseed_count dev).
Proof. exact: Ordinal (cd_paseed_pos dev). Defined.
Definition bb84_paseed_index (v : V) : 'I_(cd_paseed_count dev) :=
  odflt bb84_paseed_default (bb84_paseed_of v).
Definition bb84_final_key_default : 'I_(2 ^ bb84_l P).
Proof. exact: Ordinal (expn_gt0 2 (bb84_l P)). Defined.
Definition bb84_final_key_index (v : V) : 'I_(2 ^ bb84_l P) :=
  odflt bb84_final_key_default (bb84_final_key_of v).
\end{minted}
\switchcolumn
\textbf{Finite output-register indices.}
Classical public data, PA seeds, final keys, reconciliation strings, and tags
are transported to the matching computational-basis index.
The decoders map values to the displayed finite-register indices.
\end{paracol}
\endgroup
\AlignmentBlockEnd

\AlignmentContinuationLabel{tab:alignment-output-reification}
\AlignmentBlockHead{Accepted output reification.}
\begingroup\AlignmentBlockSetup
\begin{paracol}{2}
\code{src/qkd/example/program.v}: \coqi{bb84_eur_public_of_transcript}, \coqi{bb84_eur_public_of_value}, \coqi{bb84_eur_public_of_store};
\code{src/qkd/example/readout.v}: \coqi{bb84_public_idx}, \coqi{bb84_public_projector}, \coqi{bb84_public_register_state}, \coqi{bb84_ec_index}, \coqi{bb84_tag_index}, \coqi{bb84_paseed_index}, \coqi{bb84_final_key_index}, \coqi{bb84_Fl_projector}, \coqi{bb84_Zl_projector}, \coqi{bb84_Ppub_projector}, \coqi{bb84_leakage_register_state}, \coqi{bb84_pa_public_register_state};
\code{src/qkd/example/example.v}: \coqi{rhoZFEpub_eval_mode}
\begin{minted}{coq}
Definition bb84_public_projector p : 'F[HL]_bb84_Epub :=
  [> deltav (bb84_public_idx p); deltav (bb84_public_idx p) <].
Definition bb84_public_register_state (s : store) : 'F[HL]_bb84_Epub :=
  bb84_public_projector (bb84_eur_public_of_store s).
Definition bb84_ec_index (v : V) := odflt bb84_ec_default (bb84_ec_of v).
Definition bb84_tag_index (v : V) := odflt bb84_tag_default (bb84_tag_of v).
Definition bb84_paseed_index (v : V) :=
  odflt bb84_paseed_default (bb84_paseed_of v).
Definition bb84_final_key_index (v : V) :=
  odflt bb84_final_key_default (bb84_final_key_of v).
Definition bb84_Fl_projector f : 'F[HL]_bb84_Fl :=
  [> deltav (bb84_Fl_index f); deltav (bb84_Fl_index f) <].
Definition bb84_Zl_projector z : 'F[HL]_bb84_Zl :=
  [> deltav (bb84_Zl_index z); deltav (bb84_Zl_index z) <].
Definition bb84_Ppub_projector p : 'F[HL]_bb84_Ppub :=
  [> deltav (bb84_Ppub_index p); deltav (bb84_Ppub_index p) <].
Definition bb84_leakage_register_state (s : store) :=
  bb84_Ppub_projector
    (bb84_ec_index (s BVpec), bb84_tag_index (s BVtA)).
Definition bb84_pa_public_register_state (s : store) :=
  bb84_Zl_projector (bb84_final_key_index (s BVS)) \⊗
    (bb84_Fl_projector (bb84_paseed_index (s BVf)) \⊗
      (bb84_public_register_state s \⊗ bb84_leakage_register_state s)).
Definition rhoZFEpub_eval_mode (entangled : bool) EA xi :=
  let xi_out := bb84_eval_mode entangled EA xi in
  let st := \sum_(s | s BVabort == bb84_bool_val false)
    ptrace_cast_gen (bb84_setT_diff_complement
      (Zl :|: (Fl :|: (Ereg :|: Ppub))))
      (replaceso (bb84_pa_public_register_state s) (xi_out s)) in
  ((@complex.Re R (\Tr st))^-1)%:C *: st.
\end{minted}
The replacement first traces the old public registers out of
$\xi_{\rm out}^{b}(s)$ and then appends their classical reification.
The EUR-public transcript projection used by \coqi{bb84_public_register_state} is expanded in \Cref{tab:alignment-program-transcript-adapters}.
The index decoders and their finite-register transports are expanded
in \Cref{tab:alignment-output-indices}.
The replacement channel \coqi{replaceso} is expanded in \Cref{tab:alignment-denotation}, and \coqi{bb84_eval_mode} in \Cref{tab:alignment-program-modes}.
\switchcolumn
\textbf{Classical-store reification and non-abort normalization.}
For the output mode $b$,
\[
\begin{aligned}
\rho_{S F_{\rm PA}E^+}^{b}
  &=\bigl(\Re\Tr\Omega_b\bigr)^{-1}\Omega_b,
  \qquad 0^{-1}=0,\\
Y_s^b
  &:=\operatorname{Replace}_{Q_{\rm pub}}
      (R_{\rm pub}(s),\xi_{\rm out}^{b}(s)),\\
\Omega_b
  &=\sum_{s:\neg\mathsf{abort}(s)}\Tr_{\rm drop}(Y_s^b),\\
\operatorname{Replace}_{Q}(R,x)
  &:=\operatorname{cast}\bigl(R\otimes\Tr_Q x\bigr).
\end{aligned}
\]
Here the final key, privacy-amplification seed, transcript, reconciliation message, and verification tag are reified as computational-basis projectors.
\end{paracol}
\endgroup
\AlignmentBlockEnd

\AlignmentParagraph{Premises.}

\AlignmentContinuationLabel{tab:alignment-device-certificate}
\AlignmentBlockHead{Certified device interface.}
\begingroup\AlignmentBlockSetup
\begin{paracol}{2}
\code{src/qkd/example/device.v}: \coqi{bb84_certified_device}
\begin{minted}{coq}
Record bb84_certified_device (P : bb84_protocol) (q : R)
    := BB84CertifiedDevice {
  cd_src : bb84_source_model P q;
  cd_src_dim_A : bb84_src_na cd_src = (2 ^ bb84_n P)%N;
  cd_src_dim_X : bb84_src_nx cd_src = (2 ^ bb84_n P)%N;
  cd_src_nb_pos : (0 < bb84_src_nb cd_src)%N;
  cd_src_overlap_standard_z :
    (bb84_n P)%:R * q <= q_of_mx
      (bb84_source_x_povm cd_src_dim_A cd_src_dim_X)
      (@std_projective_povm R (2 ^ bb84_n P));

  cd_rho : bb84_basis -> bool -> 'M[C]_2;
  cd_rho_den1 : forall basis bit, cd_rho basis bit \is den1mx;

  cd_meas : bb84_basis -> option bool -> 'M[C]_2;
  cd_meas_complete : forall basis,
    (cd_meas basis None)^*t *m cd_meas basis None +
    (cd_meas basis (Some false))^*t *m cd_meas basis (Some false) +
    (cd_meas basis (Some true))^*t *m cd_meas basis (Some true) = 1%:M;
  cd_detect_basis_indep :
    (cd_meas BasisX None)^*t *m cd_meas BasisX None =
    (cd_meas BasisZ None)^*t *m cd_meas BasisZ None;

  cd_entangled : bool;
  cd_pair : 'M[C]_(4 * 2);
  cd_pair_den1 : cd_pair \is den1mx;
  cd_ptr : bb84_basis -> bool -> 'M[C]_4;
  cd_ptr_complete : forall basis,
    (cd_ptr basis false)^*t *m cd_ptr basis false +
    (cd_ptr basis true)^*t *m cd_ptr basis true = 1%:M;

  cd_ir_bits : nat;
  cd_ir_pub :
    'I_(2 ^ bb84_n P) -> 'I_(2 ^ bb84_n P) -> 'I_(2 ^ cd_ir_bits);
  cd_ir_dec :
    'I_(2 ^ cd_ir_bits) -> 'I_(2 ^ bb84_n P) -> 'I_(2 ^ bb84_n P);
  cd_ir_leak : (cd_ir_bits)%:R <= bb84_leak_EC P;

  cd_vseed_count : nat;
  cd_vseed_pos : (0 < cd_vseed_count)%N;
  cd_tag_count : nat;
  cd_vh :
    'I_cd_vseed_count -> 'I_(2 ^ bb84_n P) -> 'I_cd_tag_count;
  cd_vh_two_univ : @two_univ R _ _ _ cd_vh;
  cd_tag_pos : (0 < cd_tag_count)%N;
  cd_tag_le : ((cd_tag_count%:R : R)^-1 <= bb84_eps_c P)%R;
  cd_tag_bits :
    log2 (cd_tag_count%:R : R) <= log2 (2%:R / bb84_eps_c P);

  cd_paseed_count : nat;
  cd_paseed_pos : (0 < cd_paseed_count)%N;
  cd_pa :
    'I_cd_paseed_count -> 'I_(2 ^ bb84_n P) -> 'I_(2 ^ bb84_l P);
  cd_pa_two_univ : @two_univ R _ _ _ cd_pa
}.
\end{minted}
\coqi{bb84_source_model} is expanded in \Cref{tab:alignment-source-certificate};
\coqi{two_univ} in \Cref{tab:alignment-ingredients}; and the constructed
\coqi{bb84_source_x_povm} in \Cref{tab:alignment-device-primitives}.
The record packages the certified device properties.
Its field \coqi{cd_detect_basis_indep} states the equality of no-click POVM
effects assumed by Tomamichel et al.~\cite{tomamichel2012tight}.
\switchcolumn
\textbf{Certified device assumptions.}
The certificate packages the source overlap, normalized preparation states,
complete three-outcome measurements, basis-independent no-click probability,
an EB pair with complete pointer readout, and classical reconciliation,
verification, and privacy-amplification maps:
\[
\sum_{o\in\{\emptyset,0,1\}}K_{b,o}^{\dagger}K_{b,o}=I,
\qquad
K_{\mathbb X,\emptyset}^{\dagger}K_{\mathbb X,\emptyset}
=K_{\mathbb Z,\emptyset}^{\dagger}K_{\mathbb Z,\emptyset}.
\]
For verification and privacy amplification,
\[
\begin{gathered}
\Pr_g[g(x)=g(x')]\le |\mathcal T|^{-1},\\
\Pr_f[f(x)=f(x')]\le2^{-\ell},
\end{gathered}
\]
with
$d_{\rm IR}\le\leakec$,
$|\mathcal T|^{-1}\le\eps_c$, and
$\log_2|\mathcal T|\le\log_2(2/\eps_c)$.
\end{paracol}
\endgroup
\AlignmentBlockEnd

\AlignmentContinuationLabel{tab:alignment-device-primitives}
\AlignmentBlockHead{Device-class primitives: projective measurement and matrix helpers.}
\begingroup\AlignmentBlockSetup
\begin{paracol}{2}
\code{src/utils/mx/def.v}: \coqi{castmx_sq}, \coqi{bigten};
\code{src/eur/def.v}: \coqi{std_projective_povm}, \coqi{xbasis_sign2},
\coqi{hadamard_mx2}, \coqi{hadamard_mx}, \coqi{x_projective_povm};
\code{src/qkd/example/device.v}: \coqi{bb84_source_x_povm};
\code{src/qkd/example/deferred.v}: \coqi{bit_of_ord}, \coqi{bito},
\coqi{bitprod_mx}
\begin{minted}{coq}
Definition castmx_sq {n n' : nat} (e : n = n')
    (A : 'M[C]_n) : 'M[C]_(n') :=
  castmx (e, e) A.
Fixpoint bigten {n : nat} (A : 'M[C]_n) (k : nat)
    : 'M[C]_(n ^ k) :=
  match k with
  | 0 => 1%:M
  | S k' => castmx (esym (expnSr n k'), esym (expnSr n k'))
      ((bigten A k') *t A)
  end.
Definition bit_of_ord {d : nat} (x : 'I_(2 ^ d)) (j : 'I_d) : bool :=
  ord_to_bitstring x j.
Definition bito (b : bool) : 'I_2 :=
  if b then Ordinal (isT : (1 < 2)%N)
  else Ordinal (isT : (0 < 2)%N).
Variable d : nat.
Definition bitprod_mx (F : 'I_d -> 'M[C]_2) : 'M[C]_(2 ^ d) :=
  \matrix_(z, z')
    \prod_(j < d) F j (bito (bit_of_ord z j)) (bito (bit_of_ord z' j)).
Local Notation pidx b := (bb84_prep_index BasisX b).
Definition std_projective_povm (n : nat) : 'I_n -> 'M[C]_n :=
  fun i => delta_mx i i.
Definition xbasis_sign2 (x z : 'I_2) : C :=
  if (nat_of_ord x == 0%N) || (nat_of_ord z == 0%N) then 1 else -1.
Definition hadamard_mx2 : 'M[C]_2 :=
  \matrix_(i, j) ((((Num.sqrt (2%:R))^-1)%R)%:C * xbasis_sign2 i j).
Definition hadamard_mx (n : nat) : 'M[C]_(2 ^ n) :=
  bigten hadamard_mx2 n.
Definition x_projective_povm (n : nat) :
    'I_(2 ^ n) -> 'M[C]_(2 ^ n) :=
  fun x => (hadamard_mx n)^*t *m delta_mx x x *m hadamard_mx n.
Definition cast_square_povm_mx {n m} (e : n = m)
    (M : 'I_m -> 'M[C]_m) : 'I_n -> 'M[C]_n :=
  fun i => castmx_sq (esym e) (M (cast_ord e i)).
Definition bb84_source_x_povm (P : bb84_protocol) (q : R)
    (src : bb84_source_model P q)
    (dim_A : bb84_src_na src = (2 ^ bb84_n P)%N)
    (dim_X : bb84_src_nx src = (2 ^ bb84_n P)%N) :=
  cast_square_povm_mx (esym dim_X)
    (fun x => castmx
      (etrans dim_A (esym dim_X), etrans dim_A (esym dim_X))
      (@bb84_src_M R P q src x)).
\end{minted}
\switchcolumn
\textbf{Projective $X$ measurement and matrix helpers.}
\[
M_x^{\mathbb X}
  =H^{\otimes n\dagger}|x\rangle\!\langle x|H^{\otimes n}.
\]
The $n$-qubit source POVM is the source model's POVM transported to the standard $2^n$ basis.
The matrix helpers use
\[
\begin{gathered}
A^{\otimes0}=I,\qquad
A^{\otimes(k+1)}=A^{\otimes k}\otimes A,\\
\operatorname{cast}^{e}_{\square}(A)=\operatorname{cast}_{(e,e)}(A),\\
\iota(\mathsf{false})=0,\qquad \iota(\mathsf{true})=1,\\
\operatorname{bit}(z,j)=\operatorname{ordToBits}(z)_j,\\
\operatorname{pidx}(b)=\operatorname{PrepIndex}(\mathbb X,b),\\
\alpha_j=\iota(\operatorname{bit}(z,j)),\quad
\alpha'_j=\iota(\operatorname{bit}(z',j)),\\
\bigl[\operatorname{BitProd}(G)\bigr]_{z,z'}
=\prod_{j<d}(G_j)_{\alpha_j,\alpha'_j}.
\end{gathered}
\]
The source-model and device-field dependencies are expanded in
\Cref{tab:alignment-source-certificate,tab:alignment-device-certificate}.
\end{paracol}
\endgroup
\AlignmentBlockEnd

\AlignmentContinuationLabel{tab:alignment-device-primitives-pair}
\AlignmentBlockHead{Device-class primitives: coherent pair and click operator.}
\begingroup\AlignmentBlockSetup
\begin{paracol}{2}
\code{src/qkd/example/deferred.v}: \coqi{bitlift_mx}, \coqi{bb84_eb_z_sign},
\coqi{eb_pair_ket}, \coqi{bb84_eb_pair_of_pure}, \coqi{bb84_M_click},
\coqi{bb84_M_click_sqrt}
\begin{minted}{coq}
Definition bitlift_mx (j : 'I_d) (A : 'M[C]_2) : 'M[C]_(2 ^ d) :=
  bitprod_mx (fun k => if k == j then A else 1%:M).
Definition bb84_eb_z_sign (zbit : bool) : C :=
  if zbit then -1 else 1.
Local Definition eb_pair_ket (v0 v1 : 'cV[C]_2) : 'cV[C]_(4 * 2) :=
  (sqrtC (2%:R : C))^-1 *:
    (((delta_mx (pidx false) 0 : 'cV[C]_4) *t v0) +
     ((delta_mx (pidx true) 0 : 'cV[C]_4) *t v1)).
Definition bb84_eb_pair_of_pure (v0 v1 : 'cV[C]_2) : 'M[C]_(4 * 2) :=
  eb_pair_ket v0 v1 *m (eb_pair_ket v0 v1)^*t.
Definition bb84_M_click : 'M[C]_2 :=
  1%:M - (cd_meas dev BasisX None)^*t *m cd_meas dev BasisX None.
Definition bb84_M_click_sqrt : 'M[C]_2 :=
  mapf_mx sqrtC bb84_M_click.
\end{minted}
\switchcolumn
\textbf{Coherent pair and click operator.}
\[
M_{\rm click}=I-K_{\mathbb X,\emptyset}^{\dagger}
K_{\mathbb X,\emptyset}.
\]
For normalized qubit kets $v_0,v_1$, the canonical EB source is
\[
\begin{gathered}
|\Psi\rangle=\frac1{\sqrt2}
 (|{\mathbb X},0\rangle\otimes v_0+
  |{\mathbb X},1\rangle\otimes v_1),\\
\rho_{\rm pair}=|\Psi\rangle\!\langle\Psi|.
\end{gathered}
\]
The project definitions used here are expanded in
\Cref{tab:alignment-device-roots,tab:alignment-label-layout}; delta matrices,
tensor products, and casts are library operations.
\end{paracol}
\endgroup
\AlignmentBlockEnd

\AlignmentContinuationLabel{tab:alignment-device-class}
\AlignmentBlockHead{Projective $\mathbb X/\mathbb Z$ device class.}
\begingroup\AlignmentBlockSetup
\begin{paracol}{2}
\code{src/qkd/example/deferred.v}: \coqi{bb84_eb_ptr_standard};
\code{src/qkd/example/projective\_xz.v}: \coqi{bb84_pure_xz_source}, \coqi{bb84_projective_z_readout}, \coqi{bb84_projective_xz_device}
\begin{minted}{coq}
Definition bb84_eb_ptr_standard : Prop :=
  (forall b, cd_ptr dev BasisX b =
    bitlift_mx bb84_src_ptr_slot (delta_mx (bito b) (bito b))) /\
  (forall b, cd_ptr dev BasisZ b =
    bitlift_mx bb84_src_ptr_slot
      (hadamard_mx2 *m delta_mx (bito b) (bito b) *m hadamard_mx2^*t)).
Record bb84_pure_xz_source := {
  cx_ket : bool -> 'cV[C]_2;
  cx_ket_norm : forall b, (cx_ket b)^*t *m cx_ket b = 1%:M;
  cx_rho_X : forall b,
    cd_rho dev BasisX b = cx_ket b *m (cx_ket b)^*t;
  cx_rho_Z : forall z,
    cd_rho dev BasisZ z = ((2%:R)^-1)%:C *:
      ((cx_ket false + bb84_eb_z_sign R z *: cx_ket true) *m
       (cx_ket false + bb84_eb_z_sign R z *: cx_ket true)^*t)
}.
Definition bb84_projective_z_readout : Prop :=
  forall b, cd_meas dev BasisZ (Some b) =
    delta_mx (bito b) (bito b) *m bb84_M_click_sqrt.
Record bb84_projective_xz_device := {
  pxz_source : bb84_pure_xz_source;
  pxz_z_readout : bb84_projective_z_readout;
  pxz_ptr_standard : bb84_eb_ptr_standard;
  pxz_pair : cd_pair dev =
    bb84_eb_pair_of_pure (cx_ket pxz_source false)
      (cx_ket pxz_source true);
  pxz_source_x_povm : forall x,
    bb84_source_x_povm dev x = x_projective_povm (bb84_n P) x
}.
\end{minted}
The table displays every structural equality in the class.
The certified-device record and every helper definition used by these equalities are expanded in
\Cref{tab:alignment-device-certificate,tab:alignment-device-primitives}.
\switchcolumn
\textbf{Structural device class used by Protocol Security.}
There are normalized kets $|\psi_0\rangle,|\psi_1\rangle$ such that
\[
\rho_{\mathbb X,b}=|\psi_b\rangle\!\langle\psi_b|,
\quad
\rho_{\mathbb Z,z}=\tfrac12
(|\psi_0\rangle+(-1)^z|\psi_1\rangle)
(\cdots)^\dagger.
\]
Bob's $\mathbb Z$ readout is projective up to the click operator; the EB pointer is the standard computational/Hadamard readout, the device pair is the coherent purification of the two kets, and the source $\mathbb X$ POVM is standard projective.
\end{paracol}
\endgroup
\AlignmentBlockEnd

\AlignmentParagraph{Run.}

\AlignmentContinuationLabel{tab:alignment-replay-rounds}
\AlignmentBlockHead{Deterministic history replay.}
\begingroup\AlignmentBlockSetup
\begin{paracol}{2}
\code{src/qkd/example/run.v}: \coqi{round_choice}, \coqi{round_trace}, \coqi{run_history}, \coqi{hist_K}, \coqi{hist_g}, \coqi{hist_f}, \coqi{start_store}, \coqi{replay_rounds_through}, \coqi{replay_sift_X}, \coqi{replay_sift_Z}, \coqi{round_prepare_store}, \coqi{round_measure_store}, \coqi{round_sift_store}, \coqi{write_trace_round}, \coqi{rounds_store_seq}, \coqi{rounds_store}
\begin{minted}{coq}
Definition round_choice := ((V * V) * (V * V))%type.
Definition choice_a (c : round_choice) : V := c.1.1.
Definition choice_y (c : round_choice) : V := c.1.2.
Definition choice_b (c : round_choice) : V := c.2.1.
Definition choice_y' (c : round_choice) : V := c.2.2.
Definition round_trace := {ffun 'I_M -> round_choice}.
Definition run_history := (round_trace * ((V * V) * V))%type.
Definition hist_K (h : run_history) : V := h.2.1.1.
Definition hist_g (h : run_history) : V := h.2.1.2.
Definition hist_f (h : run_history) : V := h.2.2.
Definition start_store (s : store) : store :=
  store_set
    (store_set
      (store_set
        (store_set s Pvar empty_transcript)
        Xv (vset set0))
      Zv (vset set0))
    abortv (vbool false).
Definition replay_sift_X (s : store) (i : 'I_M) : bool :=
  (s (a_ i) == basisX) && (s (b_ i) == basisX) &&
  (s (d_ i) == vbool true).
Definition replay_sift_Z (s : store) (i : 'I_M) : bool :=
  (s (a_ i) == basisZ) && (s (b_ i) == basisZ) &&
  (s (d_ i) == vbool true).
Definition replay_rounds_through
    (i : 'I_M) (p : 'I_M -> bool) : {set 'I_M} :=
  [set j : 'I_M | (nat_of_ord j <= nat_of_ord i)%N && p j].
Definition round_prepare_store
    (c : round_choice) (i : 'I_M) (s : store) : store :=
  let s1 := store_set s (a_ i) (choice_a c) in
  store_set s1 (y_ i) (choice_y c).
Definition round_measure_store
    (c : round_choice) (i : 'I_M) (s : store) : store :=
  let s1 := store_set (round_prepare_store c i s) (b_ i) (choice_b c) in
  store_set s1 (y'_ i) (choice_y' c).
Definition round_sift_store
    (c : round_choice) (i : 'I_M) (s : store) : store :=
  let s4 := round_measure_store c i s in
  let s5 := store_set s4 (d_ i) (vbool (choice_y' c != outEmpty)) in
  let s6 := store_set s5 Xv
    (vset (replay_rounds_through i (replay_sift_X s5))) in
  store_set s6 Zv
    (vset (replay_rounds_through i (replay_sift_Z s6))).
Definition write_round_choice
    (c : round_choice) (i : 'I_M) (s : store) : store :=
  round_sift_store c i s.
Definition write_trace_round
    (t : round_trace) (i : 'I_M) (s : store) : store :=
  write_round_choice (t i) i s.
Definition rounds_store_seq
    (h : run_history) (r : seq 'I_M) (s : store) : store :=
  foldl (fun t i => write_trace_round h.1 i t) s r.
Definition rounds_store (h : run_history) (s : store) : store :=
  rounds_store_seq h (enum 'I_M) s.
\end{minted}
\switchcolumn
\textbf{A sampled history determines a classical replay store.}
A history records every round's $(a_i,y_i,b_i,y'_i)$ together with the
parameter-estimation subset $K$, verification seed $g$, and PA seed $f$.
Starting from an empty transcript, empty sift sets, and a false abort flag,
the replay writes each round and recomputes
\[
\begin{aligned}
\mathcal X_i&=\{j\le i:a_j=b_j=\mathbb X,\ y'_j\ne\emptyset\},\\
\mathcal Z_i&=\{j\le i:a_j=b_j=\mathbb Z,\ y'_j\ne\emptyset\}.
\end{aligned}
\]
\coqi{store} and \coqi{store_set} are expanded in
\Cref{tab:alignment-semantics-core}; the concrete variable and value encodings
are expanded in
\Cref{tab:alignment-program-schema,tab:alignment-value-universe,%
tab:alignment-value-decoders,tab:alignment-key-decoders}.
\end{paracol}
\endgroup
\AlignmentBlockEnd

\AlignmentContinuationLabel{tab:alignment-replay-postprocessing}
\AlignmentBlockHead{Replayed post-processing and final store.}
\begingroup\AlignmentBlockSetup
\begin{paracol}{2}
\code{src/qkd/example/run.v}: \coqi{announce_store}, \coqi{paramest_store}, \coqi{errcorr_seed_store}, \coqi{errcorr_store}, \coqi{privamp_store}, \coqi{abort_store}, \coqi{replay_sift_ok}, \coqi{replay_pe_error_high}, \coqi{replay_pe_fail}, \coqi{pe_failed}, \coqi{final_store}
\begin{minted}{coq}
Definition announce_store (s : store) : store :=
  store_set s Pvar
    (tr_app_rounds (fun i => (s (a_ i), s (b_ i), s (d_ i))) (s Pvar)).
Definition paramest_store (h : run_history) (s : store) : store :=
  let s1 := store_set s Kv (hist_K h) in
  let s2 := store_set s1 XA
    (bits_on (vset_of (s1 Kv)) (fun i => s1 (y_ i))) in
  let s3 := store_set s2 XB
    (bits_on (vset_of (s2 Kv)) (fun i => s2 (y'_ i))) in
  let s4 := store_set s3 ZA
    (test_bits_on (vset_of (s3 Zv)) (fun i => s3 (y_ i))) in
  let s5 := store_set s4 ZB
    (test_bits_on (vset_of (s4 Zv)) (fun i => s4 (y'_ i))) in
  let s6 := store_set s5 Pvar
    (tr_app_pe (s5 Kv) (s5 ZA) (s5 ZB) (s5 Pvar)) in
  store_set s6 lamv
    (vrat (err_count (s6 ZA) (s6 ZB)) #|vset_of (s6 Zv)|).
Definition errcorr_seed_store (s : store) : store :=
  let s1 := store_set s pECv (ir_pub leakEC (s XA) (s XB)) in
  store_set s1 Xhatv (ir_dec (s1 pECv) (s1 XB)).
Definition errcorr_store (h : run_history) (s : store) : store :=
  let s1 := errcorr_seed_store s in
  let s2 := store_set s1 gv (hist_g h) in
  let s3 := store_set s2 tAv (verify_hash (s2 gv) (s2 XA)) in
  let s4 := store_set s3 tBv (verify_hash (s3 gv) (s3 Xhatv)) in
  store_set s4 Pvar (tr_app_ec (s4 pECv) (s4 gv) (s4 tAv) (s4 Pvar)).
Definition privamp_store (h : run_history) (s : store) : store :=
  let s1 := store_set s fv (hist_f h) in
  let s2 := store_set s1 Sv (privacy_hash l (s1 fv) (s1 XA)) in
  let s3 := store_set s2 Shv (privacy_hash l (s2 fv) (s2 Xhatv)) in
  store_set s3 Pvar (tr_app_pa (s3 fv) (s3 Pvar)).
Definition abort_store (s : store) : store :=
  store_set (store_set (store_set s abortv (vbool true)) Sv bot) Shv bot.
Definition replay_sift_ok (s : store) : bool :=
  (n <= #|vset_of (s Xv)|)%N && (k <= #|vset_of (s Zv)|)%N.
Definition replay_pe_error_high (s : store) : bool :=
  Qtol * (#|vset_of (s Zv)|)%:R < (err_of (s lamv))%:R.
Definition replay_pe_fail (s : store) : bool :=
  ~~ replay_sift_ok s || replay_pe_error_high s.
Definition pe_failed (s : store) : bool := replay_pe_fail s.
Definition final_store (h : run_history) (s : store) : store :=
  let s0 := start_store s in
  let s1 := rounds_store h s0 in
  let s2 := announce_store s1 in
  let s3 := paramest_store h s2 in
  if pe_failed s3 then abort_store s3 else
  let s4 := errcorr_store h s3 in
  if s4 tAv != s4 tBv then abort_store s4 else privamp_store h s4.
\end{minted}
\coqi{run_history}, \coqi{hist_K}, \coqi{hist_g}, \coqi{hist_f},
\coqi{start_store}, and \coqi{rounds_store} are expanded in
\Cref{tab:alignment-replay-rounds}.
All concrete bit, transcript, reconciliation, verification, and PA adapters are expanded in
\Cref{tab:alignment-program-classical-adapters,tab:alignment-program-transcript-adapters}.
\switchcolumn
\textbf{Announcement, PE, reconciliation, verification, and PA replay.}
The deterministic replay computes the same classical tail as the sampled program:
\[
\begin{aligned}
&K,\ X_A,X_B,Z_A,Z_B,\lambda,\\
&p_{\rm EC},\widehat X,g,t_A,t_B,\\
&f,\ S=f(X_A),\widehat S=f(\widehat X).
\end{aligned}
\]
It aborts on a failed sift/PE test or unequal verification tags, otherwise it writes the two final keys and transcript.
\end{paracol}
\endgroup
\AlignmentBlockEnd

\AlignmentContinuationLabel{tab:alignment-run-objects}
\AlignmentBlockHead{Run probabilities and terminal branches.}
\begingroup\AlignmentBlockSetup
\begin{paracol}{2}
\code{src/qkd/example/run.v}: \coqi{output_mismatch_event}, \coqi{run_event_prob}, \coqi{run_abort_prob}, \coqi{run_pass_prob}, \coqi{run_output_mismatch_prob};
\code{src/qkd/example/readout.v}: \coqi{bb84_branch_shape}
\begin{minted}{coq}
Definition output_mismatch_event (s : store) : bool :=
  (s Sv != s Shv) && (s abortv == vbool false).
Definition run_event_prob (ev : store -> bool) (xi : cq_state) : R :=
  \sum_(s | ev s) @complex.Re R (\Tr (xi s)).
Definition run_abort_prob (xi : cq_state) : R :=
  run_event_prob (fun s => s abortv == vbool true) xi.
Definition run_pass_prob (xi : cq_state) : R :=
  run_event_prob (fun s => s abortv == vbool false) xi.
Definition run_output_mismatch_prob (xi : cq_state) : R :=
  run_event_prob output_mismatch_event xi.
Definition bb84_branch_shape (h : run_history) (s : store) : Prop :=
  let fs := final_store h s in
  let spe :=
    paramest_store h
      (announce_store (rounds_store h (start_store s))) in
  let sec := errcorr_store h spe in
  (fs BVabort = bb84_bool_val true /\
   fs BVS = bb84_bot_val /\ fs BVSh = bb84_bot_val) \/
  (fs BVabort = bb84_bool_val false /\
   fs BVS = bb84_privacy_hash_val (hist_f h) (sec BVXA) /\
   fs BVSh = bb84_privacy_hash_val (hist_f h) (sec BVXhat)).
\end{minted}
The complete replay-store definitions are expanded in
\Cref{tab:alignment-replay-rounds,tab:alignment-replay-postprocessing};
\coqi{bb84_privacy_hash_val} is expanded in \Cref{tab:alignment-program-transcript-adapters}.
The register and value arguments are expanded in
\Cref{tab:alignment-program-schema,tab:alignment-label-layout}.
\switchcolumn
\textbf{Event probabilities and branch shape.}
For output cq-state $\xi$,
\[
\begin{aligned}
\Pr_\xi[A]&=\sum_{s\in A}\Re\Tr\xi(s),\\
p_{\rm abort}&=\Pr[\mathsf{abort}],\\
p_{\rm pass}&=\Pr[\neg\mathsf{abort}],
\end{aligned}
\]
\[
p_{\rm mismatch}=\Pr[S\ne\widehat S\land\neg\mathsf{abort}].
\]
Every terminal store is either
$\mathsf{abort}=1,S=\widehat S=\perp$ or
$\mathsf{abort}=0$ with both keys equal to outputs of the total
privacy-amplification adapter.
\end{paracol}
\endgroup
\AlignmentBlockEnd

\subsection{Component Theorems}
\label{app:semantics-alignment-infrastructure}

The component security theorems precede the reusable foundations and proof
pipeline below.

\subsubsection{Secrecy}

\AlignmentParagraph{Theorem of Tomamichel et al.~\cite{tomamichel2012tight}.}

\AlignmentContinuationLabel{tab:alignment-bb84-secret}
\AlignmentBlockHead{Component theorem: Supplementary Theorem~2 and \coqi{bb84_secret}.}
\begingroup\AlignmentBlockSetup
\begin{paracol}{2}
\code{src/qkd/bb84/bb84.v}: \coqi{bb84_secret}
\begin{minted}{coq}
Theorem bb84_secret
    (P : bb84_protocol) (q eps_s : R)
    (Gsrc : bb84_zsource P)
    (src : bb84_source_model P q)
    (dim_S : bb84_src_nx src = dim ('H[HL]_S))
    (dim_E : bb84_src_nb src = dim ('H[HL]_E))
    (dim_Zk : bb84_src_nz src = dim ('H[HL]_Zkey))
    (dim_Bq : bb84_src_nc src = dim ('H[HL]_Bq))
    (dim_Zp_key : (2 ^ bb84_n P)%N = dim ('H[HL]_Zp))
    (BobZ : 'QC[HL]_(Bq, Zp))
    (p_abort : R)
    (Fh : 'Idx[HL]_Fl -> 'Idx[HL]_S -> 'Idx[HL]_Zl)
    (Fh_two_univ : @two_univ R _ _ _ Fh)
    (Fh_key_dim : dim ('H[HL]_Zl) = (2 ^ bb84_l P)%N)
    (Epub : {set L})
    (pub : bb84_public_leakage Epub
      (eur_rho_XB disSE (bb84_src_rho src) (@bb84_src_M R P q src)
        (bb84_register_pair_dim disSE dim_S dim_E))
      (bb84_leak_EC P + log2 (2%:R / bb84_eps_c P)))
    (disSEpub : [disjoint S & E :|: Epub])
    (disEpubF : [disjoint E :|: Epub & Fl])
    (disEpubZl : [disjoint E :|: Epub & Zl])
    (disZl_FEpub : [disjoint Zl & (Fl :|: (E :|: Epub))])
    (accept reject : 'F[HL]_(S :|: (E :|: Epub)))
    (rho_out : 'F[HL]_(Zl :|: (Fl :|: (E :|: Epub))))
    (Haccept_psd : accept \is psdlf)
    (Hreject_psd : reject \is psdlf)
    (Haccept_cq : @is_cq_state R L HL S (E :|: Epub) accept)
    (Hsplit :
      accept + reject =
        castlf (bb84_public_assoc S E Epub, bb84_public_assoc S E Epub)
          (bb84_public_state pub))
    (Hout_psd : rho_out \is psdlf)
    (Hout_tr : @complex.Re R (\Tr rho_out) <= 1)
    (Hout :
      0 < @complex.Re R (\Tr accept) ->
      rho_out =
        ((@complex.Re R (\Tr accept))^-1)%:C *:
          post_hash_state disSEpub Fh accept)
    (Hnum : bb84_numeric_side P) :
\end{minted}
The implication chain and conclusion continue in
\Cref{tab:alignment-bb84-secret-tail}; every statement object is expanded in
the preceding statement-closure subsection.
\switchcolumn
\textbf{Secrecy component (Supplementary Theorem~2).}
The source state is extended by a public classical register and split into
accepted and rejected blocks satisfying
\[
\begin{gathered}
\rho_{\rm acc},\rho_{\rm rej}\succeq0,\\
\rho_{\rm acc}+\rho_{\rm rej}=\rho_{PSE},\\
\rho_{\rm acc}\text{ is cq on }S,\\
\rho_{\rm out}\succeq0,\qquad \Tr\rho_{\rm out}\le1,\\
\Tr\rho_{\rm acc}>0\Longrightarrow\\
\rho_{\rm out}=(\Tr\rho_{\rm acc})^{-1}
\mathsf{PostHash}_{F}(\rho_{\rm acc}).
\end{gathered}
\]
The protocol is $\eps_s$-secret if
\[
\begin{aligned}
\ell\le
\max_{\substack{\eps>0,\ \bar\eps>0\\
                  \eps+\bar\eps\le\eps_s}}
\Biggl\lfloor{}
 &n\bigl(q-h(\Qtol+\mu(\eps))\bigr)\\
 &-2\log_2\frac1{2\bar\eps}-\leakec\\
 &-\log_2\frac2{\eps_c}
\Biggr\rfloor .
\end{aligned}
\]
where
$\mu(\eps)=\sqrt{\frac{n+k}{nk}\frac{k+1}{k}\ln\frac1\eps}$.
\end{paracol}
\endgroup
\AlignmentBlockEnd

\AlignmentParagraph{Conclusion.}

\AlignmentContinuationLabel{tab:alignment-bb84-secret-tail}
\AlignmentBlockHead{Secrecy theorem: premises and conclusion.}
\begingroup\AlignmentBlockSetup
\begin{paracol}{2}
\code{src/qkd/bb84/bb84.v}: \coqi{bb84_secret} (continued)
\begin{minted}{coq}
  (0 < bb84_p_pass_tot Gsrc ->
    bb84_delayed_measurement_condition BobZ
      (eur_rho_ZC disZBq (bb84_src_rho src) (@bb84_src_N R P q src)
        (bb84_register_pair_dim disZBq dim_Zk dim_Bq))
      (@bb84_rhoZZ_pass_mixH R L HL Zkey Zp disZZp
        P Gsrc
        (bb84_source_zkey_dim dim_Zk)
        dim_Zp_key)) ->
  bb84_pass_prob_le p_abort
    (bb84_p_pass_tot Gsrc * @complex.Re R (\Tr accept)) ->
  (0 < bb84_n P)%N -> (0 < bb84_k P)%N ->
  (0 < eps_s) && (eps_s <= 1) ->
  (bb84_l P)%:R <=
    sup (fun m : R =>
      exists eps ebar : R,
        0 < eps /\ eps < 1 /\ 0 < ebar /\ eps + ebar <= eps_s /\
        m = ((floor (bb84_key_length_ub (bb84_n P) (bb84_k P) q
          (bb84_Qtol P) eps ebar (bb84_leak_EC P)
          (bb84_eps_c P)))%:~R : R)) ->
  eps_secret rho_out eps_s p_abort.
\end{minted}
\AlignmentBoundaryLabel
The declaration states the optimized key-length premise for the integer key
target and the admissible secrecy-error interval.
\switchcolumn
\textbf{Secrecy premises and conclusion.}
Under
\[
\begin{gathered}
0<\eps_c\le1,\quad \leakec\ge0,\\
n,k>0,\quad 0<\eps_s\le1,\\
p_{\rm pass}^{\rm tot}>0\Longrightarrow
\mathsf{DelayedMeasurement}(\rho_{ZB},\rho_{ZZ'}),\\
0\le p_{\rm abort}\le1,\qquad
1-p_{\rm abort}\le p_{\rm pass}^{\rm tot}\Tr\rho_{\rm acc},
\end{gathered}
\]
and the unfurled optimized key-length inequality, the theorem concludes
$\eps_s$-secrecy of $\rho_{\rm out}$.
\end{paracol}
\endgroup
\AlignmentBlockEnd

\subsubsection{Correctness}

\AlignmentParagraph{Verification.}

\AlignmentContinuationLabel{tab:alignment-bb84-correct}
\AlignmentBlockHead{Component theorem: correctness from two-universal verification.}
\begingroup\AlignmentBlockSetup
\begin{paracol}{2}
\code{src/qkd/bb84/bb84.v}: \coqi{bb84_correct}
\begin{minted}{coq}
Theorem bb84_correct (eps_c : R) :
  (0 < tag_count)%N ->
  @two_univ R _ _ _ verify_hash ->
  (forall x x', 0 <= corr_prob x x') ->
  (\sum_x \sum_(x' | x != x') corr_prob x x' <= 1) ->
  (tag_count%:R)^-1 <= eps_c ->
  eps_correct p_mismatch eps_c.
\end{minted}
The transitive statement definitions \coqi{p_mismatch},
\coqi{eps_correct}, and \coqi{two_univ} are expanded in the first subsection.
\switchcolumn
\textbf{BB84 correctness (Supplementary Theorem~1).}
For a nonempty tag set $T$, a two-universal verification hash, and mismatch
weights $p(x,x')$ with
\[
p(x,x')\ge0,\qquad
\sum_{x\ne x'}p(x,x')\le1,\qquad |T|^{-1}\le\eps_c,
\]
two-universal verification gives
$\Pr[S\ne\hat S\wedge\neg\mathsf{abort}]\le\eps_c$.
\end{paracol}
\endgroup
\AlignmentBlockEnd

\subsection{Entropy}
\label{app:semantics-alignment-purified-entropy}

\subsubsection{Definitions}

\AlignmentParagraph{Extended logarithm.}

\AlignmentContinuationLabel{tab:alignment-proof-only-opening}
\AlignmentBlockHead{Extended logarithm.}
\begingroup\AlignmentBlockSetup
\begin{paracol}{2}
\code{src/utils/arith.v}: \coqi{elog2}
\begin{minted}{coq}
Definition elog2 (x : R) : extended R :=
  if (0 < x)%R then EFin (log2 x) else @ENInf R.
\end{minted}
This definition supports the entropy objects used later in the appendix.
\switchcolumn
\textbf{Extended logarithm used by entropy infrastructure.}
\[
\overline{\log}_2x=
\begin{cases}
\log_2x,&x>0,\\
-\infty,&x\le0.
\end{cases}
\]
\end{paracol}
\endgroup
\AlignmentBlockEnd

\AlignmentParagraph{Core quantities.}

\AlignmentContinuationLabel{tab:alignment-entropy}
\AlignmentBlockHead{Distance and conditional min-entropy definitions.}
\begingroup\AlignmentBlockSetup
\begin{paracol}{2}
\code{src/utils/labelled/def.v}: \coqi{sqrtH};
\code{src/utils/dist/fid.v}: \coqi{fidelity}, \coqi{embed_subnorm}, \coqi{gfidelity}, \coqi{gfid}, \coqi{alt_gfid_gen};
\code{src/utils/dist/pd.v}: \coqi{pur_dist}, \coqi{pur_dist_mx}
\begin{minted}{coq}
Definition sqrtH (A : 'F[H]_S) : 'F[H]_S :=
  sumoutp
    (fun i => (Num.sqrt (@complex.Re R (eigenval_all A i)))%:C)
    (eigenvec_all A) (eigenvec_all A).
Definition fidelity (rho tau : 'F[H]_S) : R :=
  @complex.Re R (trnorm (h2mx (sqrtH rho \o sqrtH tau))).
Definition embed_subnorm (rho : 'M[C]_n) : 'M[C]_(n + 1) :=
  block_mx rho 0 0 ((1 - \tr rho) *: 1%:M).
Definition gfidelity (rho tau : 'M[C]_n) : R :=
  @complex.Re R
    (trnorm (sqrtmx (embed_subnorm rho) *m
      sqrtmx (embed_subnorm tau))).
Definition gfid (rho sigma : 'F[H]_S) : R :=
  gfidelity (h2mx rho) (h2mx sigma).
Lemma alt_gfid_gen (rho tau : 'M[C]_n) :
  rho \is psdmx -> tau \is psdmx ->
  gfidelity rho tau =
    @complex.Re R (trnorm (sqrtmx rho *m sqrtmx tau)) +
    Num.sqrt (@complex.Re R (1 - \tr rho)) *
    Num.sqrt (@complex.Re R (1 - \tr tau)).
Definition pur_dist (rho tau : 'F[HL]_S) : R :=
  Num.sqrt (1 - (gfid rho tau)^2).
Definition pur_dist_mx (rho tau : 'M[C]_(na * nb)) : R :=
  Num.sqrt (1 - (gfidelity rho tau)^2).
\end{minted}
\AlignmentBoundaryLabel
Generalized fidelity is implemented by a one-dimensional block embedding.
For PSD inputs with trace at most one, \coqi{alt_gfid_gen} specializes to the
source's additive trace-deficit formula.
\switchcolumn
\begin{definition}[Purified Distance]\label{def:purified-distance}
For sub-normalized states $\rho$ and $\sigma$,
\[
\begin{aligned}
F(\rho,\sigma)
  &:=\lVert\sqrt\rho\sqrt\sigma\rVert_1,\\
\bar F(\rho,\sigma)
  &:=F(\rho,\sigma)
    +\sqrt{(1-\Tr\rho)(1-\Tr\sigma)},\\
P(\rho,\sigma)
  &:=\sqrt{1-\bar F(\rho,\sigma)^2}.
\end{aligned}
\]
The second term in $\bar F$ accounts for the trace deficit.
\end{definition}
\switchcolumn*
\code{src/entropy/pur/def.v}: \coqi{phi_min_feasible}, \coqi{phi_min}, \coqi{hmin}, \coqi{hminE}
\begin{minted}{coq}
Definition phi_min_feasible (rhoS : 'F[H]_S) (v : R) : Prop :=
  exists sigma : 'F[H]_(S :&: T),
    sigma \is psdlf /\
    (castlf (eqST, eqST) ((\1 : 'F_(S :\: T)) \O sigma) - rhoS) \is psdlf /\
    v = @complex.Re R (\Tr sigma).
Definition phi_min (rhoS : 'F[H]_S) : R :=
  inf [set v : R | phi_min_feasible rhoS v].
Definition hmin (rhoS : 'F[H]_S) : R :=
  - log2 (phi_min rhoS).
Definition hminE (rhoS : 'F[H]_S) : extended R :=
  - elog2 (phi_min rhoS).
\end{minted}
\AlignmentBoundaryLabel
The code uses the equivalent semidefinite feasibility form and provides both
real-valued \coqi{hmin} and extended-real \coqi{hminE}.
Its primitive is an infimum; \coqi{phi_min_primal_attained} in
\Cref{tab:alignment-convex-infrastructure} realizes the source minimum.
The intersection/difference register expression makes the definition total
before subset premises are supplied.
The project-local \coqi{log2} dependency is expanded in
\Cref{tab:alignment-arithmetic}; \coqi{elog2} is expanded in
\Cref{tab:alignment-proof-only-opening}.
\switchcolumn
\begin{definition}[Conditional Min-Entropy]\label{def:cond-min}
For sub-normalized $\rho_{AB}$,
\[
\begin{aligned}
\Hmin(A\given B)_\rho
  &:={\max}_{\substack{\sigma_B\ge0\\\Tr\sigma_B\le1}}
    \sup\{\lambda\in\R:\\[-2pt]
  &\hspace{3.5em}
      \rho_{AB}\le2^{-\lambda}\id_A\otimes\sigma_B\}\\
  &=-{\min}_{\substack{\sigma_B\ge0\\\Tr\sigma_B\le1}}\\
  &\hspace{2em}
    D_{\max}(\rho_{AB}\|\id_A\otimes\sigma_B),
\end{aligned}
\]
where
\[
D_{\max}(\rho\|\sigma)
  :=\inf\{\lambda\in\R:\rho\le2^\lambda\sigma\}.
\]
If the feasible set is empty, $D_{\max}(\rho\|\sigma):=+\infty$.
\end{definition}
\end{paracol}
\endgroup
\AlignmentBlockEnd

\AlignmentContinuationLabel{tab:alignment-conditional-max-entropy}
\AlignmentBlockHead{Conditional max-entropy definition.}
\begingroup\AlignmentBlockSetup
\begin{paracol}{2}
\code{src/entropy/pur/def.v}: \coqi{hmax_feasible}, \coqi{hmax}, \coqi{hmaxE}
\begin{minted}{coq}
Definition hmax_feasible (rhoS : 'F[H]_S) (v : R) : Prop :=
  exists sigma : 'F[H]_(S :&: T),
    sigma \is den1lf /\
    elog2 ((dim 'H[H]_(S :\: T))%:R *
      (gfid rhoS
        (castlf (eqST_max, eqST_max)
          ((((dim 'H[H]_(S :\: T))%:R^-1)%:C *:
            ((\1 : 'F_(S :\: T)) \O sigma))))) ^+ 2) = v%:E.
Definition hmax (rhoS : 'F[H]_S) : R :=
  sup [set v : R | hmax_feasible rhoS v].
Definition hmaxE (rhoS : 'F[H]_S) : extended R :=
  ereal_sup [set v : extended R |
    exists sigma : 'F[H]_(S :&: T),
      sigma \is den1lf /\
      v = elog2 ((dim 'H[H]_(S :\: T))%:R *
        (gfid rhoS
          (castlf (eqST_max, eqST_max)
            ((((dim 'H[H]_(S :\: T))%:R^-1)%:C *:
              ((\1 : 'F_(S :\: T)) \O sigma))))) ^+ 2)].
\end{minted}
\AlignmentBoundaryLabel
The labelled-register casts implement the paper's
$d_A\bar F(\rho,\pi_A\otimes\sigma)^2$ expression;
\coqi{hmaxE} preserves its infinite branches.
The generalized-fidelity dependency \coqi{gfid} is expanded in
\Cref{tab:alignment-entropy}; \coqi{log2} is expanded in
\Cref{tab:alignment-arithmetic}, and \coqi{elog2} in
\Cref{tab:alignment-proof-only-opening}.
\switchcolumn
\begin{definition}[Conditional Max-Entropy]\label{def:cond-max}
For sub-normalized $\rho_{AB}$,
\[
\begin{gathered}
\Hmax(A\given B)_\rho
  :=\max_{\substack{\sigma_B\ge0\\\Tr\sigma_B=1}}\\
  \log_2\!\left(
    d_A\bar F(\rho_{AB},\pi_A\otimes\sigma_B)^2
  \right),
\end{gathered}
\]
where $\pi_A:=\id_A/d_A$ and $d_A$ is the dimension of~$A$.
\end{definition}
\end{paracol}
\endgroup
\AlignmentBlockEnd

\AlignmentContinuationLabel{tab:alignment-smooth-definitions}
\AlignmentBlockHead{Smooth entropy and duality.}
\begingroup\AlignmentBlockSetup
\begin{paracol}{2}
\code{src/entropy/pur/def.v}: \coqi{eps_ball}, \coqi{smooth_hmin}, \coqi{smooth_hminE}, \coqi{smooth_hmax}, \coqi{smooth_hmaxE}
\begin{minted}{coq}
Definition eps_ball (eps : R) (rho tau : 'F[H]_S) : bool :=
  ((tau \is psdlf) && (0 < @complex.Re R (\Tr tau)) &&
    (@complex.Re R (\Tr tau) <= 1)) && (pur_dist tau rho <= eps).
Definition smooth_hmin (eps : R) (rhoS : 'F[H]_S) : R :=
  sup [set v : R | exists tauS : 'F[H]_S,
    eps_ball eps rhoS tauS /\ v = hmin T tauS].
Definition smooth_hminE (eps : R) (rhoS : 'F[H]_S) : extended R :=
  ereal_sup [set v : extended R | exists tauS : 'F[H]_S,
    eps_ball eps rhoS tauS /\ v = hminE T tauS].
Definition smooth_hmax (eps : R) (rhoS : 'F[H]_S) : R :=
  inf [set v : R | exists tauS : 'F[H]_S,
    eps_ball eps rhoS tauS /\ v = hmax T tauS].
Definition smooth_hmaxE (eps : R) (rhoS : 'F[H]_S) : extended R :=
  - ereal_sup [set v : extended R | exists tauS : 'F[H]_S,
    eps_ball eps rhoS tauS /\ v = - hmaxE T tauS].
\end{minted}
\AlignmentBoundaryLabel
The witnesses in \Cref{tab:alignment-smooth-infrastructure} realize the
displayed extrema throughout the strict smoothing domain.
The dependencies \coqi{pur_dist}, \coqi{hmin}, and \coqi{hminE} are expanded in
\Cref{tab:alignment-entropy}; \coqi{hmax} and \coqi{hmaxE} are expanded in
\Cref{tab:alignment-conditional-max-entropy}.
\switchcolumn
\begin{definition}[Smooth Min- and Max-Entropy]
\label{def:smooth-min}\label{def:smooth-max}
For sub-normalized $\rho_{AB}$ and $0\le\eps<\sqrt{\Tr\rho}$,
\[
\begin{aligned}
\Hmineps[\eps](A\given B)_\rho
  &:=\max_{\bar\rho\in B^\eps(\rho_{AB})}
    \Hmin(A\given B)_{\bar\rho},\\
\Hmaxeps[\eps](A\given B)_\rho
  &:=\min_{\bar\rho\in B^\eps(\rho_{AB})}
    \Hmax(A\given B)_{\bar\rho},
\end{aligned}
\]
where
\[
B^\eps(\rho)
  :=\{\bar\rho\ge0:\Tr\bar\rho\le1,\ P(\bar\rho,\rho)\le\eps\}.
\]
\end{definition}
\switchcolumn*
\code{src/entropy/pur/smooth/prop.v}: \coqi{smooth_dual}
\begin{minted}{coq}
Lemma smooth_dual (eps : R) (rhoABC : 'F[H]_S_ABC) :
  rhoABC \is denlf ->
  \Rank rhoABC = 1%N ->
  0 <= eps < Num.sqrt (@complex.Re R (\Tr rhoABC)) ->
  smooth_hmaxE B eps
    (@ptrace_cast3C R L H A B C0 disAC disBC rhoABC) =
  oppe (smooth_hminE C0 eps
    (@ptrace_cast3B R L H A B C0 disAB disBC rhoABC)).
\end{minted}
\AlignmentBoundaryLabel
This is the extended-real form; its enclosing section carries explicit disjointness, copy-register, and dimension premises used to represent the two marginals and their purifications.
\switchcolumn
\textbf{Smooth duality (\Cref{sec:findings}).}
For pure $\rho_{ABC}$,
$\Hmaxeps(A\given B)_\rho=-\Hmineps(A\given C)_\rho$.
\end{paracol}
\endgroup
\AlignmentBlockEnd

\AlignmentContinuationLabel{tab:alignment-zero-radius-smoothing}
\AlignmentBlockHead{Zero-radius smoothing laws.}
\begingroup\AlignmentBlockSetup
\begin{paracol}{2}
\code{src/entropy/pur/prop.v}: \coqi{eps_ball_refl};
\code{src/entropy/pur/smooth/prop.v}: \coqi{eps_ball_0_eq},
\coqi{smooth_hminE_0}, \coqi{smooth_hmaxE_0}
\begin{minted}{coq}
Lemma eps_ball_refl (eps : R) (rho : 'F[H]_S) :
  rho \is psdlf -> 0 < @complex.Re R (\Tr rho) ->
  @complex.Re R (\Tr rho) <= 1 -> 0 <= eps ->
  eps_ball eps rho rho.
Lemma eps_ball_0_eq (rho tau : 'F[H]_S) :
  rho \is psdlf -> @complex.Re R (\Tr rho) <= 1 ->
  eps_ball 0 rho tau -> tau = rho.
Lemma smooth_hminE_0 (rho : 'F[H]_S) :
  rho \is psdlf -> @complex.Re R (\Tr rho) <= 1 ->
  0 < @complex.Re R (\Tr rho) ->
  smooth_hminE T 0 rho = hminE T rho.
Lemma smooth_hmaxE_0 (rho : 'F[H]_S) :
  rho \is psdlf -> @complex.Re R (\Tr rho) <= 1 ->
  0 < @complex.Re R (\Tr rho) ->
  smooth_hmaxE T 0 rho = hmaxE T rho.
\end{minted}
These declarations recover the zero-radius smoothing identities.
\switchcolumn
\textbf{Zero-radius ball and entropy recovery.}
For a nonzero subnormalized state,
\[
\begin{aligned}
B^0(\rho)&=\{\rho\},\\
H_{\min}^0(A|B)_\rho&=H_{\min}(A|B)_\rho,\\
H_{\max}^0(A|B)_\rho&=H_{\max}(A|B)_\rho.
\end{aligned}
\]
\end{paracol}
\endgroup
\AlignmentBlockEnd

\subsubsection{State Distance}

\AlignmentParagraph{Metric laws.}

\AlignmentContinuationLabel{tab:alignment-distance-infrastructure}
\AlignmentBlockHead{Reusable state-distance infrastructure.}
\begingroup\AlignmentBlockSetup
\begin{paracol}{2}
\code{src/utils/dist/fid.v}: \coqi{gfid_sym_mx};
\code{src/utils/dist/pd.v}: \coqi{pur_dist_refl}, \coqi{pd_metric_triangle}, \coqi{pur_dist_eq0}
\begin{minted}{coq}
Lemma gfid_sym_mx (rho tau : 'M[C]_n) :
  rho \is psdmx -> tau \is psdmx ->
  gfidelity rho tau = gfidelity tau rho.
Lemma pur_dist_refl (rho : 'F[HL]_S) :
  rho \is psdlf -> @complex.Re R (\Tr rho) <= 1 ->
  pur_dist rho rho = 0.
Lemma pd_metric_triangle (rho tau sigma : 'F[HLtri]_Stri) :
  rho \is psdlf -> tau \is psdlf -> sigma \is psdlf ->
  @complex.Re R (\Tr rho) <= 1 ->
  @complex.Re R (\Tr tau) <= 1 ->
  @complex.Re R (\Tr sigma) <= 1 ->
  pur_dist rho tau <= pur_dist rho sigma + pur_dist sigma tau.
Lemma pur_dist_eq0 (rho tau : 'F[HL]_S) :
  rho \is psdlf -> tau \is psdlf ->
  @complex.Re R (\Tr rho) <= 1 ->
  @complex.Re R (\Tr tau) <= 1 ->
  pur_dist rho tau = 0 -> rho = tau.
\end{minted}
\AlignmentBoundaryLabel
Generalized-fidelity symmetry yields purified-distance symmetry by unfolding
\coqi{pur_dist} and \coqi{gfid}; the other three metric laws are direct
reusable declarations.
\switchcolumn
\textbf{Purified distance is a metric (reusable infrastructure claimed in \Cref{sec:architecture}).}
On sub-normalized states,
\[
\begin{aligned}
P(\rho,\rho)&=0,\\
P(\rho,\tau)&=P(\tau,\rho),\\
P(\rho,\tau)=0&\Rightarrow\rho=\tau,
\end{aligned}
\]
and
$P(\rho,\tau)\le P(\rho,\sigma)+P(\sigma,\tau)$.
\switchcolumn*
\code{src/utils/dist/pd.v}: \coqi{pd_gtd_bar_lb}, \coqi{pd_gtd_ub},
\coqi{pd_gtd_ub'}
\begin{minted}{coq}
Lemma pd_gtd_bar_lb (rho tau : 'F[HL]_S) :
  rho \is psdlf -> tau \is psdlf ->
  @complex.Re R (\Tr rho) <= 1 ->
  @complex.Re R (\Tr tau) <= 1 ->
  trnormH (rho - tau) / 2%:R +
    `|@complex.Re R (\Tr rho) - @complex.Re R (\Tr tau)| / 2%:R
  <= pur_dist rho tau.
Lemma pd_gtd_ub (rho tau : 'F[HL]_S) :
  rho \is psdlf -> tau \is psdlf ->
  @complex.Re R (\Tr rho) <= 1 ->
  @complex.Re R (\Tr tau) <= 1 ->
  pur_dist rho tau <=
  Num.sqrt (2%:R *
    ((trnormH (rho - tau) +
      `|@complex.Re R (\Tr rho) - @complex.Re R (\Tr tau)|) / 2%:R) -
    ((trnormH (rho - tau) +
      `|@complex.Re R (\Tr rho) - @complex.Re R (\Tr tau)|) / 2%:R) ^ 2).
Lemma pd_gtd_ub' (rho tau : 'F[HL]_S) :
  rho \is psdlf -> tau \is psdlf ->
  @complex.Re R (\Tr rho) <= 1 ->
  @complex.Re R (\Tr tau) <= 1 ->
  pur_dist rho tau <=
    Num.sqrt (trnormH (rho - tau) +
      `|@complex.Re R (\Tr rho) - @complex.Re R (\Tr tau)|).
\end{minted}
The declarations inline the displayed expression for $\bar D$.
\switchcolumn
\textbf{Generalized trace-distance bounds.}
Writing $\bar D(\rho,\tau)=\frac12\lVert\rho-\tau\rVert_1+\frac12|\Tr\rho-\Tr\tau|$,
\[
\begin{gathered}
  \bar D(\rho,\tau)\le P(\rho,\tau)\\
  \le\sqrt{2\bar D(\rho,\tau)-\bar D(\rho,\tau)^2}\\
  \le\sqrt{2\bar D(\rho,\tau)}.
\end{gathered}
\]
\end{paracol}
\endgroup
\AlignmentBlockEnd

\AlignmentParagraph{Channels and extensions.}

\AlignmentContinuationLabel{tab:alignment-distance-channels}
\AlignmentBlockHead{Channel dilation and distance monotonicity.}
\begingroup\AlignmentBlockSetup
\begin{paracol}{2}
\code{src/utils/dist/dilation.v}: \coqi{stinespring};
\code{src/utils/dist/pd.v}: \coqi{pd_mono}
\begin{minted}{coq}
Lemma stinespring (F : 'SO[H]_(U, V)) :
  F \is cpmap <->
  exists (z : nat)
      (W : 'M[C]_(dim 'H[H]_V * z, dim 'H[H]_U)),
    [/\ z = dim 'H[H]_U * dim 'H[H]_V,
        forall rho : 'M[C]_(dim 'H[H]_U),
          h2mx (F (mx2h rho)) =
            ptrace2 (W *m rho *m W^*t),
        F \is tpmap -> W^*t *m W = 1%:M
      & F \is tnmap -> W^*t *m W <= 1%:M].
Lemma pd_mono (E : 'SO[HL]_(U, V)) (rho tau : 'F[HL]_U) :
  E \is cptn -> rho \is psdlf -> tau \is psdlf ->
  @complex.Re R (\Tr rho) <= 1 ->
  @complex.Re R (\Tr tau) <= 1 ->
  [/\ gfid rho tau <= gfid (E rho) (E tau) &
      pur_dist (E rho) (E tau) <= pur_dist rho tau].
\end{minted}
Here \coqi{cptn} packages complete positivity and trace non-increase.
\switchcolumn
\textbf{Stinespring dilation and channel monotonicity.}
Every CP map has a finite dilation; TP makes the dilation an isometry and TNI makes it a contraction.
For CPTN $\mathcal E$,
$\bar F(\rho,\tau)\le\bar F(\mathcal E\rho,\mathcal E\tau)$ and
$P(\mathcal E\rho,\mathcal E\tau)\le P(\rho,\tau)$.
\end{paracol}
\endgroup
\AlignmentBlockEnd

\AlignmentContinuationLabel{tab:alignment-distance-extensions}
\AlignmentBlockHead{Purified extensions.}
\begingroup\AlignmentBlockSetup
\begin{paracol}{2}
\code{src/utils/dist/pd.v}: \coqi{pd_uhl}, \coqi{pd_ext}, \coqi{pd_ext_subnorm}
\begin{minted}{coq}
Lemma pd_uhl (rho tau : 'F[HL]_A) (phi : 'F[HL]_S_AB) :
  phi \is denlf -> (\Rank phi <= 1%N)%N ->
  ptrace_cast disAB phi = rho -> tau \is denlf ->
  exists (vartheta : 'F[HL]_S_AB), vartheta \is denlf /\
    (\Rank vartheta <= 1%N)%N /\
    ptrace_cast disAB vartheta = tau /\
    pur_dist phi vartheta = pur_dist rho tau.
Lemma pd_ext (rho tau : 'F[HL]_A) (rhobar : 'F[HL]_S_AB) :
  rhobar \is denlf -> ptrace_cast disAB rhobar = rho -> tau \is denlf ->
  exists taubar : 'F[HL]_S_AB, taubar \is denlf /\
    ptrace_cast disAB taubar = tau /\
    pur_dist rhobar taubar = pur_dist rho tau.
Lemma pd_ext_subnorm (rho tau : 'F[HL]_A)
    (rhobar : 'F[HL]_S_AB) :
  rhobar \is psdlf ->
  @complex.Re R (\Tr rhobar) <= 1 ->
  ptrace_cast disAB rhobar = rho ->
  tau \is psdlf -> @complex.Re R (\Tr tau) <= 1 ->
  exists taubar : 'F[HL]_S_AB,
    taubar \is psdlf /\
    @complex.Re R (\Tr taubar) <= 1 /\
    ptrace_cast disAB taubar = tau /\
    pur_dist rhobar taubar = pur_dist rho tau.
\end{minted}
The equal-dimension and labelled-register disjointness hypotheses encode the paper's auxiliary purification space.
\switchcolumn
\textbf{Purified Uhlmann and extension theorems.}
A purification, or any extension, of $\rho$ can be matched by one of $\tau$ at exactly $P(\rho,\tau)$.
\end{paracol}
\endgroup
\AlignmentBlockEnd

\AlignmentParagraph{Fidelity and purification.}

\AlignmentContinuationLabel{tab:alignment-fidelity-infrastructure}
\AlignmentBlockHead{Reusable fidelity and purification laws.}
\begingroup\AlignmentBlockSetup
\begin{paracol}{2}
\code{src/utils/dist/dilation.v}: \coqi{fid_ii_cptp_bounds};
\code{src/utils/dist/fid.v}: \coqi{fid_i_sym_mx}, \coqi{fid_iii_uhl}, \coqi{fid_iv_proj_mx}, \coqi{fid_v_mono_r_psd_mx}, \coqi{fid_vi_block_diag_mx}
\begin{minted}{coq}
Lemma fid_i_sym_mx (rho tau : 'M[C]_n) :
  rho \is psdmx -> tau \is psdmx ->
  @complex.Re R (trnorm (sqrtmx rho *m sqrtmx tau)) =
  @complex.Re R (trnorm (sqrtmx tau *m sqrtmx rho)).
Lemma fid_ii_cptp_bounds (rho tau : 'F[H]_U)
    (E : 'SO[H]_(U, V)) :
  rho \is psdlf -> tau \is psdlf -> E \is cptp ->
  fidelity rho tau <= fidelity (E rho) (E tau) /\
  fidelity (E rho) (E tau) <=
    Num.sqrt (@complex.Re R (\Tr rho)) *
    Num.sqrt (@complex.Re R (\Tr tau)).
Lemma fid_iii_uhl (phi : 'F[HL]_S_AB) (tau : 'F[HL]_A) :
  phi \is den1lf -> \Rank phi = 1%N ->
  tau \is denlf -> h2mx tau != 0 ->
  (\Rank tau <= dim 'H[HL]_B)%N ->
  exists psi : 'F[HL]_S_AB,
    psi \is denlf /\ \Rank psi = 1%N /\
    ptrace_cast disAB psi = tau /\
    fidelity phi psi = fidelity (ptrace_cast disAB phi) tau.
Lemma fid_iv_proj_mx (P rho tau : 'M[C]_n) :
  P \is projmx -> rho \is psdmx -> tau \is psdmx ->
  let rhoP := P *m rho *m P in
  let tauP := P *m tau *m P in
  [/\ @complex.Re R (trnorm (sqrtmx rhoP *m sqrtmx tau)) =
        @complex.Re R (trnorm (sqrtmx rhoP *m sqrtmx tauP)) &
      @complex.Re R (trnorm (sqrtmx rhoP *m sqrtmx tauP)) =
        @complex.Re R (trnorm (sqrtmx rho *m sqrtmx tauP))].
Lemma fid_v_mono_r_psd_mx (rho sigma tau : 'M[C]_n) :
  rho \is psdmx -> sigma \is psdmx -> tau \is psdmx ->
  sigma <= tau ->
  @complex.Re R (trnorm (sqrtmx rho *m sqrtmx sigma)) <=
  @complex.Re R (trnorm (sqrtmx rho *m sqrtmx tau)).
Lemma fid_vi_block_diag_mx (rho1 tau1 : 'M[C]_m)
    (rho2 tau2 : 'M[C]_k) :
  rho1 \is psdmx -> tau1 \is psdmx ->
  rho2 \is psdmx -> tau2 \is psdmx ->
  @complex.Re R (trnorm
    (sqrtmx (block_mx rho1 0 0 rho2) *m
     sqrtmx (block_mx tau1 0 0 tau2))) =
  @complex.Re R (trnorm (sqrtmx rho1 *m sqrtmx tau1)) +
  @complex.Re R (trnorm (sqrtmx rho2 *m sqrtmx tau2)).
\end{minted}
The matrix-level declarations provide the reusable symmetry, projector,
order, and direct-sum laws; \coqi{fid_iii_uhl} provides the labelled
purification witness used by purified-distance lifting.
\switchcolumn
\textbf{Raw-fidelity laws (the \texttt{ent/tb:fid} dependency node).}
The reusable laws are symmetry; CPTP monotonicity with upper bound
$\sqrt{\Tr\rho}\sqrt{\Tr\tau}$; a finite-dimensional Uhlmann purification
witness;
the projector identity
$F(P\rho P,\tau)=F(P\rho P,P\tau P)=F(\rho,P\tau P)$; Loewner monotonicity; and direct-sum additivity,
$F(\rho_1\oplus\rho_2,\tau_1\oplus\tau_2)=F(\rho_1,\tau_1)+F(\rho_2,\tau_2)$.
\end{paracol}
\endgroup
\AlignmentBlockEnd

\AlignmentContinuationLabel{tab:alignment-purification-infrastructure}
\AlignmentBlockHead{Finite-dimensional purification.}
\begingroup\AlignmentBlockSetup
\begin{paracol}{2}
\code{src/utils/mx/prop.v}: \coqi{purification_exists_in_dim}
\begin{minted}{coq}
Lemma purification_exists_in_dim {a b : nat}
    (sigma_A : 'M[C]_a) :
  sigma_A \is denmx -> sigma_A != 0 ->
  (\rank sigma_A <= b)%N ->
  exists theta0 : 'M[C]_(a * b),
    theta0 \is denmx /\ \rank theta0 = 1%N /\
    ptrace2 theta0 = sigma_A.
\end{minted}
This is the reusable finite-dimensional existence result beneath the labelled purification constructions.
\switchcolumn
\textbf{Purification in a fixed finite environment.}
If $\sigma_A$ is a nonzero subnormalized state with $\operatorname{rank}\sigma_A\le b$, it has a rank-one subnormalized purification on $A\otimes\mathbb C^b$.
\end{paracol}
\endgroup
\AlignmentBlockEnd

\subsubsection{Entropy Calculus}

\AlignmentParagraph{Optimization.}

\AlignmentContinuationLabel{tab:alignment-convex-infrastructure}
\AlignmentBlockHead{Convex optimization and non-smooth entropy infrastructure.}
\begingroup\AlignmentBlockSetup
\begin{paracol}{2}
\code{src/entropy/pur/def.v}: \coqi{sdp_primal_set}, \coqi{sdp_dual_set};
\code{src/entropy/pur/pre.v}: \coqi{sdp_strong_primal}
\begin{minted}{coq}
Definition sdp_primal_set : set R :=
  [set x : R | exists X : 'M[C]_(dim U),
    [/\ X \is psdmx,
        h2mx (Psi (mx2h X)) >= B &
        x = @complex.Re R (\tr (A *m X))]].
Definition sdp_dual_set : set R :=
  [set x : R | exists Y : 'M[C]_(dim V),
    [/\ Y \is psdmx,
        h2mx (Psi^*o (mx2h Y)) <= A &
        x = @complex.Re R (\tr (B *m Y))]].
Lemma sdp_strong_primal (alpha beta : R) :
  A \is hermmx -> B \is hermmx ->
  (forall X, X \is hermmx -> h2mx (Psi (mx2h X)) \is hermmx) ->
  has_inf sdp_primal_set ->
  alpha = ExtNumTopology.etinf sdp_primal_set ->
  beta = ExtNumTopology.etsup sdp_dual_set ->
  (exists Y, Y \is pdmx /\
    (A - h2mx (Psi^*o (mx2h Y))) \is pdmx) ->
  alpha = beta /\
  exists X : 'M[C]_(dim U),
    [/\ X \is psdmx,
        h2mx (Psi (mx2h X)) >= B &
        @complex.Re R (\tr (A *m X)) = alpha].
\end{minted}
\AlignmentBoundaryLabel
This yields equality and primal attainment.
\switchcolumn
\textbf{SDP strong duality (\Cref{sec:architecture}).}
For the primal and dual SDPs induced by $A,B,\Psi$, finite primal value and a strict dual Slater witness imply $\alpha=\beta$ and attainment of the primal optimum.
\switchcolumn*
\code{src/entropy/pur/prop.v}: \coqi{phi_min_sdp_strong}, \coqi{phi_min_dual}, \coqi{phi_min_primal_attained}, \coqi{phi_min_dual_feasible_ub}
\begin{minted}{coq}
Lemma phi_min_sdp_strong (rhoAB : 'F[H]_S_AB) :
  rhoAB \is psdlf ->
  phi_min B rhoAB =
    sup [set v : R | exists Y : 'F[H]_S_AB, [/\ Y \is psdlf,
      ptrace_castl disAB Y <= (\1 : 'F[H]_B) &
      v = @complex.Re R (\Tr (rhoAB \o Y))]] /\
  phi_min_feasible B rhoAB (phi_min B rhoAB).
Lemma phi_min_dual_feasible_ub (rhoAB Y : 'F[H]_S_AB) :
  rhoAB \is psdlf -> Y \is psdlf ->
  ptrace_castl disAB Y <= (\1 : 'F[H]_B) ->
  @complex.Re R (\Tr (rhoAB \o Y)) <= phi_min B rhoAB.
\end{minted}
The declarations give the dual supremum, primal attainment, and pointwise
weak duality.
\switchcolumn
\textbf{Min-entropy SDP specialization and weak duality.}
For
\[
\Phi(\rho_{AB})=\min_{\sigma_B\succeq0}
  \{\Tr\sigma_B:\rho_{AB}\preceq I_A\otimes\sigma_B\},
\]
strong duality gives the dual supremum and primal attainment.
Every dual-feasible $Y_{AB}$ also satisfies
\[
\begin{gathered}
Y_{AB}\succeq0,\quad \Tr_A Y_{AB}\preceq I_B\\
\Longrightarrow
\Tr(\rho_{AB}Y_{AB})\le\Phi(\rho_{AB}).
\end{gathered}
\]
\end{paracol}
\endgroup
\AlignmentBlockEnd

\AlignmentParagraph{Operational laws.}

\AlignmentContinuationLabel{tab:alignment-entropy-algebra}
\AlignmentBlockHead{Entropy algebra and operational duality.}
\begingroup\AlignmentBlockSetup
\begin{paracol}{2}
\code{src/entropy/pur/conv/prop.v}: \coqi{phi_min_i_scale}, \coqi{phi_min_ii_mono}, \coqi{phi_min_iii_subadd}, \coqi{phi_min_iv_convex}, \coqi{phi_min_v_iso_inv}, \coqi{phi_min_data_proc_A}, \coqi{phi_min_data_proc_B}, \coqi{phi_min_vi_data_proc}, \coqi{phi_min_vii_data_proc_ext}, \coqi{phi_min_viii_tens}
\begin{minted}{coq}
Lemma phi_min_i_scale rhoS lambda :
  rhoS \is psdlf -> 0 <= lambda ->
  phi_min T (lambda%:C *: rhoS) = lambda * phi_min T rhoS.
Lemma phi_min_ii_mono rhoS tauS :
  rhoS \is psdlf -> tauS \is psdlf -> rhoS <= tauS ->
  phi_min T rhoS <= phi_min T tauS.
Lemma phi_min_iii_subadd rhoS tauS :
  rhoS \is psdlf -> tauS \is psdlf ->
  phi_min T (rhoS + tauS) <=
    phi_min T rhoS + phi_min T tauS.
Lemma phi_min_iv_convex rhoS tauS lambda :
  rhoS \is psdlf -> tauS \is psdlf ->
  0 <= lambda <= 1 ->
  phi_min T (lambda%:C *: rhoS +
    (1 - lambda)%:C *: tauS) <=
  lambda * phi_min T rhoS +
    (1 - lambda) * phi_min T tauS.
Lemma phi_min_v_iso_inv rhoAB U V :
  rhoAB \is psdlf ->
  phi_min B' ((U \⊗ V) \o rhoAB \o (U \⊗ V)^A) =
    phi_min B rhoAB.
Lemma phi_min_data_proc_B rhoAB (E : 'QO[H]_(B, B')) :
  rhoAB \is psdlf ->
  phi_min B' (((\:1 : 'SO[H]_A) :⊗ E) rhoAB) <=
    phi_min B rhoAB.
Lemma phi_min_data_proc_A rhoAB (E : 'SO[H]_(A, A')) :
  rhoAB \is psdlf -> E \is cpmap -> E^*o \is tnmap ->
  phi_min B ((E :⊗ (\:1 : 'SO[H]_B)) rhoAB) <=
    phi_min B rhoAB.
Lemma phi_min_vi_data_proc rhoAB E_B E_A :
  rhoAB \is psdlf -> E_A \is cpmap ->
  E_A^*o \is tnmap ->
  phi_min B' (((\:1 : 'SO[H]_A) :⊗ E_B) rhoAB) <=
      phi_min B rhoAB /\
  phi_min B ((E_A :⊗ (\:1 : 'SO[H]_B)) rhoAB) <=
      phi_min B rhoAB.
Lemma phi_min_vii_data_proc_ext rhoABC E :
  rhoABC \is psdlf -> E \is cpmap -> E^*o \is tnmap ->
  phi_min C
    ((((\:1 : 'SO[H]_A) :⊗ E) :⊗
      (\:1 : 'SO[H]_C)) rhoABC) <=
    phi_min C rhoABC.
Lemma phi_min_viii_tens rho1 rho2 :
  rho1 \is psdlf -> rho2 \is psdlf ->
  phi_min (B1 :|: B2) (rho1 \⊗ rho2) =
    phi_min B1 rho1 * phi_min B2 rho2.
\end{minted}
\AlignmentBoundaryLabel
The one-sided declarations \coqi{phi_min_data_proc_A} and
\coqi{phi_min_data_proc_B} are packaged by
\coqi{phi_min_vi_data_proc}; \coqi{phi_min_vii_data_proc_ext} states the
three-register data-processing law.
\switchcolumn
\textbf{Algebra of $\Phi$.}
For positive $K,L,M$, the declarations establish
\[
\begin{gathered}
\Phi_T(\lambda K)=\lambda\Phi_T(K) \quad(\lambda\ge0),\\
K\succeq L\Rightarrow\Phi_T(K)\ge\Phi_T(L),\\
\Phi_T(K+L)\le\Phi_T(K)+\Phi_T(L),\\
\Phi_T(\lambda K+(1-\lambda)L)\\
  \quad\le\lambda\Phi_T(K)
  +(1-\lambda)\Phi_T(L).
\end{gathered}
\]
Local isometries preserve $\Phi$, processing either subsystem cannot
increase it under the displayed map hypotheses, and
\[
\Phi_{B_1B_2}(L\otimes K)
=\Phi_{B_1}(L)\Phi_{B_2}(K).
\]
\end{paracol}
\endgroup
\AlignmentBlockEnd

\AlignmentContinuationLabel{tab:alignment-operational-entropy}
\AlignmentBlockHead{Operational entropy characterizations.}
\begingroup\AlignmentBlockSetup
\begin{paracol}{2}
\code{src/utils/pur.v}: \coqi{pur_vec}, \coqi{purification}, \coqi{max_entangled},
\coqi{gamma_state};
\code{src/entropy/pur/conv/minmax.v}: \coqi{gamma_ABp}, \coqi{min_koenig}, \coqi{min_omni};
\code{src/entropy/pur/conv/minmax\_dual.v}: \coqi{minmax_dual}
\begin{minted}{coq}
Local Definition pur_vec (rho : 'F[HL]_A) : 'H[HL]_S_AB :=
  \sum_(i : 'I_(dim 'H[HL]_A))
    tenv ((sqrtH rho) (eb i)) (eb (cast_ord dim_HA_HB i)).
Definition purification (rho : 'F[HL]_A) : 'F[HL]_S_AB :=
  [> pur_vec rho ; pur_vec rho <].
Definition max_entangled : 'F[HL]_S_AB :=
  purification (\1 : 'F[HL]_A).
Definition gamma_state : 'F[HL]_S_AB :=
  ((dim 'H[HL]_A)%:R^-1)%:C *: max_entangled.
Local Definition gamma_ABp : 'F[H]_S_ABp :=
  @gamma_state R L H A Bp dimABp_set.
Lemma min_koenig (rhoAB : 'F[H]_S_AB) :
  rhoAB \is denlf ->
  hminE B rhoAB = oppe (elog2
    ((dim 'H[H]_A)%:R *
      sup [set v : R | exists E : 'SO[H]_(B, Bp),
        [/\ E \is cptp &
          v = (gfid
            (((\:1 : 'SO[H]_A) :⊗ E) rhoAB)
            gamma_ABp)^+2]])).
Lemma min_omni (rhoAB : 'F[H]_S_AB) :
  rhoAB \is denlf ->
  hminE B rhoAB = oppe (elog2
    ((dim 'H[H]_A)%:R *
      sup [set v : R |
        exists V : 'FI[H]_(B, S_BpBpp),
        exists tau : 'F[H]_Bpp,
        [/\ tau \is den1lf &
          v = (gfid
            (((\1 : 'F[H]_A) \⊗ V) \o rhoAB \o
              (((\1 : 'F[H]_A) \⊗ V)^A))
            (castlf (esym eqABpBpp, esym eqABpBpp)
              ((gamma_ABp \O tau) :
                'F[H]_S_ABpBpp)))^+2]])).
Lemma minmax_dual (rhoABC : 'F[H]_S_ABC) :
  rhoABC \is denlf -> \Rank rhoABC = 1%N ->
  hmaxE B
    (@ptrace_cast3C R L H A B C0 disAC disBC rhoABC) =
  oppe (hminE C0
    (@ptrace_cast3B R L H A B C0 disAB disBC rhoABC)).
\end{minted}
These are extended-real statements over labelled channel-output and
purification spaces.
\switchcolumn
\textbf{Operational min-entropy and min/max duality.}
With $\gamma_{AA'}$ maximally entangled,
\[
\begin{gathered}
H_{\min}(A|B)_\rho
=-\log_2\!\left(d_A\max_{\mathcal E:B\to A'}\right.\\
\left.\bar F\!\left(\mathcal E(\rho_{AB}),\gamma_{AA'}\right)^2\right).
\end{gathered}
\]
Equivalently, dilating the channel gives
\[
\begin{aligned}
H_{\min}(A|B)_\rho
  &=-\log_2\!\left(d_A\max_{V,\tau}\Xi(V,\tau)\right),\\
\Xi(V,\tau)
  &:=\bar F\!\left(
       (I_A\otimes V)\rho\right.\\
  &\quad\left.(I_A\otimes V)^\dagger,
       \gamma_{AA'}\otimes\tau_{B''}\right)^2,\\
&\hspace{1em}V:B\hookrightarrow A'B'',\\
&\hspace{1em}\tau_{B''}\in\mathcal S_=.
\end{aligned}
\]
where $\gamma_{AA'}=d_A^{-1}|\Phi\rangle\!\langle\Phi|$.
For pure $\rho_{ABC}$,
$H_{\max}(A|B)_\rho=-H_{\min}(A|C)_\rho$.
\end{paracol}
\endgroup
\AlignmentBlockEnd

\AlignmentParagraph{Smoothing.}

\AlignmentContinuationLabel{tab:alignment-smooth-infrastructure}
\AlignmentBlockHead{Reusable smooth-entropy calculus.}
\begingroup\AlignmentBlockSetup
\begin{paracol}{2}
\code{src/entropy/pur/smooth/sm.v}: \coqi{smoothing_support_hmin}, \coqi{smoothing_support_hmax};
\code{src/entropy/pur/prop.v}: \coqi{smooth_hmax_attained}
\begin{minted}{coq}
Lemma smoothing_support_hmin (eps : R) (rhoAB : 'F[H]_S_AB) :
  rhoAB \is denlf ->
  0 <= eps < Num.sqrt (@complex.Re R (\Tr rhoAB)) ->
  exists tauAB : 'F[H]_S_AB,
    eps_ball eps rhoAB tauAB /\
    smooth_hminE B eps rhoAB = hminE B tauAB /\
    (supplf (ptrace_cast disAB rhoAB) \O
     supplf (ptrace_castl disAB rhoAB)) \o tauAB = tauAB /\
    tauAB \o (supplf (ptrace_cast disAB rhoAB) \O
     supplf (ptrace_castl disAB rhoAB)) = tauAB.
Lemma smoothing_support_hmax (eps : R) (rhoAB : 'F[H]_S_AB) :
  rhoAB \is denlf ->
  0 <= eps < Num.sqrt (@complex.Re R (\Tr rhoAB)) ->
  exists tauAB : 'F[H]_S_AB,
    eps_ball eps rhoAB tauAB /\
    smooth_hmaxE B eps rhoAB = hmaxE B tauAB /\
    (supplf (ptrace_cast disAB rhoAB) \O
     supplf (ptrace_castl disAB rhoAB)) \o tauAB = tauAB /\
    tauAB \o (supplf (ptrace_cast disAB rhoAB) \O
     supplf (ptrace_castl disAB rhoAB)) = tauAB.
Lemma smooth_hmax_attained eps (rhoS : 'F[H]_S) :
  rhoS \is psdlf -> @complex.Re R (\Tr rhoS) <= 1 ->
  0 <= eps < Num.sqrt (@complex.Re R (\Tr rhoS)) ->
  exists tauS : 'F[H]_S,
    eps_ball eps rhoS tauS /\
    smooth_hmax T eps rhoS = hmax T tauS.
\end{minted}
Fresh purifier and copy-register dimensions instantiate the max-entropy
witness construction.
\switchcolumn
\textbf{Smoothing witnesses and attainment.}
For the displayed smoothing radius, optimal min/max witnesses exist and may
be chosen in the product of the two marginal supports.
\end{paracol}
\endgroup
\AlignmentBlockEnd

\AlignmentContinuationLabel{tab:alignment-smooth-infrastructure-cont}
\AlignmentBlockHead{Smooth invariance, data processing, and classical side information.}
\begingroup\AlignmentBlockSetup
\begin{paracol}{2}
\code{src/entropy/pur/smooth/prop.v}: \coqi{smooth_iso_hmin}, \coqi{smooth_iso_hmax}
\begin{minted}{coq}
Lemma smooth_iso_hmin (eps : R) (rhoAB : 'F[H]_S_AB)
    (U : 'FI[H]_(A, A')) (V : 'FI[H]_(B, B')) :
  rhoAB \is denlf ->
  0 <= eps < Num.sqrt (@complex.Re R (\Tr rhoAB)) ->
  smooth_hminE B' eps ((U \⊗ V) \o rhoAB \o (U \⊗ V)^A) =
  smooth_hminE B eps rhoAB.
Lemma smooth_iso_hmax (eps : R) (rhoAB : 'F[H]_S_AB)
    (U : 'FI[H]_(A, A')) (V : 'FI[H]_(B, B')) :
  rhoAB \is denlf ->
  0 <= eps < Num.sqrt (@complex.Re R (\Tr rhoAB)) ->
  smooth_hmaxE B' eps ((U \⊗ V) \o rhoAB \o (U \⊗ V)^A) =
  smooth_hmaxE B eps rhoAB.
\end{minted}
The declaration is parameterized by its purification-space layout.
\switchcolumn
\textbf{Smooth isometry invariance.}
For isometries $U:A\to A'$ and $V:B\to B'$, both smooth conditional entropies are unchanged by $(U\otimes V)\rho(U\otimes V)^\dagger$.
\switchcolumn*
\code{src/entropy/pur/smooth/dpi.v}: \coqi{entropy_data_proc_map}, \coqi{entropy_data_proc_tau}, \coqi{entropy_data_proc}
\begin{minted}{coq}
Local Definition entropy_data_proc_map
    (E : 'SO[H]_(A, A')) (F : 'SO[H]_(B, B'))
    (G : 'SO[H]_(C0, C0')) : 'SO[H]_(S_ABC, S_A'B'C') :=
  (E :⊗ F) :⊗ G.
Local Definition entropy_data_proc_tau (rhoABC : 'F[H]_S_ABC)
    (E : 'SO[H]_(A, A')) (F : 'SO[H]_(B, B'))
    (G : 'SO[H]_(C0, C0')) : 'F[H]_S_A'B'C' :=
  entropy_data_proc_map E F G rhoABC.
Theorem entropy_data_proc (eps : R) (rhoABC : 'F[H]_S_ABC)
    (E : 'SO[H]_(A, A')) (F : 'SO[H]_(B, B'))
    (G : 'SO[H]_(C0, C0')) :
  rhoABC \is denlf -> 0 <= eps ->
  let tauA'B'C' := entropy_data_proc_tau rhoABC E F G in
  (E \is cptn -> E^*o \is tnmap -> F \is cptn -> G \is cptn ->
    eps < Num.sqrt (@complex.Re R
      (\Tr (ptrace_cast3C disA'C' disB'C' tauA'B'C'))) ->
    smooth_hminE B eps (ptrace_cast3C disAC disBC rhoABC) <=
    smooth_hminE B' eps
      (ptrace_cast3C disA'C' disB'C' tauA'B'C')) /\
  (E \is cptp -> E^*o \is tnmap -> F \is cptp -> G \is cptp ->
    eps < Num.sqrt (@complex.Re R
      (\Tr (ptrace_cast3C disAC disBC rhoABC))) ->
    smooth_hmaxE B eps (ptrace_cast3C disAC disBC rhoABC) <=
    smooth_hmaxE B' eps
      (ptrace_cast3C disA'C' disB'C' tauA'B'C')).
\end{minted}
The theorem is parameterized by labelled disjointness, purification-space
dimensions, the state, the maps, the trace, and the smoothing radius.
\switchcolumn
\textbf{Generalized smooth data processing.}
A sub-unital CPTN map on $A$ together with CPTN maps on $B,C$ gives the smooth-min inequality.
For smooth max-entropy, all three maps are CPTP and the map on $A$ is
sub-unital.
\end{paracol}
\endgroup
\AlignmentBlockEnd

\AlignmentContinuationLabel{tab:alignment-classical-side-information}
\AlignmentBlockHead{Classical side information.}
\begingroup\AlignmentBlockSetup
\begin{paracol}{2}
\code{src/entropy/pur/smooth/class.v}: \coqi{class_bounds_1_min_upper}, \coqi{class_bounds_2_min_lower}
\begin{minted}{coq}
Local Notation classical_on_X rho :=
  (forall (x x' : 'Idx[H]_X), x' != x ->
    (([> deltav x ; deltav x <] \O (\1 : 'F_S_AB)) \o rho \o
     ([> deltav x' ; deltav x' <] \O (\1 : 'F_S_AB))) = 0).
Lemma class_bounds_1_min_upper (eps : R)
    (rhoXAB : 'F[H]_S_XAB) :
  rhoXAB \is denlf -> classical_on_X rhoXAB ->
  0 <= eps < Num.sqrt (@complex.Re R (\Tr rhoXAB)) ->
  (smooth_hminE B eps (ptrace_castl disXAB rhoXAB) <=
   smooth_hminE B eps rhoXAB)%E.
Lemma class_bounds_2_min_lower (eps : R)
    (rhoXAB : 'F[H]_S_XAB) :
  rhoXAB \is denlf -> classical_on_X rhoXAB ->
  0 <= eps < Num.sqrt (@complex.Re R (\Tr rhoXAB)) ->
  (smooth_hminE B eps (ptrace_castl disXAB rhoXAB) -
    (log2 (dim 'H[H]_X)%:R)%:E <=
   smooth_hminE (B :|: X) eps rhoXAB)%E.
\end{minted}
\AlignmentBoundaryLabel
These declarations provide the two min-entropy inequalities used by the proof
graph on the displayed smoothing domain.
\switchcolumn
\textbf{Classical side-information min-entropy bounds.}
For a state classical on $X$, meaning
\[
(|x\rangle\!\langle x|\otimes I)\rho
(|x'\rangle\!\langle x'|\otimes I)=0\quad(x\ne x'),
\]
the reusable bounds are
\[
\begin{aligned}
H_{\min}^{\eps}(A|B)&\le H_{\min}^{\eps}(XA|B)
,\\
H_{\min}^{\eps}(A|BX)&\ge
 H_{\min}^{\eps}(A|B)-\log d_X.
\end{aligned}
\]
\end{paracol}
\endgroup
\AlignmentBlockEnd

\AlignmentContinuationLabel{tab:alignment-class-entangled-max}
\AlignmentBlockHead{Coherent-classical max entropy.}
\begingroup\AlignmentBlockSetup
\begin{paracol}{2}
\code{src/entropy/pur/util.v}: \coqi{cc_idx}, \coqi{cc_pair},
\coqi{cc_proj_xx}, \coqi{class_entangled_max}
\begin{minted}{coq}
Let cc_idx (x : 'Idx[H]_X) : 'Idx[H]_(X') :=
  n2i (cast_ord eqXX' (i2n x)).
Let cc_pair (x : 'Idx[H]_X) : 'H[H]_S_XX' :=
  deltav x ⊗v deltav (cc_idx x).
Let cc_proj_xx : 'F[H]_S_XX' :=
  sumoutp (fun=>1) cc_pair cc_pair.
Lemma class_entangled_max (rhoXX'AB : 'F[H]_S_XX'AB) :
  rhoXX'AB \is denlf ->
  0 < @complex.Re R (\Tr rhoXX'AB) ->
  rhoXX'AB =
    lift_lf (@finset.subsetUl _ S_XX' S_AB) cc_proj_xx \o
      rhoXX'AB \o
    lift_lf (@finset.subsetUl _ S_XX' S_AB) cc_proj_xx ->
  (hmaxE S_X'B rhoXX'AB <= hmaxE S_XX'B rhoXX'AB)%E.
\end{minted}
This declaration gives the coherent-copy max-entropy inequality under the
displayed support and trace hypotheses.
\switchcolumn
\textbf{Coherent-classical max-entropy bound.}
For a state supported on the coherent copy subspace of $XX'$,
\[
\Pi_{XX'}=\sum_x|x,x\rangle\!\langle x,x|,
\qquad \rho=\Pi_{XX'}\rho\Pi_{XX'},
\]
and
\[
H_{\max}(XA|X'B)\le H_{\max}(A|XX'B).
\]
\end{paracol}
\endgroup
\AlignmentBlockEnd

\subsubsection{Trace Entropy}

\AlignmentParagraph{Definitions.}

\AlignmentContinuationLabel{tab:alignment-trace-state-entropy}
\AlignmentBlockHead{Trace-distance balls and state-conditioned entropies.}
\begingroup\AlignmentBlockSetup
\begin{paracol}{2}
\code{src/entropy/tr/def.v}: \coqi{ereal_set}, \coqi{etsupR},
\coqi{ball_tr}, \coqi{hmin_state_tr}, \coqi{hmax_state_trE},
\coqi{hmax_state_trR}, \coqi{hmax0_trE}, \coqi{hmax0_trR};
\code{src/entropy/tr/conv.v}: \coqi{hmax_state_trE_zero},
\coqi{hmax0_trE_zero}
\begin{minted}{coq}
Definition ereal_set (S : set R) : set (\bar R) :=
  [set y : \bar R | exists x : R, S x /\ y = x%:E].
Definition etsupR (S : set R) : \bar R := ereal_sup (ereal_set S).
Definition ball_tr {n} (eps : {nonneg R}) (rho : 'M[C]_n) :=
  [set rho' | (rho' \is psdmx) &&
    ((@complex.Re R (\tr rho') <= @complex.Re R (\tr rho)) &&
     (@complex.Re R (\tr| rho' - rho |) <=
       @complex.Re R (\tr rho) * eps%:num))].
Definition hmin_state_tr {m n : nat}
    (rho : 'M[C]_(m * n)) (sigma : 'M[C]_n) : \bar R :=
  etsupR [set lambda : R |
    rho ⊑ (2 `^ (-lambda))%:C *: (1%:M : 'M[C]_m) *t sigma].
Definition hmax_state_trE {m n : nat}
    (rho : 'M[C]_(m * n)) (sigma : 'M[C]_n) : \bar R :=
  elog2 (@complex.Re R (\tr (((1%:M : 'M[C]_m) *t sigma) *m suppmx rho))).
Definition hmax_state_trR {m n : nat}
    (rho : 'M[C]_(m * n)) (sigma : 'M[C]_n) : R :=
  log2 (@complex.Re R (\tr (((1%:M : 'M[C]_m) *t sigma) *m suppmx rho))).
Definition hmax0_trE (n : nat) (rho : 'M[C]_n) : \bar R :=
  elog2 (\rank rho)%:R.
Definition hmax0_trR (n : nat) (rho : 'M[C]_n) : R :=
  log2 (\rank rho)%:R.
Lemma hmax_state_trE_zero {m n : nat} (sigma : 'M[C]_n) :
  Htr_maxE((0 : 'M[C]_(m * n)) | sigma) = -oo%E.
Lemma hmax0_trE_zero {n : nat} :
  Htr_maxE((0 : 'M[C]_n)) = -oo%E.
\end{minted}
The extended-real variants use \coqi{elog2}; the real variants apply
\coqi{log2} on positive support.
\switchcolumn
\textbf{Trace smoothing and state-conditioned entropy.}
Renner's trace ball and state-conditioned quantities are
\[
\begin{aligned}
B_{\rm tr}^{\eps}(\rho)=\{\rho'\succeq0:
&\ \Tr\rho'\le\Tr\rho,\\
&\ \lVert\rho'-\rho\rVert_1\le\eps\Tr\rho\},
\end{aligned}
\]
\[
\begin{aligned}
H_{\min}(\rho|\sigma)
  &=\sup\{\lambda:\rho\preceq2^{-\lambda}I\otimes\sigma\},\\
H_{\max}(\rho|\sigma)
  &=\overline{\log}_2\Tr[(I\otimes\sigma)\Pi_\rho].
\end{aligned}
\]
At zero support, $H_{\max}(0|\sigma)=H_{\max}(0)=-\infty$.
\end{paracol}
\endgroup
\AlignmentBlockEnd

\AlignmentContinuationLabel{tab:alignment-trace-entropy}
\AlignmentBlockHead{Conditional and smooth trace entropies.}
\begingroup\AlignmentBlockSetup
\begin{paracol}{2}
\code{src/entropy/tr/def.v}: \coqi{hmin_tr}, \coqi{hmax_trE},
\coqi{hmax_trR}, \coqi{smooth_hmin_state_tr},
\coqi{smooth_hmax_state_trE}, \coqi{smooth_hmax_state_trR},
\coqi{smooth_hmin_tr}, \coqi{smooth_hmax_trE}, \coqi{smooth_hmax_trR}
\begin{minted}{coq}
Definition hmin_tr {a b : nat} (rho : 'M[C]_(a * b)) : \bar R :=
  etsupR [set lambda : R | exists sigma_B : 'M[C]_b,
    sigma_B \is den1mx /\
    rho <= (2 `^ (-lambda))%:C *: (1%:M : 'M[C]_a) *t sigma_B].
Definition hmax_trE {a b : nat} (rho : 'M[C]_(a * b)) : \bar R :=
  ereal_sup [set h : \bar R | exists sigma_B : 'M[C]_b,
    sigma_B \is den1mx /\ h = hmax_state_trE rho sigma_B].
Definition hmax_trR {a b : nat} (rho : 'M[C]_(a * b)) : R :=
  ExtNumTopology.etsup [set h : R | exists sigma_B : 'M[C]_b,
    sigma_B \is den1mx /\ h = hmax_state_trR rho sigma_B].
Definition smooth_hmin_state_tr {m n : nat} (eps : {nonneg R})
    (rho : 'M[C]_(m * n)) (sigma : 'M[C]_n) : \bar R :=
  ereal_sup [set x | exists rho',
    rho' \in Btr^eps(rho) /\ x = hmin_state_tr rho' sigma].
Definition smooth_hmax_state_trE {m n : nat} (eps : {nonneg R})
    (rho : 'M[C]_(m * n)) (sigma : 'M[C]_n) : \bar R :=
  ereal_inf [set x : \bar R | exists rho',
    rho' \in Btr^eps(rho) /\ x = hmax_state_trE rho' sigma].
Definition smooth_hmax_state_trR {m n : nat} (eps : {nonneg R})
    (rho : 'M[C]_(m * n)) (sigma : 'M[C]_n) : R :=
  ExtNumTopology.etinf [set x : R | exists rho',
    rho' \in Btr^eps(rho) /\ x = hmax_state_trR rho' sigma].
Definition smooth_hmin_tr {a b : nat} (eps : {nonneg R})
    (rho : 'M[C]_(a * b)) : \bar R :=
  ereal_sup [set x | exists sigma_B : 'M[C]_b,
    sigma_B \is den1mx /\ x = smooth_hmin_state_tr eps rho sigma_B].
Definition smooth_hmax_trE {a b : nat} (eps : {nonneg R})
    (rho : 'M[C]_(a * b)) : \bar R :=
  ereal_sup [set x : \bar R | exists sigma_B : 'M[C]_b,
    sigma_B \is den1mx /\ x = smooth_hmax_state_trE eps rho sigma_B].
Definition smooth_hmax_trR {a b : nat} (eps : {nonneg R})
    (rho : 'M[C]_(a * b)) : R :=
  ExtNumTopology.etsup [set x : R | exists sigma_B : 'M[C]_b,
    sigma_B \is den1mx /\ x = smooth_hmax_state_trR eps rho sigma_B].
\end{minted}
The state-conditioned operands and zero-support laws are displayed above.
\switchcolumn
\textbf{Optimization over conditioning states and trace balls.}
\[
\begin{aligned}
H_{\min}(A|B)_\rho&=\sup_{\sigma_B}H_{\min}(\rho_{AB}|\sigma_B),\\
H_{\max}(A|B)_\rho&=\sup_{\sigma_B}H_{\max}(\rho_{AB}|\sigma_B),
\end{aligned}
\]
and the smooth forms optimize additionally over
$\rho'\in B_{\rm tr}^{\eps}(\rho)$.
\end{paracol}
\endgroup
\AlignmentBlockEnd

\AlignmentParagraph{Attainment.}

\AlignmentContinuationLabel{tab:alignment-trace-attainment}
\AlignmentBlockHead{Trace-entropy attainment and superposition.}
\begingroup\AlignmentBlockSetup
\begin{paracol}{2}
\code{src/entropy/tr/smooth.v}: \coqi{epst_card},
\coqi{smooth_hmin_superposition_vs_mixture},
\coqi{smooth_hmin_state_tr_attained}, \coqi{smooth_hmax_state_tr_attained},
\coqi{smooth_hmin_tr_attained}, \coqi{smooth_hmax_tr_attained}
\begin{minted}{coq}
Definition epst_card (eps : {nonneg R}) (x : nat) : {nonneg R} :=
  ((eps%:num ^+ 2 / (6 * x)%:R)%:nng : {nonneg R}).
Lemma smooth_hmin_superposition_vs_mixture {a b e x : nat}
    (eps : {nonneg R}) (psi_x : 'I_x -> 'M[C]_(a * b * e, 1))
    (sigma_BX : 'M[C]_(b * x)) :
  (0 < x)%N -> (a * b <= e)%N ->
  (forall i j, i != j -> (psi_x i)^*t *m psi_x j = 0) ->
  sigma_BX \is psdmx ->
  let psi := \sum_i psi_x i in
  let rho_ABE := psi *m psi^*t in
  let rhot_ABX := \sum_i
    ((ptrace2 (psi_x i *m (psi_x i)^*t)) *t delta_mx i i) in
  let rho_AB := ptrace2 rho_ABE in
  let rhot_X := ptrace1 rhot_ABX in
  let sigma_B := ptrace2 sigma_BX in
  (Htr_min^eps(rho_AB | sigma_B) >=
   Htr_min^(epst_card eps x)
     (castmx (esym (mulnA a b x), esym (mulnA a b x)) rhot_ABX |
      sigma_BX) - Htr_maxE(rhot_X))%E.
Lemma smooth_hmin_state_tr_attained eps rho sigma :
  rho \is psdmx -> sigma \is psdmx ->
  exists2 rhob, rhob \in Btr^eps(rho) &
    Htr_min(rhob | sigma) = Htr_min^eps(rho | sigma).
Lemma smooth_hmax_state_tr_attained eps rho sigma :
  rho \is psdmx -> sigma \is psdmx ->
  exists2 rhob, rhob \in Btr^eps(rho) &
    Htr_maxE(rhob | sigma) = Htr_maxE^eps(rho | sigma).
Lemma smooth_hmin_tr_attained {a b} eps (rho : 'M[C]_(a * b)) :
  (0 < b)%N -> rho \is psdmx ->
  exists2 sigma, sigma \is den1mx &
    Htr_min^eps(rho | sigma) = Htr_min^eps(rho |^ b).
Lemma smooth_hmax_tr_attained {a b} eps (rho : 'M[C]_(a * b)) :
  (0 < b)%N -> rho \is psdmx ->
  exists2 sigma, sigma \is den1mx &
    Htr_maxE^eps(rho | sigma) = Htr_maxE^eps(rho |^ b).
\end{minted}
The purifier premise $ab\le e$ is explicit, and zero summands are permitted.
\switchcolumn
\textbf{Attainment and superposition versus mixture.}
For nonempty $X$, orthogonal $\{|\psi^x\rangle\}_{x\in X}$, a positive
$\sigma_{BX}$, and matrix dimensions $ab\le e$, let
$|\psi\rangle=\sum_x|\psi^x\rangle$ and attach a classical label $X$ to
the incoherent mixture. Then
\[
\begin{aligned}
H_{\min}^{\eps}(AB|\sigma_B)_\rho
  &\ge H_{\min}^{\eps_t}(ABX|\sigma_{BX})_{\widetilde\rho}\\
  &\quad     -H_{\max}(X)_{\widetilde\rho},\\
\eps_t&=\eps^2/(6|X|).
\end{aligned}
\]
All finite-dimensional smooth extrema are attained.
\end{paracol}
\endgroup
\AlignmentBlockEnd

\AlignmentParagraph{Calculus.}

\AlignmentContinuationLabel{tab:alignment-trace-calculus}
\AlignmentBlockHead{Reusable trace-entropy calculus: superadditivity, subadditivity, classical conditioning.}
\begingroup\AlignmentBlockSetup
\begin{paracol}{2}
\code{src/entropy/tr/def.v}: \coqi{addnng_tr};
\code{src/utils/mx/def.v}: \coqi{tensor_prod_split};
\code{src/entropy/tr/smooth.v}:
\coqi{smooth_hmin_conditional_superadditive},
\coqi{smooth_hmin_conditional_strong_subadditive},
\coqi{smooth_hmin_conditional_classical}
\begin{minted}{coq}
Definition addnng_tr (x y : {nonneg R}) : {nonneg R} :=
  ((x%:num + y%:num)%:nng : {nonneg R}).
Definition tensor_prod_split (A : 'M[C]_(m * n))
    (B : 'M[C]_(m' * n')) : 'M[C]_((m * m') * (n * n')) :=
  perm_conj (split_perm m n m' n')
    (castmx_sq (mulnACA m n m' n') (A *t B)).
Lemma smooth_hmin_conditional_superadditive {a b a' b' : nat}
    (eps' : {nonneg R}) (rho_AB : 'M[C]_(a * b))
    (rho_A'B' : 'M[C]_(a' * b')) :
  rho_AB \is psdmx -> rho_A'B' \is psdmx ->
  Htr_min^(addnng_tr eps eps')
    (tensor_prod_split rho_AB rho_A'B' |^ (b * b')) >=
  Htr_min^eps(rho_AB |^ b) + Htr_min^eps'(rho_A'B' |^ b').
Lemma smooth_hmin_conditional_strong_subadditive {a b c : nat}
    (rho_ABC : 'M[C]_(a * b * c)) :
  Htr_min^eps(castmx
    (esym (mulnA a b c), esym (mulnA a b c)) rho_ABC |^ (b * c)) <=
    Htr_min^eps(ptrace2 rho_ABC |^ b).
Lemma smooth_hmin_conditional_classical {a b z : nat}
    (rho_ABz : 'I_z -> 'M[C]_(a * b)) :
  let rho_ABZ := \sum_i (rho_ABz i *t delta_mx i i) in
  rho_ABZ \is den1mx -> (forall i, rho_ABz i \is psdmx) ->
  (forall i, rho_ABz i != 0) ->
  let rhob i := ((@complex.Re R (\tr (rho_ABz i)))^-1)%:C *: rho_ABz i in
  Htr_min^eps(castmx
    (esym (mulnA a b z), esym (mulnA a b z)) rho_ABZ |^ (b * z)) >=
    ereal_inf [set h | exists i, h = Htr_min^eps(rhob i |^ b)].
\end{minted}
\switchcolumn
\textbf{Smooth conditional min-entropy laws.}
The calculus contains tensor superadditivity, strong subadditivity,
classical conditioning, the chain rule, and classical-product data
processing. In particular,
\[
H_{\min}^{\eps+\eps'}(AA'|BB')
\ge H_{\min}^{\eps}(A|B)+H_{\min}^{\eps'}(A'|B'),
\]
\[
H_{\min}^{\eps}(A|BC)
\le H_{\min}^{\eps}(A|B).
\]
\end{paracol}
\endgroup
\AlignmentBlockEnd

\AlignmentContinuationLabel{tab:alignment-trace-calculus-chain}
\AlignmentBlockHead{Reusable trace-entropy calculus: chain rule and data processing.}
\begingroup\AlignmentBlockSetup
\begin{paracol}{2}
\code{src/entropy/tr/smooth.v}:
\coqi{smooth_hmin_conditional_chain},
\coqi{smooth_hmin_conditional_data_processing}, \coqi{smooth_hmin_calculus}
\begin{minted}{coq}
Lemma smooth_hmin_conditional_chain {a b c : nat}
    (rho_ABC : 'M[C]_(a * b * c)) :
  rho_ABC \is psdmx -> rho_ABC != 0 ->
  Htr_min^eps(rho_ABC |^ c) <=
    Htr_min^eps(castmx (esym (mulnA a b c), esym (mulnA a b c))
      rho_ABC |^ (b * c)) + Htr_maxE(ptrace1 (ptrace2 rho_ABC)).
Theorem smooth_hmin_conditional_data_processing {a y c : nat}
    (rho_Ay : 'I_y -> 'M[C]_a)
    (rho_Cy : 'I_y -> 'M[C]_c)
    (HrhoA : forall i, rho_Ay i \is psdmx)
    (HrhoC : forall i, rho_Cy i \is psdmx) :
  let rho_YAC := \sum_i ((delta_mx i i) *t ((rho_Ay i) *t (rho_Cy i))) in
  let rho_YC := \sum_i (\tr (rho_Ay i)) *: (delta_mx i i) *t (rho_Cy i) in
  let rho_AY := \sum_i (\tr (rho_Cy i)) *: (rho_Ay i) *t (delta_mx i i) in
  let rho_Y := \sum_i (\tr (rho_Ay i)) * (\tr (rho_Cy i)) *: (delta_mx i i) in
  (Htr_min^eps(castmx (mulnA y a c, mulnA y a c) rho_YAC |^ c) >=
  Htr_min^eps(rho_YC |^ c) + Htr_min(rho_AY | rho_Y))%E.
\end{minted}
\coqi{smooth_hmin_calculus} conjoins these five reusable statements.
\switchcolumn
\textbf{Chain rule and data processing.}
\[
\begin{aligned}
H_{\min}^{\eps}(AB|C)
&\le H_{\min}^{\eps}(A|BC)\\
&\quad+H_{\max}(B).
\end{aligned}
\]
For the classical-product law, writing
$\rho_{YAC}=\sum_y |y\rangle\!\langle y|\otimes
\rho_A^y\otimes\rho_C^y$, data processing gives
\[
H_{\min}^{\eps}(YAC|C)
\ge H_{\min}^{\eps}(YC|C)+H_{\min}(AY|Y).
\]
\end{paracol}
\endgroup
\AlignmentBlockEnd

\AlignmentParagraph{AEP.}

\AlignmentContinuationLabel{tab:alignment-trace-aep}
\AlignmentBlockHead{Classical finite-block AEP.}
\begingroup\AlignmentBlockSetup
\begin{paracol}{2}
\code{src/entropy/tr/def.v}: \coqi{classical_Y_marginal_tr},
\coqi{classical_cond_prob_tr}, \coqi{classical_cond_surprisal_tr},
\coqi{classical_cond_entropy_tr};
\code{src/entropy/tr/prod.v}: \coqi{classical_aep}
\begin{minted}{coq}
Definition classical_Y_marginal_tr P j := \sum_i P i j.
Definition classical_cond_prob_tr P i j :=
  if P i j == 0 then 1 else P i j / classical_Y_marginal_tr P j.
Definition classical_cond_surprisal_tr P i j :=
  - log2 (classical_cond_prob_tr P i j).
Definition classical_cond_entropy_tr P :=
  \sum_j \sum_i P i j * classical_cond_surprisal_tr P i j.
Lemma classical_aep {x y : nat} (n : nat)
    (P : 'I_x -> 'I_y -> R)
    (HP_nonneg : forall i j, 0 <= P i j)
    (HP_prob : \sum_i \sum_j P i j = 1) :
  (0 < x)%N -> (0 < n)%N ->
  forall eps : {nonneg R}, 0 < eps%:num -> eps%:num <= 1 ->
  let rho_XY_n := \sum_(sy : n.-tuple 'I_y) \sum_(sx : n.-tuple 'I_x)
    ((\prod_(k < n) P (tnth sx k) (tnth sy k))%:C *:
      (delta_mx (enum_rank sx) (enum_rank sx) *t
       delta_mx (enum_rank sy) (enum_rank sy))) in
  let rho_Y_n := \sum_(sy : n.-tuple 'I_y)
    ((\prod_(k < n) \sum_i P i (tnth sy k))%:C *:
      delta_mx (enum_rank sy) (enum_rank sy)) in
  let delta := log2 (x%:R + 3%:R) *
    Num.sqrt (2%:R * log2 (eps%:num^-1) / n%:R) in
  let Hc := classical_cond_entropy_tr P in
  (((n%:R)^-1)%:E * Htr_maxE^eps(rho_XY_n | rho_Y_n) <=
    (Hc + delta)%:E)%E /\
  (((n%:R)^-1)%:E * Htr_min^eps(rho_XY_n | rho_Y_n) >=
    (Hc - delta)%:E)%E.
\end{minted}
Zero-weight pairs contribute zero to the conditional entropy.
\switchcolumn
\textbf{Classical i.i.d. pairs.}
For a probability table $P_{XY}$, $|X|,n>0$, and $0<\eps\le1$,
\[
\begin{aligned}
n^{-1}H_{\min}^{\eps}(X^n|Y^n)&\ge H(X|Y)-\delta,\\
n^{-1}H_{\max}^{\eps}(X^n|Y^n)&\le H(X|Y)+\delta,
\end{aligned}
\]
where $\delta=\log_2(|X|+3)\sqrt{2\log_2(1/\eps)/n}$.
\end{paracol}
\endgroup
\AlignmentBlockEnd

\AlignmentContinuationLabel{tab:alignment-trace-aep-family}
\AlignmentBlockHead{Non-identical classical-family AEP.}
\begingroup\AlignmentBlockSetup
\begin{paracol}{2}
\code{src/entropy/tr/prod.v}: \coqi{classical_aep_family}
\begin{minted}{coq}
Lemma classical_aep_family {x y : nat} (n : nat)
    (P : 'I_n -> 'I_x -> 'I_y -> R) :
  (forall k i j, 0 <= P k i j) ->
  (forall k, \sum_i \sum_j P k i j = 1) ->
  (0 < x)%N -> (0 < n)%N ->
  forall eps : {nonneg R}, 0 < eps%:num -> eps%:num <= 1 ->
  let rho_XY_n := \sum_(sy : n.-tuple 'I_y)
    \sum_(sx : n.-tuple 'I_x)
      ((\prod_(k < n) P k (tnth sx k) (tnth sy k))%:C *:
        (delta_mx (enum_rank sx) (enum_rank sx) *t
         delta_mx (enum_rank sy) (enum_rank sy))) in
  let rho_Y_n := \sum_(sy : n.-tuple 'I_y)
    ((\prod_(k < n) \sum_i P k i (tnth sy k))%:C *:
      delta_mx (enum_rank sy) (enum_rank sy)) in
  let delta := log2 (x%:R + 3%:R) *
    Num.sqrt (2%:R * log2 (eps%:num^-1) / n%:R) in
  let Hcs := \sum_(k < n) classical_cond_entropy_tr (P k) in
  (((n%:R)^-1)%:E * Htr_maxE^eps(rho_XY_n | rho_Y_n) <=
    (n%:R^-1 * Hcs + delta)%:E)%E /\
  (((n%:R)^-1)%:E * Htr_min^eps(rho_XY_n | rho_Y_n) >=
    (n%:R^-1 * Hcs - delta)%:E)%E.
\end{minted}
This extends the finite-block pair bound to independent families.
\switchcolumn
\textbf{Independent non-identical pairs.}
For $P_i(X_i,Y_i)$ under the same finite-block domain,
\[
\begin{aligned}
n^{-1}H_{\min}^{\eps}(X^n|Y^n)
  &\ge n^{-1}\sum_{i<n}H(X_i|Y_i)-\delta,\\
n^{-1}H_{\max}^{\eps}(X^n|Y^n)
  &\le n^{-1}\sum_{i<n}H(X_i|Y_i)+\delta.
\end{aligned}
\]
\end{paracol}
\endgroup
\AlignmentBlockEnd

\AlignmentContinuationLabel{tab:alignment-trace-aep-quantum}
\AlignmentBlockHead{Finite-block quantum product AEP.}
\begingroup\AlignmentBlockSetup
\begin{paracol}{2}
\code{src/utils/mx/def.v}: \coqi{bigten};
\code{src/entropy/tr/prod.v}:
\coqi{smooth_min_entropy_of_product_entropy_rank},
\coqi{smooth_min_entropy_of_product_classical_entropy_rank}
\begin{minted}{coq}
Theorem smooth_min_entropy_of_product_entropy_rank
    (a b n : nat) (rho_AB : 'M[C]_(a * b))
    (sigma_B : 'M[C]_b) (eps : {nonneg R}) :
  (0 < n)%N -> 0 < eps%:num -> eps%:num <= 2%:R ->
  rho_AB \is den1mx -> sigma_B \is psdmx ->
  \tr sigma_B = 1 -> (\rank sigma_B = b)%N ->
  4%:R * log2 (2%:R / eps%:num) <= n%:R ->
  let gamma := (\rank (ptrace2 rho_AB))%:R +
    @complex.Re R (\tr (rho_AB *m rho_AB *m
      invmx ((1%:M : 'M[C]_a) *t sigma_B))) in
  let delta := 2%:R * log2 (gamma + 2) *
    Num.sqrt (log2 (eps%:num^-1) / n%:R + 1) in
  let rho_AB_n := castmx_sq (esym (mul1n ((a * b) ^ n)%N))
    (bigten rho_AB n) in
  let sigma_B_n := bigten ((1%:M : 'M[C]_a) *t sigma_B) n in
  ((n%:R^-1)%:E * Htr_min^eps(rho_AB_n | sigma_B_n) >=
    (H(rho_AB) - H(ptrace1 rho_AB) -
      fine (D(ptrace1 rho_AB || sigma_B)) - delta)%:E)%E.
Corollary smooth_min_entropy_of_product_classical_entropy_rank
    (x b n : nat) (rho_Bx : 'I_x -> 'M[C]_b)
    (eps : {nonneg R}) :
  let rho_XB := \sum_i (delta_mx i i *t rho_Bx i) in
  let rho_B := ptrace1 rho_XB in
  let rho_X := ptrace2 rho_XB in
  (0 < n)%N -> 0 < eps%:num -> eps%:num <= 2%:R ->
  rho_XB \is den1mx -> (forall i, rho_Bx i \is psdmx) ->
  (\rank rho_B = b)%N -> (2 <= \rank rho_X)%N ->
  4%:R * log2 (2%:R / eps%:num) <= n%:R ->
  let delta := (2%:R * Htr_maxR(rho_X) + 3%:R) *
    Num.sqrt (log2 (eps%:num^-1) / n%:R + 1) in
  let rho_XB_n := castmx_sq (esym (mul1n ((x * b) ^ n)%N))
    (bigten rho_XB n) in
  let rho_B_n := bigten ((1%:M : 'M[C]_x) *t rho_B) n in
  ((n%:R^-1)%:E * Htr_min^eps(rho_XB_n | rho_B_n) >=
    (H(rho_XB) - H(rho_B) - delta)%:E)%E.
\end{minted}
The theorem states the finite-block threshold and rank premises explicitly.
\switchcolumn
\textbf{Quantum product states.}
For a density operator $\rho_{AB}$ and a full-rank density operator
$\sigma_B$, assume $n>0$, $0<\eps\le2$, and
$4\log_2(2/\eps)\le n$. Then
\[
\begin{aligned}
n^{-1}&H_{\min}^{\eps}
 (\rho_{AB}^{\otimes n}\\
 &\quad|(I_A\otimes\sigma_B)^{\otimes n})\\
\ge{}&H(\rho_{AB})-H(\rho_B)\\
&-D(\rho_B\Vert\sigma_B)-\delta_q,
\end{aligned}
\]
where
$\delta_q=2\log_2(\gamma+2)
\sqrt{\log_2(1/\eps)/n+1}$ and
$\gamma=\operatorname{rank}\rho_A+
\Tr[\rho_{AB}^2(I_A\otimes\sigma_B)^{-1}]$.
The cq specialization sets $\sigma_B=\rho_B$ and assumes
$\operatorname{rank}\rho_B=d_B$ and $\operatorname{rank}\rho_X\ge2$.
\end{paracol}
\endgroup
\AlignmentBlockEnd

\subsection{Uncertainty}
\label{app:semantics-alignment-eur}

\subsubsection{Measurements}

\AlignmentParagraph{Bases and overlaps.}

\AlignmentContinuationLabel{tab:alignment-eur-objects}
\AlignmentBlockHead{Measurement objects used by the entropic uncertainty relation.}
\begingroup\AlignmentBlockSetup
\begin{paracol}{2}
\code{src/eur/def.v}: \coqi{xbasis_vec2}, \coqi{x_projective_povm2}
\begin{minted}{coq}
Definition xbasis_vec2 (x : 'I_2) : 'cV[C]_2 :=
  \col_z xbasis_sign2 x z.
Definition x_projective_povm2 (x : 'I_2) : 'M[C]_2 :=
  (((2%:R)^-1)%:C) *: (xbasis_vec2 x *m (xbasis_vec2 x)^*t).
\end{minted}
\switchcolumn
\textbf{Two-dimensional $\mathbb X$ basis.}
For $x\in\{0,1\}$,
$v_x=((-1)^{xz})_z$ and
$M_x=\tfrac12v_xv_x^\dagger$.
\switchcolumn*
\code{src/eur/def.v}: \coqi{overlap_mx}, \coqi{q_of_mx}, \coqi{rho_XB}, \coqi{rho_ZC}, \coqi{eur_rho_XB}, \coqi{eur_rho_ZC}
\begin{minted}{coq}
Definition overlap_mx (M : 'I_nx -> 'M[C]_na)
    (N : 'I_nz -> 'M[C]_na) : R :=
  \big[maxr/0]_(x < nx)
    \big[maxr/0]_(z < nz)
      (@complex.Re R
        (i2norm (sqrtmx (M x) *m sqrtmx (N z)))) ^ 2.
Definition q_of_mx (M : 'I_nx -> 'M[C]_na)
    (N : 'I_nz -> 'M[C]_na) : R :=
  - log2 (overlap_mx M N).
Definition rho_XB (M : 'I_nx -> 'M[C]_na)
    (rhoABC : 'M[C]_(na * nb * nc)) : 'M[C]_(nx * nb) :=
  \sum_(x < nx) (delta_mx x x *t
    ptrace1 (ptrace2
      ((sqrtmx (M x) *t 1%:M *t 1%:M) *m rhoABC *m
       (sqrtmx (M x) *t 1%:M *t 1%:M)))).
Definition rho_ZC (N : 'I_nz -> 'M[C]_na)
    (rhoABC : 'M[C]_(na * nb * nc)) : 'M[C]_(nz * nc) :=
  \sum_(z < nz) (delta_mx z z *t
    ptrace1
      ((sqrtmx (N z) *t 1%:M *t 1%:M) *m rhoABC *m
       (sqrtmx (N z) *t 1%:M *t 1%:M))).
Definition eur_rho_XB disXB rhoSrc M dim_eq : 'F[HL]_(X :|: B) :=
  let Q := castmx (tenv_dim disXB, erefl) tenv_mxU in
  let eq_prod := etrans dim_eq (tenv_dim disXB) in
  mx2h (castmx (esym (tenv_dim disXB), esym (tenv_dim disXB))
    (Q *m castmx (eq_prod, eq_prod) (rho_XB M rhoSrc) *m Q^*t)).
Definition eur_rho_ZC disZC rhoSrc N dim_eq : 'F[HL]_(Z :|: C0) :=
  let Q := castmx (tenv_dim disZC, erefl) tenv_mxU in
  let eq_prod := etrans dim_eq (tenv_dim disZC) in
  mx2h (castmx (esym (tenv_dim disZC), esym (tenv_dim disZC))
    (Q *m castmx (eq_prod, eq_prod) (rho_ZC N rhoSrc) *m Q^*t)).
\end{minted}
The \coqi{eur_rho_XB} and \coqi{eur_rho_ZC} adapters transport the
matrix states into labelled registers using the displayed dimension equalities.
\switchcolumn
\textbf{Measurement overlap and post-measurement states.}
For POVMs $M=\{M_x\}$ and $N=\{N_z\}$,
\[
\begin{aligned}
c(M,N)&=\max_{x,z}
  \lVert\sqrt{M_x}\sqrt{N_z}\rVert_\infty^2,\\
q(M,N)&=-\log_2 c(M,N).
\end{aligned}
\]
The cq operands are
\[
\rho_{XB}=\sum_x|x\rangle\!\langle x|\otimes
 \Tr_{AC}\!\left[\sqrt{M_x}\rho_{ABC}\sqrt{M_x}\right],
\]
and analogously $\rho_{ZC}$ with $N_z$ and trace over $AB$.
\end{paracol}
\endgroup
\AlignmentBlockEnd

\subsubsection{Main Relation}

\AlignmentParagraph{Generic theorem.}

\AlignmentContinuationLabel{tab:alignment-eur}
\AlignmentBlockHead{Generic entropic uncertainty relation.}
\begingroup\AlignmentBlockSetup
\begin{paracol}{2}
\code{src/eur/eur.v}: \coqi{entropic_uncertainty_relation}
\begin{minted}{coq}
Theorem entropic_uncertainty_relation
    (L : finType) (HL : L -> chsType)
    (X B Z C0 : {set L})
    (disXB : [disjoint X & B])
    (disZC : [disjoint Z & C0])
    (na nb nc nx nz : nat)
    (eps : R)
    (rhoABC : 'M[C]_(na * nb * nc))
    (M : 'I_nx -> 'M[C]_na)
    (N : 'I_nz -> 'M[C]_na)
    (dimXB : nx * nb = dim ('H[HL]_(X :|: B)))
    (dimZC : nz * nc = dim ('H[HL]_(Z :|: C0)))
    (dim_X : nx = dim ('H[HL]_X))
    (dim_B : nb = dim ('H[HL]_B))
    (dim_Z : nz = dim ('H[HL]_Z))
    (dim_C : nc = dim ('H[HL]_C0)) :
  rhoABC \is denmx -> is_povm_mx M -> is_povm_mx N ->
  0 <= eps -> eps < Num.sqrt (@complex.Re R (\tr rhoABC)) ->
  smooth_hmin B eps (eur_rho_XB disXB rhoABC M dimXB) +
  smooth_hmax C0 eps (eur_rho_ZC disZC rhoABC N dimZC) >=
    q_of_mx M N.
\end{minted}
\AlignmentBoundaryLabel
The declaration makes density, POVM, dimension, and the strict smoothing
domain $0\le\eps<\sqrt{\Tr\rho}$ explicit.
The measurement objects are expanded in \Cref{tab:alignment-eur-objects}.
\switchcolumn
\textbf{Generic entropic uncertainty relation.}
For disjoint output/side-information registers and dimensions satisfying
\[
\begin{gathered}
X\cap B=\varnothing,\qquad Z\cap C=\varnothing,\\
n_Xn_B=\dim\mathcal H_{X\cup B},\\
n_Zn_C=\dim\mathcal H_{Z\cup C}.
\end{gathered}
\]
with $n_X=d_X$, $n_B=d_B$, $n_Z=d_Z$, and $n_C=d_C$, let
$\rho_{ABC}$ be subnormalized, let $M,N$ be POVMs, and require
$0\le\eps<\sqrt{\Tr\rho_{ABC}}$.
Then
\[
  H_{\min}^{\eps}(X|B)_{\rho_{XB}}
  +H_{\max}^{\eps}(Z|C)_{\rho_{ZC}}
  \ge q(M,N).
\]
\end{paracol}
\endgroup
\AlignmentBlockEnd

\AlignmentParagraph{BB84 specialization.}

\AlignmentContinuationLabel{tab:alignment-eur-bb84}
\AlignmentBlockHead{BB84 entropic uncertainty endpoint.}
\begingroup\AlignmentBlockSetup
\begin{paracol}{2}
\code{src/qkd/bb84/model.v}: \coqi{entropic_uncertainty_relation_bb84}
\begin{minted}{coq}
Corollary entropic_uncertainty_relation_bb84 (S0 E0 Zkey B : {set L})
    (disSE0 : [disjoint S0 & E0])
    (disZB : [disjoint Zkey & B])
    (n_key : nat) (eps' q : R)
    (na nb nc nx nz : nat)
    (rhoSrc : 'M[C]_(na * nb * nc))
    (M : 'I_nx -> 'M[C]_na) (N : 'I_nz -> 'M[C]_na)
    (dimSE : nx * nb = dim ('H[HL]_(S0 :|: E0)))
    (dimZB : nz * nc = dim ('H[HL]_(Zkey :|: B)))
    (dim_S : nx = dim ('H[HL]_S0))
    (dim_E : nb = dim ('H[HL]_E0))
    (dim_Zk : nz = dim ('H[HL]_Zkey))
    (dim_B : nc = dim ('H[HL]_B)) :
  rhoSrc \is denmx ->
  0 <= eps' -> eps' < Num.sqrt (@complex.Re R (\tr rhoSrc)) ->
  is_povm_mx M -> is_povm_mx N ->
  n_key%:R * q <= q_of_mx M N ->
  smooth_hmin E0 eps' (eur_rho_XB disSE0 rhoSrc M dimSE) +
  smooth_hmax B eps' (eur_rho_ZC disZB rhoSrc N dimZB)
    >= n_key%:R * q.
\end{minted}
\AlignmentBoundaryLabel
The generic theorem in \Cref{tab:alignment-eur} concludes a one-block
\coqi{q_of_mx} bound; this corollary is the paper's $nq$ form and makes the
overlap lower bound explicit.
The predicate \coqi{is_povm_mx} is expanded in
\Cref{tab:alignment-source-certificate}.
The extension and purification/DPI bridges used by the generic endpoint are expanded in \Cref{tab:alignment-eur-bridges}.
\switchcolumn
\textbf{Entropic uncertainty relation (\Cref{thm:eur}).}
For disjoint register pairs
\[
S\perp E,\qquad Z\perp B,
\]
assume
\[
\begin{gathered}
n_Xn_E=\dim\mathcal H_{S\cup E},\\
n_Zn_B=\dim\mathcal H_{Z\cup B},\\
n_X=\dim\mathcal H_S,\quad n_E=\dim\mathcal H_E,\\
n_Z=\dim\mathcal H_Z,\quad n_B=\dim\mathcal H_B.
\end{gathered}
\]
For a subnormalized $\rho_{ABC}$, POVMs $M,N$, and
$0\le\eps'<\sqrt{\Tr\rho_{ABC}}$, require
$n_{\rm key}q\le q(M,N)$.
Then
$\Hmineps(S\given E)+\Hmaxeps(Z\given B)\ge n_{\rm key}q$.
\end{paracol}
\endgroup
\AlignmentBlockEnd

\subsubsection{Entropy Bridges}

\AlignmentParagraph{Extensions and purification.}

\AlignmentContinuationLabel{tab:alignment-eur-bridges}
\AlignmentBlockHead{Reusable EUR extension and purification bridges.}
\begingroup\AlignmentBlockSetup
\begin{paracol}{2}
\code{src/entropy/pur/def.v}: \coqi{hmax_mx}, \coqi{smooth_hmax_mx};
\code{src/eur/def.v}: \coqi{purif_reshape_eq}, \coqi{purif_reshape};
\code{src/eur/prop.v}: \coqi{eur_ii}, \coqi{smooth_hmax_mx_rho_ZC_purif_dpi}
\begin{minted}{coq}
Definition hmax_mx (rho : 'M[C]_(na * nb)) : R :=
  sup [set v : R | exists sigma : 'M[C]_nb,
    sigma \is den1mx /\
    elog2 (na%:R *
      (gfidelity rho (mmix_mx na *t sigma)) ^+ 2) = v%:E].
Definition smooth_hmax_mx (eps : R)
    (rho : 'M[C]_(na * nb)) : R :=
  inf [set v : R | exists tau : 'M[C]_(na * nb),
    ((tau \is denmx) && (0 < @complex.Re R (\tr tau))) &&
    (pur_dist_mx tau rho <= eps) /\ v = hmax_mx tau].
Definition purif_reshape_eq (na nb nc nd : nat) :
    (na * nb * nc * nd = na * nb * (nc * nd))%N :=
  esym (mulnA (na * nb) nc nd).
Definition purif_reshape rho :=
  castmx (purif_reshape_eq na nb nc nd,
          purif_reshape_eq na nb nc nd) rho.
Lemma eur_ii (eps : R) (rhoA : 'M[C]_na)
    (rhoAB : 'M[C]_(na * nb)) (tauA : 'M[C]_na) :
  rhoAB \is denmx -> rhoAB != 0 ->
  ptrace2 rhoAB = rhoA ->
  (tauA \is denmx) &&
    (Num.sqrt (1 - (gfidelity tauA rhoA)^2) <= eps) ->
  (exists tauAB : 'M[C]_(na * nb),
      tauAB \is denmx /\ ptrace2 tauAB = tauA /\
      pur_dist_mx rhoAB tauAB <= eps) /\
  (\rank rhoAB = 1%N -> (\rank tauA <= nb)%N ->
   tauA != 0 -> exists tauAB : 'M[C]_(na * nb),
     tauAB \is denmx /\ \rank tauAB = 1%N /\
     ptrace2 tauAB = tauA /\ pur_dist_mx rhoAB tauAB <= eps).
Lemma smooth_hmax_mx_rho_ZC_purif_dpi
    (eps : R) (N : 'I_nz -> 'M[C]_na)
    (rhoABC : 'M[C]_(na * nb * nc))
    (rhoABCD : 'M[C]_(na * nb * nc * nd)) :
  rhoABCD \is denmx -> rhoABCD != 0 ->
  ptrace2 rhoABCD = rhoABC -> is_povm_mx N ->
  0 <= eps -> eps < Num.sqrt (@complex.Re R (\tr rhoABCD)) ->
  smooth_hmax_mx eps (rho_ZC N (purif_reshape rhoABCD)) <=
  smooth_hmax_mx eps (rho_ZC N rhoABC).
\end{minted}
Here \coqi{gfidelity} and \coqi{pur_dist_mx} are expanded in
\Cref{tab:alignment-entropy}, while \coqi{rho_ZC} is expanded in
\Cref{tab:alignment-eur-objects} and \coqi{mmix_mx} in
\Cref{tab:alignment-bb84-projector}.
\switchcolumn
\textbf{Close-extension lifting and mixed-state $H_{\max}$ data processing.}
If $\rho_{AB}$ extends $\rho_A$ and
$P(\tau_A,\rho_A)\le\eps$, there is an extension $\tau_{AB}$ with
$P(\rho_{AB},\tau_{AB})\le\eps$.
If $\rho_{AB}$ is pure and
$\operatorname{rank}\tau_A\le\dim B$, the extension may be pure.
For a purification $\rho_{ABCD}$ of $\rho_{ABC}$,
\[
H_{\max}^{\eps}(Z|CD)_{\rho_{ABCD}}
\le H_{\max}^{\eps}(Z|C)_{\rho_{ABC}}.
\]
\end{paracol}
\endgroup
\AlignmentBlockEnd

\subsection{Hashing and Sampling}
\label{app:semantics-alignment-hashing}

\subsubsection{Hash Families}

\AlignmentParagraph{Construction and composition.}

\AlignmentContinuationLabel{tab:alignment-hash-constructions}
\AlignmentBlockHead{Reusable hash-family constructions.}
\begingroup\AlignmentBlockSetup
\begin{paracol}{2}
\code{src/lhl/construction.v}: \coqi{allfun_hash}, \coqi{allfun_collision_cardM}, \coqi{allfun_two_univ}, \coqi{reindex_delta_almost_two_univ}, \coqi{two_univ_transport}
\begin{minted}{coq}
Definition allfun_hash (f : {ffun X -> Z}) (x : X) : Z := f x.
Lemma allfun_collision_cardM (x x' : X) : x != x' ->
  (#|[set f : {ffun X -> Z} | f x == f x']| * #|Z|)%N =
  #|{ffun X -> Z}|.
Theorem allfun_two_univ : (0 < #|Z|)%N ->
  @two_univ R [the finType of {ffun X -> Z}] X Z allfun_hash.
Lemma reindex_delta_almost_two_univ (e : Gidx -> Fidx)
    (F : Fidx -> X -> Z) (d : R) :
  bijective e ->
  delta_almost_two_univ (R := R) F d ->
  delta_almost_two_univ (R := R) (fun i x => F (e i) x) d.
Lemma two_univ_transport (Fidx Fidx' X X' Z Z' : finType)
    (F : Fidx -> X -> Z) (e : Fidx' -> Fidx)
    (hX : X' -> X) (gZ : Z -> Z') :
  bijective e -> injective hX -> bijective gZ ->
  two_univ (R := R) F ->
  two_univ (R := R) (fun f' x' => gZ (F (e f') (hX x'))).
\end{minted}
The all-functions construction is an exact, reusable baseline; its seed is generally exponential in the input description.
\switchcolumn
\textbf{All-functions family and representation transport.}
For nonempty~$Z$, taking the seed to be a function $f:X\to Z$ gives collision probability exactly $|Z|^{-1}$ for distinct inputs.
Bijections of seeds, injective input relabellings, and bijective output relabellings preserve two-universality.
\switchcolumn*
\code{src/lhl/def.v}: \coqi{hash_comp};
\code{src/lhl/prop.v}: \coqi{hash_comp_delta_almost_two_univ}
\begin{minted}{coq}
Definition hash_comp
    (F1 : F1idx -> X -> Y) (F2 : F2idx -> Y -> Z)
    (fg : (F1idx * F2idx)%type) (x : X) : Z :=
  F2 fg.2 (F1 fg.1 x).
Lemma hash_comp_delta_almost_two_univ
    (F1 : F1idx -> X -> Y) (F2 : F2idx -> Y -> Z)
    (delta1 delta2 : R) :
  0 <= delta1 ->
  0 <= delta2 ->
  delta_almost_two_univ F1 delta1 ->
  delta_almost_two_univ F2 delta2 ->
  delta_almost_two_univ (hash_comp F1 F2) (delta1 + delta2).
\end{minted}
This is a generic TCS-facing composition theorem, independent of BB84 and of the quantum leftover-hash proof.
\switchcolumn
\textbf{Composition of almost-two-universal families.}
If $F_1:X\to Y$ is $\delta_1$-almost two-universal and $F_2:Y\to Z$ is
$\delta_2$-almost two-universal with $\delta_1,\delta_2\ge0$, their
independently seeded composition is $(\delta_1+\delta_2)$-almost two-universal.
\end{paracol}
\endgroup
\AlignmentBlockEnd

\subsubsection{Leftover Hash Lemma}

\AlignmentParagraph{Main bound.}

\AlignmentContinuationLabel{tab:alignment-ingredients-cont}
\AlignmentBlockHead{Hashing and sampling endpoints.}
\begingroup\AlignmentBlockSetup
\begin{paracol}{2}
\code{src/lhl/lhl.v}: \coqi{two_universal_hashing}
\begin{minted}{coq}
Theorem two_universal_hashing
    (Fh : 'Idx[HL]_Fl -> 'Idx[HL]_X -> 'Idx[HL]_Zl)
    (eps : R) (rhoXE : 'F[HL]_(X :|: E)) :
  rhoXE \is psdlf ->
  @complex.Re R (\Tr rhoXE) <= 1 ->
  is_cq_state rhoXE ->
  @two_univ R _ _ _ Fh ->
  0 <= eps ->
  let rhoZFE := post_hash_state disXE Fh rhoXE in
  dist_from_uniform rhoZFE <=
  eps + (2%:R)^-1 * Num.sqrt (expR ((log2 (dim 'H[HL]_Zl)%:R -
    smooth_hmin E eps rhoXE) * ln 2%:R)).
\end{minted}
\AlignmentBoundaryLabel
The hypotheses state positivity, sub-normalization, classicality, and
register disjointness; \coqi{expR} expresses the same base-two power.
\switchcolumn
\textbf{Quantum Leftover Hash Lemma (\Cref{thm:lhl}).}
The smoothed endpoint is
\[
d_{\rm u}(f(X)|FE)
\le\eps+\frac12\sqrt{2^{\ell-H_{\min}^{\eps}(X|E)}}.
\]
\end{paracol}
\endgroup
\AlignmentBlockEnd

\subsubsection{Sampling Bounds}

\AlignmentParagraph{Subsets and permutations.}

\AlignmentContinuationLabel{tab:alignment-serfling}
\AlignmentBlockHead{Reusable Serfling objects and bounds.}
\begingroup\AlignmentBlockSetup
\begin{paracol}{2}
\code{src/utils/prob.v}: \coqi{Pr_cond}, \coqi{uniform_perm_mu},
\coqi{perm_pair_holdout_count}, \coqi{perm_pair_holdout_err},
\coqi{perm_pair_deviation_event}, \coqi{perm_pair_total_count},
\coqi{perm_pair_total_err}, \coqi{serfling_bound},
\coqi{serfling_uniform_perm_pair_deviation_tail}
\begin{minted}{coq}
Definition Pr_cond (mu : T -> R) (A B : {set T}) : R :=
  if Pr_event mu B == 0 then 0
  else Pr_event mu (A :&: B) / Pr_event mu B.
Definition uniform_perm_mu (N : nat) (pi : 'S_N) : R :=
  (#|{: 'S_N}|%:R)^-1.
Definition perm_pair_holdout_count
    (z z' : bitstr N) (pi : 'S_N) : nat :=
  d_ham (perm_pair_holdout z pi) (perm_pair_holdout z' pi).
Definition perm_pair_holdout_err
    (z z' : bitstr N) (pi : 'S_N) : R :=
  (perm_pair_holdout_count z z' pi)%:R / n%:R.
Definition perm_pair_deviation_event
    (z z' : bitstr N) (mu : R) : {set 'S_N} :=
  [set pi : 'S_N |
    perm_pair_holdout_err z z' pi >=
      perm_pair_sample_err z z' pi + mu].
Definition perm_pair_total_count (z z' : bitstr N) : nat :=
  d_ham z z'.
Definition perm_pair_total_err
    (z z' : bitstr N) (_ : 'S_N) : R :=
  (perm_pair_total_count z z')%:R / N%:R.
Lemma serfling_bound (z z' : bitstr N) (errtot delta : R) :
  (0 < n)%N -> (0 < k)%N ->
  0 <= delta ->
  Pr_cond
    (fun pi : 'S_(k + n) => @uniform_perm_mu (k + n) pi)
    [set pi : 'S_(k + n) |
      @perm_pair_holdout_err n k z z' pi >= errtot + delta]%SET
    [set pi : 'S_(k + n) |
      @perm_pair_total_err n k z z' pi == errtot]%SET <=
    expR (- (2%:R * n%:R * (n + k)%:R /
      (k + 1)%:R * delta^2)).
Lemma serfling_uniform_perm_pair_deviation_tail
    (z z' : bitstr N) (mu : R) :
  (0 < n)%N -> (0 < k)%N -> 0 <= mu ->
  Pr_event (fun pi : 'S_N => @uniform_perm_mu N pi)
    (perm_pair_deviation_event z z' mu) <=
  expR (- 2%:R *
    (((k * n)%:R / N%:R) *
      (k%:R / k.+1%:R) * mu ^+ 2)).
\end{minted}
The second declaration exposes the unconditional deviation event and bound used by the aggregate BB84 sampling proof.
\switchcolumn
\textbf{Serfling sampling bound (proof step in \Cref{thm:bb84}).}
For the uniform distribution on $S_N$, set
\[
\begin{aligned}
e_{\rm hold}(\pi)&=d_H(z_{\rm hold},z'_{\rm hold})/n,\\
e_{\rm tot}&=d_H(z,z')/N,\\
\mathcal D_\mu&=\{\pi:e_{\rm hold}(\pi)
                 \ge e_{\rm smp}(\pi)+\mu\},\\
\Pr[A\mid B]&=
\begin{cases}\Pr[A\cap B]/\Pr[B],&\Pr[B]\ne0,\\0,&\Pr[B]=0.
\end{cases}
\end{aligned}
\]
For $N=n+k$,
$\Pr[\hat p_{\rm hold}\ge p_{\rm all}+\delta\mid\hat p_{\rm all}=p_{\rm all}]
\le\exp(-2nN\delta^2/(k+1))$.
The unconditional deviation form used downstream is
\[
\begin{gathered}
\Pr_{\pi\leftarrow\mathsf{Unif}(S_N)}
  [\mathsf{dev}(z,z',\pi)\ge\mu]\\
\le
\exp\!\left[-2\frac{kn}{N}\frac{k}{k+1}\mu^2\right].
\end{gathered}
\]
\end{paracol}
\endgroup
\AlignmentBlockEnd

\subsubsection{Extractor Lemmas}

\AlignmentParagraph{Collision bounds.}

\AlignmentContinuationLabel{tab:alignment-lhl-middle}
\AlignmentBlockHead{Reusable middle lemmas in the extractor proof.}
\begingroup\AlignmentBlockSetup
\begin{paracol}{2}
\code{src/lhl/def.v}: \coqi{collision_ent};
\code{src/lhl/util.v}: \coqi{hoelder}, \coqi{coll_from_dmax}, \coqi{min_coll_bound}
\begin{minted}{coq}
Definition collision_ent (rhoAB : 'F[HL]_(A :|: B))
    (sigmaB : 'F[HL]_B) : R :=
  let sigma_inv_half := (\1 : 'F_A) \O pinvlf (sqrtH sigmaB) in
  @complex.Re R
    (\Tr (sigma_inv_half \o rhoAB \o sigma_inv_half \o rhoAB)).
Lemma hoelder {m1 m2 m3 m4 : nat}
    (A : 'M[C]_(m1, m2)) (B : 'M[C]_(m2, m3))
    (C0 : 'M[C]_(m3, m4)) (r s t : R) :
  (0 < r)%R -> (0 < s)%R -> (0 < t)%R ->
  (r^-1 + s^-1 + t^-1 = 1)%R ->
  trnorm (A *m B *m C0) <=
    schnorm r A * schnorm s B * schnorm t C0.
Lemma coll_from_dmax (rhoXE : 'F[HL]_S_XE)
    (sigmaE : 'F[HL]_E) (c : R) :
  rhoXE \is psdlf -> sigmaE \is psdlf -> 0 <= c ->
  rhoXE <= c%:C *: ((\1 : 'F_X) \O sigmaE) ->
  collision_ent rhoXE sigmaE <= c * @complex.Re R (\Tr rhoXE).
Lemma min_coll_bound (rhoXE : 'F[HL]_S_XE) :
  rhoXE \is psdlf -> @complex.Re R (\Tr rhoXE) <= 1 ->
  exists (sigmaE : 'F[HL]_E),
    sigmaE \is den1lf /\
    (supph (ptrace_castl disXE rhoXE) `<=` supph sigmaE)%HS /\
    collision_ent rhoXE sigmaE <=
      expR (- hmin E rhoXE * ln 2%:R).
\end{minted}
\AlignmentBoundaryLabel
\coqi{coll_from_dmax} is the operator implication, and
\coqi{min_coll_bound} supplies its optimized collision-entropy witness.
\switchcolumn
\textbf{Schatten and collision-entropy core.}
For conjugate exponents,
$\lVert ABC\rVert_1\le\lVert A\rVert_r\lVert B\rVert_s\lVert C\rVert_t$.
If $\rho_{XE}\le c(\id_X\otimes\sigma_E)$, then
$\Gamma_C(\rho_{XE}\mid\sigma_E)\le c\,\Tr\rho_{XE}$;
for a subnormalized state $\rho_{XE}$, there exists a normalized
$\sigma_E$ with
$\Gamma_C(\rho_{XE}\mid\sigma_E)\le2^{-H_{\min}(X|E)_\rho}$.
\end{paracol}
\endgroup
\AlignmentBlockEnd

\AlignmentContinuationLabel{tab:alignment-lhl-middle-transport}
\AlignmentBlockHead{Reusable middle lemmas in the extractor proof (continued).}
\begingroup\AlignmentBlockSetup
\begin{paracol}{2}
\code{src/lhl/util.v}: \coqi{cq_smoothing};
\code{src/lhl/post\_hash.v}: \coqi{pd_mono_post_hash};
\code{src/lhl/prop.v}: \coqi{dist_uniform_pd_triangle};
\code{src/utils/dist/pd.v}: \coqi{pd_mono_cptp}
\begin{minted}{coq}
Lemma cq_smoothing (eps : R) (rhoXE : 'F[HL]_S_XE) :
  rhoXE \is psdlf -> @complex.Re R (\Tr rhoXE) <= 1 ->
  0 <= eps -> eps < Num.sqrt (@complex.Re R (\Tr rhoXE)) ->
  is_cq_state rhoXE ->
  exists tauXE, eps_ball eps rhoXE tauXE /\
    is_cq_state tauXE /\
    smooth_hminE E eps rhoXE = hminE E tauXE.
Lemma pd_mono_post_hash Fh (rhoXE tauXE : 'F[HL]_S_XE) :
  rhoXE \is psdlf -> tauXE \is psdlf ->
  @complex.Re R (\Tr rhoXE) <= 1 ->
  @complex.Re R (\Tr tauXE) <= 1 ->
  pur_dist (post_hash_state disXE Fh rhoXE)
           (post_hash_state disXE Fh tauXE)
  <= pur_dist rhoXE tauXE.
Lemma pd_mono_cptp (E : 'SO[HL]_(U, V))
    (rho tau : 'F[HL]_U) :
  E \is cptp -> rho \is psdlf -> tau \is psdlf ->
  @complex.Re R (\Tr rho) <= 1 ->
  @complex.Re R (\Tr tau) <= 1 ->
  pur_dist (E rho) (E tau) <= pur_dist rho tau.
Lemma dist_uniform_pd_triangle
    (rhoAB tauAB : 'F[H]_(A :|: B)) :
  rhoAB \is psdlf -> tauAB \is psdlf ->
  @complex.Re R (\Tr rhoAB) <= 1 ->
  @complex.Re R (\Tr tauAB) <= 1 ->
  dist_from_uniform rhoAB <=
    pur_dist rhoAB tauAB + dist_from_uniform tauAB.
\end{minted}
\coqi{pd_mono_post_hash} instantiates the generic \coqi{pd_mono_cptp}
contraction theorem for post-hashing.
\switchcolumn
\textbf{Cq-preserving smoothing and hashing transport.}
An optimal smooth-min witness for a cq-state can remain cq.
Every CPTP map $\mathcal E$ contracts purified distance,
\[
P(\mathcal E(\rho),\mathcal E(\tau))\le P(\rho,\tau),
\]
so post-hashing contracts purified distance.
Distance from uniform obeys the mixed triangle bound
$d_{\rm u}(\rho)\le P(\rho,\tau)+d_{\rm u}(\tau)$.
\end{paracol}
\endgroup
\AlignmentBlockEnd

\AlignmentParagraph{Privacy amplification.}

\AlignmentContinuationLabel{tab:alignment-lhl-middle-cont}
\AlignmentBlockHead{Collision privacy amplification and the extractor endpoint.}
\begingroup\AlignmentBlockSetup
\begin{paracol}{2}
\code{src/lhl/prop.v}: \coqi{uniform_to_collision}, \coqi{privacy_amp}
\begin{minted}{coq}
Lemma uniform_to_collision (rhoAB : 'F[HL]_S_AB)
    (tauB : 'F[HL]_B) :
  rhoAB \is psdlf -> tauB \is psdlf ->
  @complex.Re R (\Tr tauB) <= 1 ->
  (supph (ptrace_castl disAB rhoAB) `<=` supph tauB)%HS ->
  let rhoB := ptrace_castl disAB rhoAB in
  let tau_inv_half := pinvlf (sqrtH tauB) in
  dist_from_uniform rhoAB <= (2%:R)^-1 * Num.sqrt (
    (dim 'H[HL]_A)%:R * collision_ent rhoAB tauB -
    @complex.Re R
      (\Tr (rhoB \o tau_inv_half \o rhoB \o tau_inv_half))).
Lemma privacy_amp (rhoXE : 'F[HL]_S_XE) (tauE : 'F[HL]_E)
    (Fh : 'Idx[HL]_Fl -> 'Idx[HL]_X -> 'Idx[HL]_Zl)
    (delta : R) :
  rhoXE \is psdlf -> @complex.Re R (\Tr rhoXE) <= 1 ->
  tauE \is psdlf -> is_cq_state rhoXE ->
  0 <= delta -> delta_almost_two_univ Fh delta ->
  let rhoZFE := post_hash_state disXE Fh rhoXE in
  let rhoF_tauE := ((dim 'H[HL]_Fl)%:R^-1)%:C *:
    ((\1 : 'F_Fl) \O tauE) in
  let tau_inv_half := pinvlf (sqrtH tauE) in
  let rhoE := ptrace_castl disXE rhoXE in
  collision_ent rhoZFE rhoF_tauE <=
    collision_ent rhoXE tauE +
    delta * @complex.Re R
      (\Tr (rhoE \o tau_inv_half \o rhoE \o tau_inv_half)).
\end{minted}
These declarations establish the displayed uniform-distance and collision
privacy-amplification bounds.
\switchcolumn
\textbf{Uniform distance and collision privacy amplification.}
\[
 d_{\rm u}(A|B)_\rho
 \le\tfrac12\sqrt{d_A\Gamma_C(\rho|\tau_B)
 -\Tr(\rho_B\tau_B^{-1/2}\rho_B\tau_B^{-1/2})},
\]
and $\delta$-almost-two-universality adds at most
$\delta\Tr(\rho_E\tau_E^{-1/2}\rho_E\tau_E^{-1/2})$ to collision entropy.
\switchcolumn*
\code{src/lhl/lhl.v}: \coqi{almost_two_universal_hashing}
\begin{minted}{coq}
Theorem almost_two_universal_hashing
    (Fh : 'Idx[HL]_Fl -> 'Idx[HL]_X -> 'Idx[HL]_Zl)
    (delta eps eps_bar : R) (rhoXE : 'F[HL]_(X :|: E)) :
  rhoXE \is den1lf ->
  is_cq_state rhoXE ->
  0 <= delta ->
  delta_almost_two_univ Fh delta ->
  0 <= eps ->
  0 < eps_bar ->
  let rhoZFE := post_hash_state disXE Fh rhoXE in
  let printed_bound : R :=
    eps + eps_bar + (2%:R)^-1 * Num.sqrt (
      (dim 'H[HL]_Zl)%:R * delta - 1 +
      expR ((log2 (dim 'H[HL]_Zl)%:R -
        smooth_hmin E eps rhoXE) * ln 2%:R)
        * (2%:R * eps_bar^-2 + 1)) in
    dist_from_uniform rhoZFE <= printed_bound.
\end{minted}
The theorem conditions min-entropy on the adversary register~$E$.
\switchcolumn
\textbf{$\delta$-almost-two-universal LHL.}
For a normalized cq-state,
\[
\begin{aligned}
d_{\rm u}\le{}&\eps+\bar\eps+\frac12\Bigl[
d_Z\delta-1\\
&+2^{\log_2d_Z-H_{\min}^{\eps}(X|E)}
\bigl(2\bar\eps^{-2}+1\bigr)
\Bigr]^{1/2}.
\end{aligned}
\]
\end{paracol}
\endgroup
\AlignmentBlockEnd

\subsubsection{Seed Averaging}

\AlignmentParagraph{Average bounds.}

\AlignmentContinuationLabel{tab:alignment-seed-average-lhl}
\AlignmentBlockHead{Fixed-seed average extractor bounds.}
\begingroup\AlignmentBlockSetup
\begin{paracol}{2}
\code{src/lhl/def.v}: \coqi{post_hash_seed_state},
\coqi{post_hash_seed_average_dist};
\code{src/lhl/lhl.v}: \coqi{two_universal_seed_average_hmin},
\coqi{almost_two_universal_seed_average_hmin}, \coqi{quant}
\begin{minted}{coq}
Definition post_hash_seed_state Fh f rhoXE : 'F[HL]_(Zl :|: E) :=
  \sum_z [> deltav z ; deltav z <] \O
    (\sum_(x | Fh f x == z) cq_block disXE x rhoXE).
Definition post_hash_seed_average_dist Fh rhoXE : R :=
  (dim 'H[HL]_Fl)%:R^-1 *
    \sum_f dist_from_uniform (post_hash_seed_state disXE Fh f rhoXE).
Theorem two_universal_seed_average_hmin Fh rhoXE :
  rhoXE \is den1lf -> is_cq_state rhoXE -> @two_univ R _ _ _ Fh ->
  post_hash_seed_average_dist disXE Fh rhoXE <=
    (2%:R)^-1 * Num.sqrt (expR (
      (log2 (dim 'H[HL]_Zl)%:R - hmin E rhoXE) * ln 2%:R)).
Theorem almost_two_universal_seed_average_hmin
    Fh delta eps_bar rhoXE :
  rhoXE \is den1lf -> is_cq_state rhoXE ->
  0 <= delta -> delta_almost_two_univ Fh delta -> 0 < eps_bar ->
  post_hash_seed_average_dist disXE Fh rhoXE <=
    eps_bar + (2%:R)^-1 * Num.sqrt (
      (dim 'H[HL]_Zl)%:R * delta - 1 +
      expR ((log2 (dim 'H[HL]_Zl)%:R - hmin E rhoXE) * ln 2%:R) *
        (2%:R * eps_bar^-2 + 1)).
Theorem quant Fh delta rhoXE :
  rhoXE \is den1lf -> is_cq_state rhoXE ->
  0 <= delta -> delta_almost_two_univ Fh delta ->
  post_hash_seed_average_dist disXE Fh rhoXE <=
    inf [set Delta : R | exists eps_bar : R,
      0 < eps_bar /\ Delta = eps_bar + (2%:R)^-1 * Num.sqrt (
        (dim 'H[HL]_Zl)%:R * delta - 1 +
        expR ((log2 (dim 'H[HL]_Zl)%:R - hmin E rhoXE) * ln 2%:R) *
          (2%:R * eps_bar^-2 + 1))] /\
  (delta <= ((dim 'H[HL]_Zl)%:R)^-1 ->
    post_hash_seed_average_dist disXE Fh rhoXE <=
      (2%:R)^-1 * Num.sqrt (expR (
        (log2 (dim 'H[HL]_Zl)%:R - hmin E rhoXE) * ln 2%:R))).
\end{minted}
The final conjunct of \coqi{quant} is the two-universal specialization.
\switchcolumn
\textbf{Published-seed states and average LHL.}
For a fixed seed $f$,
\[
\begin{gathered}
\rho_{ZE}^{[f]}=\sum_z|z\rangle\!\langle z|\otimes
  \sum_{x:F(f,x)=z}\rho_E^{[x]},\\
\bar d_{\rm u}=|\mathcal F|^{-1}\sum_f d_{\rm u}(Z|E)_{\rho^{[f]}}.
\end{gathered}
\]
Two-universal hashing yields
\[
\bar d_{\rm u}\le\frac12
\sqrt{2^{\log_2d_Z-H_{\min}(X|E)}}.
\]
The $\delta$-almost-two-universal bound optimizes the corresponding
$\bar\eps$-dependent expression.
\end{paracol}
\endgroup
\AlignmentBlockEnd

\subsubsection{Short Seeds}

\AlignmentParagraph{Construction.}

\AlignmentContinuationLabel{tab:alignment-short-seed-construction}
\AlignmentBlockHead{Polynomial short-seed construction.}
\begingroup\AlignmentBlockSetup
\begin{paracol}{2}
\code{src/lhl/construction.v}: \coqi{binary_field_dim},
\coqi{binary_field_card}, \coqi{binary_field_coord_project},
\coqi{binary_field_coord_output_card},
\coqi{binary_field_coord_project_diff_compat},
\coqi{binary_field_coord_project_surj}, \coqi{field_mult_project_hash},
\coqi{field_mult_project_hash_two_univ}, \coqi{polynomial_hash_poly},
\coqi{polynomial_hash_paper},
\coqi{polynomial_hash_paper_delta_almost_two_univ};
\code{src/lhl/prop.v}: \coqi{hash_comp_delta_almost_two_univ}
\begin{minted}{coq}
Definition binary_field_dim : nat := \dim {: K : vectType 'F_2}.
Definition binary_field_coord_project (ell : nat)
    (Hell : (ell <= binary_field_dim)%N) (x : K) : 'rV['F_2]_ell :=
  \row_i coord (vbasis {: K}) (widen_ord Hell i) x.
Lemma binary_field_card : #|K| = (2 ^ binary_field_dim)%N.
Lemma binary_field_coord_output_card ell :
  #|'rV['F_2]_ell| = (2 ^ ell)%N.
Lemma binary_field_coord_project_diff_compat ell Hell :
  forall u v : K,
    (binary_field_coord_project Hell u ==
      binary_field_coord_project Hell v) =
    (binary_field_coord_project Hell (u - v) ==
      binary_field_coord_project Hell 0).
Lemma binary_field_coord_project_surj ell Hell :
  forall z : 'rV['F_2]_ell,
    exists t : K, binary_field_coord_project Hell t = z.
Definition field_mult_project_hash (Z : finType) (pi : K -> Z)
    (alpha x : K) : Z := pi (x * alpha).
Theorem field_mult_project_hash_two_univ (Z : finType) (pi : K -> Z) :
  (forall u v, (pi u == pi v) = (pi (u - v) == pi 0)) ->
  (forall z, exists t, pi t = z) ->
  two_univ (R := R) (field_mult_project_hash pi).
Definition polynomial_hash_poly (x : r.-tuple K) : {poly K} :=
  \poly_(i < r) nth 0 x i.
Definition polynomial_hash_paper (alpha : K) (x : r.-tuple K) : K :=
  (polynomial_hash_poly [tuple tnth x (rev_ord i) | i < r]).[alpha].
Theorem polynomial_hash_paper_delta_almost_two_univ :
  delta_almost_two_univ (R := R) polynomial_hash_paper
    (r.-1%:R / #|K|%:R).
\end{minted}
The component composition is \coqi{hash_comp_delta_almost_two_univ}.
\switchcolumn
\textbf{Polynomial and finite-field families.}
Over $K=\mathrm{GF}(2^k)$, evaluation of a degree-$(r-1)$ polynomial at a
uniform point has collision probability at most $(r-1)/2^k$.
For $\ell\le k$, the coordinate projection
$\pi_\ell:K\to\mathbb F_2^\ell$ is surjective and
\[
\pi_\ell(u)=\pi_\ell(v)
\Longleftrightarrow \pi_\ell(u-v)=\pi_\ell(0).
\]
The multiply-and-project family is two-universal, and composition gives
$\delta=(r-1)/2^k+2^{-\ell}$.
\end{paracol}
\endgroup
\AlignmentBlockEnd

\AlignmentParagraph{Extractor.}

\AlignmentContinuationLabel{tab:alignment-short-seed}
\AlignmentBlockHead{Polynomial short-seed extractor theorem.}
\begingroup\AlignmentBlockSetup
\begin{paracol}{2}
\code{src/lhl/lhl.v}: \coqi{almost_two_universal}
\begin{minted}{coq}
Theorem almost_two_universal
    (n_bits ell : nat) (eps : R) (rhoXE : 'F[HL]_(X :|: E)) :
  let target := ell%:R + log2 ((n_bits%:R : R) / ell%:R) +
    log2 (eps^-2) - 1 in
  let k_bits := Num.truncn target in
  let r := `|Num.ceil ((n_bits%:R : R) / (k_bits%:R : R))|%N in
  rhoXE \is den1lf -> is_cq_state rhoXE -> 0 < eps ->
  (0 < n_bits)%N -> (0 < ell)%N -> (ell <= k_bits)%N ->
  dim 'H[HL]_Fl = (2 ^ (2 * k_bits))%N ->
  dim 'H[HL]_X = (2 ^ n_bits)%N ->
  dim 'H[HL]_Zl = (2 ^ ell)%N ->
  let delta := r.-1%:R / (2%:R ^+ k_bits) +
    ((2%:R ^+ ell)^-1)%R in
  exists Fh : 'Idx[HL]_Fl -> 'Idx[HL]_X -> 'Idx[HL]_Zl,
    delta_almost_two_univ Fh delta /\
    log2 ((dim 'H[HL]_Fl)%:R : R) = 2%:R * k_bits%:R /\
    log2 ((dim 'H[HL]_Fl)%:R : R) =
      2%:R * ((floor target)%:~R : R) /\
    log2 ((dim 'H[HL]_X)%:R : R) = n_bits%:R /\
    log2 ((dim 'H[HL]_Zl)%:R : R) = ell%:R /\
    delta <= (1 + 4%:R * eps ^+ 2) / (2%:R ^+ ell) /\
    let printed_bound := 3%:R * eps + (2%:R)^-1 * Num.sqrt (
      expR ((log2 (dim 'H[HL]_Zl)%:R - smooth_hmin E eps rhoXE) *
        ln 2%:R) * (2%:R * eps^-2 + 1)) in
    dist_from_uniform (post_hash_state disXE Fh rhoXE) <= printed_bound.
\end{minted}
The theorem exposes positivity, $\ell\le k$, register dimensions, and the
floor convention used by the construction.
\switchcolumn
\textbf{Short-seed endpoint.}
Set
\[
\begin{gathered}
k=\left\lfloor\ell+\log_2\frac n\ell+
  \log_2\frac1{\eps^2}-1\right\rfloor,\\
r=\left\lceil\frac nk\right\rceil,
\quad s=2k.
\end{gathered}
\]
The constructed family satisfies
\[
\begin{gathered}
\delta\le\frac{1+4\eps^2}{2^\ell},\\
d_{\rm u}\le3\eps+\frac12
\sqrt{2^{\ell-H_{\min}^{\eps}(X|E)}(2\eps^{-2}+1)}.
\end{gathered}
\]
\end{paracol}
\endgroup
\AlignmentBlockEnd

\subsection{BB84 Proof}
\label{app:semantics-alignment-bb84-proof}

\subsubsection{Parameter Estimation}

\AlignmentParagraph{Bad-pass mass.}

\AlignmentContinuationLabel{tab:alignment-bb84-bad-pass}
\AlignmentBlockHead{Reusable bad-pass mass and Hamming-ball bounds.}
\begingroup\AlignmentBlockSetup
\begin{paracol}{2}
\code{src/qkd/bb84/model.v}: \coqi{bb84_pass_bad_weight},
\coqi{bb84_pass_bad_weight_le}, \coqi{bb84_pass_bad_weight_le_pass};
\code{src/utils/bitstr.v}: \coqi{hamming_ball_count_max_entropy_bound},
\coqi{hamming_ball_count_max_entropy_bound_total}
\begin{minted}{coq}
Definition bb84_pass_bad_weight
    (P : bb84_protocol) (G : bb84_zsource P) (eps : R) : R :=
  \sum_(z : 'I_(bb84_full_string_count P))
    \sum_(z' : 'I_(bb84_full_string_count P))
      bb84_W G z z' *
        Pr_event
          (fun S0 : {set 'I_(bb84_total_rounds P)} =>
            @subset_uniform_mu R (bb84_n P) (bb84_k P) S0)
          ([set S0 : {set 'I_(bb84_total_rounds P)} |
             @subset_sorted_rep (bb84_n P) (bb84_k P) S0 \in
               @perm_pair_sample_le_event R
                 (bb84_n P) (bb84_k P)
                 (ord_to_bitstring z) (ord_to_bitstring z')
                 (bb84_Qtol P)] :&:
           [set S0 : {set 'I_(bb84_total_rounds P)} |
             @subset_sorted_rep (bb84_n P) (bb84_k P) S0 \in
               @perm_pair_deviation_event R
                 (bb84_n P) (bb84_k P)
                 (ord_to_bitstring z) (ord_to_bitstring z')
                 (bb84_mu (bb84_n P) (bb84_k P) eps)]).
Lemma bb84_pass_bad_weight_le
    (P : bb84_protocol) (G : bb84_zsource P) (eps : R) :
  (0 < bb84_n P)%N -> (0 < bb84_k P)%N -> 0 < eps ->
  bb84_pass_bad_weight G eps <= eps ^+ 2.
Lemma bb84_pass_bad_weight_le_pass
    (P : bb84_protocol) (G : bb84_zsource P) (eps : R) :
  bb84_pass_bad_weight G eps <= bb84_p_pass_tot G.
Lemma hamming_ball_count_max_entropy_bound (n : nat) (delta : R) :
  (0 < n)%N -> 0 < delta -> delta < 1 / 2 ->
  log2 (@relative_hamming_ball_count_max n delta)%:R <=
    n%:R * h_bin delta.
Lemma hamming_ball_count_max_entropy_bound_total
    (n : nat) (delta : R) :
  (0 < n)%N -> delta <= 1 / 2 ->
  log2 (@relative_hamming_ball_count_max n delta)%:R <=
    n%:R * h_bin delta.
\end{minted}
The pass-state weights and canonical subset sampler are expanded in
\Cref{tab:alignment-bb84-pass-probability}; the deviation event is expanded
with the Serfling objects in \Cref{tab:alignment-serfling}.
The total theorem extends the interior Hamming bound to every
$\delta\le\tfrac12$.
\switchcolumn
\textbf{Pass intersected with excessive holdout error.}
Let $\mathcal A_{\Qtol}$ be the pass event and
$\mathcal D_{\mu}$ the event that the holdout error exceeds the sampled error
by at least $\mu$.  The reusable weighted mass is
\[
\begin{gathered}
B_{\rm pass}(\eps)=
 \sum_{z,z'}W_G(z,z')\\
 \Pr_S\!\left[\mathcal A_{\Qtol}(z,z')
               \cap\mathcal D_{\mu(\eps)}(z,z')\right],
\end{gathered}
\]
with the two independent bounds
\[
B_{\rm pass}(\eps)\le\eps^2,
\qquad B_{\rm pass}(\eps)\le p_{\rm pass}^{\rm tot}.
\]
For $n>0$ and $0<\delta<\tfrac12$,
\[
\log_2 B_n(\delta)\le n h_{\rm bin}(\delta),
\]
and the totalized theorem covers every $\delta\le\tfrac12$.
\end{paracol}
\endgroup
\AlignmentBlockEnd

\AlignmentParagraph{Entropy bound.}

\AlignmentContinuationLabel{tab:alignment-bb84-middle}
\AlignmentBlockHead{Finite-key sampling aggregate consumed by BB84.}
\begingroup\AlignmentBlockSetup
\begin{paracol}{2}
\code{src/qkd/bb84/model.v}: \coqi{bb84_total_rounds}, \coqi{bb84_full_string_count}, \coqi{bb84_zsource}, \coqi{bb84_bbkey};
\code{src/qkd/bb84/prop.v}: \coqi{stat}
\begin{minted}{coq}
Definition bb84_total_rounds (P : bb84_protocol) : nat :=
  bb84_k P + bb84_n P.
Definition bb84_full_string_count (P : bb84_protocol) : nat :=
  2 ^ bb84_total_rounds P.
Record bb84_zsource (P : bb84_protocol) := BB84ZSource {
  bb84_zsrc_nb : nat;
  bb84_zsrc_rho : 'M[C]_(bb84_full_string_count P *
    bb84_zsrc_nb * bb84_full_string_count P);
  bb84_zsrc_den1 : bb84_zsrc_rho \is den1mx
}.
Lemma bb84_bbkey (n k : nat) (z z' : bitstr (k + n))
    (Qtol eps : R) :
  (0 < n)%N -> (0 < k)%N -> 0 < eps ->
  let mu := bb84_mu n k eps in
  let pass_event : {set 'S_(k + n)} :=
    [set pi | perm_pair_sample_err z z' pi <= Qtol] in
  let p_pass := Pr_event
    (fun pi : 'S_(k + n) => uniform_perm_mu pi) pass_event in
  0 < p_pass ->
  Pr_cond (fun pi : 'S_(k + n) => uniform_perm_mu pi)
    (perm_pair_deviation_event z z' mu) pass_event <=
      eps ^+ 2 / p_pass.
Lemma stat (Zkey B : {set L})
    (disZB : [disjoint Zkey & B])
    (P : @bb84_protocol R) (G : bb84_zsource P) (eps : R)
    (dim_Zk : (2 ^ bb84_n P)%N = dim ('H[HL]_Zkey))
    (dim_B : (2 ^ bb84_n P)%N = dim ('H[HL]_B)) :
  (0 < bb84_n P)%N -> (0 < bb84_k P)%N -> 0 < eps ->
  0 < bb84_p_pass_tot G ->
  let eps' := eps / Num.sqrt (bb84_p_pass_tot G) in
  eps' < 1 ->
  let h_trunc :=
    bb84_h_trunc (bb84_Qtol P +
      bb84_mu (bb84_n P) (bb84_k P) eps) in
  smooth_hmax B eps'
    (bb84_rhoZZ_pass_mixH disZB G dim_Zk dim_B) <=
    (bb84_n P)%:R * h_trunc.
\end{minted}
\coqi{bb84_bbkey} states the displayed conditional bound, and \coqi{stat}
provides the aggregate smooth max-entropy endpoint.
The full-string aggregate route uses the two bad-pass bounds in
\Cref{tab:alignment-bb84-bad-pass} inside \coqi{stat} to bound
\coqi{bb84_rhoZZ_pass_mixH}; its probability and state definitions are expanded
in \Cref{tab:alignment-bb84-pass-probability,tab:alignment-bb84-pass-state}.
The \coqi{bb84_secret} proof invokes \coqi{stat} after transporting the delayed
measurement through the EUR and derives the smooth-min-entropy bound by
subtracting the resulting smooth-max bound from the EUR.
\switchcolumn
\textbf{Sampling to smooth max-entropy (steps~1 and~4 of the secrecy chain in \Cref{app:bb84-sig}).}
Conditioning the Serfling deviation event on passing gives probability at most $\eps^2/p_{\rm pass}$ and hence
\[
 H_{\max}^{\eps/\sqrt{p_{\rm pass}}}(Z_{\rm key}|Z'_{\rm key})
 \le n\,h_{\rm trunc}(\Qtol+\mu).
\]
\end{paracol}
\endgroup
\AlignmentBlockEnd

\subsubsection{Entropy Transfer}

\AlignmentParagraph{Measurement and leakage.}

\AlignmentContinuationLabel{tab:alignment-bb84-bridges}
\AlignmentBlockHead{Delayed measurement and leakage bridges.}
\begingroup\AlignmentBlockSetup
\begin{paracol}{2}
\code{src/qkd/bb84/def.v}: \coqi{bb84_delayed_measurement_condition};
\code{src/qkd/bb84/bb84.v}: \coqi{bb84_delayed_measurement_hmax};
\code{src/qkd/bb84/prop.v}: \coqi{chain_rule_class}, \coqi{chain_rule_class_leak_bound}
\begin{minted}{coq}
Record bb84_delayed_measurement_condition
    (BobZ : 'QC[HL]_(Bq, Zp))
    (rhoZB : 'F[HL]_(Zkey :|: Bq))
    (rhoZZp : 'F[HL]_(Zkey :|: Zp)) := {
  bb84_dmc_full : 'F[HL]_(Zkey :|: Bq :|: finset.set0);
  bb84_dmc_full_den1 : bb84_dmc_full \is den1lf;
  bb84_dmc_source :
    ptrace_cast3C (disjointX0 Zkey) (disjointX0 Bq)
      bb84_dmc_full = rhoZB;
  bb84_dmc_target :
    ptrace_cast3C (disjointX0 Zkey) (disjointX0 Zp)
      ((((\:1 : 'SO[HL]_Zkey) :⊗ (BobZ : 'SO[HL]_(Bq, Zp))) :⊗
        (\:1 : 'SO[HL]_finset.set0)) bb84_dmc_full) = rhoZZp
}.
Lemma bb84_delayed_measurement_hmax (eps0 : R)
    (BobZ : 'QC[HL]_(Bq, Zp))
    (rhoZB : 'F[HL]_(Zkey :|: Bq))
    (rhoZZp : 'F[HL]_(Zkey :|: Zp)) :
  0 <= eps0 -> eps0 < Num.sqrt (@complex.Re R (\Tr rhoZB)) ->
  bb84_delayed_measurement_condition BobZ rhoZB rhoZZp ->
  smooth_hmax Bq eps0 rhoZB <= smooth_hmax Zp eps0 rhoZZp.
Lemma chain_rule_class (C : {set L})
    (disSC : [disjoint S & C])
    (disEC : [disjoint E & C])
    (disC_SE : [disjoint C & (S :|: E)])
    (eps' : R) (rho : 'F[HL]_(C :|: (S :|: E))) :
  rho \is psdlf -> @complex.Re R (\Tr rho) <= 1 ->
  (forall x x' : 'Idx[HL]_C, x' != x ->
    (([> deltav x ; deltav x <] \O (\1 : 'F_(S :|: E))) \o rho \o
     ([> deltav x' ; deltav x' <] \O (\1 : 'F_(S :|: E)))) = 0) ->
  0 <= eps' < Num.sqrt (@complex.Re R (\Tr rho)) ->
  (smooth_hminE E eps' (ptrace_castl disC_SE rho) -
    (log2 (dim 'H[HL]_C)%:R)%:E <=
   smooth_hminE (E :|: C) eps' rho)%E.
Lemma chain_rule_class_leak_bound (C : {set L})
    (disSC : [disjoint S & C])
    (disEC : [disjoint E & C])
    (disC_SE : [disjoint C & (S :|: E)])
    (eps' leak : R) (rho : 'F[HL]_(C :|: (S :|: E))) :
  rho \is psdlf -> @complex.Re R (\Tr rho) <= 1 ->
  (forall x x' : 'Idx[HL]_C, x' != x ->
    (([> deltav x ; deltav x <] \O (\1 : 'F_(S :|: E))) \o rho \o
     ([> deltav x' ; deltav x' <] \O (\1 : 'F_(S :|: E)))) = 0) ->
  0 <= eps' < Num.sqrt (@complex.Re R (\Tr rho)) ->
  log2 (dim 'H[HL]_C)%:R <= leak ->
  (smooth_hminE E eps' (ptrace_castl disC_SE rho) - leak%:E <=
   smooth_hminE (E :|: C) eps' rho)%E.
\end{minted}
The chain-rule declarations specialize the reusable \coqi{class_bounds_2_min_lower} row in \Cref{tab:alignment-classical-side-information}.
\switchcolumn
\textbf{Delayed measurement and public classical leakage (steps~3 and~5).}
A CPTP delayed measurement gives
$H_{\max}^{\eps}(Z|B_q)\le H_{\max}^{\eps}(Z|Z')$.
For a transcript $C$ classical in the key basis,
$H_{\min}^{\eps}(X|EC)\ge H_{\min}^{\eps}(X|E)-\log d_C$, or the same bound with any certified leakage upper bound.
\end{paracol}
\endgroup
\AlignmentBlockEnd

\AlignmentParagraph{Projectors.}

\AlignmentContinuationLabel{tab:alignment-bb84-projector}
\AlignmentBlockHead{Reusable good-projector lemmas.}
\begingroup\AlignmentBlockSetup
\begin{paracol}{2}
\code{src/entropy/pur/def.v}: \coqi{mmix_mx}, \coqi{hmax_mx}, \coqi{smooth_hmax_mx};
\code{src/utils/bitstr.v}: \coqi{relative_hamming_ball};
\code{src/qkd/bb84/model.v}: \coqi{bb84_pair_projector}, \coqi{bb84_good_projector}, \coqi{bb84_good_projector_pass_mass};
\code{src/qkd/bb84/prop.v}: \coqi{bb84_hmax_good_projector_bound},
\coqi{bb84_smooth_hmax_good_projector_mass}
\begin{minted}{coq}
Definition mmix_mx (n : nat) : 'M[C]_n :=
  (n%:R^-1)%:C *: 1%:M.
Definition hmax_mx (rho : 'M[C]_(na * nb)) : R :=
  sup [set v : R | exists sigma : 'M[C]_nb,
    sigma \is den1mx /\
    elog2 (na%:R *
      (gfidelity rho (mmix_mx na *t sigma)) ^+ 2) = v%:E].
Definition smooth_hmax_mx (eps : R) (rho : 'M[C]_(na * nb)) : R :=
  inf [set v : R | exists tau : 'M[C]_(na * nb),
    ((tau \is denmx) && (0 < @complex.Re R (\tr tau))) &&
    (pur_dist_mx tau rho <= eps) /\ v = hmax_mx tau].
Definition relative_hamming_ball {n : nat} (delta : R)
    (x x' : 'I_(2 ^ n)) : bool :=
  (((d_ham (ord_to_bitstring x) (ord_to_bitstring x'))%:R /
    n%:R) < delta)%R.
Definition bb84_good_projector (n : nat) (delta : R) :=
  \sum_(x : 'I_(2 ^ n))
    \sum_(x' : 'I_(2 ^ n) | relative_hamming_ball delta x x')
      bb84_pair_projector x x'.
Lemma bb84_hmax_good_projector_bound
    (n_key : nat) (delta : R)
    (rho : 'M[C]_((2 ^ n_key) * (2 ^ n_key))) :
  (0 < n_key)%N -> 0 < delta ->
  rho \is psdmx -> @complex.Re R (\tr rho) = 1 ->
  rho *m bb84_good_projector n_key delta = rho ->
  hmax_mx rho <= n_key%:R * bb84_h_trunc delta.
Lemma bb84_good_projector_pass_mass
    (P : bb84_protocol) (G : bb84_zsource P) (eps : R) :
  (0 < bb84_n P)%N -> 0 < bb84_p_pass_tot G ->
  let delta := bb84_Qtol P +
    bb84_mu (bb84_n P) (bb84_k P) eps in
  1 - bb84_pass_bad_weight G eps / bb84_p_pass_tot G <=
    @complex.Re R
      (\tr (bb84_good_projector (bb84_n P) delta *m
        bb84_rhoZZ_pass_mix G *m
        bb84_good_projector (bb84_n P) delta)).
Lemma bb84_smooth_hmax_good_projector_mass
    (n_key : nat) (delta eps0 : R)
    (rho : 'M[C]_((2 ^ n_key) * (2 ^ n_key))) :
  (0 < n_key)%N ->
  rho \is psdmx ->
  @complex.Re R (\tr rho) = 1 ->
  0 <= eps0 -> eps0 < 1 ->
  1 - eps0 ^+ 2 <=
    @complex.Re R
      (\tr (@bb84_good_projector R n_key delta *m rho *m
        @bb84_good_projector R n_key delta)) ->
  smooth_hmax_mx eps0 rho <=
    n_key%:R * bb84_h_trunc delta.
\end{minted}
\switchcolumn
\textbf{Good-subspace mass to smooth max-entropy.}
Let $P_{\mathrm{good}}$ project onto key pairs with relative Hamming distance strictly below $\delta$.
For a normalized state supported on that subspace,
\[
\rho P_{\mathrm{good}}=\rho
\Longrightarrow
H_{\max}(Z|Z')_\rho\le n\,h_{\mathrm{trunc}}(\delta).
\]
For the pass mixture with $p_{\rm pass}^{\rm tot}>0$, parameter estimation
supplies
\[
1-\frac{B_{\rm pass}(\eps)}{p_{\rm pass}^{\rm tot}}
\le
\Tr(P_{\mathrm{good}}\rho_{ZZ'|\rm pass}P_{\mathrm{good}}).
\]
For any normalized state with good-subspace mass at least
$1-\eps_0^2$, $0\le\eps_0<1$, the reusable smoothing step yields
\[
H_{\max}^{\eps_0}(Z|Z')_\rho
\le n\,h_{\mathrm{trunc}}(\delta).
\]
The basis-pair projector used in \coqi{bb84_good_projector} is expanded in
\Cref{tab:alignment-bb84-pass-state}.
The Hamming-distance and ordinal encodings in \coqi{relative_hamming_ball} are
expanded in \Cref{tab:alignment-sampling-objects}; \coqi{gfidelity} and
\coqi{pur_dist_mx}, used by \coqi{hmax_mx} and \coqi{smooth_hmax_mx}, are
expanded in \Cref{tab:alignment-entropy}.
These lemmas isolate the coding-theoretic projector bound, its smoothing
step, and the pass-event probability accounting used by the labelled endpoint.
\end{paracol}
\endgroup
\AlignmentBlockEnd

\AlignmentParagraph{Readout.}

\AlignmentContinuationLabel{tab:alignment-bb84-readout}
\AlignmentBlockHead{Computational-basis readout.}
\begingroup\AlignmentBlockSetup
\begin{paracol}{2}
\code{src/qkd/bb84/model.v}: \coqi{bb84_pair_basis_readout};
\code{src/qkd/bb84/bb84.v}: \coqi{bb84_bob_z_basis_readout_hmax}
\begin{minted}{coq}
Definition bb84_pair_basis_readout (n : nat)
    (rho : 'M[C]_((2 ^ n) * (2 ^ n))) :
    'M[C]_((2 ^ n) * (2 ^ n)) :=
  \sum_(z : 'I_(2 ^ n)) \sum_(b : 'I_(2 ^ n))
    rho (mxtens_index (z, b)) (mxtens_index (z, b)) *:
      @bb84_pair_projector n z b.
Lemma bb84_bob_z_basis_readout_hmax
    (P : bb84_protocol) (ne : nat) (eps0 : R)
    (dim_Zk_key : (2 ^ bb84_n P)%N = dim ('H[HL]_Zkey))
    (dim_Bq_key : (2 ^ bb84_n P)%N = dim ('H[HL]_Bq))
    (dim_Zp_key : (2 ^ bb84_n P)%N = dim ('H[HL]_Zp))
    (rho : 'M[C]_((2 ^ bb84_n P) * ne * (2 ^ bb84_n P))) :
  rho \is den1mx ->
  0 <= eps0 -> eps0 < 1 ->
  let dimZBq :=
    bb84_register_pair_dim disZBq dim_Zk_key dim_Bq_key in
  let rhoZB := eur_rho_ZC disZBq rho
    (@std_projective_povm R (2 ^ bb84_n P)) dimZBq in
  let rhoZZp := biprod_of_mx disZZp
    (castmx (f_equal2 muln dim_Zk_key dim_Zp_key,
             f_equal2 muln dim_Zk_key dim_Zp_key)
      (@bb84_pair_basis_readout R (bb84_n P)
        (rho_ZC (@std_projective_povm R (2 ^ bb84_n P)) rho))) in
  smooth_hmax Bq eps0 rhoZB <= smooth_hmax Zp eps0 rhoZZp.
\end{minted}
The state constructors \coqi{rho_ZC} and \coqi{eur_rho_ZC} are expanded in
\Cref{tab:alignment-main-statement-bridges}; \coqi{bb84_pair_projector} is
expanded in \Cref{tab:alignment-bb84-pass-state}.
This is the dephasing map used by the readout data-processing theorem.
\switchcolumn
\textbf{Dephasing readout and max-entropy data processing.}
The basis-pair readout is
\[
\mathcal R(\rho)=
\sum_{z,b}\rho_{(z,b),(z,b)}
 |z,b\rangle\!\langle z,b|.
\]
For Bob's computational-basis readout,
\[
H_{\max}^{\eps_0}(Z|B_q)\le H_{\max}^{\eps_0}(Z|Z').
\]
\end{paracol}
\endgroup
\AlignmentBlockEnd

\subsubsection{Privacy Amplification}

\AlignmentParagraph{Leakage arithmetic.}

\AlignmentContinuationLabel{tab:alignment-bb84-middle-cont}
\AlignmentBlockHead{Privacy amplification and the optimized witness.}
\begingroup\AlignmentBlockSetup
\begin{paracol}{2}
\code{src/qkd/bb84/model.v}: \coqi{privacy_amplification};
\code{src/qkd/bb84/bb84.v}: \coqi{bb84_public_leakage_pa_hmin_diff}, \coqi{bb84_delta_from_hmin_diff}, \coqi{bb84_delta_from_public_leakage}
\begin{minted}{coq}
Corollary privacy_amplification
    (Fh : 'Idx[HL]_Fl -> 'Idx[HL]_S -> 'Idx[HL]_Zl)
    (eps' : R) (rhoXE : 'F[HL]_(S :|: E)) :
  rhoXE \is psdlf -> @complex.Re R (\Tr rhoXE) <= 1 ->
  is_cq_state rhoXE -> @two_univ R _ _ _ Fh -> 0 <= eps' ->
  dist_from_uniform (post_hash_state disSE Fh rhoXE) <=
    eps' + (2%:R)^-1 * Num.sqrt
      (expR ((log2 (dim 'H[HL]_Zl)%:R -
        smooth_hmin E eps' rhoXE) * ln 2%:R)).
Lemma bb84_public_leakage_pa_hmin_diff
    (Ppub : {set L}) (rhoXE : 'F[HL]_(S :|: E))
    (leak eps' base tail logZ : R)
    (pub : bb84_public_leakage Ppub rhoXE leak) :
  0 <= eps' < Num.sqrt
    (@complex.Re R (\Tr (bb84_public_state pub))) ->
  base <= smooth_hmin E eps' rhoXE ->
  logZ <= base + tail - leak ->
  let eq_pub_assoc := bb84_public_assoc S E Ppub in
  logZ - smooth_hmin (E :|: Ppub) eps'
    (castlf (eq_pub_assoc, eq_pub_assoc)
      (bb84_public_state pub)) <= tail.
Lemma bb84_delta_from_hmin_diff
    (Epub : {set L}) (Fh : 'Idx[HL]_Fl -> 'Idx[HL]_S -> 'Idx[HL]_Zl)
    (eps' ebar : R) (rhoXEpub : 'F[HL]_(S :|: Epub)) :
  rhoXEpub \is psdlf -> @complex.Re R (\Tr rhoXEpub) <= 1 ->
  is_cq_state rhoXEpub -> @two_univ R _ _ _ Fh ->
  0 <= eps' -> 0 < ebar ->
  log2 (dim 'H[HL]_Zl)%:R - smooth_hmin Epub eps' rhoXEpub
    <= 2%:R * log2 (2%:R * ebar) ->
  delta_secret (post_hash_state disSEpub Fh rhoXEpub) (eps' + ebar).
Lemma bb84_delta_from_public_leakage
    (Ppub : {set L}) (rhoXE : 'F[HL]_(S :|: E))
    (leak eps' ebar base : R)
    (pub : bb84_public_leakage Ppub rhoXE leak)
    (Fh : 'Idx[HL]_Fl -> 'Idx[HL]_S -> 'Idx[HL]_Zl) :
  0 <= eps' < Num.sqrt
    (@complex.Re R (\Tr (bb84_public_state pub))) ->
  base <= smooth_hmin E eps' rhoXE ->
  log2 (dim 'H[HL]_Zl)%:R <=
    base + 2%:R * log2 (2%:R * ebar) - leak ->
  @two_univ R _ _ _ Fh -> 0 < ebar ->
  let eq_pub_assoc := bb84_public_assoc S E Ppub in
  delta_secret
    (post_hash_state disSEpub Fh
      (castlf (eq_pub_assoc, eq_pub_assoc)
        (bb84_public_state pub)))
    (eps' + ebar).
\end{minted}
The last declaration packages public leakage and the raw privacy-amplification distance bound.
The EUR-plus-statistics subtraction is derived inside \coqi{bb84_secret}; the same theorem carries that bound through public leakage, accepted-state postselection, and privacy amplification.
\switchcolumn
\textbf{Privacy amplification and leakage arithmetic (steps~6--8).}
The LHL permits taking
$\Delta=\eps'+\frac12\sqrt{2^{\ell-H_{\min}^{\eps'}(X|E')}}$ as an upper bound;
the key-length premise yields $\Delta\le\eps'+\bar\eps$, and the non-abort factor yields secrecy.
The reusable leakage rearrangement is
\[
\begin{gathered}
b\le H_{\min}^{\eps'}(X|E),\qquad
\log |Z|\le b+t-\mathsf{leak}\\
\Longrightarrow\quad
\log |Z|-H_{\min}^{\eps'}(X|EP_{\rm pub})\le t.
\end{gathered}
\]
\end{paracol}
\endgroup
\AlignmentBlockEnd

\subsubsection{Key Length}

\AlignmentParagraph{Optimized witness.}

\AlignmentContinuationLabel{tab:alignment-key-witness}
\AlignmentBlockHead{Optimized key-length witnesses.}
\begingroup\AlignmentBlockSetup
\begin{paracol}{2}
\code{src/qkd/bb84/prop.v}: \coqi{bb84_key_length_has_sup}, \coqi{bb84_key_length_sup_witness}
\begin{minted}{coq}
Lemma bb84_key_length_has_sup
    (P : @bb84_protocol R) (q eps_s : R) :
  0 < eps_s -> eps_s <= 1 ->
  has_sup (fun M : R =>
    exists eps ebar : R,
      0 < eps /\ eps < 1 /\ 0 < ebar /\ eps + ebar <= eps_s /\
      M = ((floor (bb84_key_length_ub
        (bb84_n P) (bb84_k P) q (bb84_Qtol P)
        eps ebar (bb84_leak_EC P) (bb84_eps_c P)))%:~R : R)).
Lemma bb84_key_length_sup_witness
    (P : @bb84_protocol R) (q eps_s target : R) :
  0 < eps_s -> eps_s <= 1 -> target \is a int_num ->
  target <= sup (fun M : R =>
    exists eps ebar : R,
      0 < eps /\ eps < 1 /\ 0 < ebar /\ eps + ebar <= eps_s /\
      M = ((floor (bb84_key_length_ub
        (bb84_n P) (bb84_k P) q (bb84_Qtol P)
        eps ebar (bb84_leak_EC P) (bb84_eps_c P)))%:~R : R)) ->
  exists eps ebar,
    0 < eps /\ eps < 1 /\ 0 < ebar /\ eps + ebar <= eps_s /\
    target <=
      ((floor (bb84_key_length_ub
        (bb84_n P) (bb84_k P) q (bb84_Qtol P)
        eps ebar (bb84_leak_EC P) (bb84_eps_c P)))%:~R : R).
\end{minted}
These declarations provide the real-supremum and integer-rounding witnesses
for the optimized key-length premise.
\switchcolumn
\textbf{Optimized parameter witness.}
For $0<\eps_s\le1$, the feasible floored key-length set is nonempty and
bounded above.
Consequently, an integer key target below its supremum admits concrete
$\eps,\bar\eps$ satisfying positivity, secrecy budget, and the floored
key-length inequality.
\end{paracol}
\endgroup
\AlignmentBlockEnd

\AlignmentParagraph{Pass split.}

\AlignmentContinuationLabel{tab:alignment-pass-split}
\AlignmentBlockHead{The smoothing-domain and low-pass branches.}
\begingroup\AlignmentBlockSetup
\begin{paracol}{2}
\code{src/qkd/bb84/model.v}: \coqi{bb84_entropy_domain}, \coqi{bb84_eps_lt_sqrt_pass};
\code{src/qkd/bb84/bb84.v}: \coqi{bb84_secret}
\begin{minted}{coq}
Definition bb84_entropy_domain (eps p_pass : R) : Prop :=
  0 < eps /\ eps < Num.sqrt p_pass.
Lemma bb84_eps_lt_sqrt_pass eps p_pass :
  bb84_entropy_domain eps p_pass -> eps < Num.sqrt p_pass.
\end{minted}
\coqi{bb84_secret} contains the case split, low-pass arithmetic, and distance
bound; \coqi{bb84_eps_lt_sqrt_pass} provides the smoothing-domain conversion.
\switchcolumn
\textbf{EUR-valid versus high-abort split (\Cref{sec:findings:eur}).}
If $\eps^2<p_{\rm pass}$, then $0<\eps<\sqrt{p_{\rm pass}}$, so the rescaled smoothing radius lies in the EUR domain.
In the complementary branch $p_{\rm pass}\le\eps^2$, the abort/pass relation and $\eps+\bar\eps\le\eps_s$ give $1-p_{\rm abort}\le\eps_s$; the universal distance bound closes secrecy without invoking the EUR.
\end{paracol}
\endgroup
\AlignmentBlockEnd

\subsection{Semantics}
\label{app:semantics-alignment-protocol-semantics}

\subsubsection{Well-Formedness}

\AlignmentParagraph{Command algebra.}

\AlignmentContinuationLabel{tab:alignment-syntax-wf}
\AlignmentBlockHead{Sequencing and command well-formedness.}
\begingroup\AlignmentBlockSetup
\begin{paracol}{2}
\code{src/qkd/protocol/syntax.v}: \coqi{cmd_wf}, \coqi{seq_cmd}, \coqi{for_cmd}, \coqi{cmd_seqA}, \coqi{cmd_seq_skip_l}, \coqi{cmd_seq_skip_r}, \coqi{cmd_wf_seq_cmd}, \coqi{cmd_wf_for_cmd}
\begin{minted}{coq}
Fixpoint cmd_wf (c : qkd_cmd) : Prop := match c with
  | Cskip => True
  | Cassign _ _ => True
  | Csample _ mu =>
      forall s, (forall v, 0 <= mu s v) /\ \sum_v mu s v = 1
  | Cinit Q rho => [disjoint Q & qA] /\
      forall s, rho s \is den1lf
  | Cmeasure _ Q M => [disjoint Q & qA] /\
      forall s, \sum_v ((M s v)^A \o (M s v)) = \1
  | Cattack Q => [disjoint Q & qA]
  | Cif _ c1 c2 => cmd_wf c1 /\ cmd_wf c2
  | Cseq c1 c2 => cmd_wf c1 /\ cmd_wf c2
  end.
Lemma cmd_seqA : associative seq_cmd.
Lemma cmd_seq_skip_l : left_id Cskip seq_cmd.
Lemma cmd_seq_skip_r : right_id Cskip seq_cmd.
Lemma cmd_wf_seq_cmd (c1 c2 : qkd_cmd) :
  cmd_wf c1 -> cmd_wf c2 -> cmd_wf (seq_cmd c1 c2).
Lemma cmd_wf_for_cmd {limit} (body : 'I_limit -> qkd_cmd) :
  (forall i, cmd_wf (body i)) -> cmd_wf (for_cmd body).
\end{minted}
The definitions \coqi{seq_cmd} and \coqi{for_cmd} appear in
\Cref{tab:alignment-semantics-core}; this row gives \coqi{cmd_wf} and the
associated laws.
\switchcolumn
\textbf{Sequencing and well-formedness.}
Sequencing forms a monoid,
$(c_1;c_2);c_3=c_1;(c_2;c_3)$ and
$\mathsf{skip};c=c=c;\mathsf{skip}$.
A sampling distribution is nonnegative and sums to one; initialized states have unit trace; measurements are complete; initialized, measured, and attacked registers are disjoint from the adversary workspace.
Well-formedness is closed under sequencing and finite loops.
\end{paracol}
\endgroup
\AlignmentBlockEnd

\subsubsection{Denotation}

\AlignmentParagraph{Linearity and healthiness.}

\AlignmentContinuationLabel{tab:alignment-denotation-laws}
\AlignmentBlockHead{Reusable denotational metatheory.}
\begingroup\AlignmentBlockSetup
\begin{paracol}{2}
\code{src/qkd/protocol/semantics.v}: \coqi{cq_single},
\coqi{denote_singleton}, \coqi{cq_state_decomp}, \coqi{denote_seq_cmd},
\coqi{denote_scale}, \coqi{denote_sum}, \coqi{denote_singleton_decomp},
\coqi{denote_psd}, \coqi{denote_mass}, \coqi{denote_cq_wf}
\begin{minted}{coq}
Definition cq_single (s : store) (rho : 'F[HL]_setT) : cq_state :=
  [ffun t => if t == s then rho else 0].
Definition denote_singleton (c : qkd_cmd) (s : store)
    (rho : 'F[HL]_setT) : cq_state :=
  denote c (cq_single s rho).
Lemma cq_state_decomp (xi : cq_state) :
  xi = \sum_s cq_single s (xi s).
Lemma denote_seq_cmd (c1 c2 : qkd_cmd) (xi : cq_state) :
  denote (seq_cmd c1 c2) xi = denote c2 (denote c1 xi).
Lemma denote_scale (c : qkd_cmd) (a : C) (xi : cq_state) :
  denote c (a *: xi) = a *: denote c xi.
Lemma denote_sum (I : finType) (c : qkd_cmd) (F : I -> cq_state) :
  denote c (\sum_i F i) = \sum_i denote c (F i).
Lemma denote_singleton_decomp (c : qkd_cmd) (xi : cq_state) :
  denote c xi = \sum_s denote_singleton c s (xi s).
Lemma denote_psd (c : qkd_cmd) (xi : cq_state) :
  cmd_wf c -> (forall s, xi s \is psdlf) ->
  forall s, denote c xi s \is psdlf.
Lemma denote_mass (c : qkd_cmd) (xi : cq_state) :
  cmd_wf c -> (forall s, xi s \is psdlf) ->
  cq_mass (denote c xi) = cq_mass xi.
Lemma denote_cq_wf (c : qkd_cmd) (xi : cq_state) :
  cmd_wf c -> cq_wf xi -> cq_wf (denote c xi).
\end{minted}
All project-defined objects in these laws are expanded in
\Cref{tab:alignment-semantics-core,tab:alignment-syntax-wf,tab:alignment-denotation}.
\switchcolumn
\textbf{Decomposition, linearity, and healthiness.}
Every cq-state decomposes into singleton states, and
\[
\sem c(a\xi)=a\sem c(\xi),\qquad
\sem c\!\left(\sum_i\xi_i\right)=\sum_i\sem c(\xi_i).
\]
Moreover,
\[
\sem{c_1;c_2}=\sem{c_2}\circ\sem{c_1},
\]
and, when $\mathsf{wf}(c)$ and every block $\xi(s)$ is positive
semidefinite,
\[
\operatorname{mass}(\sem c(\xi))=\operatorname{mass}(\xi).
\]
Finally, $\mathsf{wf}(c)\land\mathsf{wf}(\xi)$ implies
$\mathsf{wf}(\sem c(\xi))$.
\end{paracol}
\endgroup
\AlignmentBlockEnd

\subsubsection{Classical Transport}

\AlignmentParagraph{Marginal preservation.}

\AlignmentContinuationLabel{tab:alignment-classical-transport}
\AlignmentBlockHead{Classical-tail marginal transport.}
\begingroup\AlignmentBlockSetup
\begin{paracol}{2}
\code{src/qkd/example/run.v}: \coqi{classical_cmd}, \coqi{denote_classical_ptraceso_eq}
\begin{minted}{coq}
Fixpoint classical_cmd (c : qkd_cmd) : Prop := match c with
  | Cskip => True
  | Cassign _ _ => True
  | Csample _ _ => True
  | Cinit _ _ => False
  | Cmeasure _ _ _ => False
  | Cattack _ => False
  | Cif _ c1 c2 => classical_cmd c1 /\ classical_cmd c2
  | Cseq c1 c2 => classical_cmd c1 /\ classical_cmd c2
  end.
Lemma denote_classical_ptraceso_eq
    (qA : {set L}) (EA : @adversary R L HL V qA)
    (Qsa : {set L}) (c : @qkd_cmd R L HL Cv V)
    (xi1 xi2 : @cq_state R L HL Cv V) :
  classical_cmd c ->
  (forall u, ptraceso Qsa (xi1 u) = ptraceso Qsa (xi2 u)) ->
  forall u,
  ptraceso Qsa (@denote R L HL Cv V qA Pvar EA c xi1 u) =
  ptraceso Qsa (@denote R L HL Cv V qA Pvar EA c xi2 u).
\end{minted}
This generic noninterference-style lemma transports the per-round PM/EB equality through announcement, parameter estimation, error correction, abort branches, and privacy amplification.
\switchcolumn
\textbf{Classical programs preserve quantum-marginal equality.}
For a command with no initialization, measurement, or adversarial quantum action,
\[
\begin{aligned}
&\mathsf{classical}(c)\land
  \bigl(\forall s,\Tr_Q\xi_1(s)=\Tr_Q\xi_2(s)\bigr)\\
&\qquad\Longrightarrow
  \forall s,\Tr_Q\sem c(\xi_1)(s)
  =\Tr_Q\sem c(\xi_2)(s).
\end{aligned}
\]
\end{paracol}
\endgroup
\AlignmentBlockEnd

\subsection{Instance}
\label{app:semantics-alignment-instance}

\subsubsection{Source and Device}

\AlignmentParagraph{Ideal source.}

\AlignmentContinuationLabel{tab:alignment-instance-source-state}
\AlignmentBlockHead{Concrete full-string source state and measurements.}
\begingroup\AlignmentBlockSetup
\begin{paracol}{2}
\code{src/qkd/example/instantiation.v}: \coqi{ex_source_dim}, \coqi{ex_source_zero}, \coqi{ex_source_rho}, \coqi{ex_source_M}, \coqi{ex_source_N}, \coqi{ex_source_quality}
\begin{minted}{coq}
Definition ex_source_dim : nat := ((2 ^ nn) * 1 * (2 ^ nn))%N.
Definition ex_source_zero : 'I_ex_source_dim.
Definition ex_source_rho : 'M[C]_ex_source_dim :=
  delta_mx ex_source_zero ex_source_zero.
Definition ex_source_M : 'I_(2 ^ nn) -> 'M[C]_(2 ^ nn) :=
  @x_projective_povm R nn.
Definition ex_source_N : 'I_(2 ^ nn) -> 'M[C]_(2 ^ nn) :=
  @std_projective_povm R (2 ^ nn).
Definition ex_source_quality :
  (bb84_n Pex)%:R * (1 : R) <= q_of_mx ex_source_M ex_source_N.
\end{minted}
The projective POVMs and overlap functional are expanded in
\Cref{tab:alignment-device-primitives,tab:alignment-eur-objects}.
The shared operands \coqi{nn} and \coqi{Pex} are expanded in
\Cref{tab:alignment-instance}.
\switchcolumn
\textbf{Concrete ideal full-string source state and POVMs.}
The source state and its two POVMs are
\[
\begin{gathered}
\rho_{\rm src}=|0\rangle\!\langle0|,\\
M_x=(H^{\otimes n})^\dagger
  |x\rangle\!\langle x|H^{\otimes n},\\
N_z=|z\rangle\!\langle z|,
\end{gathered}
\]
with the overlap certificate $nq\le q(M,N)$ at $q=1$.
\end{paracol}
\endgroup
\AlignmentBlockEnd

\AlignmentParagraph{Source certificate.}

\AlignmentContinuationLabel{tab:alignment-instance-quantum-components}
\AlignmentBlockHead{Concrete full-string source certificate.}
\begingroup\AlignmentBlockSetup
\begin{paracol}{2}
\code{src/qkd/example/instantiation.v}: \coqi{ex_source_model}
\begin{minted}{coq}
Definition ex_source_model : bb84_source_model Pex (1 : R).
Proof.
  refine (@BB84SourceModel R Pex 1
    (2 ^ nn)%N 1%N (2 ^ nn)%N (2 ^ nn)%N (2 ^ nn)%N
    ex_source_rho ex_source_M ex_source_N _ _ _ _ _ _).
  - apply/den1mxP; split.
    + exact: delta_mx_ii_psdmx.
    + by rewrite tr_delta_mx_ii.
  - exact: x_projective_povm_is_povm.
  - exact: std_projective_povm_is_povm.
  - exact: ex_source_quality.
  - by [].
  - by [].
Defined.
\end{minted}
The target record is expanded in \Cref{tab:alignment-source-certificate};
its state, POVMs, and quality witness are expanded in
\Cref{tab:alignment-instance-source-state}.
The constructor body displays every semantic field and certificate.
\switchcolumn
\textbf{Certified source record.}
The record uses dimensions $2^n\times1\times2^n$, carries the quality
$nq\le q(M,N)$ at $q=1$, and identifies its $X$-outcome and $Z$/Bob
dimensions with $2^n$.
\end{paracol}
\endgroup
\AlignmentBlockEnd

\AlignmentParagraph{Round interface.}

\AlignmentContinuationLabel{tab:alignment-instance-round-components}
\AlignmentBlockHead{Concrete preparation and measurement operands.}
\begingroup\AlignmentBlockSetup
\begin{paracol}{2}
\code{src/qkd/example/instantiation.v}: \coqi{ex_bit_ord}, \coqi{ex_prepare_state}, \coqi{ex_prepare_state_den1}, \coqi{ex_measurement_operator}, \coqi{ex_measurement_complete}, \coqi{ex_detect_basis_indep}, \coqi{ex_source_dim_A}, \coqi{ex_source_dim_X}, \coqi{ex_source_nb_pos}, \coqi{ex_source_overlap_standard_z}
\begin{minted}{coq}
Definition ex_bit_ord (bit : bool) : 'I_2 :=
  if bit then Ordinal (isT : (1 < 2)%N) else ord0.
Definition ex_prepare_state (basis : bb84_basis) (bit : bool) : 'M[C]_2 :=
  match basis with
  | BasisX => @x_projective_povm2 R (ex_bit_ord bit)
  | BasisZ => @std_projective_povm R 2 (ex_bit_ord bit)
  end.
Definition ex_prepare_state_den1 basis bit :
  ex_prepare_state basis bit \is den1mx.
Definition ex_measurement_operator
    (basis : bb84_basis) (outcome : option bool) : 'M[C]_2 :=
  match outcome with
  | None => 0
  | Some bit => ex_prepare_state basis bit
  end.
Definition ex_measurement_complete basis :
  (ex_measurement_operator basis None)^*t *m
      ex_measurement_operator basis None +
  (ex_measurement_operator basis (Some false))^*t *m
      ex_measurement_operator basis (Some false) +
  (ex_measurement_operator basis (Some true))^*t *m
      ex_measurement_operator basis (Some true) = 1%:M.
Definition ex_detect_basis_indep :
  (ex_measurement_operator BasisX None)^*t *m
      ex_measurement_operator BasisX None =
    (ex_measurement_operator BasisZ None)^*t *m
      ex_measurement_operator BasisZ None.

Definition ex_source_dim_A :
  bb84_src_na ex_source_model = (2 ^ bb84_n Pex)%N.
Definition ex_source_dim_X :
  bb84_src_nx ex_source_model = (2 ^ bb84_n Pex)%N.
Definition ex_source_nb_pos : (0 < bb84_src_nb ex_source_model)%N.
Definition ex_source_overlap_standard_z :
  (bb84_n Pex)%:R * (1 : R) <= q_of_mx
    (bb84_source_x_povm ex_source_dim_A ex_source_dim_X)
    (@std_projective_povm R (2 ^ bb84_n Pex)).
\end{minted}
The concrete source and generic device record are expanded in
\Cref{tab:alignment-instance-quantum-components,tab:alignment-device-certificate};
the standard projectors are expanded in \Cref{tab:alignment-device-primitives}.
\switchcolumn
\textbf{Concrete preparation and measurement interface.}
The per-round qubit interface is
\[
\begin{gathered}
\rho_{\mathbb X,b}=|b_{\mathbb X}\rangle\!\langle b_{\mathbb X}|,
\quad
\rho_{\mathbb Z,b}=|b\rangle\!\langle b|,\\
K_{u,\emptyset}=0, \quad K_{u,b}=\rho_{u,b}.
\end{gathered}
\]
The source adapters certify dimensions $d_A=d_X=2^n$, a nonzero Bob
register, and the standard-$\mathbb Z$ overlap lower bound $nq\le q(M,N)$.
\end{paracol}
\endgroup
\AlignmentBlockEnd

\AlignmentParagraph{Entangled source.}

\AlignmentContinuationLabel{tab:alignment-instance-eb-components}
\AlignmentBlockHead{Coherent pair and preparation pointer.}
\begingroup\AlignmentBlockSetup
\begin{paracol}{2}
\code{src/qkd/example/instantiation.v}: \coqi{ex_bb84_x_prep_ket}, \coqi{ex_bb84_x_prep_ket_norm}, \coqi{ex_bb84_x_prep_ket_rho}, \coqi{ex_bb84_x_prep_ket_z_superposition}, \coqi{ex_bb84_x_prep_ket_z_rho}, \coqi{ex_bb84_pure_pair}, \coqi{ex_bb84_pure_pair_den1}, \coqi{ex_prep_ptr}, \coqi{ex_prep_ptr_complete}, \coqi{ex_cd_meas_Z}
\begin{minted}{coq}

Definition ex_bb84_x_prep_ket (bit : bool) : 'cV[C]_2 :=
  (@hadamard_mx2 R)^*t *m
    delta_mx (ex_bit_ord bit) (0 : 'I_1).
Lemma ex_bb84_x_prep_ket_norm (bit : bool) :
  (ex_bb84_x_prep_ket bit)^*t *m ex_bb84_x_prep_ket bit = 1%:M.
Lemma ex_bb84_x_prep_ket_rho (bit : bool) :
  cd_rho ex_device BasisX bit =
    ex_bb84_x_prep_ket bit *m (ex_bb84_x_prep_ket bit)^*t.
Lemma ex_bb84_x_prep_ket_z_superposition (zbit : bool) :
  ((((Num.sqrt (2%:R))^-1)%R)%:C) *:
    (ex_bb84_x_prep_ket false +
      bb84_eb_z_sign R zbit *: ex_bb84_x_prep_ket true) =
  delta_mx (ex_bit_ord zbit) (0 : 'I_1).
Lemma ex_bb84_x_prep_ket_z_rho (zbit : bool) :
  cd_rho ex_device BasisZ zbit = ((2%:R)^-1)%:C *:
    ((ex_bb84_x_prep_ket false +
        bb84_eb_z_sign R zbit *: ex_bb84_x_prep_ket true) *m
     (ex_bb84_x_prep_ket false +
        bb84_eb_z_sign R zbit *: ex_bb84_x_prep_ket true)^*t).
Definition ex_bb84_pure_pair : 'M[C]_(4 * 2) :=
  bb84_eb_pair_of_pure
    (ex_bb84_x_prep_ket false) (ex_bb84_x_prep_ket true).
Lemma ex_bb84_pure_pair_den1 : ex_bb84_pure_pair \is den1mx.
Definition ex_prep_ptr (basis : bb84_basis) (b : bool) : 'M[C]_4 :=
  match basis with
  | BasisX => bitlift_mx bb84_src_ptr_slot
      (delta_mx (bito b) (bito b))
  | BasisZ => bitlift_mx bb84_src_ptr_slot
      (hadamard_mx2 *m delta_mx (bito b) (bito b) *m hadamard_mx2^*t)
  end.
Lemma ex_prep_ptr_complete (basis : bb84_basis) :
  (ex_prep_ptr basis false)^*t *m ex_prep_ptr basis false +
  (ex_prep_ptr basis true)^*t *m ex_prep_ptr basis true = 1%:M.
Lemma ex_cd_meas_Z (b : bool) :
  cd_meas ex_device BasisZ (Some b) =
    delta_mx (ex_bit_ord b) (ex_bit_ord b).
\end{minted}
The standard projectors, coherent-pair constructor, and pointer lift are expanded in
\Cref{tab:alignment-device-primitives}.
\switchcolumn
\textbf{Coherent BB84 pair and Alice's pointer.}
Writing $|b_{\mathbb X}\rangle=H^\dagger|b\rangle$, the EB pair is the
canonical purification of $|0_{\mathbb X}\rangle,|1_{\mathbb X}\rangle$.
Alice's pointer uses computational projectors in the $\mathbb X$ branch and
Hadamard-conjugated projectors in the $\mathbb Z$ branch.
\end{paracol}
\endgroup
\AlignmentBlockEnd

\subsubsection{Classical Components}

\AlignmentParagraph{Field hashing.}

\AlignmentContinuationLabel{tab:alignment-instance-classical-components}
\AlignmentBlockHead{Concrete reconciliation and hash operands.}
\begingroup\AlignmentBlockSetup
\begin{paracol}{2}
\code{src/lhl/construction.v}: \coqi{binary_field_dim}, \coqi{binary_field_coord_project}, \coqi{binary_field_coord_output_card};
\code{src/qkd/example/instantiation.v}: \coqi{ex_ir_bits}, \coqi{ex_ir_pub}, \coqi{ex_ir_dec}, \coqi{ex_ir_leak}, \coqi{ex_hash_field_sig}, \coqi{ex_hash_K}, \coqi{ex_hash_char2}, \coqi{ex_hash_card}, \coqi{ex_hash_dim_nn}, \coqi{ex_field_of_ord}, \coqi{ex_field_of_ord_bij}, \coqi{ex_ord_of_bits}, \coqi{ex_ord_of_bits_bij}, \coqi{ex_field_hash}, \coqi{ex_field_hash_two_univ}, \coqi{ex_verify_seed_count}, \coqi{ex_tag_count}, \coqi{ex_verify_hash}, \coqi{ex_verify_hash_two_univ}, \coqi{ex_pa_seed_count}, \coqi{ex_pa_hash}, \coqi{ex_pa_hash_two_univ}
\begin{minted}{coq}
Definition ex_ir_bits : nat := 0.
Definition ex_ir_pub (_ _ : 'I_(2 ^ nn)) : 'I_(2 ^ ex_ir_bits) := ord0.
Definition ex_ir_dec (_ : 'I_(2 ^ ex_ir_bits)) (xb : 'I_(2 ^ nn)) := xb.
Definition ex_ir_leak : (ex_ir_bits)%:R <= bb84_leak_EC Pex.

Definition binary_field_dim : nat := \dim {: K : vectType 'F_2}.
Definition binary_field_coord_project (ell : nat)
    (Hell : (ell <= binary_field_dim)%N) (x : K) : 'rV['F_2]_ell :=
  \row_i coord (vbasis {: K}) (widen_ord Hell i) x.
Lemma binary_field_coord_output_card (ell : nat) :
  #|'rV['F_2]_ell| = (2 ^ ell)%N.

Definition ex_hash_field_sig :
  {K0 : finFieldType | 2 \in [pchar K0] & #|K0| = (2 ^ nn)%N}.
Definition ex_hash_K : finFieldType := s2val ex_hash_field_sig.
Definition ex_hash_char2 : 2 \in [pchar ex_hash_K] :=
  s2valP ex_hash_field_sig.
Definition ex_hash_card : #|ex_hash_K| = (2 ^ nn)%N :=
  s2valP' ex_hash_field_sig.
Local Notation ex_hash_F := (pPrimeCharType ex_hash_char2).
Lemma ex_hash_dim_nn : binary_field_dim ex_hash_char2 = nn.
Lemma ex_hash_F_card : #|ex_hash_F| = (2 ^ nn)%N.
Definition ex_field_of_ord (x : 'I_(2 ^ nn)) : ex_hash_F :=
  enum_val (cast_ord (esym ex_hash_F_card) x).
Lemma ex_field_of_ord_bij : bijective ex_field_of_ord.
Definition ex_ord_of_bits (l : nat) (z : 'rV['Z_2]_l) : 'I_(2 ^ l) :=
  cast_ord (binary_field_coord_output_card l) (enum_rank z).
Lemma ex_ord_of_bits_bij (l : nat) : bijective (@ex_ord_of_bits l).
Definition ex_field_hash (l : nat)
    (Hl : (l <= binary_field_dim ex_hash_char2)%N)
    (i x : 'I_(2 ^ nn)) : 'I_(2 ^ l) :=
  ex_ord_of_bits
    (field_mult_project_hash (binary_field_coord_project Hl)
      (ex_field_of_ord i) (ex_field_of_ord x)).
Lemma ex_field_hash_two_univ (l : nat)
    (Hl : (l <= binary_field_dim ex_hash_char2)%N) :
  @two_univ R _ _ _ (ex_field_hash Hl).
\end{minted}
\switchcolumn
\textbf{Zero-bit reconciliation and the two concrete field hashes.}
The instance uses
\[
d_{\rm IR}=0,
\quad \mathsf{IR}_{\rm pub}(x,y)=0,
\quad \mathsf{IR}_{\rm dec}(p,y)=y.
\]
Let $K=\mathrm{GF}(2^n)$, let $\phi:[2^n]\simeq K$, and let
$\pi_r$ retain the first $r$ binary coordinates.
For $r\in\{34,2000\}$,
\[
h_i^{(r)}(x)=\operatorname{ord}\!\left(
  \pi_r(\phi(x)\phi(i))\right).
\]
The verification family has $2^n$ seeds and $2^{34}$ tags; the PA family has
$2^n$ seeds and $2^{2000}$ outputs.  Both are two-universal.
\end{paracol}
\endgroup
\AlignmentBlockEnd

\AlignmentContinuationLabel{tab:alignment-instance-classical-components-cont}
\AlignmentBlockHead{Concrete reconciliation and hash operands (continued).}
\begingroup\AlignmentBlockSetup
\begin{paracol}{2}
\code{src/qkd/example/instantiation.v}: \coqi{ex_verify_seed_count},
\coqi{ex_tag_count}, \coqi{ex_verify_hash}, \coqi{ex_verify_hash_two_univ},
\coqi{ex_pa_seed_count}, \coqi{ex_pa_hash}, \coqi{ex_pa_hash_two_univ}
\begin{minted}{coq}
Definition ex_verify_seed_count : nat := (2 ^ nn)%N.
Definition ex_tag_count : nat := 2 ^ ex_tag_bits.
Lemma ex_tag_bits_le_dim :
  (ex_tag_bits <= binary_field_dim ex_hash_char2)%N.
Definition ex_verify_hash :
    'I_ex_verify_seed_count -> 'I_(2 ^ nn) -> 'I_(2 ^ ex_tag_bits) :=
  ex_field_hash ex_tag_bits_le_dim.
Definition ex_verify_seed_pos : (0 < ex_verify_seed_count)%N.
Definition ex_tag_pos : (0 < ex_tag_count)%N.
Definition ex_tag_le : ((ex_tag_count%:R : R)^-1 <= bb84_eps_c Pex)%R.
Definition ex_tag_bits_bound :
  log2 (ex_tag_count%:R : R) <= log2 (2%:R / bb84_eps_c Pex).
Definition ex_verify_hash_two_univ :
  @two_univ R _ _ _ ex_verify_hash.

Definition ex_pa_seed_count : nat := (2 ^ nn)%N.
Lemma exlen_le_dim : (exlen <= binary_field_dim ex_hash_char2)%N.
Definition ex_pa_hash :
    'I_ex_pa_seed_count -> 'I_(2 ^ nn) -> 'I_(2 ^ exlen) :=
  ex_field_hash exlen_le_dim.
Definition ex_pa_seed_pos : (0 < ex_pa_seed_count)%N.
Definition ex_pa_hash_two_univ : @two_univ R _ _ _ ex_pa_hash.
\end{minted}
The multiplication family and transport theorem are expanded in
\Cref{tab:alignment-short-seed}; the generic two-universal predicate in
\Cref{tab:alignment-ingredients}.
The dependent finite-field witness, coordinate basis, ordinal casts, and
their bijectivity proofs are displayed here because the two concrete maps use them.
The numeric operands \coqi{nn}, \coqi{Pex}, \coqi{ex_tag_bits}, and
\coqi{exlen} are expanded in \Cref{tab:alignment-instance}; the shared
\coqi{log2} dependency is expanded in \Cref{tab:alignment-arithmetic}.
\switchcolumn
\textit{Continued.}
\end{paracol}
\endgroup
\AlignmentBlockEnd

\subsubsection{Device Package}

\AlignmentParagraph{Certified package.}

\AlignmentContinuationLabel{tab:alignment-instance-package}
\AlignmentBlockHead{Concrete device packaging and witness adversary.}
\begingroup\AlignmentBlockSetup
\begin{paracol}{2}
\code{src/qkd/example/instantiation.v}: \coqi{ex_device_of}, \coqi{ex_device}, \coqi{ex_bb84_z_readout}, \coqi{ex_bb84_ptr_standard}, \coqi{ex_bb84_pure_xz_source}, \coqi{ex_source_x_povm}, \coqi{ex_device_projective_xz}, \coqi{ex_identity_adversary}
\begin{minted}{coq}
Definition ex_device_of (entangled : bool)
    : bb84_certified_device Pex (1 : R).
Proof.
  refine (@BB84CertifiedDevice R Pex 1
    ex_source_model ex_source_dim_A ex_source_dim_X ex_source_nb_pos
    ex_source_overlap_standard_z
    ex_prepare_state ex_prepare_state_den1
    ex_measurement_operator ex_measurement_complete ex_detect_basis_indep
    entangled ex_bb84_pure_pair ex_bb84_pure_pair_den1
    ex_prep_ptr ex_prep_ptr_complete
    ex_ir_bits ex_ir_pub ex_ir_dec ex_ir_leak
    ex_verify_seed_count ex_verify_seed_pos
    ex_tag_count ex_verify_hash ex_verify_hash_two_univ
    ex_tag_pos ex_tag_le ex_tag_bits_bound
    ex_pa_seed_count ex_pa_seed_pos ex_pa_hash ex_pa_hash_two_univ).
Defined.
Definition ex_device : bb84_certified_device Pex (1 : R) :=
  ex_device_of false.
Lemma ex_bb84_z_readout : @bb84_projective_z_readout R Pex 1 ex_device.
Proof.
  move=> b.
  have Hbo : ex_bit_ord b = bito b by apply/val_inj; case: b.
  by rewrite ex_cd_meas_Z Hbo
    (@bb84_M_click_sqrt_lossless R Pex 1 ex_device (ltac:(by [])))
    mulmx1.
Qed.
Lemma ex_bb84_ptr_standard : @bb84_eb_ptr_standard R Pex 1 ex_device.
Definition ex_bb84_pure_xz_source :
    @bb84_pure_xz_source R Pex 1 ex_device :=
  @BB84PureXZSource R Pex 1 ex_device ex_bb84_x_prep_ket
    ex_bb84_x_prep_ket_norm ex_bb84_x_prep_ket_rho
    ex_bb84_x_prep_ket_z_rho.
Lemma ex_source_x_povm (x : 'I_(2 ^ bb84_n Pex)) :
  @bb84_source_x_povm R Pex 1 (@cd_src R Pex 1 ex_device)
    (@cd_src_dim_A R Pex 1 ex_device)
    (@cd_src_dim_X R Pex 1 ex_device) x =
  @x_projective_povm R (bb84_n Pex) x.
Definition ex_device_projective_xz :
    @bb84_projective_xz_device R Pex 1 ex_device :=
  @BB84ProjectiveXZDevice R Pex 1 ex_device
    ex_bb84_pure_xz_source ex_bb84_z_readout ex_bb84_ptr_standard
    (ltac:(by [])) ex_source_x_povm.
Definition ex_identity_adversary : @adversary R L HL V Eve :=
  fun Q _ =>
    let QE := Q :|: Eve in
    QChannel_Build (@idso_qc R (@hstensor.Hs R L HL QE)).
\end{minted}
Every certified-device constructor operand is expanded in
\Cref{tab:alignment-instance-quantum-components,tab:alignment-instance-round-components,tab:alignment-instance-classical-components};
the target record and structural class in
\Cref{tab:alignment-device-certificate,tab:alignment-device-class}, and the
adversary interface in \Cref{tab:alignment-denotation}.
The protocol operand \coqi{Pex} is expanded in \Cref{tab:alignment-instance}.
\coqi{QChannel_Build}, \coqi{idso_qc}, and labelled-space construction are
CoqQ representation primitives.
\switchcolumn
\textbf{The record instance and positive-pass adversary.}
The shipped device is the prepare-and-measure specialization
\[
\begin{aligned}
\mathsf{dev}&=\mathsf{CertifiedDevice}(\\
  &\quad\rho_{\rm src},M,N,\rho,K,\\
  &\quad\rho_{\rm pair},P,
  \mathsf{IR},h^{(34)},h^{(2000)}),\\
\mathsf{entangled}&=\mathsf{false}.
\end{aligned}
\]
Together with the coherent source and pointer above, it carries the
projective-$\mathbb X/\mathbb Z$ certificate
\[
K_{\mathbb Z,b}=|b\rangle\!\langle b|\,M_{\rm click}^{1/2},
\qquad
M_x^{\rm src}=(H^{\otimes n})^\dagger
  |x\rangle\!\langle x|H^{\otimes n}.
\]
The liveness witness uses the identity attack on the attacked registers and
Eve's register:
\[
\mathcal E_A(Q,P)=\operatorname{id}_{Q\cup E}.
\]
\end{paracol}
\endgroup
\AlignmentBlockEnd

\subsubsection{Run Witness}

\AlignmentParagraph{Concrete history.}

\AlignmentContinuationLabel{tab:alignment-instance-witness}
\AlignmentBlockHead{Concrete device and positive-pass witness.}
\begingroup\AlignmentBlockSetup
\begin{paracol}{2}
\code{src/qkd/example/instantiation.v}: \coqi{ex_initial_store}, \coqi{ex_initial_lf}, \coqi{ex_initial_state}, \coqi{ex_initial_state_cq_wf}, \coqi{ex_key_sample_set}, \coqi{ex_test_set}, \coqi{ex_pass_choice}, \coqi{ex_pass_trace}, \coqi{ex_vseed0}, \coqi{ex_paseed0}, \coqi{ex_key_sample_value}, \coqi{ex_vseed_value}, \coqi{ex_paseed_value}, \coqi{ex_reach_trace}, \coqi{ex_reach_key}, \coqi{ex_reach_vseed}, \coqi{ex_reach_paseed}, \coqi{ex_reach_history}
\begin{minted}{coq}
Definition ex_initial_store : store := [ffun _ => bb84_bot_val].
Definition ex_initial_lf : 'F[HL]_setT :=
  (((dim 'H[HL]_setT)%:R)^-1)%:C *: (\1%VF : 'F[HL]_setT).
Definition ex_initial_state : state :=
  cq_single ex_initial_store ex_initial_lf.
Lemma ex_initial_state_cq_wf : cq_wf ex_initial_state.
Definition ex_key_sample_set : {set 'I_M} :=
  [set i : 'I_M | (nat_of_ord i < nn)%N].
Definition ex_test_set : {set 'I_M} :=
  [set i : 'I_M | (nn <= nat_of_ord i < nn + nn)%N].
Local Definition ex_pass_choice (i : 'I_M) : round_choice :=
  if i \in ex_test_set then
    ((bb84_basis_bool_val true, bb84_bit_val false),
      (bb84_basis_bool_val true, bb84_outcome_val (Some false)))
  else if i \in ex_key_sample_set then
    ((bb84_basis_bool_val false, bb84_bit_val false),
      (bb84_basis_bool_val false, bb84_outcome_val (Some false)))
  else
    ((bb84_basis_bool_val false, bb84_bit_val false),
      (bb84_basis_bool_val true, bb84_outcome_val (Some false))).
Local Definition ex_pass_trace : round_trace :=
  [ffun i => ex_pass_choice i].
Definition ex_vseed0 : 'I_(cd_vseed_count ex_device) :=
  Ordinal (cd_vseed_pos ex_device).
Definition ex_paseed0 : 'I_(cd_paseed_count ex_device) :=
  Ordinal (cd_paseed_pos ex_device).
Definition ex_key_sample_value : V := bb84_set_val ex_key_sample_set.
Definition ex_vseed_value : V := bb84_vseed_val (Some ex_vseed0).
Definition ex_paseed_value : V := bb84_paseed_val (Some ex_paseed0).
Definition ex_reach_trace : round_trace := ex_pass_trace.
Definition ex_reach_key : V := ex_key_sample_value.
Definition ex_reach_vseed : V := ex_vseed_value.
Definition ex_reach_paseed : V := ex_paseed_value.
Definition ex_reach_history : run_history :=
  (ex_reach_trace, ((ex_reach_key, ex_reach_vseed), ex_reach_paseed)).
\end{minted}
The concrete certificate and structural-class package is expanded in
\Cref{tab:alignment-instance-package};
the history shape is expanded in \Cref{tab:alignment-replay-rounds};
\coqi{store} in \Cref{tab:alignment-semantics-core};
\coqi{cq_single}/\coqi{cq_wf} in \Cref{tab:alignment-denotation}; and
\coqi{nn} in \Cref{tab:alignment-instance}.
\switchcolumn
\textbf{Certified ideal device and an error-free detected history.}
The concrete certificate fixes $q=1$ and the prepare-and-measure source flag,
then proves membership in the projective $\mathbb X/\mathbb Z$ class.
For liveness, the first $n$ rounds form the $\mathbb X$ key set, the next $k$ form the $\mathbb Z$ test set, every selected round is detected with bit $0$, and valid verification/PA seeds are chosen.
The singleton initial cq-state is maximally mixed on the quantum registers and is well formed.
\end{paracol}
\endgroup
\AlignmentBlockEnd

\AlignmentParagraph{Reachability.}

\AlignmentContinuationLabel{tab:alignment-run-pass-reachable}
\AlignmentBlockHead{Reusable positive-pass reachability theorem.}
\begingroup\AlignmentBlockSetup
\begin{paracol}{2}
\code{src/qkd/example/run.v}: \coqi{run_pass_prob_reachable}
\begin{minted}{coq}
Lemma run_pass_prob_reachable
    (qA : {set L}) (EA : @adversary R L HL V qA)
    (xi : @cq_state R L HL Cv V)
    (h0 : @run_history V M) (s0 : store) :
  (forall s, xi s \is psdlf) ->
  (forall (i : 'I_M) (a y : V), dev_rho i a y \is psdlf) ->
  (0 < M)%N -> 0 <= Qtol ->
  (forall p y, idec p y = y) ->
  0 < @complex.Re R
    (\Tr (hist_so empty_transcript dev_rho dev_meas EA h0 (xi s0))) ->
  (forall i : 'I_M, choice_y' (h0.1 i) != outEmpty) ->
  (n <= #|@trace_key_set V M basisX outEmpty h0.1|)%N ->
  (k <= #|@trace_test_set V M basisZ outEmpty h0.1|)%N ->
  (forall i : 'I_M, vread (choice_y (h0.1 i)) = Some false) ->
  (forall i : 'I_M, vread (choice_y' (h0.1 i)) = Some false) ->
  (forall (i : 'I_M) (s : store),
    0 < round_choice_weight R n k basisX basisZ vbit
      (h0.1 i) s) ->
  0 < unif_nsubset_mass n
        (@trace_key_set V M basisX outEmpty h0.1) h0.2.1.1 ->
  (forall s : store, 0 < unif_vseed s h0.2.1.2) ->
  (forall s : store, 0 < unif_pseed s h0.2.2) ->
  0 < run_pass_prob abortv vbool
    (@bb84_eval_scheme R L HL Cv V n k l leakEC M Qtol
      a_ y_ b_ y'_ d_ Q Xv Zv Kv XA XB ZA ZB lamv pECv
      Xhatv gv tAv tBv fv Sv Shv abortv Pvar
      basisX basisZ outEmpty bot empty_transcript
      vbool vbit vset vset_of vrat dev_rho dev_meas
      Qsrc dev_pair dev_ptr tr_app_rounds tr_app_pe tr_app_ec tr_app_pa
      bits_on test_bits_on err_count err_of unif_nsubset_mass
      ir_pub ir_dec unif_vseed verify_hash unif_pseed privacy_hash
      qA EA xi).
\end{minted}
The run probability, evaluator, history selectors, and witness history are
expanded in
\Cref{tab:alignment-run-objects,tab:alignment-replay-rounds,tab:alignment-instance-witness}.
\switchcolumn
\textbf{Reachable history implies positive pass mass.}
For a history $h_0$ with positive quantum trace, detected rounds, sufficiently
large key and test sets, zero selected bits, and positive round/subset/seed
weights,
\[
\begin{gathered}
\mathsf{PSD}(\xi,\rho_{\rm dev})\land M>0\land\Qtol\ge0
\land\mathsf{decode}=\id\\
\land\mathsf{reachable\_positive}(h_0)
\quad\Longrightarrow\quad p_{\rm pass}>0.
\end{gathered}
\]
This theorem is generic in the adversary and run operands; the concrete
identity-adversary witness discharges its premises separately.
\end{paracol}
\endgroup
\AlignmentBlockEnd

\subsubsection{Result}

\AlignmentParagraph{Concrete parameters.}

\AlignmentContinuationLabel{tab:alignment-instance}
\AlignmentBlockHead{Concrete numeric key-length witness.}
\begingroup\AlignmentBlockSetup
\begin{paracol}{2}
\code{src/qkd/example/instantiation.v}: \coqi{nn}, \coqi{exlen}, \coqi{ex_round_limit}, \coqi{ex_tag_bits}, \coqi{ex_ec_bits}, \coqi{ex_eps_c}, \coqi{ex_eps}, \coqi{ex_ebar}, \coqi{ex_eps_s}, \coqi{ex_leak_EC}, \coqi{Pex}, \coqi{ex_key_length_ub_ge2625}, \coqi{ex_key_length_sup}
\begin{minted}{coq}
Definition nn : nat := 10 ^ 4.
Definition exlen : nat := 2000.
Definition ex_round_limit : nat := 100 * (nn + nn).
Definition ex_tag_bits : nat := 34.
Definition ex_ec_bits : nat := 1024.
Definition ex_eps_c : R := ((2 ^ ex_tag_bits)%:R)^-1.
Definition ex_eps : R := ex_eps_c.
Definition ex_ebar : R := ex_eps_c.
Definition ex_eps_s : R := ex_eps_c *+ 2.
Definition ex_leak_EC : R := ex_ec_bits%:R.
Definition Pex : @bb84_protocol R :=
  @BB84Protocol R nn nn ex_round_limit exlen
    (100%:R)^-1 ex_eps_c ex_leak_EC.
Lemma ex_key_length_ub_ge2625 :
  (2625%:R : R) <= bb84_key_length_ub
    (bb84_n Pex) (bb84_k Pex) (1 : R) (bb84_Qtol Pex)
    ex_eps ex_ebar (bb84_leak_EC Pex) (bb84_eps_c Pex).
Lemma ex_key_length_sup (ec : R) (Hec_pos : 0 < ec)
    (Htag_le : (((cd_tag_count ex_device)%:R : R)^-1 <= ec)%R) :
  (bb84_l Pex)%:R <=
    sup (fun m : R =>
      exists eps ebar : R,
        0 < eps /\ eps < 1 /\ 0 < ebar /\ eps + ebar <= ex_eps_s /\
        m = ((floor (bb84_key_length_ub (bb84_n Pex) (bb84_k Pex)
          (1 : R) (bb84_Qtol Pex) eps ebar (bb84_leak_EC Pex)
          ec))%:~R : R)).
\end{minted}
The \coqi{bb84_protocol} record used by \coqi{Pex} and the key-length
functional are expanded in \Cref{tab:alignment-bb84}; the concrete
\coqi{ex_device} operand in the optimized witness is expanded in
\Cref{tab:alignment-instance-package}.
\switchcolumn
\textbf{Concrete BB84 parameters and optimized key length.}
For
\[
\begin{gathered}
q=1, \quad n=k=10^4, \quad \Qtol=10^{-2},\\
\leakec=1024,
\quad \eps=\bar\eps=\eps_c=2^{-34},\\
\eps_s=2^{-33},
\end{gathered}
\]
The fixed round budget is $M=100(n+k)=2{,}000{,}000$.
The certified instance has $\ell=2000$.
The checked fixed-parameter expression is at least $2625$, and its optimized
witness proves the $2000$-bit key-length premise for each admissible $e_c$.
\end{paracol}
\endgroup
\AlignmentBlockEnd

\AlignmentParagraph{Security theorem.}

\AlignmentContinuationLabel{tab:alignment-instance-theorem}
\AlignmentBlockHead{Concrete security and liveness export.}
\begingroup\AlignmentBlockSetup
\begin{paracol}{2}
\code{src/qkd/example/instantiation.v}: \coqi{bb84_secure_example}
\begin{minted}{coq}
Theorem bb84_secure_example (entangled : bool) :
  0 <
    @run_pass_prob R L HL (bb84_cvar M) V (@BVabort M) bb84_bool_val
      (@bb84_eval_mode R (Pex R) 1 (ex_device R) LE HE entangled
        (ex_identity_adversary HE) (ex_initial_state HE))
  /\
  (forall (EA : @adversary R L HL V Eve) (xi_init : state),
   cq_wf xi_init ->
   let xi_out := @bb84_eval_mode R (Pex R) 1 (ex_device R) LE HE
     entangled EA xi_init in
   let p_abort :=
     @run_abort_prob R L HL (bb84_cvar M) V (@BVabort M)
       bb84_bool_val xi_out in
   let p_pass :=
     @run_pass_prob R L HL (bb84_cvar M) V (@BVabort M)
       bb84_bool_val xi_out in
   let p_mismatch :=
     @run_output_mismatch_prob R L HL (bb84_cvar M) V
       (@BVS M) (@BVSh M) (@BVabort M) bb84_bool_val xi_out in
   [/\ (forall (h : @run_history V M) (s : store),
         @bb84_branch_shape R (Pex R) 1 (ex_device R) h s),
       p_abort + p_pass = 1
     & @eps_secure R L HL Zl (Fl :|: (Ereg :|: Ppub))
         p_mismatch (bb84_eps_c (Pex R))
         (@rhoZFEpub_eval_mode R (Pex R) 1 (ex_device R) LE HE
           entangled EA xi_init)
         (ex_eps_s R) p_abort
         (bb84_eps_c (Pex R) + ex_eps_s R)]).
\end{minted}
\AlignmentBoundaryLabel
\coqi{bb84_secure_example} combines the fixed positive-pass witness with
security for every adversary and well-formed initial state.
The positive-pass conjunct uses \coqi{ex_identity_adversary} and
\coqi{ex_initial_state}, whose well-formedness is proved by
\coqi{ex_initial_state_cq_wf}.
Its proof constructs \coqi{ex_reach_history} and discharges the generic \coqi{run_pass_prob_reachable} premises inline, while \coqi{ex_key_length_sup} supplies the optimized 2000-bit numeric witness.
The numeric parameters and key-length witness are expanded in
\Cref{tab:alignment-instance}; the device and class certificate in
\Cref{tab:alignment-instance-package}; the concrete initial state and history
construction in \Cref{tab:alignment-instance-witness}; and the generic
positive-pass theorem in \Cref{tab:alignment-run-pass-reachable}.
The theorem uses the protocol's fixed \coqi{bb84_eps_c (Pex R)}, while
\coqi{ex_device} supplies
\coqi{ex_tag_le} and \coqi{ex_tag_bits_bound}.
\switchcolumn
\textbf{Concrete BB84 instance and reachability (\Cref{thm:bb84-instance}).}
For each source mode $b\in\{\mathsf{PM},\mathsf{EB}\}$ and the fixed
parameters in \Cref{tab:alignment-instance},
\[
0<p_{\rm pass}^{b}(\mathcal E_{\rm id},\xi_{\rm ex}),
\]
and, for every $\mathcal E_A$ and well-formed $\xi_{\rm init}$,
\[
\begin{gathered}
\forall h,s,\ \mathsf{BranchShape}(h,s),\\
p_{\rm abort}^{b}+p_{\rm pass}^{b}=1,\\
\mathsf{Secure}\!\left(
  p_{\rm mismatch}^{b},2^{-34},\right.\\
\left.  \rho_{ZFE_{\rm pub}}^{b},
  2^{-33},p_{\rm abort}^{b},3\cdot2^{-34}\right).
\end{gathered}
\]
Here $\rho_{ZFE_{\rm pub}}^{b}$ is the normalized accepted readout from
the mode-$b$ evaluator.
\end{paracol}
\endgroup
\AlignmentBlockEnd

\flushbottom
\twocolumn